%
%
%
%
%
\RequirePackage{fix-cm}
\documentclass[smallextended,numbook]{svjour3}       
\smartqed  
\usepackage{graphicx}
\usepackage{amsmath,amsfonts,amssymb,dsfont,bbm,bm, hyperref,graphicx}
\usepackage[english]{babel}

\DeclareMathOperator{\Tr}{Tr}
\newcommand\bq{\begin{eqnarray}}
\newcommand\eq{\end{eqnarray}}
\newcommand\nn{\nonumber}
\renewcommand{\epsilon}{\varepsilon}
\newcommand\cF {\mathcal{F}}
\newcommand\cH{\mathcal{H}}
\newcommand\cD {\mathcal{D}}
\newcommand\cE {\mathcal{E}}
\newcommand\cM {\mathcal{M}}
\newcommand\R {\mathbb{R}}

\journalname{ARMA}
\begin{document}

\title{The Bogoliubov free energy functional I}
\subtitle{Existence of minimizers and phase diagram}


\author{Marcin Napi\'orkowski \and Robin Reuvers \and Jan Philip Solovej}


\institute{M. Napi\'orkowski \at
              Institute of Science and Technology Austria, Am Campus 1, 3400 Klosterneuburg, Austria        
             \at Department of Mathematical Methods in Physics, Faculty of Physics, University of Warsaw, Pasteura 5, 02-093 Warsaw, Poland  \\
		\email{marcin.napiorkowski@fuw.edu.pl}   
           \and
           R. Reuvers \at
              QMATH, Department of Mathematical Sciences, University of Copenhagen, Universitetsparken 5, DK-2100 Copenhagen \O, Denmark \\
             \emph{Present address:} DAMTP, Centre for Mathematical Sciences, University of Cambridge, Wilberforce Road, Cambridge CB3 0WA, United Kingdom \\
            \email{r.reuvers@damtp.cam.ac.uk}              
           \and
           J.P. Solovej \at
              QMATH, Department of Mathematical Sciences, University of Copenhagen, Universitetsparken 5, DK-2100 Copenhagen \O, Denmark \\
              \email{solovej@math.ku.dk}     
}

\date{\phantom{.}} 

\maketitle

\begin{abstract}
  The Bogoliubov free energy functional is analysed. The functional
  serves as a model of a translation-invariant Bose gas at positive
  temperature. We prove the existence of minimizers in the case of
  repulsive interactions given by a sufficiently regular two-body
  potential. Furthermore, we prove existence of a phase transition in
  this model and provide its phase diagram.
\keywords{Bose gas \and Bogoliubov free energy functional}
\end{abstract}

\tableofcontents

\section{Introduction}
Almost all work in the field of interacting Bose gases has its genesis
in Bogoliubov's seminal 1947 paper \cite{Bogoliubov-47b}. In this
work, Bogoliubov proposed an approximate theory of interacting bosons
in an attempt to explain the superfluid properties of liquid
Helium. Since then, his model has widely been used to study bosonic
many-body systems, particularly in the 1950s and 1960s. Despite being
intuitively appealing and undoubtedly correct in many aspects,
Bogoliubov's theory lacked a mathematically rigorous understanding.

The experimental success in achieving Bose--Einstein condensation in
alkali atoms \cite{Ketterle-95} has renewed the interest in the
theoretical description of such systems, and significant
progress was made in the mathematical analysis of Bose gases. We refer
to \cite{LieSeiSolYng-05} for an extensive review. Most of these
results concern the ground state energies of different bosonic
systems.

While Bogoliubov's theory is very useful in relation to these
problems, its primary goal was to determine the excitation spectrum of
a Bose gas. Indeed, the structure of the excitation spectrum derived
by Bogoliubov allowed him to justify Landau's criterion for
superfluidity \cite{Landau-41}, and thus provided a microscopic theory
of this phenomenon. A rigorous justification of Bogoliubov's theory in
that context has been established only recently for a large class
of bosonic systems within the so-called mean-field limit
\cite{Seiringer-11,GreSei-13,LewNamSerSol-15,DerNap-13,NamSei-14} (see
\cite{Sei-14} for a recent review).

Our goal (and that of the accompanying paper
\cite{NapReuSol2-15}) is to give a variational formulation of Bogoliubov's theory
for bosonic systems at positive and zero temperature.
Bogoliubov's original approximation consists in adapting the Hamiltonian so that it is quadratic in creation and
annihilation operators, and we know that ground or Gibbs states of such Hamiltonians are quasi-free (or coherent) states. Here, we reverse the idea, and retain the full Hamiltonian while only varying over Gaussian states (which include the aforementioned classes of states). This gives the variational model that we will study in this paper (see Appendix \ref{app:derivation} for relevant definitions and a derivation).

The hope is that this approach allows us to capture
important physical aspects of the system; the relevant variational states have for example served as trial states in establishing the correct asymptotic bounds on the ground state energy of Bose gases
\cite{Solovej-06,ErdSchYau-08,GiuSei-09}.\\

The Bogoliubov variational theory can be seen as the bosonic
counterpart of Hartree--Fock theory for Fermi gases. More
precisely, it is similar to generalized Hartree--Fock theory, which
includes the Bardeen--Cooper--Schrieffer (BCS) trial states and is
often called Hartree--Fock--Bogoliubov theory. In HFB theory the
trial states are quasi-free states on a
fermionic Fock space (see \cite{BacLieSol-94} for details).

HFB theory is a widely-used tool for understanding fermionic
many-body quantum systems. One of the most prominent examples related
to this approach is the model of superconductivity that is based on the
BCS functional. This model and the related BCS gap equation
have been studied both from the physical and mathematical
point of view (see
e.g. \cite{FraHaiNabSei-07,HaiHamSeiSol-08,HaiSei-08}). 

In this paper, we are interested in the
bosonic counterpart of the BCS functional, or, more precisely, to the
BCS functional with the direct and exchange terms included (as
discussed in \cite{BraHaiSei-14}).\\

Concretely, we want to analyse the model defined by the \textit{Bogoliubov free energy functional} $\cF$ given by
\bq
\label{def:grandcanfreeenergyfunctional}
\begin{aligned}
\mathcal{F}(\gamma,\alpha,\rho_{0})&= (2\pi)^{-3}\int_{\mathbb{R}^3} p^{2}\gamma(p)dp-\mu\rho-TS(\gamma,\alpha)+\frac{\widehat{V}(0)}{2}\rho^{2} \\
&+\frac{1}{2}(2\pi)^{-6}\iint_{\mathbb{R}^3\times \mathbb{R}^3}\widehat{V}(p-q)\left(\alpha(p)\alpha(q)+\gamma(p)\gamma(q)\right)dpdq \\ &+\rho_{0}(2\pi)^{-3}\int_{\mathbb{R}^3}\widehat{V}(p)\left(\gamma(p)+\alpha(p)\right)dp,
\end{aligned}
\eq
which is the free energy expectation value in a quasi-free state (see Appendix \ref{app:derivation} for a derivation). Here, $\rho$ denotes the density of the system and 
\bq
\rho=\rho_0+(2\pi)^{-3}\int_{\mathbb{R}^3}\gamma(p)dp=:\rho_0+\rho_\gamma. \nn
\eq
The entropy $S(\gamma,\alpha)$ is 
\begin{equation}
\label{entro}
\begin{aligned}
&S(\gamma,\alpha)= \quad(2\pi)^{-3}\int_{\mathbb{R}^3}s(\gamma(p),\alpha(p))dp\quad=\quad(2\pi)^{-3}\int_{\mathbb{R}^3}s(\beta(p))dp \\
 &=(2\pi)^{-3}\int_{\mathbb{R}^3}\left[\left(\beta(p)+\frac{1}{2}\right)\ln\left(\beta(p)+\frac{1}{2}\right)-\left(\beta(p)-\frac{1}{2}\right)\ln\left(\beta(p)-\frac{1}{2}\right)\right]dp,
\end{aligned}
\end{equation}
where
\bq
\beta(p):=\sqrt{\left(\frac{1}{2}+\gamma(p)\right)^{2}-\alpha(p)^{2}}.\nn
\eq
The functional is defined on the domain $\mathcal{D}$ given by
\bq
\mathcal{D}=\{(\gamma,\alpha,\rho_{0})|\gamma \in L^{1}((1+p^{2})dp),\gamma(p)\geq0, \alpha(p)^{2}\leq\gamma(p)(1+\gamma(p)), \rho_{0}\geq 0\}. \nn
\eq

This set-up describes the grand canonical free energy of a homogeneous Bose gas at temperature $T\geq0$ and chemical potential $\mu\in \mathbb{R}$ in the thermodynamic limit. The particles interact through a repulsive radial two-body potential $V(x)$. Its Fourier transform is denoted by $\widehat{V}(p)$ and is given by
$$\widehat{V}(p)=\int_{\R^3}e^{-ipx}V(x)dx.$$

The function $\gamma\in L^{1}((1+p^{2})dp)$ describes the momentum
distribution of the particles in the system. Since the total density
equals $\rho=\rho_0+(2\pi)^{-3}\int_{\mathbb{R}^3}\gamma(p)dp$, it
follows that a non-negative $\rho_0$ can be seen as the macroscopic
occupation of the state of momentum zero and is therefore interpreted as
the density of the Bose--Einstein condensate fraction.\footnote{It is mathematically possible to include and study such a condensate at momentum $p\neq0$, but we follow the standard physics approach and assume the condensate, if present, forms at momentum zero.}

Finally, the function $\alpha(p)$ describes pairing in the system and
its non-vanishing value can therefore be interpreted as the macroscopic coherence
related to superfluidity.

It is worthwhile to note that the case $V\equiv0$ describes the well-known non-interacting Bose gas first studied by Einstein. Assuming $T>0$, its explicit minimizer for $\mu\leq0$ is
\[
\gamma(p)=\frac1{e^{(p^2-\mu)/T}-1}
\]
and $\alpha\equiv0$. If $\mu<0$, $\rho_0=0$ (no BEC), and $\rho_0\in[0,\infty)$ for $\mu=0$ (possibly BEC). There is no minimizer for $\mu>0$. Because everything is known explicitly for this case, we will always assume $V\not\equiv0$.

To the best of our knowledge, the functional \eqref{def:grandcanfreeenergyfunctional} appeared for the first
time in the literature in a 1976 paper by Critchley and Solomon
\cite{CriSol-76}. Probably due to its complexity, however, it has never been
analysed - neither from a mathematical nor from a physical point of
view - but simplified versions have been considered. In particular, Angelescu, Verbeure and Zagrebnov \cite{AngVer-95,AngVerZag-97} studied variational models based on modifications of the original Bogoliubov Hamiltonian and these models can be seen as linear approximations to the full functional. They carry less physical information and are considerably easier to treat.  Other simplified models are reviewed in \cite{ZagBru-01}, although not necessarily from a variational perspective. 

Our analysis of the full functional is divided into two parts. In this part, we consider the existence and general properties
of equilibrium states of this model. According to
statistical mechanics, the equilibrium state corresponding to
temperature $T$ and chemical potential $\mu$ is given by the minimizer
of \eqref{def:grandcanfreeenergyfunctional}. The free energy is
therefore
\begin{align}
F(T,\mu)=\inf_{(\gamma,\alpha,\rho_{0})\in \mathcal{D}}\mathcal{F}(\gamma,\alpha,\rho_{0}). \label{def:grandcanonicalminimization}
\end{align}
The physical information about the system at a given $T$ and $\mu$ is
thus encoded in the structure of the minimizers. For example, a
minimizer with $\gamma\equiv0$ and $\rho_0>0$ corresponds to pure
Bose--Einstein Condensation; non-vanishing $\alpha$ signifies the presence of pairing. Hence, any further analysis of the model relies on the
well-posedness of the minimization problem
\eqref{def:grandcanonicalminimization}, which we address first. Knowledge about the minimizers for different $(T,\mu)$  then leads to a phase diagram.
We will also discuss the relation between Bose--Einstein condensation and
superfluidity in translation-invariant systems (see \cite{Bal-04} for
a historical overview on this topic). Our results are stated in the next
section.\\

In the second part of this work \cite{NapReuSol2-15}, we analyse the
functional in the \textit{dilute} (or \textit{low-density})
limit. Although Bogoliubov's primary goal was to provide a description
for liquid helium, which is a strongly-interacting system, it is
generally agreed that his theory is more suitable to describe dilute (hence weakly-interacting) systems. Here,
low density means that the mean interparticle distance $\rho^{-1/3}$
is much larger than the \textit{scattering length} $a$ of the
potential, that is
$$\rho^{1/3}a \ll 1.$$

To be able to analyse the dilute limit, we need to consider the
\textit{canonical} counterpart of
\eqref{def:grandcanfreeenergyfunctional} at fixed density $\rho$ given by \bq
\label{def:canonicalfreeenergyfunctional}
\begin{aligned}
  \mathcal{F}^{\rm{can}}(\gamma,\alpha,\rho_{0})&= (2\pi)^{-3}\int_{\mathbb{R}^3} p^{2}\gamma(p)dp-TS(\gamma,\alpha)+\frac12\widehat{V}(0)\rho^{2}\\ &+\rho_{0}(2\pi)^{-3}\int_{\mathbb{R}^3}\widehat{V}(p)\left(\gamma(p)+\alpha(p)\right)dp \\
  &+\frac{1}{2}(2\pi)^{-6}\iint_{\mathbb{R}^3\times
    \mathbb{R}^3}\widehat{V}(p-q)\left(\alpha(p)\alpha(q)+\gamma(p)\gamma(q)\right)dpdq,
\end{aligned}
\eq
with $\rho_0=\rho-\rho_\gamma$. The canonical minimization problem is
\begin{equation}
 \label{def:canonicalminimization}
\begin{aligned}
F^{\rm{can}}(T,\rho)&=\inf_{\substack{({\gamma},{\alpha},\rho_{0}=\rho-\rho_\gamma)\in\cD\\
    }}\cF^{\rm{can}}(\gamma,\alpha,\rho_0)=\inf_{0\leq\rho_0\leq\rho}f(\rho-\rho_0,\rho_0),
\end{aligned}
\end{equation}
where
\begin{equation}
\label{funcf}
f(\lambda,\rho_0)=\inf_{\substack{({\gamma},{\alpha})\in\cD'\\
      \int\gamma=\lambda
    }}\cF^{\rm{can}}(\gamma,\alpha,\rho_0)
\end{equation}
and 
\[
\mathcal{D}'=\{(\gamma,\alpha)\ |\ \gamma\in L^1((1+p^2)dp),\ \gamma(p)\geq0,\ \alpha(p)^2\leq \gamma(p)(\gamma(p)+1)\}.
\]
Strictly
speaking, this is not really a canonical formulation: it is only the
expectation value of the number of particles that we fix. We will
nevertheless describe this energy as canonical. The function
$F(T,\mu)$ as a function of $\mu$ is the Legendre transform of the
function $F^{\rm{can}}(T,\rho)$ as a function of $\rho$.

Having given a proper meaning to the notion of diluteness, one can now
ask different questions regarding the low-density limit. One
particularly interesting problem is how interactions influence the critical temperature (i.e.\ the temperature of the
phase transition between the condensed and non-condensed phase) in a
weakly-interacting Bose gas.  It is nowadays agreed that the
transition temperature should change linearly in $a$, that is,
\bq
\frac{\Delta T_{\rm{c}}}{T_{\rm{fc}}} \approx c \rho^{1/3}a \nn 
\eq
with $c>0$. Here $\Delta T_{\rm{c}}=T_{\rm{c}}-T_{\rm{fc}}$, where
$T_{\rm{c}}$ is the critical temperature in the interacting model
and $T_{\rm{fc}}=c_0\rho^{2/3}$ with $c_0=4\pi\xi(3/2)^{-2/3}$ is the critical temperature in the
non-interacting (ideal) Bose gas. 

This model confirms this prediction: in the accompanying paper \cite{NapReuSol2-15} we prove
that 
\begin{equation}
\label{tc20}
T_{\rm{c}}=T_{\rm{fc}}(1+h_1(\nu)\rho^{1/3}a+o(\rho^{1/3}a)), 
\end{equation}
where $\nu=\widehat{V}(0)/a$ and $\lim_{\nu\to8\pi}h_1(\nu)=1.49$. This result is in close
agreement with numerics: Monte Carlo methods suggest \cite{Arnold,Kash,NhoLan-04} that $c\approx 1.32$. In general $\nu>8\pi$, and it is believed, but not rigorously established, that the
Bogoliubov model is a good approximation if $\nu$ is replaced by
$8\pi$. The analysis leading to \eqref{tc20} can also be carried out in 2 dimensions; we discuss this in \cite{NapReuSol-17}.

Another issue is the asymptotic formula for the free energy (see
\cite{Sei-08,Yin-10} for the only rigorous results starting from the
full many-body problem). In \cite{NapReuSol2-15}, we provide formulas
for the free energy of a dilute Bose gas in different regions which
correspond to very low ($\rho a/T \gg 1$), fairly low ($\rho a/T \sim
O(1)$) and moderate ($\rho a/T \ll 1$) temperatures. In particular, if we let 
$\nu\to8\pi $, for very low temperatures we
reproduce the well-known Lee--Huang--Yang formula \bq \lim_{T\to 0}
F^{\rm{can}}(T,\rho)=4\pi a \rho^2 +\frac{512}{15}\sqrt{\pi}(\rho
a)^{5/2}+o(\rho a)^{5/2}. \nn \eq
For the reader's convenience, the main results of \cite{NapReuSol2-15} are also stated in the next section.\\
\medskip

\section{Main results and sketch of proof}
Throughout this
article, we assume that the two-body interaction potential and
its Fourier transform are radial functions that satisfy\footnote{As mentioned in the introduction, the case $V\equiv0$ is well-known and can be solved explicitly, so we exclude it by assumption. For completeness: if $V\equiv0$, a grand canonical minimizer exists only for $\mu\leq0$, $T\geq0$; a canonical one exists for all $\rho,T\geq0$; Theorem \ref{thm:BECvsSF} does not hold; Figure \ref{fig:phasediagram} is no longer accurate; the grand canonical phase transition of Theorem \ref{thm:phasetrangrandcan} is completely contained in the line $\mu=0$, where minimizers exist for all $\rho_0\in[0,\infty)$ and all $T\geq0$; the canonical phase transition of Theorem \ref{thm:phasetrancan} happens at the explicitly known \textit{free critical temperature} $T_{\rm{fc}}=c_0\rho^{2/3}$ with $c_0=4\pi\xi(3/2)^{-2/3}$.}
\begin{eqnarray}\label{positiveinteraction}
  V\geq 0,\quad \widehat{V}\geq0,\quad V\not\equiv0.
\end{eqnarray}
Moreover, we assume that 
\bq\label{interactionassumptions}
\widehat{V}\in C^{1}(\mathbb{R}^3),\ \widehat{V}\in L^1(\mathbb{R}^3),\ \|\widehat{V}\|_{\infty} <\infty,\ \|\nabla\widehat{V}\|_2<\infty,\ \|\nabla\widehat{V}\|_{\infty}<\infty.
\eq
\subsection{Existence of minimizers}\label{ssec:ex}
We start by providing the existence results that form the basis of any further analysis. 

\begin{theorem}[Existence of grand canonical minimizers for $T>0$]\label{thm:existencepositiveT}
  Let $T>0$. Assume the interaction potential is a radial function that satisfies \eqref{positiveinteraction} and
  \eqref{interactionassumptions}. Then there exists a minimizer for
  the Bogoliubov free energy functional
  \eqref{def:grandcanfreeenergyfunctional} defined on $\mathcal{D}$.
\end{theorem}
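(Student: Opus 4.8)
The strategy is the direct method of the calculus of variations: take a minimizing sequence, extract a weakly convergent subsequence using bounds supplied by coercivity, and show the functional is lower semicontinuous along it. The chief difficulty is that $\mathcal{F}$ is not obviously bounded below or coercive term-by-term — the kinetic term and the entropy term compete, and $-\mu\rho$ may push the density to infinity when $\mu>0$ — so the first order of business is to establish an a priori bound showing $\mathcal{F}$ is bounded below and that minimizing sequences are bounded in the right topologies. The condensate variable $\rho_0$ is a nonnegative scalar with no compactness built in, so one must also control it: the key point is that $\widehat V\ge0$ makes the interaction terms have favorable signs, so large $\rho_0$ (for fixed $\rho_\gamma$) is penalized by $\tfrac12\widehat V(0)\rho^2$, and $\rho_0$ in a minimizing sequence is bounded.

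\textbf{Step 1: lower bound and coercivity.} First I would show $F(T,\mu)>-\infty$ and that any minimizing sequence $(\gamma_n,\alpha_n,\rho_{0,n})$ satisfies uniform bounds: $\int(1+p^2)\gamma_n\,dp\le C$, $\rho_{0,n}\le C$, and $\int\beta_n s'$-type entropy control. The estimates rely on: the positivity \eqref{positiveinteraction}, which gives $\tfrac12(2\pi)^{-6}\iint\widehat V(p-q)\gamma(p)\gamma(q)\ge0$ and likewise controls the $\alpha\alpha$ and condensate cross terms (using $|\alpha|\le\tfrac12+\gamma$ and $\|\widehat V\|_\infty<\infty$ to dominate the potentially negative $\rho_0\int\widehat V\alpha$ piece by the positive $\rho_0\int\widehat V\gamma$ and the $\rho^2$ term); the elementary inequality that the entropy density $s(\beta)$ grows slower than linearly in $\beta$ while the kinetic energy grows like $p^2\gamma$, so for large $p$ the kinetic term dominates $-Ts$; and a crude bound $s(\beta(p))\le C(1+\gamma(p))$ together with splitting $\mathbb{R}^3$ into $|p|\le R$ and $|p|>R$. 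On $|p|\le R$ the entropy is controlled by $\gamma$ on a compact set, and combined with $p^2\gamma$ and the $\rho^2$ term one absorbs $-\mu\rho$. This yields coercivity in $L^1((1+p^2)dp)$ for $\gamma_n$ and boundedness of $\rho_{0,n}$ and $\int\alpha_n$.

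\textbf{Step 2: compactness.} From the $L^1((1+p^2)dp)$ bound, $\gamma_n$ is bounded in $L^1$, and the $p^2$-weight gives tightness at infinity; also $0\le\gamma_n$. Passing to a subsequence, I would like $\gamma_n\rightharpoonup\gamma$ and $\alpha_n\rightharpoonup\alpha$ in a suitable sense. $L^1$ weak limits are delicate (no reflexivity), so the cleanest route is to use that the bounds also give, after a subsequence, vague/local-weak convergence of $\gamma_n\,dp$ and $\alpha_n\,dp$ to measures which, by the $p^2$ and $\alpha^2\le\gamma(1+\gamma)$ constraints plus equi-integrability extracted from the entropy bound (the superlinear-type growth of $s$ forces equi-integrability on bounded sets), are absolutely continuous with densities $\gamma,\alpha\in\mathcal{D}$. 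Simultaneously $\rho_{0,n}\to\rho_0\ge0$ along a subsequence. One must check no mass escapes to infinity: the $p^2\gamma_n$ bound controls this for $\gamma$, and hence for $\rho_{\gamma_n}\to\rho_\gamma$, giving convergence of the density $\rho_n\to\rho$.

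\textbf{Step 3: lower semicontinuity and conclusion.} Finally, I would show $\mathcal{F}(\gamma,\alpha,\rho_0)\le\liminf\mathcal{F}(\gamma_n,\alpha_n,\rho_{0,n})$. The kinetic term $\int p^2\gamma$ is weakly lower semicontinuous by Fatou/convexity. The term $-\mu\rho$ is continuous by Step 2. The interaction quadratic form $\tfrac12(2\pi)^{-6}\iint\widehat V(p-q)(\alpha(p)\alpha(q)+\gamma(p)\gamma(q))$ is, since $\widehat V\ge0$ is a positive-definite kernel (it is itself a Fourier transform of $V\ge0$), a sum of squared "smeared" quantities $\|\widehat V^{1/2}*\!(\text{stuff})\|$-type expressions, hence convex and weakly lower semicontinuous; alternatively use that $\widehat V$ continuous and bounded lets one pass to the limit directly on the $\gamma\gamma$ part by dominated convergence against the weak limit, and handle the $\alpha\alpha$ part via the positivity/convexity. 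The condensate-coupling term $\rho_0\int\widehat V(\gamma+\alpha)$ is a product of a convergent scalar and a linear functional that is continuous against the established convergence (using $\widehat V\in C^1\cap L^\infty$). The entropy term requires the most care: $-TS$ must be lower semicontinuous, i.e. $S$ upper semicontinuous; since $s$ is concave in $\beta$ and $\beta$ is a concave function of $(\gamma,\alpha)$ on $\mathcal{D}$ (here one checks $\beta(p)=\sqrt{(\tfrac12+\gamma)^2-\alpha^2}$ is concave, so $s\circ\beta$ is concave in $(\gamma,\alpha)$), the entropy functional is concave, and concave plus the equi-integrability from Step 2 gives upper semicontinuity under weak convergence. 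Assembling these pieces, the weak limit $(\gamma,\alpha,\rho_0)\in\mathcal{D}$ attains the infimum, proving the theorem.

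\textbf{Main obstacle.} I expect the hardest part to be Step 1 combined with the entropy handling: proving the a priori lower bound requires carefully balancing $-TS$ against the kinetic energy (no single term dominates, so one needs the splitting argument and the sublinear growth of $s$), and then in Step 3 establishing upper semicontinuity of the entropy under only weak convergence — concavity helps but one still needs to rule out entropy being created in the limit, which is where the equi-integrability harvested from the entropy bound itself is essential, making the argument somewhat circular to set up cleanly.
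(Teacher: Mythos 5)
There is a genuine gap, and it is at the heart of why this problem is hard. In Step 2 you claim that ``equi-integrability extracted from the entropy bound (the superlinear-type growth of $s$ forces equi-integrability on bounded sets)'' lets you upgrade the vague limit of $\gamma_n\,dp$ to an absolutely continuous measure with density in $\mathcal{D}$. But the entropy density $s(\gamma,0)=(1+\gamma)\ln(1+\gamma)-\gamma\ln\gamma$ grows like $\ln\gamma$ for large $\gamma$, i.e.\ \emph{sub}linearly — indeed $s(\gamma)/\gamma\to 0$, the opposite of the Dunford--Pettis condition you need. (You yourself invoke the ``sublinear growth of $s$'' one sentence earlier, so Step 2 is internally inconsistent.) Consequently the entropy does not penalize concentration of $\gamma$ at all, and a minimizing sequence can — and in the BEC phase actually does — converge vaguely to a measure with a delta mass at $p=0$, which is not an element of $\mathcal{D}$. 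This is not an artifact of a clumsy choice of trial states: it is the physical mechanism of Bose--Einstein condensation, and it is precisely why the functional is formulated with a separate condensate variable $\rho_0$. A direct-method argument that tries to land the weak limit back inside $\mathcal{D}$ without confronting this concentration cannot close.

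The paper's proof is built specifically to deal with this. One first fixes the optimal condensate $\rho_0^{\rm min}$ (using joint convexity/continuity of the reduced function $f(\lambda,\rho_0)$), introduces a Lagrange multiplier $\delta$ for the constraint $\rho_\gamma=\lambda$, and then imposes an artificial cut-off $\gamma\le\kappa/p^2$ that \emph{forces} compactness and yields restricted minimizers $\gamma_\kappa$ by the direct method. The Euler--Lagrange equations give uniform a priori bounds on $\gamma_\kappa$ (large-$p$ decay $\gamma_\kappa\le C|p|^{-4}$ and a small-$p$ estimate) that show concentration, if any, can only occur at $p=0$. The crucial final step is an energy comparison: if a positive mass $d$ concentrates at the origin as $\kappa\to\infty$, moving it into $\rho_0$ strictly lowers the energy by at least $\tfrac12\widehat V(0)d^2$, contradicting the optimality of $\rho_0^{\rm min}$. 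This comparison, not any equi-integrability, is what excludes the singular part. Your Step 1 (coercivity, boundedness of $\rho_{0,n}$ and $\int(1+p^2)\gamma_n$) and your Step 3 (lower semicontinuity of the various terms) are broadly sound and appear in some form in the paper, but Step 2 is the load-bearing step and as written it would fail.
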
 

It turns out that we need to assume some additional
regularity on the interaction potential to prove a similar statement for $T=0$ .

\begin{theorem}[Existence of grand canonical minimizers for $T=0$]\label{thm:existencezeroT}
  Assume the interaction potential fulfils the assumptions of Theorem
  \ref{thm:existencepositiveT}. If we assume in addition that
  $\widehat{V}\in C^{3}(\mathbb{R}^3)$ and that all derivatives of
  $\widehat{V}$ up to third order are bounded, then there exists a
  minimizer for the Bogoliubov free energy functional
  \eqref{def:grandcanfreeenergyfunctional} defined on $\mathcal{D}$
  for $T=0$.
\end{theorem}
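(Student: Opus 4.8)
The plan is to run the direct method of the calculus of variations at $T=0$, where the entropy term of $\mathcal{F}$ drops out. An essentially equivalent route is to take $T\downarrow0$ along the minimizers produced by Theorem~\ref{thm:existencepositiveT} and pass to the limit, after checking that $F(T,\mu)\to F(0,\mu)$; this convergence follows from $0\le S(\gamma,\alpha)\le C(1+\rho_\gamma+(2\pi)^{-3}\int p^2\gamma\,dp)$ (bound $s(\beta)\le s(\tfrac12+\gamma)$ and split according to whether $\gamma\le1$ or not) together with the uniform-in-$T$ coercivity below. I describe the direct route; the function of the extra $C^3$-regularity of $\widehat V$ is to replace the regularizing effect that the entropy provided at positive temperature.

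\emph{Coercivity and a priori bounds.} Since $V\ge0$ and $\widehat V\ge0$, the quadratic forms $\iint\widehat V(p-q)\gamma(p)\gamma(q)\,dp\,dq$ and $\iint\widehat V(p-q)\alpha(p)\alpha(q)\,dp\,dq$ are non-negative, and $\|\widehat V\|_\infty=\widehat V(0)$. The only indefinite term is $\rho_0(2\pi)^{-3}\int\widehat V(\gamma+\alpha)\,dp$; using $\alpha^2\le\gamma(\gamma+1)$, hence $|\alpha|\le\gamma+\sqrt\gamma$, and $\widehat V\in L^1\cap L^\infty$, its negative part is at most $C\rho_0(\rho_\gamma+\rho_\gamma^{1/2})$. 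The $\rho_0\rho_\gamma$ part is absorbed by the cross term of $\tfrac{\widehat V(0)}2(\rho_0+\rho_\gamma)^2$, the $\rho_0\rho_\gamma^{1/2}$ part is split by Young's inequality, and what is left, $(2\pi)^{-3}\int p^2\gamma\,dp+\tfrac{\widehat V(0)}4\rho_0^2+\tfrac{\widehat V(0)}2\rho_\gamma^2-\mu\rho-C\rho_\gamma$, is bounded below by completing squares. Hence $\mathcal{F}\ge-C(\mu,V)$ on $\mathcal{D}$, and along a minimizing sequence $(\gamma_n,\alpha_n,\rho_{0,n})$ one obtains uniform bounds on $\rho_{0,n}$, on $\int(1+p^2)\gamma_n\,dp$, and on $\iint\widehat V(p-q)\gamma_n(p)\gamma_n(q)\,dp\,dq$ (hence on the analogous quantity for $\alpha_n$, by the pointwise constraint), all of which also hold uniformly in $T$ for small $T$.

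\emph{Compactness.} Passing to a subsequence, $\rho_{0,n}\to\rho_0\ge0$ and $\gamma_n\rightharpoonup\gamma$, $\alpha_n\rightharpoonup\alpha$ in a suitable weak sense (e.g.\ weak-$*$ as measures; the $p^2$-moment bound gives tightness at infinity). Concentration at a point $p_0\ne0$ cannot occur along a minimizing sequence: a bump of mass $m$ there makes $\iint\widehat V(p-q)\gamma_n\gamma_n\to\widehat V(0)m^2$, whereas spreading that mass over a small ball strictly lowers this term (as $\widehat V(p-q)<\widehat V(0)$ off the diagonal) while perturbing the kinetic and coupling terms by $o(1)$. The only surviving defect is a possible atom $c\,\delta_0$ ($c\ge0$) in the weak-$*$ limit of $\gamma_n$; its proper home is the condensate, so set $\widetilde\rho_0:=\rho_0+c$. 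An atom in $\alpha$ at $0$ is impossible, since it would force $\iint\widehat V(p-q)\alpha(p)\alpha(q)=+\infty$, contradicting the previous step; the near-$0$ part of $\alpha_n$ is therefore absorbed into the condensate as well, using $\widehat V(p)=\widehat V(0)+O(|p|^2)$ near $0$ (the linear term drops by radiality, the remainder being controlled because $\widehat V\in C^3$ with bounded derivatives), so that the energy of the concentrating part agrees up to $o(1)$ with the energy of transferring the corresponding masses to $\widetilde\rho_0$; in particular $\rho_n=\rho_{0,n}+\rho_{\gamma_n}\to\widetilde\rho_0+\rho_\gamma$.

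\emph{Lower semicontinuity, and the main obstacle.} The kinetic term and the two non-negative quadratic forms are convex, hence weakly lower semicontinuous; $-\mu\rho$ and $\tfrac{\widehat V(0)}2\rho^2$ converge by conservation of the total density just noted; the coupling term $\rho_{0,n}(2\pi)^{-3}\int\widehat V(\gamma_n+\alpha_n)$ then converges too. The genuinely new difficulty compared with $T>0$ is that at $T=0$ minimizers saturate the pointwise constraint, $\alpha^2=\gamma(\gamma+1)$, and saturation is not stable under weak convergence; at positive temperature the strictly concave entropy tied $\alpha$ to $\gamma$ and made weak limits automatically admissible. To deal with it I would extract a priori regularity of the near-minimizers from the variational (in)equalities: the stationarity condition for $\alpha$ shows that on the non-contact set $(2\pi)^{-3}(\widehat V*\alpha_n)+\rho_{0,n}\widehat V$ equals a Lagrange multiplier times $\alpha_n$, and one checks this is a bounded continuous function with bounded derivatives (here $\|\widehat V\|_\infty$, $\|\widehat V\|_1$, $\|\nabla\widehat V\|_\infty$, the $C^3$-bounds and the bound on $\iint\widehat V\alpha_n^2$ all enter); this yields local uniform control of $\gamma_n,\alpha_n$, upgrades weak to a.e.\ convergence on compact subsets of $\mathbb R^3\setminus\{0\}$, lets the constraint $\alpha^2\le\gamma(\gamma+1)$ pass to the limit, and gives $\mathcal{F}(\gamma,\alpha,\widetilde\rho_0)\le\liminf_n\mathcal{F}(\gamma_n,\alpha_n,\rho_{0,n})=F(0,\mu)$ with $(\gamma,\alpha,\widetilde\rho_0)\in\mathcal{D}$, i.e.\ a minimizer. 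I expect this last step — carrying the saturated pairing constraint through a weak limit with no entropy available — to be the crux, and it is precisely where the $C^3$-smoothness and boundedness of the derivatives of $\widehat V$ are indispensable.
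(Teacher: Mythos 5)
Your proposal takes a genuinely different route from the paper, and while it identifies the correct obstruction (the pairing constraint $\alpha^2\le\gamma(\gamma+1)$ saturates at $T=0$ and is not stable under weak limits), the step you yourself flag as the crux is left as an unsubstantiated sketch, and one intermediate claim is wrong. The paper does not run a direct method with weak-$*$ measure compactness; it instead takes the $T\downarrow0$ limit along the positive-temperature minimizers $(\gamma^T,\alpha^T,\rho_0^T)$, which are already known to exist, satisfy the Euler--Lagrange equations \emph{with equality}, and come with the explicit closed-form $\gamma^T=\tfrac12\bigl[\tfrac{e^{G^T}+1}{G^T(e^{G^T}-1)}A^T-1\bigr]$ in terms of $TA^T=p^2-\mu+\widehat V(0)\rho^T+\rho_0^T\widehat V+\widehat V*\gamma^T$, $TB^T=\rho_0^T\widehat V+\widehat V*\alpha^T$, and $TG^T=\sqrt{(TA^T)^2-(TB^T)^2}$. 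The $C^3$-regularity of $\widehat V$ is used precisely to show that $TA^T$ and $TB^T$ have uniformly bounded derivatives up to order three, so that Arzel\`a--Ascoli gives $C^2$-limits $a,b$ on compacts. The decisive point---which your sketch does not reach---is to show $g:=\sqrt{a^2-b^2}$ is bounded away from zero: if $g(p_0)=0$ at some $p_0\ne0$, then since $a^2-b^2\ge0$ is $C^2$ it must vanish quadratically at $p_0$, forcing $\tilde\gamma=(a-g)/(2g)\sim|p-p_0|^{-1}$, which is non-integrable and contradicts $\int\tilde\gamma\le\liminf\int\gamma^T<\infty$; at $p_0=0$, $g(0)>0$ is computed directly from the $\rho_0$-stationarity identity together with $\lim\int\widehat V(\gamma^T+\alpha^T)<0$ (established by a trial-state argument). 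This is what upgrades pointwise to uniform convergence on compacts and gives $L^1$- and $L^2$-convergence, after which $\beta^T\to\tfrac12$ yields $\tilde\alpha^2=\tilde\gamma(\tilde\gamma+1)$ and the constraint passes to the limit. You gesture at ``extract a priori regularity from the variational inequalities'' and ``a Lagrange multiplier times $\alpha_n$,'' but at $T=0$ the Lagrange multiplier for the active constraint is an unknown measurable function with no control, so this does not by itself give local uniform bounds---the paper avoids this by working at $T>0$ where the entropy term supplies the multiplier in a controllable, explicit form.

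Two concrete errors or gaps: (i) Your claim that an atom of $\alpha$ at $0$ would force $\iint\widehat V(p-q)\alpha(p)\alpha(q)\,dp\,dq=+\infty$ is false; that pairing would tend to $\widehat V(0)c^2$, which is finite. Atoms of $\alpha$ near $0$ must instead be absorbed into the condensate together with those of $\gamma$, which the paper does via the estimate leading to \eqref{conc123}. (ii) The assertion that the energy of the concentrating part ``agrees up to $o(1)$'' with the energy of the condensate-absorbed state is not quite right: there is a strict gain of $\tfrac12 d^2\widehat V(0)$ when mass $d$ is moved from a bump near $0$ into $\rho_0$ (see \eqref{conc193}); this inequality is the reason leakage at $0$ is incompatible with minimality. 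The gain should be computed, not dismissed. Finally, note that the paper also needs Lemma \ref{condensationsmallT} (for $T$ small the grand-canonical minimizer has $\rho_0>0$, so the $\rho_0$-stationarity condition holds with equality), a structural fact your sketch does not supply.
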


We expect that our assumptions on the interaction potential are far
from optimal. A natural direction for further research would be to try
to extend the above results to the case of more singular
potentials. In the fermionic case, the existence of minimizers for the
HFB functional with Newtonian interaction turned out to be
surprisingly difficult to prove \cite{LenLew-10}.

\begin{remark}
  We would like to stress that the minimizers need not be unique. In
  fact, a detailed analysis of the dilute limit case in
  \cite{NapReuSol2-15} shows that there exist combinations of $\mu$
  and $T$ for which the problem \eqref{def:grandcanonicalminimization}
  has two minimizers with two different densities.
\end{remark}

We have analogous results in the canonical setting.

\begin{theorem}[Existence of canonical minimizers for
  $T>0$]\label{thm:existencecanonicalepositiveT}
  Let $T>0$. Assume the interaction potential is a radial function that satisfies \eqref{positiveinteraction} and
  \eqref{interactionassumptions}. Then the variational problem
  \eqref{def:canonicalminimization} admits a minimizer.
\end{theorem}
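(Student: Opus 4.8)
The plan is to prove existence of canonical minimizers for $T>0$ by a combination of the direct method in the calculus of variations and a reduction to the grand canonical problem already settled in Theorem~\ref{thm:existencepositiveT}. Recall from \eqref{def:canonicalminimization} that the canonical problem factors as $F^{\mathrm{can}}(T,\rho)=\inf_{0\le\rho_0\le\rho}f(\rho-\rho_0,\rho_0)$, where $f(\lambda,\rho_0)$ is the infimum of $\cF^{\mathrm{can}}(\gamma,\alpha,\rho_0)$ over $(\gamma,\alpha)\in\cD'$ subject to the linear constraint $\rho_\gamma=(2\pi)^{-3}\int\gamma=\lambda$. First I would fix a minimizing sequence $(\gamma_n,\alpha_n,\rho_{0,n})\in\cD$ for \eqref{def:canonicalminimization}. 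Since $\rho_{0,n}=\rho-\rho_{\gamma_n}$ and $0\le\rho_{0,n}\le\rho$, the condensate densities live in the compact interval $[0,\rho]$, so after passing to a subsequence $\rho_{0,n}\to\rho_0^\ast\in[0,\rho]$ and correspondingly $\rho_{\gamma_n}\to\rho-\rho_0^\ast$. The remaining work is to control $(\gamma_n,\alpha_n)$.

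The key a priori estimates come from coercivity of $\cF^{\mathrm{can}}$. The kinetic term $(2\pi)^{-3}\int p^2\gamma_n(p)\,dp$ is bounded along the minimizing sequence; the interaction terms involving $\widehat V$ are bounded because $\|\widehat V\|_\infty<\infty$, $\widehat V\ge0$, and $\rho_{\gamma_n}$ is bounded (the $\gamma\gamma$ and $\alpha\alpha$ double integrals are controlled using $\widehat V\ge 0$ together with $|\alpha|\le \frac12+\gamma$, which bounds them below, while $\widehat V(0)$ times a square of the $L^1$ norm bounds them above after a Cauchy--Schwarz/Young argument); and the entropy term, which is the only term that could drive $\cF^{\mathrm{can}}$ to $-\infty$, is controlled because $-TS(\gamma,\alpha)\ge -T\,S(\gamma,0)$ (the entropy is maximized at $\alpha=0$ for fixed $\gamma$) and $S(\gamma,0)=(2\pi)^{-3}\int s(\gamma(p))\,dp$ with $s(t)=(t+1)\ln(t+1)-t\ln t$ growing only like $\ln t$, hence dominated by the kinetic energy for large $p$. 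These are precisely the estimates already used in the proof of Theorem~\ref{thm:existencepositiveT}, so I would invoke them rather than redo them. The upshot is a uniform bound on $\int(1+p^2)\gamma_n(p)\,dp$ and on $\int \gamma_n(1+\gamma_n)$, which via $\alpha_n^2\le\gamma_n(1+\gamma_n)$ also bounds $\alpha_n$ in $L^2\cap L^1$.

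With these bounds I would extract weak limits: $\gamma_n\wto\gamma$ in $L^1((1+p^2)dp)$ (more precisely, by de la Vall\'ee-Poussin and Dunford--Pettis, $\{(1+p^2)\gamma_n\}$ is uniformly integrable, so $\gamma_n\to\gamma$ weakly in this space) and $\alpha_n\wto\alpha$ in $L^2$. Then I would show the admissibility constraints pass to the limit: $\gamma\ge0$ and $\alpha^2\le\gamma(1+\gamma)$ are preserved under this convergence (the pointwise constraint survives because it is a convex constraint, or by passing to a.e.-convergent subsequences via a diagonal argument on balls), so $(\gamma,\alpha)\in\cD'$. The density constraint is the delicate point: weak $L^1$ convergence gives $(2\pi)^{-3}\int\gamma\le \lim\rho_{\gamma_n}=\rho-\rho_0^\ast$, but there could be a loss of mass --- in momentum space this means a Dirac mass escaping to $p=0$ (since large-$p$ escape is prevented by the $p^2$ weight) or to $p=\infty$ (prevented by uniform integrability from the entropy/kinetic bound). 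A mass $\delta\ge0$ escaping to the origin can simply be \textbf{reabsorbed into $\rho_0$}: replacing $(\gamma,\alpha,\rho-\rho_0^\ast)$ by $(\gamma,\alpha,\rho-\rho_0^\ast-\delta+ \rho_0^\ast)$ --- more precisely setting the new condensate density to $\rho-(2\pi)^{-3}\int\gamma$ --- keeps the triple in $\cD$ with the correct total density $\rho$, and because the integrand multiplying $\gamma+\alpha$ in the $\rho_0$-term is $\widehat V(p)$ which is continuous, the value of the functional does not increase under this reabsorption (an escaping bump near $p=0$ contributes $\approx\widehat V(0)$ per unit mass to both the would-be limit and, via the $\rho_0\widehat V(0)$ term, to the reabsorbed configuration; here one uses $\widehat V\in C^1$ and a short computation comparing the cost of a small bump at momentum $\sim\epsilon$ with the cost of putting that mass in $\rho_0$). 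Finally I would establish lower semicontinuity of $\cF^{\mathrm{can}}$: the kinetic and $\rho_0$-linear terms are weakly lower semicontinuous (the former by Fatou, the latter using $\widehat V\ge0$ and weak convergence), the $\widehat V(0)\rho^2/2$ term is now fixed since $\rho$ is fixed, the double-integral quadratic form $\iint\widehat V(p-q)(\alpha(p)\alpha(q)+\gamma(p)\gamma(q))$ is weakly lower semicontinuous because $\widehat V\ge0$ makes it a (limit of) positive-definite quadratic forms --- indeed writing it via the Fourier transform of $\widehat V$ it is $\int V(x)(|\check\alpha(x)|^2+|\check\gamma(x)|^2)\,dx\ge0$ and convexity gives weak lsc --- and the entropy term is weakly upper semicontinuous in $(\gamma,\alpha)$ (concavity of $s$), so $-TS$ is weakly lsc. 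Combining, $\cF^{\mathrm{can}}(\gamma,\alpha,\rho_0^{\mathrm{new}})\le\liminf\cF^{\mathrm{can}}(\gamma_n,\alpha_n,\rho_{0,n})=F^{\mathrm{can}}(T,\rho)$, and since the limiting triple is admissible we get a minimizer.

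The main obstacle is exactly the handling of the density constraint under weak convergence, i.e.\ the possible escape of mass to zero momentum and the verification that reabsorbing it into $\rho_0$ does not lower the energy. Everything else (coercivity, extraction of subsequences, preservation of $\cD'$, lower semicontinuity) is either routine or borrowed verbatim from the grand canonical proof; but the interplay between the $\int\gamma=\lambda$ constraint, the continuity of $\widehat V$ at $0$, and the role of $\rho_0$ as a ``momentum-zero reservoir'' is what makes the canonical statement genuinely require its own argument rather than a one-line corollary of Theorem~\ref{thm:existencepositiveT}. An alternative, and perhaps cleaner, route is to work directly with the decomposition $F^{\mathrm{can}}(T,\rho)=\inf_{0\le\rho_0\le\rho}f(\rho-\rho_0,\rho_0)$: show first that $f(\lambda,\rho_0)$ is attained for each fixed $(\lambda,\rho_0)$ (same compactness argument, now with the constraint $\int\gamma=\lambda$ handled by noting mass can only be lost, giving $\int\gamma\le\lambda$, and then restoring it --- here one may add back a small bump at momentum $\epsilon\to0$, whose kinetic cost $\epsilon^2\lambda\to0$ and whose entropy and interaction contributions converge to the right values, so no energy is gained in the limit, meaning the constrained infimum is still attained after this correction), then show $(\lambda,\rho_0)\mapsto f(\lambda,\rho_0)$ is lower semicontinuous on the compact set $\{\lambda+\rho_0=\rho,\ \lambda,\rho_0\ge0\}$ so that the outer infimum over $\rho_0$ is attained, and finally splice the two minimizers together. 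I would present the proof along these lines, referencing the shared estimates from Theorem~\ref{thm:existencepositiveT} and concentrating the new work on the mass-conservation/reabsorption lemma.
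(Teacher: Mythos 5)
Your strategic picture is correct and matches the paper's: decompose the canonical problem into an outer minimization over $\rho_0$ and an inner one over $(\gamma,\alpha)$, and treat a possible loss of mass at $p=0$ by reabsorbing it into $\rho_0$. But the two steps that you call ``routine or borrowed verbatim'' --- compactness and the reabsorption inequality --- are in fact exactly the places where the paper's proof has to do a great deal of hard work, and your treatment of both leaves real gaps.

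First, the compactness claim is unjustified. You assert that the uniform bound on $\int(1+p^2)\gamma_n\,dp$, combined with de la Vall\'ee--Poussin and Dunford--Pettis, yields uniform integrability of $(1+p^2)\gamma_n$ and hence weak $L^1$ convergence. That is not correct: a uniform $L^1$-bound alone does not give equi-integrability, and neither the entropy term (which enters the functional with a minus sign and is sublinear in $\gamma$) nor the interaction terms provide the extra superlinear control that de la Vall\'ee--Poussin requires. Indeed, the paper explicitly flags the non-reflexivity of $L^1$ as the obstruction; it is precisely why the functional is first restricted to $\mathcal{M}_\kappa=\{\gamma\le\kappa/p^2\}$, where $L^s$-compactness ($s\in(6/5,3/2)$) becomes available, and why the Euler--Lagrange equations for the restricted minimizers are needed to produce $\kappa$-uniform pointwise bounds (Lemmas~\ref{lem:boundgammalargep} and \ref{lem:boundgammasmallp}) that control the sequence on $\{|p|>\epsilon\}$. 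Without this apparatus you cannot extract a pointwise-dominated subsequence, and consequently you cannot pass to the limit in the entropy, convolution, or quadratic terms, nor even rule out singular parts of a weak* measure limit sitting away from the origin.

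Second, the reabsorption step is the heart of the matter, and ``the value of the functional does not increase under this reabsorption'' cannot be left as a one-line claim. The paper's actual estimate (culminating in \eqref{conc123} and \eqref{conc193}) shows the sharper inequality
\[
f(\lambda^{\min},\rho_0^{\min})\geq f(\lambda^{\min}-d,\rho_0^{\min}+d)+\tfrac12 d^2\widehat V(0),
\]
and the \emph{strictly positive} extra term $\tfrac12 d^2\widehat V(0)$ is not a luxury: it is what produces the contradiction with the minimality of $\rho_0^{\min}$ (so that $d=0$ is forced). Deriving this requires careful splitting of each interaction term into regions $|p|\le\epsilon$ and $|p|>\epsilon$, gradient estimates on $\widehat V*\gamma_\kappa$ and $\widehat V*\alpha_\kappa$ controlled by $\|\nabla\widehat V\|_\infty$, entropy estimates on the small ball, and a Legendre-transform argument in $\lambda$ to remove the constraint $\int\gamma=\lambda$ (which would otherwise prevent comparing $f$ at different $(\lambda,\rho_0)$ pairs). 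None of this is present in your proposal, and the heuristic about ``an escaping bump near $p=0$ contributes $\approx\widehat V(0)$ per unit mass'' conflates the linear $\rho_0\int\widehat V\gamma$ term, for which this matching is correct, with the quadratic $\iint\widehat V\gamma\gamma$ term, where the concentrated mass contributes $\widehat V(0)d^2$ rather than $\tfrac12\widehat V(0)d^2$, and it is precisely this mismatch that leaves the extra $\tfrac12 d^2\widehat V(0)$ behind. In short: your conceptual intuition --- treat $\rho_0$ as a zero-momentum reservoir and push escaping mass into it --- is the right one, but the proof of existence lives in the compactness machinery and the quantitative reabsorption estimate, both of which are missing here.
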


\begin{theorem}[Existence of canonical minimizers for $T=0$]\label{thm:existencecanonicalzeroT}
  Assume the interaction potential fulfils the assumptions of Theorem
  \ref{thm:existencecanonicalepositiveT}. If we assume in addition
  that $\widehat{V}\in C^{3}(\mathbb{R}^3)$ and that all derivatives of
  $\widehat{V}$ up to third order are bounded, then there exists a
  minimizer for the canonical minimization problem
  \eqref{def:canonicalminimization} at $T=0$.
\end{theorem}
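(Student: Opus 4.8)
The plan is to prove existence by the direct method, running the same scheme as in the proofs of Theorem~\ref{thm:existencecanonicalepositiveT} (the canonical problem at $T>0$) and of Theorem~\ref{thm:existencezeroT} (the grand canonical problem at $T=0$). One might hope to bypass this by invoking Theorem~\ref{thm:existencezeroT}: since $\cF=\cF^{\mathrm{can}}-\mu\rho$, a grand canonical minimizer at some chemical potential $\mu$ whose density equals $\rho$ would be a canonical minimizer at density $\rho$. But, as the Remark above indicates, the density of the grand canonical minimizer need not sweep out all of $(0,\infty)$ as $\mu$ varies, so this shortcut misses some values of $\rho$ and a direct argument is needed.

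\emph{A priori bounds.} Let $(\gamma_n,\alpha_n,\rho_{0,n})\in\cD$ with $\rho_{0,n}=\rho-\rho_{\gamma_n}$ be a minimizing sequence for \eqref{def:canonicalminimization} at $T=0$, with $\cF^{\mathrm{can}}(\gamma_n,\alpha_n,\rho_{0,n})$ finite and converging to $F^{\mathrm{can}}(0,\rho)$. The constraint already forces $\rho_{0,n}\in[0,\rho]$ and $\rho_{\gamma_n}=(2\pi)^{-3}\int\gamma_n\in[0,\rho]$, which is the main simplification compared with the grand canonical case. To bound the remaining, kinetic, term I would complete the squares in the interaction part of \eqref{def:canonicalfreeenergyfunctional}, treating the condensate as a point mass $\rho_0(2\pi)^3\delta_0$ at momentum zero: for real $f$, writing $\tilde f=f+\rho_0(2\pi)^3\delta_0$, one has
\[
\tfrac12(2\pi)^{-6}\iint\widehat V(p-q)f(p)f(q)\,dp\,dq+\rho_0(2\pi)^{-3}\int\widehat V(p)f(p)\,dp=\tfrac12(2\pi)^{-6}\iint\widehat V(p-q)\tilde f(p)\tilde f(q)-\tfrac12\widehat V(0)\rho_0^2,
\]
and $\iint\widehat V(p-q)g(p)g(q)=\int_{\R^3}V(x)|\widehat g(x)|^2\,dx\ge0$ because $V\ge0$ by \eqref{positiveinteraction}. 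Applying this with $f=\gamma_n$ and with $f=\alpha_n$, and using $\tfrac12\widehat V(0)\rho^2\ge0$ and $\rho_{0,n}\le\rho$, the interaction along the sequence is bounded below by $-\tfrac12\widehat V(0)\rho^2$. Hence $(2\pi)^{-3}\int p^2\gamma_n$ is bounded and $\gamma_n$ is bounded in $L^1((1+p^2)\,dp)$.

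\emph{Compactness, lower semicontinuity, conclusion.} Along a subsequence, $\rho_{0,n}\to\rho_0\in[0,\rho]$ and $(1+p^2)\gamma_n\,dp$ converges weakly-$*$ to a finite positive measure; the uniform kinetic bound forces any $\gamma_n$-mass escaping to infinity to vanish, so $\gamma_n\rightharpoonup\gamma$ with at most concentration onto finitely many momenta, which is dealt with as in the proof of Theorem~\ref{thm:existencezeroT} (concentration at $p=0$ is reabsorbed into $\rho_0$, which stays $\le\rho$; concentration at $p\neq0$ is excluded a posteriori by comparison with the configuration in which that mass is moved into the condensate, using $\widehat V\ge0$ and the kinetic gain). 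Since $\alpha_n^2\le\gamma_n(\gamma_n+1)$ pointwise, a further subsequence gives $\alpha_n\rightharpoonup\alpha$ in $L^2_{\loc}$ with $\alpha^2\le\gamma(\gamma+1)$, so $(\gamma,\alpha,\rho_0)\in\cD$. The kinetic term is weakly lower semicontinuous by Fatou, the two interaction forms produced by completing the square are positive semidefinite hence weakly lower semicontinuous, and the direct term $\tfrac12\widehat V(0)\rho^2$ passes to the limit; therefore $\cF^{\mathrm{can}}(\gamma,\alpha,\rho_0)\le\liminf_n\cF^{\mathrm{can}}(\gamma_n,\alpha_n,\rho_{0,n})=F^{\mathrm{can}}(0,\rho)$, and $(\gamma,\alpha,\rho_0)$ is a minimizer.

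\emph{Main obstacle.} The delicate step --- the one that at $T>0$ is handled by the entropy $-TS(\gamma,\alpha)$, which diverges as $\alpha^2\uparrow\gamma(\gamma+1)$ and thus supplies both compactness of $\alpha_n$ and strict feasibility in the limit, but which at $T=0$ is absent --- is the control of the pairing function, since a priori $\alpha_n$ need not lie in $L^1$ or in $L^2$. As in the proof of Theorem~\ref{thm:existencezeroT}, I would first replace $\alpha_n$ by the optimal pairing for the frozen data $(\gamma_n,\rho_{0,n})$: at $T=0$ the functional is affine plus a positive quadratic form in $\alpha$, so the optimal $\alpha_n$ solves a linear Euler--Lagrange equation; one then uses the regularity and decay of $\widehat V$ --- this is precisely where $\widehat V\in C^3(\R^3)$ with bounded derivatives up to third order enters --- to deduce enough decay and smoothness of this optimal $\alpha_n$ that no mass of $\alpha_n^2$ is lost in the limit and the limit is feasible with the correct energy. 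Checking that this is compatible with the two-step structure $F^{\mathrm{can}}(0,\rho)=\inf_{0\le\rho_0\le\rho}f(\rho-\rho_0,\rho_0)$ of \eqref{def:canonicalminimization}, so that the reassignment of concentrated mass to $\rho_0$ is legitimate, is then routine.
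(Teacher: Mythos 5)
Your proposal is a direct variational method at $T=0$ (take a minimizing sequence in $\cD$, derive a priori bounds, extract a weak limit, prove lower semicontinuity), which is genuinely different from what the paper does. The paper's proof of Theorem~\ref{thm:existencecanonicalzeroT}, like that of Theorem~\ref{thm:existencezeroT}, does not work with an arbitrary $T=0$ minimizing sequence: it shows (Proposition~\ref{prop:T=0minseq}) that the \emph{positive-temperature minimizers} $(\gamma^T,\alpha^T,\rho_0^T)$ form a minimizing sequence as $T\to0$, uses Theorem~\ref{thm:phasetrancan} to ensure $\rho_0^T>0$ for small $T$ so that the Euler--Lagrange equations hold with equality, expresses $\gamma^T$ and $\alpha^T$ in closed form in terms of $TA^T,TB^T,TG^T$, and passes to the limit by Arzel\`a--Ascoli. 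The $C^3$ regularity of $\widehat V$ is needed there precisely so that $\widehat V*\gamma^T$ and $\widehat V*\alpha^T$ are uniformly bounded in $C^3$, giving uniform convergence of $TA^T,TB^T$ and their first two derivatives; this is what makes $g=\sqrt{a^2-b^2}$ bounded away from zero (and hence $\tilde\gamma=(a-g)/(2g)$ integrable). Your description of the role of $C^3$ --- controlling "the decay and smoothness of the optimal $\alpha_n$" in a $T=0$ direct method --- misidentifies the mechanism.

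The step you flag as the "main obstacle" is indeed a genuine gap, and your proposed fix does not close it. First, at $T=0$ the constraint $\alpha^2\le\gamma(\gamma+1)$ is \emph{active} for minimizers (that is the content of Corollary~\ref{cor:structureminimizerT=0}), so the "optimal pairing for frozen $(\gamma_n,\rho_{0,n})$" does not solve the unconstrained linear equation $\rho_0\widehat V+\widehat V*\alpha=0$; it saturates the constraint, and its regularity and decay are then tied to those of $\gamma_n$, about which you have only an $L^1((1+p^2)\,dp)$ bound. Second, your proposed exclusion of concentration at $p_*\ne0$ by "moving the mass into the condensate, using $\widehat V\ge0$ and the kinetic gain" is not justified: the kinetic energy drops by $\sim d\,p_*^2$, but moving $d\delta_{p_*}$ to $d\delta_0$ inside the positive quadratic form changes the cross term by $d\bigl[(\widehat V*g)(0)-(\widehat V*g)(p_*)\bigr]$, which is not sign-definite under the paper's assumptions on $\widehat V$ (one does not have monotonicity of $\widehat V$). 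In the paper's $T>0$ argument, concentration away from $p=0$ is prevented by the pointwise a priori bounds of Lemmas~\ref{lem:boundgammalargep} and \ref{lem:boundgammasmallp}, which come from the Euler--Lagrange equations at $T>0$; a direct $T=0$ minimizing sequence has no such bounds, and nothing in your argument supplies a replacement. Finally, lower semicontinuity of the quadratic interaction terms under weak-$*$ convergence of measures (rather than weak $L^1$-convergence of functions) is not as routine as asserted, since possible singular parts contribute to the quadratic form. These are precisely the places where the paper's detour through positive-$T$ minimizers does the heavy lifting.
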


A nice property that follows from the Euler--Lagrange equations and a trial state argument used in Theorems
\ref{thm:existencezeroT} and \ref{thm:existencecanonicalzeroT} is the fact that minimizers satisfy $\alpha^2=\gamma(\gamma+1)$ at $T=0$. This implies the following corollary (see the end of Subsection \ref{subsect:gcT=0}, just before Subsection \ref{thecancaselabel}).

\begin{corollary}[Structure of $T=0$ minimizers]
\label{cor:structureminimizerT=0}
Minimizers for the canonical and grand canonical problem at $T=0$ are pure quasi-free states.
\end{corollary}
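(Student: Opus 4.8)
The idea is that at $T=0$ the entropy term $-TS$ vanishes identically, so a minimizer of $\cF$ (resp.\ $\cF^{\mathrm{can}}$) minimizes only the energetic part of the functional; and I will show that any configuration in $\cD$ that is \emph{not} a pure state can be strictly improved by replacing it with a pure state of the same total density. Recall that $(\gamma,\alpha,\rho_0)$ is a pure quasi-free state precisely when the constraint is saturated, $\alpha(p)^2=\gamma(p)(\gamma(p)+1)$ a.e., equivalently $\beta(p)=\tfrac12$ a.e. Combined with the existence of a minimizer (Theorems \ref{thm:existencezeroT} and \ref{thm:existencecanonicalzeroT}), such a strict improvement forces every minimizer to be pure; this is exactly the reduction already carried out in those existence proofs.

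Concretely, given $(\gamma,\alpha,\rho_0)\in\cD$ I keep $\alpha$ fixed and shrink $\gamma$ to the smallest value compatible with $\alpha$:
\[
\tilde\gamma:=\sqrt{\alpha^2+\tfrac14}-\tfrac12,\qquad \tilde\alpha:=\alpha,\qquad \tilde\rho_0:=\rho_0+\rho_\gamma-\rho_{\tilde\gamma}.
\]
Then $\tilde\gamma(\tilde\gamma+1)=\alpha^2=\tilde\alpha^2$, so $(\tilde\gamma,\tilde\alpha,\tilde\rho_0)$ is pure; since $x\mapsto x(x+1)$ is increasing, $0\le\tilde\gamma\le\gamma$ pointwise, hence $\tilde\gamma\in L^1((1+p^2)dp)$, $\tilde\rho_0\ge\rho_0\ge0$ and $\tilde\rho_0+\rho_{\tilde\gamma}=\rho_0+\rho_\gamma=\rho$, so $(\tilde\gamma,\tilde\alpha,\tilde\rho_0)\in\cD$ has the same density $\rho$ (in particular the canonical constraint is respected). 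Now I compare the two energies term by term, with the entropy absent: the kinetic term $(2\pi)^{-3}\int p^2\gamma$ decreases, strictly unless $\tilde\gamma=\gamma$ a.e., i.e.\ unless we started from a pure state; the terms $-\mu\rho$ and $\tfrac12\widehat V(0)\rho^2$ are unchanged since $\rho$ is preserved; the direct term $\iint\widehat V(p-q)\gamma(p)\gamma(q)$ decreases because $\widehat V\ge0$ and $0\le\tilde\gamma\le\gamma$; the pairing term $\iint\widehat V(p-q)\alpha(p)\alpha(q)$ is untouched; and the condensate-coupling term $\rho_0(2\pi)^{-3}\int\widehat V(\gamma+\alpha)$ changes by $-\rho_0(2\pi)^{-3}\int\widehat V(\gamma-\tilde\gamma)+(\rho_\gamma-\rho_{\tilde\gamma})(2\pi)^{-3}\int\widehat V(\tilde\gamma+\alpha)$, whose first summand is $\le0$ by $\widehat V\ge0$, $\rho_0\ge0$, $\gamma\ge\tilde\gamma$, and whose second summand is $\le0$ once one knows $\alpha\le0$, since then $\tilde\gamma+\alpha=\sqrt{\alpha^2+\tfrac14}-\tfrac12-|\alpha|\le0$ while $\widehat V\ge0$ and $\rho_\gamma-\rho_{\tilde\gamma}\ge0$. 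Altogether $\cF(\tilde\gamma,\tilde\alpha,\tilde\rho_0)\le\cF(\gamma,\alpha,\rho_0)$, strictly unless $(\gamma,\alpha,\rho_0)$ was pure, and the same computation applies verbatim to $\cF^{\mathrm{can}}$; hence no non-pure configuration is a minimizer.

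The single non-routine input is the sign property $\alpha\le0$ at a minimizer, and this is the step I expect to be the main obstacle. I would extract it from the Euler--Lagrange equations: where the constraint $\alpha^2<\gamma(\gamma+1)$ is strictly inactive one may vary $\gamma$ and $\alpha$ freely, which yields $(2\pi)^{-6}\int\widehat V(p-q)\alpha(q)\,dq=-\rho_0(2\pi)^{-3}\widehat V(p)$ together with $(2\pi)^{-6}\int\widehat V(p-q)\gamma(q)\,dq=(2\pi)^{-3}(\mu-p^2-\widehat V(0)\rho-\rho_0\widehat V(p))$; the second equation already confines the non-pure set to $\{p^2\le\mu\}$ (its left-hand side is nonnegative), and the first, rewritten in position space after completing the square in the $\alpha$-dependent part of the functional, pins down the sign of $\alpha$ there, while on the saturated set purity holds by definition and on $\{\gamma(p)=0\}$ one has $\beta(p)=\tfrac12$ automatically. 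This is precisely where the repulsivity $\widehat V\ge0$, the positivity $\rho_0\ge0$, and the extra $C^3$-regularity of $\widehat V$ assumed for the $T=0$ theorems must be combined, and it is the part of the argument that requires genuine care rather than bookkeeping.
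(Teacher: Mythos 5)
Your purification move (keep $\alpha$, shrink $\gamma$ to $\tilde\gamma=\sqrt{\alpha^2+\tfrac14}-\tfrac12$, dump the removed mass into $\rho_0$) is a genuinely different route from the one the paper takes, but as written it has a gap that I do not think is easily closeable, and it is larger than you indicate. The whole comparison collapses unless $\int\widehat V(\tilde\gamma+\alpha)\le 0$, and since $\tilde\gamma+\alpha=\sqrt{\alpha^2+\tfrac14}-\tfrac12+\alpha$ is pointwise nonnegative wherever $\alpha>0$, what you need is $\alpha\le 0$ almost everywhere on $\R^3$, \emph{not} merely on the non-pure set. Your Euler--Lagrange sketch can only hope to say something on the set where the constraint $\alpha^2<\gamma(\gamma+1)$ is strictly inactive; on the saturated set the $\alpha$-equation is a one-sided variational inequality whose sign depends on the unknown sign of $\alpha$, so the argument is circular precisely where you need it. Moreover, a global sign flip $\alpha\mapsto-\alpha$ leaves the quadratic pairing term and entropy unchanged and only controls the sign of the single number $\int\widehat V\alpha$, not the pointwise sign of $\alpha$, so I do not see how repulsivity plus ``completing the square in position space'' pins down a pointwise sign; the paper never establishes (nor needs) such a property.

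The paper's own proof sidesteps the sign question entirely and is considerably shorter. Having already established (by repeating the $T=0$ trial-state computation from Step 2 of the proof of Proposition~\ref{limit0}, i.e.\ \eqref{rel1124} applied directly at $T=0$) that any minimizer at $\mu>0$ satisfies $\rho_0>0$ and $\int\widehat V(\gamma+\alpha)<0$, one notes that $\rho_0>0$ forces the $\rho_0$-stationarity condition $-\mu+\widehat V(0)\rho+\int\widehat V(\gamma+\alpha)=0$, i.e.\ $-\mu+\widehat V(0)\rho=-\int\widehat V(\gamma+\alpha)>0$. Substituting this into the $\gamma$-derivative at $T=0$ gives
\[
\frac{\partial\mathcal F}{\partial\gamma}(p)=p^2-\int\widehat V(\gamma+\alpha)+\rho_0\widehat V(p)+\widehat V\ast\gamma(p)>0
\quad\text{for all }p,
\]
which is incompatible with the first-order optimality condition at any $p$ where the constraint is strictly inactive (there one is free to decrease $\gamma$). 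Hence $\alpha^2=\gamma(\gamma+1)$ a.e., and purity follows from the characterization $\Gamma\mathcal S\Gamma=-\Gamma$ of pure quasi-free states. No sign information on $\alpha$, and no purification trial state, is required. I would encourage you to adopt this strictly-positive-derivative argument: it uses ingredients ($\rho_0>0$, $\int\widehat V(\gamma+\alpha)<0$) that you already need anyway and avoids the unproved pointwise sign lemma on which your draft hinges.
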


Note that this result is not obvious. It is well known that pure quasi-free states are minimizers for quadratic Hamiltonians; it is also known that minimization problems over all quasi-free states can be restricted to pure quasi-free states at $T=0$ \cite{quasifree}, but our minimization does not correspond to a quadratic Hamiltonian and is not over all quasi-free states.

\subsection{Existence and structure of phase transition}
\label{ssec:exispt}
We now analyse the structure of the minimizers. Our first result
shows that Bose--Einstein condensation and the presence of pairing are equivalent within our models.

\begin{theorem}[Equivalence of BEC and pairing]\label{thm:BECvsSF}
  Let
  $(\gamma,\alpha,\rho_0)$ be a minimizing triple for
 either \eqref{def:grandcanfreeenergyfunctional} or \eqref{def:canonicalfreeenergyfunctional}. Then \bq \rho_0 =0
  \Longleftrightarrow \alpha \equiv 0. \nn \eq
\end{theorem}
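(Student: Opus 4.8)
The plan is to prove both implications by first-order variational arguments around a putative minimizer $(\gamma,\alpha,\rho_{0})$, exploiting that $\alpha$ enters $\mathcal{F}$ and $\mathcal{F}^{\mathrm{can}}$ only through three channels: the entropy $-TS$, via $\beta=\sqrt{(1/2+\gamma)^{2}-\alpha^{2}}$; the non-negative quadratic form $\tfrac12(2\pi)^{-6}\iint\widehat{V}(p-q)\alpha(p)\alpha(q)\,dp\,dq=\tfrac12(2\pi)^{-6}\int V(x)|\widehat{\alpha}(x)|^{2}\,dx\ge0$ (using $V\ge0$); and the cross term $\rho_{0}(2\pi)^{-3}\int\widehat{V}(p)(\gamma(p)+\alpha(p))\,dp$, the only term coupling $\alpha$ to $\rho_{0}$ and the only one whose behaviour under $\alpha\mapsto-\alpha$ is not manifestly favourable. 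I will use two analytic facts about the entropy integrand: $s$ is strictly increasing on $[1/2,\infty)$, as $s'(\beta)=\ln\tfrac{\beta+1/2}{\beta-1/2}>0$ for $\beta>1/2$; and the constraint $\alpha^{2}\le\gamma(1+\gamma)$ forces $\beta\ge1/2$ pointwise. I also use that $\widehat{V}(0)=\int V>0$ with $\widehat{V}$ continuous, so $\widehat{V}>0$ on some ball around the origin.

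For $\rho_{0}=0\Rightarrow\alpha\equiv0$: if $\rho_{0}=0$ the cross term is absent, so replacing $\alpha$ by $0$ (keeping $\gamma$ and $\rho_{0}=0$; this stays in $\mathcal{D}$ and changes neither $\rho_{\gamma}$ nor $\rho_{0}$, so it is admissible in the canonical problem as well) leaves the functional unchanged except that the $\alpha\alpha$-form drops to $0$ and, since $\beta$ strictly increases on $\{\alpha\ne0\}$ while $s$ is strictly increasing, the entropy strictly increases when $T>0$. Hence for $T>0$ one gets $\mathcal{F}(\gamma,0,0)<\mathcal{F}(\gamma,\alpha,0)$ unless $\alpha\equiv0$ a.e., and minimality forces $\alpha\equiv0$. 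For $T=0$ the entropy is inert, but then either $\iint\widehat{V}(p-q)\alpha(p)\alpha(q)\,dp\,dq>0$, so the replacement strictly lowers $\mathcal{F}$, contradicting minimality; or that integral vanishes, whence $(\gamma,0,0)$ is also a minimizer, and if $\alpha\not\equiv0$ then $\gamma\not\equiv0$ by the domain constraint, so $(\gamma,0,0)$ has $\beta=1/2+\gamma\not\equiv1/2$ and is not a pure quasi-free state, contradicting Corollary~\ref{cor:structureminimizerT=0}.

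For $\alpha\equiv0\Rightarrow\rho_{0}=0$: suppose $\alpha\equiv0$ but $\rho_{0}>0$. Heuristically the $\alpha$-gradient of $\mathcal{F}$ at $\alpha\equiv0$ is $\rho_{0}(2\pi)^{-3}\widehat{V}(p)$, nonzero where $\widehat{V}>0$, so $\alpha\equiv0$ should fail to be stationary; the catch is that a perturbation of $\alpha$ alone is available only where $\gamma>0$ (otherwise the constraint is saturated), and $\{\gamma>0\}$ may be null on $\{\widehat{V}>0\}$. I would therefore \emph{create} the needed phase space near $p=0$: fix a small ball $B\ni0$ with $\widehat{V}>0$ on $B$ and set $\gamma_{\epsilon}=\gamma+\epsilon\1_{B}$, $\alpha_{\epsilon}=-\sqrt{\epsilon}\,\1_{B}$, keeping $\rho_{0}$ (in the canonical case compensate with $\rho_{0}\mapsto\rho_{0}-\epsilon(2\pi)^{-3}|B|\ge0$). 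On $B$ one has $\alpha_{\epsilon}^{2}=\epsilon\le(\gamma+\epsilon)(1+\gamma+\epsilon)=\gamma_{\epsilon}(1+\gamma_{\epsilon})$, so the perturbed triple lies in $\mathcal{D}$, and $\beta_{\epsilon}^{2}=\beta_{0}^{2}+2\gamma\epsilon+\epsilon^{2}\ge\beta_{0}^{2}$ on $B$, so the entropy change has the favourable sign $-T\Delta S\le0$. Every other change — kinetic, $-\mu\rho$, $\tfrac12\widehat{V}(0)\rho^{2}$, and the $\gamma\gamma$- and $\alpha\alpha$-integrals — is $O(\epsilon)$ (using $\|\widehat{V}\|_{\infty}<\infty$ and $\gamma\in L^{1}$), while the cross term contributes $\rho_{0}(2\pi)^{-3}(\epsilon-\sqrt{\epsilon})\int_{B}\widehat{V}=-\rho_{0}(2\pi)^{-3}\sqrt{\epsilon}\int_{B}\widehat{V}+O(\epsilon)$. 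Since $\rho_{0}>0$ and $\int_{B}\widehat{V}>0$, the $-\sqrt{\epsilon}$ term dominates for small $\epsilon$, so $\mathcal{F}$ strictly decreases, contradicting minimality. At $T=0$ one can first note, via Corollary~\ref{cor:structureminimizerT=0}, that $\alpha\equiv0$ forces $\gamma\equiv0$, so the minimizer is $(0,0,\rho_{0})$, and then apply the same perturbation.

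I expect the main obstacle to be exactly this degenerate configuration in the second implication, where $\gamma$ is supported away from $\{\widehat{V}>0\}$ (in particular $\gamma\equiv0$): there one cannot perturb $\alpha$ alone and must switch on $\gamma$ and a negative $\alpha$ simultaneously. The decisive point is the mismatched scaling $\alpha_{\epsilon}\sim-\sqrt{\epsilon}$ against $\gamma_{\epsilon}\sim\epsilon$: because $\alpha$ enters the cross term linearly, this yields a gain of order $\sqrt{\epsilon}$ that beats every $O(\epsilon)$ cost, while the entropy cost is controlled — indeed has the right sign — precisely because the increment is an (almost) pure quasi-free perturbation, for which $\beta$ never drops below its original value. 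A secondary nuisance is that $s'(\beta)\to\infty$ as $\beta\to1/2$, so one should use the exact inequality $\beta_{\epsilon}\ge\beta_{0}$ on $B$ rather than a crude Taylor bound.
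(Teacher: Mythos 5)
Your proof is correct, and the interesting direction $\alpha\equiv0\Rightarrow\rho_{0}=0$ is handled by a genuinely different argument from the paper's. The paper quotes the Euler--Lagrange equation $\rho_{0}\widehat{V}(p)+\widehat{V}*\alpha(p)+T\frac{\alpha}{\beta}\ln\frac{\beta+1/2}{\beta-1/2}=0$, whose validity with equality rests on the interior a priori bounds of Lemmas~\ref{lem:gammabigger0} and~\ref{lem:apriorialpha} (positivity of $\gamma$ a.e.\ and strictness of $\alpha^{2}<\gamma(\gamma+1)$ a.e.); setting $\alpha\equiv0$ in that equation immediately kills $\rho_{0}$. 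You instead run a self-contained trial-state computation with the anisotropic scaling $\gamma\mapsto\gamma+\epsilon\1_{B}$, $\alpha\mapsto-\sqrt{\epsilon}\,\1_{B}$, so the cross term produces a gain of order $-\rho_{0}\sqrt{\epsilon}\int_{B}\widehat{V}$ that beats every $O(\epsilon)$ cost, while the pointwise inequality $\beta_{\epsilon}\geq\beta_{0}$ keeps the entropy change one-signed without any Taylor expansion. This avoids the machinery of Lemmas~\ref{lem:gammabigger0}--\ref{lem:apriorialpha} entirely and works uniformly down to $T=0$; it is essentially the same device the paper itself deploys in Step~2 of the proof of Proposition~\ref{limit0} and in Lemma~\ref{condensationsmallT}, but applied directly here. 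The price is that the argument is longer on the page than the paper's one-line invocation of the Euler--Lagrange equation; what you buy is independence from the a priori regularity results. Your $T=0$ treatment of $\rho_{0}=0\Rightarrow\alpha\equiv0$ via Corollary~\ref{cor:structureminimizerT=0} (create the second minimizer $(\gamma,0,0)$, deduce $\gamma\equiv0$, contradict $\gamma\not\equiv0$) is a bit more roundabout than the paper's route through Theorems~\ref{thm:phasetrangrandcan} and~\ref{thm:phasetrancan}, which simply fix the sign of both $\rho_{0}$ and $\alpha$ in every $T=0$ regime, but it is correct. Two small remarks: in the grand canonical perturbation the term $\frac12\widehat{V}(0)\rho^{2}-\mu\rho$ also changes by $O(\epsilon)$ since $\rho$ increases by $\epsilon|B|$, which you implicitly absorb into ``every other change is $O(\epsilon)$''; and your observation that $s'(\beta)\to\infty$ as $\beta\to1/2$ is exactly why the monotonicity argument, not a linearization, is the right tool here.
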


Hence, there exists only one kind of phase transition within our
model. The next results show that this phase transition indeed exists.

\begin{theorem}[Existence of grand canonical phase transition]\label{thm:phasetrangrandcan}
  Let $\mu>0$.  Then there exist temperatures $0<T_1<T_2$ such that
  a minimizing triple $(\gamma,\alpha,\rho_0)$ of
  \eqref{def:grandcanonicalminimization} satisfies
\begin{enumerate}
\item $\rho_0=0$ for $T\geq T_2$
\item $\rho_0>0$ for $0\leq T \leq T_1$.
\end{enumerate}
\end{theorem}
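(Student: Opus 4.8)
The plan is to prove both parts by comparing a minimizer with a competitor having the opposite condensate structure, tracking the role of temperature through sharp entropy estimates. Two preliminary facts are used throughout. First, since $(0,0,\mu/\widehat V(0))\in\mathcal D$ gives $\mathcal F=-\mu^{2}/(2\widehat V(0))$, we have $F(T,\mu)\le -\mu^{2}/(2\widehat V(0))$ for every $T\ge 0$. Second, for fixed $(\gamma,\alpha)$ the map $\rho_{0}\mapsto\mathcal F(\gamma,\alpha,\rho_{0})$ is a strictly convex quadratic with
\[
\partial_{\rho_{0}}\mathcal F=-\mu+\widehat V(0)\rho+(2\pi)^{-3}\!\int_{\mathbb R^{3}}\widehat V(p)\bigl(\gamma(p)+\alpha(p)\bigr)\,dp ;
\]
since $\gamma(p)+\alpha(p)\ge\gamma(p)-\sqrt{\gamma(p)(\gamma(p)+1)}\ge-\tfrac12$ and $\widehat V\ge0$, the stationarity $\partial_{\rho_{0}}\mathcal F=0$, which holds whenever a minimizer has $\rho_{0}>0$, forces the $T$-independent bound $\rho\le\rho_{\max}:=(\mu+\tfrac12 V(0))/\widehat V(0)$, with $V(0)<\infty$ because $\widehat V\in L^{1}$.

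\textbf{Low temperature ($\rho_{0}>0$ for $T\le T_{1}$).} Suppose a minimizer has $\rho_{0}=0$; by Theorem~\ref{thm:BECvsSF} then $\alpha\equiv0$, so $\gamma$ minimizes $\gamma\mapsto\mathcal F(\gamma,0,0)$. Completing the square in $\rho_{\gamma}$ writes $\mathcal F(\gamma,0,0)$ as $-\mu^{2}/(2\widehat V(0))$ plus the three nonnegative quantities $\tfrac{\widehat V(0)}{2}(\rho_{\gamma}-\mu/\widehat V(0))^{2}$, the kinetic energy $(2\pi)^{-3}\int p^{2}\gamma$, and the exchange energy $\tfrac12(2\pi)^{-6}\iint\widehat V(p-q)\gamma(p)\gamma(q)\,dp\,dq$, minus $TS(\gamma,0)$. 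At $T=0$ the entropy term is absent, so $\mathcal F(\gamma,0,0)>-\mu^{2}/(2\widehat V(0))$ unless $\gamma\equiv0$, in which case $\mathcal F=0$; both exceed $F(0,\mu)$, so $\rho_{0}>0$ at $T=0$. For $T>0$ one uses the ideal-gas Legendre inequality $Ts(\gamma(p)+\tfrac12)\le p^{2}\gamma(p)+T|\ln(1-e^{-p^{2}/T})|$, which integrates to $TS(\gamma,0)\le(2\pi)^{-3}\int p^{2}\gamma+CT^{5/2}$. Combined with $F(T,\mu)\le-\mu^{2}/(2\widehat V(0))$ this forces, for the minimizing $\gamma$, both $\rho_{\gamma}\in[\rho_{*},4\rho_{*}]$ with $\rho_{*}:=\mu/(2\widehat V(0))$ (for $T$ small), and a small exchange energy $\le CT^{5/2}$. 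Two scaling estimates then collide. A Chebyshev/dyadic-tiling argument shows that a momentum distribution with $\rho_{\gamma}\ge\rho_{*}$ and kinetic energy $K:=(2\pi)^{-3}\int p^{2}\gamma$ has exchange energy $\gtrsim\rho_{\gamma}^{7/2}/K^{3/2}$ once $K$ exceeds a fixed constant (and $\gtrsim\rho_{*}^{2}$ otherwise), so its smallness forces $K\gtrsim T^{-5/3}$. Conversely, $\mathcal F(\gamma,0,0)\le-\mu^{2}/(2\widehat V(0))$ gives $TS(\gamma,0)\ge K$, while Hölder's inequality applied to the elementary bound $s(t+\tfrac12)\le C\,t^{3/4}$ gives $S(\gamma,0)\lesssim(\rho_{\gamma}K)^{3/8}$, hence $K\lesssim T^{8/5}$. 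Since $T^{-5/3}\lesssim K\lesssim T^{8/5}$ is impossible for $T$ small, the minimizer has $\rho_{0}>0$ for all $T\le T_{1}$.

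\textbf{High temperature ($\rho_{0}=0$ for $T\ge T_{2}$).} Suppose a minimizer has $\rho_{0}>0$; then $\rho\le\rho_{\max}$. A local variation increasing $\gamma(p)$ by $\varepsilon$ changes all terms by $O(\varepsilon)$ but the entropy by a term of order $T\varepsilon\ln(1/\varepsilon)$ when $\gamma(p)$ is small, which shows $\gamma(p)>0$ a.e.\ and that the constraint $\alpha^{2}\le\gamma(\gamma+1)$ is nowhere active; at such points the $\gamma$-Euler--Lagrange equation reads $A(p)=Ts'(\beta(p))\tfrac{1/2+\gamma(p)}{\beta(p)}$ with $A(p)=p^{2}-\mu+\widehat V(0)\rho+(2\pi)^{-3}(\widehat V*\gamma)(p)+\rho_{0}\widehat V(p)$. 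Using $s'(\beta)=\ln\tfrac{\beta+1/2}{\beta-1/2}\ge\tfrac1\beta$, $\tfrac{1/2+\gamma}{\beta}\ge1$ and $\gamma+\tfrac12\ge\beta$ gives $\gamma(p)+\tfrac12\ge T/A(p)$, and $A(p)\le p^{2}+2\widehat V(0)\rho_{\max}=:p^{2}+C_{A}$ then yields $\gamma(p)\ge\tfrac{T}{p^{2}+C_{A}}-\tfrac12$. Integrating over $|p|^{2}\le T/2$ gives $\rho_{\gamma}\ge c\,T^{3/2}$ for $T$ large. But $\rho_{\gamma}\le\rho\le\rho_{\max}$, so $T\le(\rho_{\max}/c)^{2/3}$; hence the minimizer has $\rho_{0}=0$ once $T\ge T_{2}$, with $T_{2}$ an explicit constant depending on $\mu$, $\widehat V(0)$ and $V(0)$.

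\textbf{Main obstacle.} The delicate part is the low-temperature claim, where one must make precise the tension between a small exchange energy — which forces the thermal cloud to spread out in momentum and hence carry large kinetic energy — and the near-saturation of the entropy--kinetic inequality — which forces the cloud to resemble a zero-chemical-potential Bose profile and hence be concentrated. Turning this into the competing scaling bounds $K\gtrsim T^{-5/3}$ and $K\lesssim T^{8/5}$, in particular establishing the dyadic-tiling lower bound for the exchange energy and the Hölder upper bound for the entropy with the correct exponents, is where the real work lies. The high-temperature claim is comparatively soft, reducing after the a priori density bound to the elementary fact that the thermal occupation forces $\rho_{\gamma}\gtrsim T^{3/2}$.
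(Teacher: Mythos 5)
Your proof is correct in its overall strategy but takes genuinely different routes from the paper for both parts. For the low-temperature direction, the paper (Lemma~\ref{condensationsmallT}) writes $\mathcal{F}(\gamma,0,0)$ as the sum of three pieces, combining \emph{half} of the kinetic energy with the exchange term; that combination is lower-bounded by $\min\{c_1\rho_\gamma,c_2\rho_\gamma^2\}$, and the budget $\mathcal{F}(\gamma,0,0)\le-\mu^2/(2\widehat{V}(0))$ then directly gives $\rho_\gamma\lesssim T^{5/4}$. The contradiction follows by inserting this smallness into $\partial_{\rho_0}\mathcal{F}|_{\rho_0=0}$, which is then strictly negative. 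You instead isolate the quadratic piece to conclude $\rho_\gamma$ is bounded \emph{away} from zero, and compete two scaling bounds on $K=\int p^2\gamma$: the exchange lower bound $X\gtrsim\rho_\gamma^{7/2}/K^{3/2}$ forces $K\gtrsim T^{-5/3}$, while $TS\ge K$ with $S\lesssim(\rho_\gamma K)^{3/8}$ forces $K\lesssim T^{8/5}$. Those bounds are plausible and I believe the dyadic-tiling estimate can be proved as you sketch, but you do not actually prove it and you rightly flag it as the crux. It is also more than you need: keeping $K/2$ with the exchange term, the same budget gives $K/2+X\le CT^{5/2}$, which makes $K$ small directly and already contradicts $K\gtrsim T^{-5/3}$ without any entropy upper bound. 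The paper's split achieves exactly this simplification. For the high-temperature direction, the paper compares an explicit trial upper bound $\inf_\gamma\mathcal{F}(\gamma,0,0)\le-CT^2\ln T+O(T^2)$ against a lower bound $\ge-\tilde{C}T\ln T+O(T)$ on the energy of any minimizer with $\rho_0>0$, whereas you combine the a priori bound $\rho\le\rho_{\max}$ from $\rho_0$-stationarity (the analogue of \eqref{gammaintbound}) with the pointwise lower bound $\gamma(p)+\tfrac12\ge T/A(p)$ from the $\gamma$-Euler--Lagrange equation, obtaining $\rho_\gamma\gtrsim T^{3/2}$ and a contradiction for $T$ large. This is a legitimate and arguably cleaner alternative with an explicit $T_2$. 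It does implicitly rely on the Euler--Lagrange equation holding pointwise with equality for grand canonical minimizers, which the paper establishes via the a priori bounds of Lemmas~\ref{lem:gammabigger0} and~\ref{lem:apriorialpha}; you gesture at this with the local-variation remark but should invoke those lemmas or spell that step out.
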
 

\begin{theorem}[Existence of canonical phase transition]\label{thm:phasetrancan}
For fixed  $\rho>0$ there exist temperatures $0<T_3<T_4$ such that a minimizing triple
$(\gamma,\alpha,\rho_0)$ of \eqref{def:canonicalminimization} satisfies 
\begin{enumerate}
\item $\rho_0=0$ for $T\geq T_4$
\item $\rho_0>0$ for $0\leq T \leq T_3$.
\end{enumerate}
\end{theorem}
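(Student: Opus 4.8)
The plan is to transfer the grand canonical phase-transition result of Theorem \ref{thm:phasetrangrandcan} to the canonical setting via the Legendre-transform relationship between $F(T,\mu)$ and $F^{\mathrm{can}}(T,\rho)$, or, alternatively, to argue directly using energy comparisons at fixed $\rho$. I would favour the direct approach, since it avoids having to control the behaviour of the supremum in the Legendre transform near the phase boundary. Fix $\rho>0$.

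For the high-temperature claim (2), i.e.\ $\rho_0=0$ for $T\geq T_4$: the idea is that entropy dominates. At large $T$, to keep the free energy finite one must have a large entropy contribution, which forces $\gamma$ to be spread out with $\int\gamma$ of order $T^{3/2}$; but the density constraint $\rho_\gamma=\rho-\rho_0\leq\rho$ caps this. More precisely, I would take a trial state with $\rho_0=0$ and $\alpha\equiv0$ and $\gamma$ equal to the ideal-gas profile at the chemical potential that realizes $\rho_\gamma=\rho$, obtaining an upper bound on $F^{\mathrm{can}}(T,\rho)$ that behaves like $-cT^{5/2}$ (the ideal Bose gas free energy) plus interaction terms of lower order in $T$. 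Then I would show that any competitor with $\rho_0>0$ pays a price: with $\rho_0>0$ forced, $\rho_\gamma\leq\rho-\rho_0<\rho$, so less momentum-space mass is available for entropy; combined with Theorem \ref{thm:BECvsSF} (which says $\rho_0>0$ forces $\alpha\not\equiv0$, contributing a positive interaction cost via the $\tfrac12\iint\widehat V(p-q)(\alpha(p)\alpha(q)+\gamma(p)\gamma(q))$ and $\rho_0\int\widehat V(\gamma+\alpha)$ terms, using $\widehat V\geq0$), one sees the energetic gain from the condensate is $O(1)$ in $T$ while the entropic loss grows. The key estimate is lower-bounding $-TS(\gamma,\alpha)$ for states with $\int\gamma\leq\rho-\delta$ and comparing against the unconstrained optimum; quantitatively one shows the free energy of any minimizer with $\rho_0\geq\delta>0$ exceeds the trial bound once $T\geq T_4(\delta)$, and then optimizes over $\delta$. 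I expect this to be the main obstacle, because one must handle the $\alpha$-dependence carefully and control how the entropy responds to the density constraint uniformly.

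For the low-temperature claim (1), i.e.\ $\rho_0>0$ for $0\leq T\leq T_3$: here the zero-temperature case $T=0$ should follow from Corollary \ref{cor:structureminimizerT=0} together with the positivity of $\widehat V$ — a $T=0$ minimizer is a pure quasi-free state, and one checks that for fixed $\rho>0$ the energy of any state with $\rho_0=0$ (hence $\alpha\equiv0$ by Theorem \ref{thm:BECvsSF}) is strictly larger than that of a suitable state with a small condensate, since introducing $\rho_0>0$ allows a compensating adjustment of $\gamma$ that lowers the kinetic and interaction energy. To extend to a neighbourhood $0\leq T\leq T_3$ I would use continuity/monotonicity: the map $T\mapsto F^{\mathrm{can}}(T,\rho)$ is concave (infimum of affine functions of $(T,\text{-entropy})$... more precisely $-\partial_T F^{\mathrm{can}}=S$ along minimizers), and I would argue that $\liminf_{T\to0}$ of the minimal free energy over the constrained set $\{\rho_0=0\}$ stays strictly above $\lim_{T\to0}F^{\mathrm{can}}(T,\rho)$, using lower semicontinuity of the constrained functional (as established in the proofs of the existence theorems) and compactness to extract a limiting $T=0$ minimizer with $\rho_0=0$, contradicting the $T=0$ statement. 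This then yields $T_3>0$ such that the strict inequality persists, forcing $\rho_0>0$ for $T\leq T_3$.

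Finally, since the claim only asserts the existence of $T_3<T_4$ (not that they coincide), no monotonicity of the condensate fraction in $T$ is needed, and once the two one-sided statements are established the theorem follows immediately by taking $T_3$ from the low-temperature argument and $T_4$ from the high-temperature argument.
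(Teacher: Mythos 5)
Your high-temperature argument takes a genuinely different route from the paper's. You propose a trial-state upper bound (ideal gas profile, $\rho_0=0$) compared against a lower bound on competitors with $\rho_0>0$, tracking the entropic cost of the mass that the condensate removes from $\gamma$. The paper avoids all of this with a much more economical \emph{variational-derivative} argument: it fixes a high-momentum shell $U=\{|p|>\sqrt{2\rho\widehat V(0)+C_0}\}$, observes that for $T$ large one has $\rho<\int_U(e^{2p^2/T}-1)^{-1}dp$, so any $\gamma$ with $\rho_\gamma\le\rho$ must dip below $(e^{2p^2/T}-1)^{-1}$ on a set $V\subset U$ of positive measure, and there the upper bound on $\partial\cF^{\rm can}/\partial\gamma$ (using the entropy-derivative bound and \eqref{finaleqref?}) is strictly negative. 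One can then lower the energy by increasing $\gamma$ on $V$ until the constraint $\rho_\gamma=\rho$ is saturated, forcing $\rho_0=0$. This sidesteps the comparison-of-competitors bookkeeping that you flag as "the main obstacle"; it also makes no use of Theorem~\ref{thm:BECvsSF}, so no delicate treatment of $\alpha$ is needed in this step.

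The low-temperature half of your proposal has a genuine logical gap: circularity. You want to invoke Corollary~\ref{cor:structureminimizerT=0} at $T=0$, but that corollary is a consequence of Theorem~\ref{thm:existencecanonicalzeroT}, whose proof in the paper explicitly relies on Theorem~\ref{thm:phasetrancan} to guarantee $\rho_0>0$ at small $T$ (this is what gives access to the Euler--Lagrange equations with the large-$p$ bound uniform in $T$). So you would be using the theorem to prove itself. Moreover, the additional continuity/compactness machinery you sketch for pushing the $T=0$ statement to a neighbourhood is unnecessary: the paper establishes the low-$T$ claim directly and uniformly in $T$ by a single trial-state comparison. Since at $\rho_0=0$ the optimal $\alpha$ is $\alpha\equiv0$ (by \eqref{ineq:sga<sg0} and \eqref{alphaxspace}), it suffices to compare $\cF^{\rm can}(\gamma,0,0)$ against $\cF^{\rm can}(0,0,\rho)=\tfrac12\widehat V(0)\rho^2$; splitting the kinetic term and using the $p^2$-plus-interaction argument of \eqref{term1bound} together with \eqref{noninteractingbound} yields
\[
\cF^{\rm can}(\gamma,0,0)-\cF^{\rm can}(0,0,\rho)\ \ge\ \min\{c_1\rho,\,c_2\rho^2\}-CT^{5/2},
\]
which is strictly positive for $T<T_3:=\big(\min\{c_1\rho,c_2\rho^2\}/C\big)^{2/5}$. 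This gives the whole range $0\le T\le T_3$ in one stroke, with no appeal to the $T=0$ existence theory and no circularity. You should rework the low-temperature part along these lines.
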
 

\begin{remark}
All the statements remain true in one and two dimensions.
\end{remark}

\subsection{Grand canonical phase diagram}\label{ssec:gcpd} The
results stated above together with their proofs allow us to sketch a
phase diagram of the system, see Figure
\ref{fig:phasediagram}.

\begin{figure}[h] 
\includegraphics[scale=0.3]{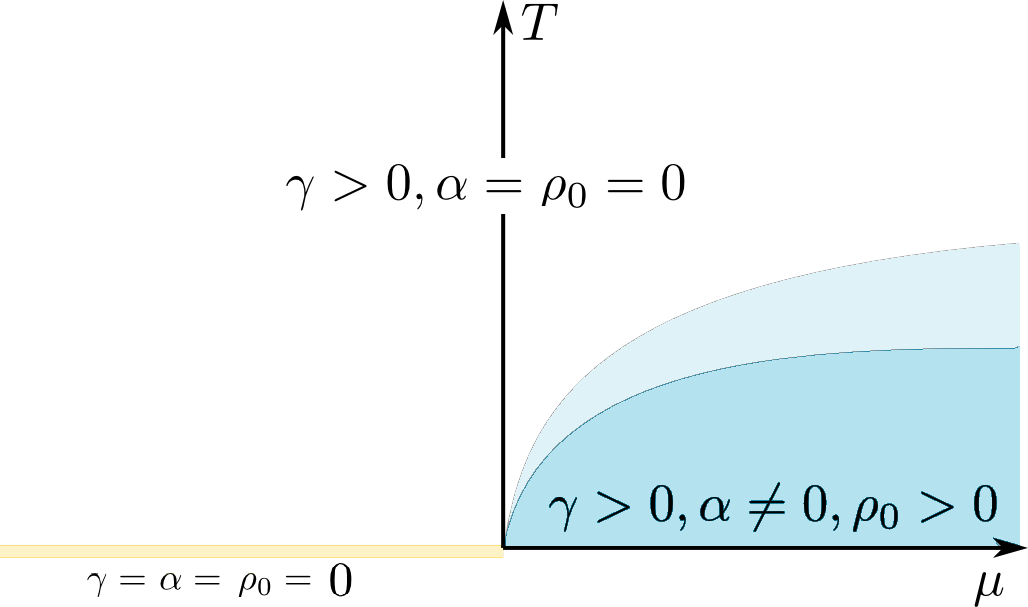}
\caption{The grand canonical phase diagram of the model. No diluteness is assumed. At $\mu\leq0$ and $T=0$, all quantities are zero, in particular there is no BEC. Increasing $T$ does not lead to a phase transition, although $\gamma$ becomes non-zero. For $\mu>0$ fixed and $T=0$, there is BEC. This remains the case when $T$ increases (darkest region), eventually leading to a phase transition somewhere in the lighter region before we enter the white region where $\rho_0=0$.}\label{fig:phasediagram}
\end{figure}

Note that at $T=0$ and $\mu<0$ the minimizer corresponds to the
vacuum.  Also, for negative chemical potentials there is no phase
transition in the system.

The area with the lighter shade of blue indicates that we cannot
rule out multiple phase transitions with different critical
temperatures. This is, however, unexpected. The vanishing of this
area as $\mu$ approaches zero from the right is a consequence of the
results in \cite{NapReuSol2-15}, which we review next. See, in particular, 
Theorem~\ref{thm:grandcancrittemp}.

\subsection{Main results of \cite{NapReuSol2-15}}
The main results of \cite{NapReuSol2-15} hold under several general
assumptions. For the following three results, we assume that the gas is dilute
\begin{eqnarray}\label{dillim}
\rho^{1/3}a\ll 1,
\end{eqnarray}
where $a$ is the scattering length of the potential. Furthermore, we define the constant $C$ by
\begin{equation}
\label{assumptionsV}
\int\widehat{V}\leq Ca^{-2}\hspace{1cm}\text{and}\hspace{1cm}\|\partial^n\widehat{V}\|_{\infty}\leq C a^{n+1}\text{\ for\ } 0\leq n\leq3,
\end{equation}
where $\partial^n$ is shorthand for all $n$-th order partial derivatives. With this definition, our estimates depend only on $C$ and not on $a$.
Recall $\nu:=\widehat{V}(0)/a$. The following theorems contain information about the critical temperature of the phase transition in the dilute limit. 

\begin{theorem}[Canonical critical temperature]
\label{thm:cancrittemp}
Let $(\gamma,\alpha,\rho_0)$ be a minimizing triple of \eqref{def:canonicalminimization} at temperature $T$ and density $\rho$. There is a monotone increasing function $h_1:(8\pi,\infty)\to\mathbb{R}$ with $h_1(\nu)\geq\lim_{\nu\to8\pi}h_1(\nu)=1.49$ such that
\begin{enumerate}
\item $\rho_0\neq 0$ if $T<T_{\rm fc}\left(1+h_1(\nu)\rho^{1/3}a+o(\rho^{1/3}a)\right)$
\item $\rho_0= 0$ if $T>T_{\rm fc}\left(1+h_1(\nu)\rho^{1/3}a+o(\rho^{1/3}a)\right)$,
\end{enumerate}
where $T_{\rm{fc}}=c_0\rho^{2/3}$ with $c_0=4\pi\xi(3/2)^{-2/3}$ is the critical temperature of the free Bose gas.
\end{theorem}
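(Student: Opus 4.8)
\emph{Strategy.} The plan is to fix $\rho>0$ and reduce the question to the sign of a one-parameter quantity. By \eqref{def:canonicalminimization}--\eqref{funcf} one has $F^{\mathrm{can}}(T,\rho)=\inf_{0\le\rho_0\le\rho}f(\rho-\rho_0,\rho_0)$, so a minimizing triple has $\rho_0\neq0$ exactly when $\inf_{0<\rho_0\le\rho}\big(f(\rho-\rho_0,\rho_0)-f(\rho,0)\big)<0$. By Theorem~\ref{thm:BECvsSF} the reference energy $f(\rho,0)$ is attained with $\alpha\equiv0$, so after discarding the $\alpha$–terms it reduces, up to the $\rho_0$–independent mean-field energy $\tfrac12\widehat V(0)\rho^2$ and lower order, to the free energy of an ideal Bose gas at density $\rho$ with an effective, on the relevant scales nearly constant, potential; in particular its critical density is essentially that of the free gas. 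Existence of \emph{some} transition for fixed $\rho$ is already known (Theorem~\ref{thm:phasetrancan}); what is new is its \emph{location}. Equivalently one can analyse the Euler--Lagrange system of $\cF^{\mathrm{can}}$ and ask when a condensed critical point bifurcates off the normal branch, which is a scalar threshold condition on $T$ — a ``Bogoliubov gap equation'' in the limit $\rho_0\to0^+$, the bosonic counterpart of the BCS $T_{\mathrm c}$ equation.

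\emph{A priori localisation.} The hard part is to show that the perturbative computation below is legitimate: one must establish that for $T$ in a window of size $O(T_{\mathrm{fc}}\rho^{1/3}a)$ around $T_{\mathrm{fc}}$, every configuration with $\cF^{\mathrm{can}}$ below $f(\rho,0)$ is close to the critical free gas. Using the free Bose gas as competitor together with the positivity \eqref{positiveinteraction} and the scaling bounds \eqref{assumptionsV}, I would show that such configurations have $\rho_0=O(\rho\,\rho^{1/3}a)$ (equivalently $\rho_\gamma$ within $o(\rho)$ of the free critical density), $\gamma$ close to the critical free profile in $L^1((1+p^2)dp)$, and $\alpha$ small in the norms controlled by $\cD'$, with momentum essentially living at the thermal scale apart from a thin infrared layer $|p|\lesssim\sqrt{\rho_0\widehat V(0)}$ carrying the pairing. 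Making these bounds quantitative with errors genuinely $o(\rho^{1/3}a)$ \emph{uniformly} across the window is the delicate point, and it is here that the additional regularity in \eqref{assumptionsV} (bounded derivatives of $\widehat V$ up to third order, needed to control the low-momentum form of the effective dispersion) enters.

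\emph{Expansion and the function $h_1$.} Granting the localisation, I would minimise $\alpha$ out of $\cF^{\mathrm{can}}$ at fixed $(\gamma,\rho_0)$ — an explicit step producing an effective Bogoliubov dispersion $E_{\rho_0}(p)$ that crosses over from $\approx\sqrt{2\rho_0\widehat V(0)}\,|p|$ for $|p|\lesssim\sqrt{\rho_0\widehat V(0)}$ to $\approx p^2$ above that scale — reduce the entropy accordingly, and expand $f(\rho-\rho_0,\rho_0)-f(\rho,0)$. The $\rho_0$–independent mean-field energy drops out of the comparison, and the surviving terms organise into a Landau-type expansion in $\rho_0$: a positive normal-phase cost, from the ideal-gas part $f_0(\rho-\rho_0)-f_0(\rho)$ and the condensate interaction term $\rho_0(2\pi)^{-3}\!\int\widehat V\gamma$, competing against a negative gain from the depleted low-energy density of states in the infrared layer (a linear dispersion carries fewer modes near $p=0$ than a quadratic one, so the thermal cloud holds less density and condensation is favoured — the mechanism behind $\Delta T_{\mathrm c}>0$). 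Optimising over $\rho_0$, the sign of the infimum flips precisely at $T=T_{\mathrm{fc}}\big(1+h_1(\nu)\rho^{1/3}a+o(\rho^{1/3}a)\big)$; since $\widehat V(p)\approx\widehat V(0)=\nu a$ on the relevant scales, the leading correction depends on $V$ only through $\nu$, and $h_1(\nu)$ is read off from the resulting universal integral. Monotonicity of $h_1$ and the value $\lim_{\nu\to8\pi}h_1(\nu)=1.49$ then follow from a direct analysis of that integral.

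\emph{Conclusion.} For $T$ below the threshold curve the computation exhibits an admissible triple with $\rho_0>0$ and $\cF^{\mathrm{can}}$ strictly below $f(\rho,0)$, so by Theorem~\ref{thm:existencecanonicalepositiveT} every minimizing triple has $\rho_0\neq0$; this is (1). For $T$ above the threshold the expansion gives $f(\rho-\rho_0,\rho_0)>f(\rho,0)$ for all small $\rho_0$, while $\rho_0$ bounded away from $0$ is ruled out near the window by the localisation and is anyway suppressed well above $T_{\mathrm{fc}}$, since a macroscopic condensate costs an entropy of order one against interaction gains of order $\rho^{1/3}a$; hence no condensed configuration beats the normal phase and $\rho_0=0$, which is (2). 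Both statements are consistent with, and sharpen, the transition window of Theorem~\ref{thm:phasetrancan}.
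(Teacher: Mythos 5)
This theorem is only \emph{stated} in the present paper; it explicitly defers the proof to the companion paper \cite{NapReuSol2-15} (``We refer to \cite{NapReuSol2-15} for the proof of these statements''). There is therefore no proof in this paper against which to compare your attempt, and the fine structure of your sketch can only be checked against Part~II.

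On its own terms, your outline captures the right physical mechanism (the Bogoliubov dispersion $\sqrt{p^4+2\rho_0\widehat V(0)p^2}$ depletes the low-momentum density of states, favouring condensation) and the right logical shape (reduce to the sign of $\inf_{\rho_0>0}\bigl(f(\rho-\rho_0,\rho_0)-f(\rho,0)\bigr)$, localize near-minimizers, expand, and invoke Theorems~\ref{thm:existencecanonicalepositiveT} and \ref{thm:phasetrancan} for existence and the existence of \emph{some} transition). But essentially all the mathematical work is concentrated in the steps you yourself flag as ``delicate'' and leave open, and two of your shortcuts are not innocuous. First, ``minimise $\alpha$ out --- an explicit step'' is not explicit: the Euler--Lagrange relation $\rho_0\widehat V(p)+\widehat V*\alpha(p)+T(\alpha/\beta)\ln\frac{\beta+1/2}{\beta-1/2}=0$ is nonlocal through $\widehat V*\alpha$, so the reduction to an effective dispersion requires replacing $\widehat V*\alpha$ by a nearly constant quantity with error $o(\rho^{1/3}a)$, which is itself part of the localization you are appealing to. Second, your bookkeeping of the mean-field terms is off at leading order: the exchange term $\tfrac12\iint\widehat V(p-q)\gamma(p)\gamma(q)\approx\tfrac12\widehat V(0)\rho_\gamma^2$ is of the same order as the direct term $\tfrac12\widehat V(0)\rho^2$ (not ``lower order''), and it depends on $\rho_0$ through $\rho_\gamma=\rho-\rho_0$; together with $\rho_0\int\widehat V\gamma$ it produces a genuine $-\tfrac12\widehat V(0)\rho_0^2$ contribution to the energy difference. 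Since the threshold constant $1.49$ and the $\nu$-dependence of $h_1$ emerge precisely from keeping such contributions straight, a sketch that elides them cannot actually deliver the stated $h_1(\nu)$. As written this is a plausible roadmap, not a proof, and one that the present paper neither contains nor substantiates.
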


\begin{theorem}[Grand canonical critical temperature]
\label{thm:grandcancrittemp}
Let $(\gamma,\alpha,\rho_0)$ be a minimizing triple of \eqref{def:grandcanonicalminimization} at temperature $T$ and chemical potential $\mu$. There is a function $h_2:(8\pi,\infty)\to\mathbb{R}$ with $\lim_{\nu\to8\pi}h_2(\nu)=0.44$ such that
\begin{enumerate}
\item $\rho_0\neq 0$ if $T<\left(\frac{\sqrt{\pi}}{2\zeta(3/2)}\frac{8\pi}{\nu}\right)^{2/3}\left(\frac{\mu}{a}\right)^{2/3}+h_2(\nu)\mu+o(\mu)$
\item $\rho_0= 0$ if $T>\left(\frac{\sqrt{\pi}}{2\zeta(3/2)}\frac{8\pi}{\nu}\right)^{2/3}\left(\frac{\mu}{a}\right)^{2/3}+h_2(\nu)\mu+o(\mu)$.
\end{enumerate}
\end{theorem}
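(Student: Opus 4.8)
The plan is to first read off the leading term from the Euler--Lagrange analysis of the grand canonical functional and then import the $O(\mu)$ correction from the canonical result. A minimizer $(\gamma,\alpha,\rho_0)$ of \eqref{def:grandcanonicalminimization} exists by Theorem~\ref{thm:existencepositiveT}. If $\rho_0=0$ then $\alpha\equiv0$ by Theorem~\ref{thm:BECvsSF}, and the minimizing $\gamma$ solves the Hartree-type equation
\[
\gamma(p)=\bigl(e^{E(p)/T}-1\bigr)^{-1},\qquad E(p)=p^2+\widehat{V}(0)\rho_\gamma+(2\pi)^{-3}\int_{\R^3}\widehat{V}(p-q)\gamma(q)\,dq-\mu.
\]
Integrability of $\gamma$ near $p=0$ forces $E(0)\ge0$, and by tracking the self-consistent value of $\rho_\gamma$ one checks that such a solution exists precisely when $\mu<2\widehat{V}(0)\rho_{\rm fc}(T)\bigl(1+o(1)\bigr)$; here the factor $2$ is the sum of the direct and exchange contributions, and the estimate uses that near the free critical temperature $\gamma$ concentrates on momenta $|p|\lesssim\sqrt T$, a scale much smaller in the dilute limit than the range over which $\widehat{V}$ varies. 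Since $\rho_{\rm fc}(T)=\tfrac{\zeta(3/2)}{8\pi^{3/2}}T^{3/2}$, inverting this inequality yields the threshold $T=\bigl(\tfrac{\sqrt{\pi}}{2\zeta(3/2)}\tfrac{8\pi}{\nu}\bigr)^{2/3}(\mu/a)^{2/3}\bigl(1+o(1)\bigr)$, the claimed main term: below it the normal phase is not even a critical point, so the minimizer has $\rho_0>0$, while above it the normal phase is a critical point which --- granted the convexity statement below --- is the global minimizer, so $\rho_0=0$.

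To pin down the $O(\mu)$ correction, which is an $O(\rho^{1/3}a)$-relative correction to the threshold, I would invoke the Legendre duality $F(T,\mu)=\inf_{\rho\ge0}\bigl(F^{\rm{can}}(T,\rho)-\mu\rho\bigr)$ recalled in the Introduction together with Theorem~\ref{thm:cancrittemp}. From $\cF=\cF^{\rm{can}}-\mu\rho$ a short computation shows that a grand canonical minimizer at $(T,\mu)$ with total density $\rho^\sharp$ is a canonical minimizer at density $\rho^\sharp$, with $\mu$ in the subdifferential of $F^{\rm{can}}(T,\cdot)$ at $\rho^\sharp$; by Theorem~\ref{thm:BECvsSF} it has $\rho_0\neq0$ iff $\rho^\sharp>\rho_{\rm c}(T)$, where $\rho_{\rm c}(T)$ is obtained by inverting the relation $T=T_{\rm fc}\bigl(1+h_1(\nu)\rho^{1/3}a+o(\rho^{1/3}a)\bigr)$ of Theorem~\ref{thm:cancrittemp}. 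One then pins down the transition by computing $\mu_{\rm c}(T):=\partial_\rho F^{\rm{can}}(T,\cdot)\big|_{\rho=\rho_{\rm c}(T)}$ from the dilute expansion of $F^{\rm{can}}$ that the analysis of Theorem~\ref{thm:cancrittemp} makes available --- its leading term is the mean-field value $2\widehat{V}(0)\rho_{\rm c}(T)$, matching the first paragraph, and the next-order term collects the $h_1(\nu)$-shift already carried by $\rho_{\rm c}(T)$ and the further interaction corrections to $\partial_\rho F^{\rm{can}}$ --- and then by inverting $\mu=\mu_{\rm c}(T)$ for $T$. This defines $h_2(\nu)$; verifying that $\lim_{\nu\to8\pi}h_2(\nu)=0.44$ is then a computation of the same type as the one giving $h_1(8\pi)=1.49$.

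The step I expect to be the main obstacle is proving that, whenever the normal-phase critical point exists, it is the global minimizer --- equivalently, that $F^{\rm{can}}(T,\cdot)$ is \emph{strictly} convex in a neighbourhood of $\rho_{\rm c}(T)$ with curvature bounded below by a positive multiple of $\widehat{V}(0)$, so that the set of total densities of grand canonical minimizers, namely $\{\rho:\mu\in\partial_\rho F^{\rm{can}}(T,\cdot)(\rho)\}$, crosses $\rho_{\rm c}(T)$ as soon as $\mu$ is moved by more than the error term. This rules out, in a window of size $o(\mu)$ around the transition, a coexistence of normal and condensed branches at two different densities --- the phenomenon noted in the Remark after Theorem~\ref{thm:existencezeroT} --- and hence makes the transition sharp up to the stated order; it requires the second-order dilute expansion of $F^{\rm{can}}$, and of its $\rho$-derivative, to hold uniformly for $\rho$ in a shrinking neighbourhood of $\rho_{\rm c}(T)$ and for $T$ in the relevant range, and it uses $\widehat{V}(0)>0$ in an essential way. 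A secondary technical point, again relying on this local uniformity, is propagating the $o(\rho^{1/3}a)$ error of Theorem~\ref{thm:cancrittemp} through the inversion $\rho_{\rm c}(T)$ and the Legendre transform to a genuine $o(\mu)$ error in $T$.
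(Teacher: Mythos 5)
This theorem is not proved in the paper you were given: the text immediately after Theorem~\ref{thm:grandcancrittemp} says ``We refer to \cite{NapReuSol2-15} for the proof of these statements,'' so there is no in-paper argument to compare your sketch against. I can therefore only assess the sketch on its own terms.

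Your leading-order analysis is sound, and the Legendre-duality route $F(T,\mu)=\inf_\rho\bigl(F^{\rm can}(T,\rho)-\mu\rho\bigr)$ is the natural way to import Theorem~\ref{thm:cancrittemp}. (Minor slip: the equivalence ``$\rho_0\neq0$ iff $\rho^\sharp>\rho_{\rm c}(T)$'' is exactly Theorem~\ref{thm:cancrittemp}, not Theorem~\ref{thm:BECvsSF}; the latter only ties $\rho_0$ to $\alpha$.) But the step you flag as the main obstacle is not merely hard --- it is false as formulated, and the paper itself tells you so. The Remark following Theorem~\ref{thm:existencezeroT} states that the companion paper exhibits values of $(T,\mu)$ at which the grand canonical problem \eqref{def:grandcanonicalminimization} has \emph{two} minimizers with different total densities. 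Two grand canonical minimizers at the same $(T,\mu)$ with densities $\rho_-<\rho_+$ force $F^{\rm can}(T,\cdot)$ to be affine on $[\rho_-,\rho_+]$, and since one branch has $\rho_0=0$ and the other $\rho_0>0$, the canonical critical density $\rho_{\rm c}(T)$ lies in this flat interval. So $F^{\rm can}(T,\cdot)$ is precisely \emph{not} strictly convex near $\rho_{\rm c}(T)$, and no lower curvature bound proportional to $\widehat V(0)$ can hold there. The transition is first order in $\rho$, and a strict-convexity lemma cannot be the mechanism that makes the grand canonical threshold sharp.

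The correct mechanism is different: one should compare free energies directly rather than argue via curvature of $F^{\rm can}$. Compute (to the required precision in the dilute limit) the grand canonical free energy $F_{\rm normal}(T,\mu)$ of the unconstrained $\rho_0=0,\ \alpha\equiv0$ branch and the free energy $F_{\rm cond}(T,\mu)$ of the condensed branch, and locate the curve in the $(T,\mu)$-plane where $F_{\rm normal}=F_{\rm cond}$; above it the normal phase wins, below it the condensed phase wins, and the density is generically discontinuous across the curve --- consistently with the coexistence noted in the Remark, which occurs exactly on the curve. This comparison still uses the canonical expansion from the companion paper (Remark~\ref{thm:canfreeenexp} gives the needed asymptotics for $F^{\rm can}$ on both sides of $\rho_{\rm c}$), and the $h_2(\nu)$ correction and the match to $h_1(\nu)$ should come out of equating the two expansions and then inverting $\mu\mapsto T$. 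Your $\mu_{\rm c}(T)=\partial_\rho F^{\rm can}|_{\rho_{\rm c}}$ reappears here as the common slope of $F^{\rm can}$ on the affine segment, which is well defined even without strict convexity. So the duality framework survives; the obstruction is in the wrong place.
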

We refer to \cite{NapReuSol2-15} for the proof of these statements.\\

The second main result of \cite{NapReuSol2-15} provides an expansion of the canonical free energy \eqref{def:canonicalminimization} in the dilute limit. Here, we only state what happens for $\nu=\widehat{V}(0)/a\to8\pi$, which is a corollary of Theorem 10 in \cite{NapReuSol2-15}. We need to define an integral first:
\[
\begin{aligned}
I(s)&=(2\pi)^{-3}\int\ln\left(1-e^{-\sqrt{p^4+16\pi p^2s^2}}\right)dp.\\
\end{aligned}
\]

\begin{theorem}[Canonical free energy expansion for $\nu\to8\pi$]
\label{thm:canfreeenexp}
Let $T_{\rm{fc}}=c_0\rho^{2/3}$ be the critical temperature of the free Bose gas, and $\rho_{\rm{fc}}=(T/c_0)^{3/2}$ its corresponding critical density.
In the limit $\nu\to8\pi$, the canonical free energy  \eqref{def:canonicalminimization} can be expanded in the following way.
\begin{enumerate}
\item If $T>T_{\rm fc}\left(1+1.49\rho^{1/3}a+o(\rho^{1/3}a)\right)$, then
$$F^{\rm can }(T,\rho)=F_0(T,\rho)+\widehat{V}(0)\rho^2+O(\rho a)^{5/2},$$
and we have $\rho_\gamma=\rho$, $\rho_0=0$ for the minimizer. Here, $F_0(T,\rho)$ is the free energy of the non-interacting gas at density $\rho$ and temperature $T$.
\item If $T<T_{\rm fc}\left(1+1.49\rho^{1/3}a+o(\rho^{1/3}a)\right)$, then
\begin{equation*}
\begin{aligned}
F^{\rm{can}}(T,\rho)&=4\pi a\rho^2+4\pi a\rho_{\rm fc}^2+\frac{512}{15}\sqrt{\pi}(\rho a)^{5/2}\\
&\quad+T^{5/2}I\left(\sqrt{\frac{(\rho-\rho_{\rm fc}) a}{T}}+\frac{1}{4\sqrt{\pi}}\sqrt{T}a\right)-4\sqrt{\pi}T\rho_{\rm fc} a\sqrt{(\rho-\rho_{\rm fc}) a}\\
&\quad+o\left(T(\rho a)^{3/2}+(\rho a)^{5/2}\right).
\end{aligned}
\end{equation*}
\end{enumerate}
The last expression reduces to the Lee--Huang--Yang formula for $T\ll\rho a$:
\[
F^{\rm can}(T,\rho)=4\pi a\rho^2+\frac{512}{15}\sqrt{\pi}(\rho a)^{5/2}+o(\rho a)^{5/2}.
\]
\end{theorem}

\subsection{Sketch of proofs and set-up of the paper} The rest
of the paper is devoted to the proofs of the statements described in
Subsections \ref{ssec:ex}, \ref{ssec:exispt} and \ref{ssec:gcpd}.

In Section \ref{sec:preliminaries} we provide some general facts that
will be useful throughout the paper.

Section \ref{sec:exis_min_proof_T>0} contains the proofs of
Theorems \ref{thm:existencepositiveT} and
\ref{thm:existencecanonicalepositiveT}. Section
\ref{sec:exis_min_proof_T=0} provides proofs of Theorems
\ref{thm:existencezeroT} and \ref{thm:existencecanonicalzeroT}.

The proof of the existence of minimizers in our model is 
harder than it is for the fermionic BCS functional
\cite{HaiHamSeiSol-08}, mainly because of the occurrence of Bose--Einstein Condensation (BEC). Loosely
speaking, at sufficiently low temperatures bosons tend to
macroscopically occupy the same quantum state, which suggests that there is no a priori bound on the momentum distribution $\gamma(p)$. Therefore a minimizing sequence could converge to a
measure that has a singular part representing the
condensate. This scenario, however, has been included in the
construction of the functional by introducing 
the parameter $\rho_0$, which already describes the condensate. The situation is indeed simpler for fermions as there is
an a priori bound on $\gamma$: the Pauli principle says $\gamma\leq 1$.

We now present the main ideas behind the proof. As highlighted by \eqref{def:canonicalminimization} and \eqref{funcf}, the canonical minimization problem can be seen as a minimization in $(\gamma,\alpha)$ followed by one in $\rho_0$. Analysing the function $f$ defined in \eqref{funcf} in Subsection \ref{reduc}, we conclude that there is a particular $\rho^{\rm min}_0$ for which the minimum is attained. We then consider the canonical functional with fixed $\rho_0=\rho^{\rm min}_0$ and $\rho_\gamma=\rho-\rho^{\rm min}_0$. The remaining minimization should be done over $\gamma$ with this particular $\rho_\gamma$. To deal with this, we introduce a Lagrange multiplier in Subsection \ref{Leg}. As it is still too hard to directly prove the existence of a minimizing pair $(\gamma,\alpha)$ for this problem, we restrict our functional to the smaller space
\begin{align*}
\mathcal{M}_{\kappa}=\{(\gamma,\alpha)\in \cD'|\gamma(p)\leq \frac{\kappa}{p^2}\}. 
\end{align*}
This imposes an artificial bound that can now be used to prove the existence of minimizers for this restricted problem with standard techniques. The idea is then to construct
a minimizing sequence of the unrestricted problem out of the
minimizers $\gamma_\kappa$ of the restricted problem in the limit
$\kappa\rightarrow \infty$.

To this end, we prove several bounds for the $\gamma_\kappa$'s in Subsections \ref{ap1} and \ref{apgr}. In particular, we show that mass concentration (or condensation) of the minimizing sequence can only occur at $p=0$. The last main step in Subsection \ref{eoum} is then to show that this is actually impossible, because it would increase the energy compared to a solution where the mass would have been added to $\rho_0$ from the start, contradicting the fact that $\rho^{\rm min}_0$ was the minimizing $\rho_0$. 

The proof sketched above only works for $T>0$ since the bounds mentioned are not uniform in $T$ and deteriorate as $T\rightarrow
0.$ It turns out that, under suitable assumptions, the positive temperature minimizers $(\gamma^T,
\alpha^T,\rho_0^T)$ form a uniformly equicontinuous family that is
also a minimizing sequence for the $T=0$ problem. Using the
Arzel\`a--Ascoli theorem one can then extract a minimizer.

Proofs of Theorems \ref{thm:phasetrangrandcan} and
\ref{thm:phasetrancan} as well as Theorem
\ref{thm:BECvsSF} and the discussion of the grand canonical phase
diagram are provided in Section \ref{sec:exi_phase_tran}. 
Appendix \ref{app:derivation} contains an introduction to Bogoliubov's variational theory and
a derivation of the functional.

\section{Preliminaries}\label{sec:preliminaries}
Let us start with several remarks and bounds which will be used later. Throughout the proofs $C, C_1,...$ stand for
unspecified universal constants. 

Recall the notation
\begin{equation*}
\beta(p):=\sqrt{(\frac{1}{2}+\gamma(p))^{2}-\alpha(p)^{2}}. \label{beta}
\end{equation*}
Since
$$ s(\beta)=-(\beta-\frac12)\ln(\beta-\frac12)+(\beta+\frac12)\ln(\beta+\frac12)$$
and
$$\frac{\partial s(\beta)}{\partial \beta}=\ln\frac{\beta+\frac12}{\beta-\frac12},$$
we have 
\begin{equation}
s(\gamma,\alpha)<s(\gamma,0), \text{ if }\alpha\ne0. \label{ineq:sga<sg0}
\end{equation}

Several bounds will rely on the decomposition
$\alpha=\alpha_{<}+\alpha_{>}$ where
$\alpha_{<}:=\alpha\mathbbm{1}_{\{\gamma<1\}}$ and
$\alpha_{>}:=\alpha\mathbbm{1}_{\{\gamma\geq1\}}$. The condition \bq
\alpha^2\leq\gamma^2+\gamma \label{eq:gammaalphaconditions} \eq then
implies that $|\alpha_{>}|\leq\sqrt{2}\gamma$ and
$|\alpha_{<}|\leq\sqrt{2}\sqrt{\gamma}$. Thus using the assumptions on
$V$ and $\widehat{V}$ we have
\begin{equation}
\label{finaleqref?} 
\begin{aligned}
&\left|\int\widehat{V}(p)\alpha(p)dp\right|\leq\int\limits_{\{\gamma\geq1\}}\widehat{V}(p)|\alpha_{>}(p)|dp+\int\limits_{\{\gamma<1\}}\widehat{V}(p)|\alpha_{<}(p)|dp \\ 
&\leq \int\limits_{\{\gamma\geq1\}}\widehat{V}(p)\sqrt{2}\gamma(p)dp +\int\limits_{\{\gamma<1\}}\widehat{V}(p)\sqrt{2}\sqrt{\gamma(p)}dp \\ 
& \leq \sqrt{2}\widehat{V}(0)\int\limits_{\{\gamma\geq1\}}\gamma(p)dp+\sqrt{2}\int\limits_{\{\gamma<1\}}\widehat{V}(p)dp<C(\|\gamma\|_1,\widehat{V}).
\end{aligned}
\end{equation} 
Similarly
\bq
\begin{aligned} 
&\iint\widehat{V}(p-q)\alpha(p)\alpha(q)dpdq =\iint\widehat{V}(p-q)\alpha_{>}(p)\alpha_{>}(q)dpdq \\
&+\iint\widehat{V}(p-q)\alpha_{<}(p)\alpha_{<}(q)dpdq +
2\iint\widehat{V}(p-q)\alpha_{>}(p)\alpha_{<}(q)dpdq\\&<C(\|\gamma\|_1,\|\widehat{V}\|_1,\|\widehat{V}\|_\infty),
\end{aligned} \nn
\eq
since 
\begin{eqnarray*}
\iint\widehat{V}(p-q)\alpha_{>}(p)\alpha_{>}(q)dpdq\leq \widehat{V}(0)\left(\int|\alpha_>|\right)^2\leq 2\widehat{V}(0)\left(\int\gamma\right)^2,
\end{eqnarray*}
\bq
\begin{aligned}
\iint\widehat{V}(p-q)\alpha_{<}&(p)\alpha_{<}(q)dpdq\leq 2\iint\widehat{V}(p-q)\sqrt{\gamma(p)}\sqrt{\gamma(q)}dpdq \\ &\leq \iint\widehat{V}(p-q)(\gamma(p)+\gamma(q))dpdq=2 \left(\int \widehat{V}\right)\left(\int\gamma\right)
\end{aligned} \nn
\eq
and
\[
\begin{aligned}
\iint\widehat{V}(p-q)\alpha_{>}(p)\alpha_{<}(q)dpdq\leq 2\iint\widehat{V}(p-q)\gamma(p)dpdq. 
\end{aligned}
\]
Also, 
\begin{equation}
\iint\widehat{V}(p-q)\alpha(p)\alpha(q)dpdq=\int V(x)|\check{\alpha}(x)|^2dx\geq 0. \label{alphaxspace}
\end{equation}
Another useful consequence of \eqref{eq:gammaalphaconditions} is the (pointwise) bound 
\begin{equation}
\gamma+\alpha \geq -\frac12. \label{gammaplusalpha}
\end{equation}
 For the convolution terms one easily sees that
\begin{align}
  \|\widehat{V}\ast\gamma\|_\infty & \leq \|\widehat{V}\|_\infty \|\gamma\|_1\leq 
  C(\|\gamma\|_1,\|\widehat{V}\|_1,\|\widehat{V}\|_\infty), \label{convVgamma} \\
  \|\widehat{V}\ast\alpha\|_\infty & \leq \|\widehat{V}\|_\infty
  \|\alpha_>\|_1+\|\widehat{V}\|_2 \|\alpha_<\|_2\leq C\left(\|\gamma\|_1,\|\widehat{V}\|_1,\|\widehat{V}\|_\infty
    \right).
\label{convValpha}
\end{align}
We will also use the following lower bound on the free energy of a non-interacting system
\begin{eqnarray}
  \int p^2\gamma-T\int s(\beta)&\geq& \int p^2 \gamma_0-T\int s(\gamma_0,0)
  \nonumber\\&=& T\int\ln\left(1-e^{\frac{-p^2}{T}}\right)dp>-C, \label{noninteractingbound}
\end{eqnarray}
where $\gamma_0=\left(\exp(p^2/T)-1\right)^{-1}$. This follows from
\eqref{ineq:sga<sg0} and a direct computation. In particular, all terms in $\mathcal{F}$ are bounded on $\mathcal{D}'$.

\section{Existence of minimizers for $T>0$}\label{sec:exis_min_proof_T>0}

\subsection{Reduction to a minimization in $(\gamma,\alpha)$}
\label{reduc}
We work with the canonical functional
\[
\begin{aligned}
\cF^{\text{can}}(\gamma,\alpha,\rho_{0})=&\int p^2\gamma(p)dp-T\int s(\gamma(p),\alpha(p))dp+\frac12\widehat{V}(0)\rho^{2} \\
&+\rho_{0}\int \widehat{V}(p)(\gamma(p)+\alpha(p))dp 
\\&+\frac{1}{2}\iint 
\widehat{V}(p-q)(\gamma(p)\gamma(q)+\alpha(p)\alpha(q))dpdq, \label{def:auxenergy}
\end{aligned}
\]
where $\rho=\rho_0+\int\gamma=\rho_0+\rho_\gamma$.
For notational simplicity, we have absorbed $(2\pi)^{-3}$ in every integral compared to our usual definition \eqref{def:canonicalfreeenergyfunctional}, so that the measure is really $(2\pi)^{-3}dp$. We will use the same convention for the real space measure $dx$, but not for one-dimensional measures $dt$ or $ds$. 

We mostly focus on the canonical result (Theorem \ref{thm:existencecanonicalepositiveT}), and only remark on the grand canonical result (Theorem \ref{thm:existencepositiveT}) at the end of this section. We recall from the introduction \eqref{funcf}, 
\begin{equation}
\label{def:auxminimization}
f(\lambda,\rho_0)= \inf_{\substack{({\gamma},{\alpha})\in\cD'\\
      \int\gamma=\lambda
    }}\cF^{\rm{can}}(\gamma,\alpha,\rho_0),
\end{equation}
and 
\begin{equation}
\label{reduction}
F^{\rm{can}}(T,\rho)=\inf_{0\leq\rho_0\leq\rho}f(\rho-\rho_0,\rho_0),
\end{equation}
so that the minimization in $(\gamma,\alpha)$ is followed by one in $\rho_0$
The following proposition shows that $f$ is continuous, which implies that \eqref{reduction} has a minimizer $\rho^{\rm min}_0$. This reduces the problem to proving the existence of a minimizing $(\gamma,\alpha)$ for $f(\rho-\rho^{\rm min}_0,\rho^{\rm min}_0)$, which will do in the next few subsections.

\begin{proposition}
\label{lem:auxfunctionalconvexity}
The functional $\cF^{\rm{can}}$ is jointly strictly convex in $(\gamma,\alpha)$ and $\cD'$ is a convex set.
In addition, $f(\lambda,\rho_0)$ is strictly convex in $\lambda$ and continuous as a function of two variables, so that the infimum \eqref{reduction} is attained at some $0\leq\rho^{\min}_0\leq\rho$.
\end{proposition}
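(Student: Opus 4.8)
The plan is to establish the three claims in order: (i) joint strict convexity of $\cF^{\rm can}$ in $(\gamma,\alpha)$, (ii) convexity of $\cD'$, (iii) strict convexity and continuity of $f$, and then deduce the attainment of the infimum in \eqref{reduction} by a standard compactness/continuity argument on the compact interval $[0,\rho]$.

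For (ii), convexity of $\cD'$ is elementary: the constraint $\gamma\geq 0$ is linear, $\gamma\in L^1((1+p^2)dp)$ is a linear subspace condition, and the pointwise constraint $\alpha^2\leq\gamma(\gamma+1)$ defines, for each $p$, the region below the graph of the concave function $(\gamma,\alpha)\mapsto \gamma(\gamma+1)-\alpha^2\geq 0$; equivalently the set $\{(\gamma,\alpha):\gamma\geq 0,\ \alpha^2\leq\gamma^2+\gamma\}$ is convex because it is the sublevel set $\{(\tfrac12+\gamma)^2-\alpha^2\geq\tfrac14,\ \gamma\geq 0\}$ of the convex function $\alpha^2-(\tfrac12+\gamma)^2$ intersected with a halfspace — here I should be slightly careful and instead verify convexity directly: the hyperboloid region $\{\beta(p)\geq\tfrac12\}$ with $\beta(p)=\sqrt{(\tfrac12+\gamma)^2-\alpha^2}$ is convex since $\beta$ is a concave function of $(\gamma,\alpha)$ on $\{\gamma\geq 0\}$ (it is the square root of a concave quadratic that is nonnegative there). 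For (i), I would go term by term in \eqref{def:auxenergy}: $\int p^2\gamma$ is linear in $\gamma$; $-TS$ is strictly convex because $s(\beta)$ is a strictly concave function of $\beta$ (as $s''(\beta)=-1/(\beta^2-\tfrac14)<0$ for $\beta>\tfrac12$) composed with the concave map $(\gamma,\alpha)\mapsto\beta$, so $-Ts(\beta)$ is convex, and one needs strictness — this is the delicate point, see below; the term $\tfrac12\widehat V(0)\rho^2=\tfrac12\widehat V(0)(\rho_0+\int\gamma)^2$ is convex in $\gamma$ since $\widehat V(0)\geq 0$; the term $\rho_0\int\widehat V(\gamma+\alpha)$ is linear; and the double integral $\tfrac12\iint\widehat V(p-q)(\gamma(p)\gamma(q)+\alpha(p)\alpha(q))$ is convex (in fact a nonnegative quadratic form) by \eqref{alphaxspace} and the analogous identity $\iint\widehat V(p-q)\gamma(p)\gamma(q)\,dpdq=\int V(x)|\check\gamma(x)|^2dx\geq 0$, using $V\geq 0$ and $\widehat V\geq 0$. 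Summing, $\cF^{\rm can}$ is convex, and it is strictly convex as soon as one of the convex pieces is strict.

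The main obstacle is pinning down the strict convexity of the entropy term and transferring it to strict convexity of $f$ in $\lambda$. The issue is that $-Ts(\beta)$ is only strictly convex "transversally" to the level sets of $\beta$: if $(\gamma_0,\alpha_0)$ and $(\gamma_1,\alpha_1)$ lie on the same branch of a hyperboloid $\beta\equiv$ const, the entropy is affine along that segment, so strict convexity of the full functional must come from combining the entropy with the kinetic term $\int p^2\gamma$ and the positive-definite interaction terms. Concretely I would argue: suppose $\cF^{\rm can}$ is affine along a segment $(\gamma_t,\alpha_t)=(1-t)(\gamma_0,\alpha_0)+t(\gamma_1,\alpha_1)$; since every term is convex, each must be affine along the segment; affineness of $\tfrac12\iint\widehat V(p-q)\gamma(p)\gamma(q)$ forces $\int V(x)|\check\gamma_0-\check\gamma_1|^2dx=0$, affineness of $\tfrac12\widehat V(0)(\int\gamma)^2$ forces either $\widehat V(0)=0$ or $\int\gamma_0=\int\gamma_1$, and affineness of the entropy forces $\beta_0(p)=\beta_1(p)$ for a.e. $p$ where the second derivative of $s$ along the segment is nonzero — i.e. wherever $(\gamma_0,\alpha_0)$ and $(\gamma_1,\alpha_1)$ are not on the same hyperboloid branch at that $p$. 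Combining these (and using $V\not\equiv 0$, so $\widehat V>0$ on a set of positive measure, hence $\check\gamma_0=\check\gamma_1$) should force $\gamma_0=\gamma_1$ and then $\alpha_0=\alpha_1$; this gives joint strict convexity. For strict convexity of $f(\cdot,\rho_0)$ in $\lambda$: given $\gamma_i$ with $\int\gamma_i=\lambda_i$ near-optimal for $f(\lambda_i,\rho_0)$, the midpoint $\tfrac12(\gamma_0+\gamma_1,\alpha_0+\alpha_1)$ is admissible with $\int\gamma=\tfrac12(\lambda_0+\lambda_1)$, and strict joint convexity gives $f(\tfrac12(\lambda_0+\lambda_1),\rho_0)<\tfrac12 f(\lambda_0,\rho_0)+\tfrac12 f(\lambda_1,\rho_0)$ — though here one must be mildly careful since $f$ is an infimum and minimizers are not yet known to exist; using near-minimizers and letting the error tend to zero gives $\leq$, and the strict inequality for distinct $\lambda_0\neq\lambda_1$ follows because the two near-minimizing pairs cannot be equal (they have different $\gamma$-masses).

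Finally, continuity of $f$ as a function of two variables: I would show $f$ is finite everywhere on $\{\lambda\geq 0\}\times\{\rho_0\geq 0\}$ (finiteness from the bounds in Section~\ref{sec:preliminaries}, in particular \eqref{noninteractingbound} and \eqref{finaleqref?}), jointly convex (convexity in $\lambda$ just shown; convexity in $\rho_0$ because $\cF^{\rm can}$ is convex — indeed quadratic with nonnegative leading coefficient — in $\rho_0$ for fixed $(\gamma,\alpha)$, and an infimum of jointly convex functions over the other variable is jointly convex; to be safe I would check joint convexity in $(\lambda,\rho_0)$ together by taking midpoints of near-minimizers simultaneously in all of $(\gamma,\alpha,\rho_0)$). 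A finite convex function on an open convex subset of $\R^2$ is automatically continuous there; at the boundary points $\lambda=0$ or $\rho_0=0$ one gets upper semicontinuity from convexity and lower semicontinuity from a direct argument (or one simply notes that \eqref{reduction} only requires continuity of $\rho_0\mapsto f(\rho-\rho_0,\rho_0)$ on the compact interval $[0,\rho]$, and a finite convex function on a compact interval is continuous on its interior and lower semicontinuous at the endpoints, which already suffices for the infimum to be attained). This last remark is really all that is needed: a lower semicontinuous function on the compact set $[0,\rho]$ attains its minimum, yielding $\rho^{\min}_0\in[0,\rho]$.
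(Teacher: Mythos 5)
Your approach to the preliminary facts is sound, but there are two genuine gaps in how you handle the later claims, and both are places where the paper uses ideas you are missing.

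\textbf{Strict convexity of $f$ in $\lambda$.} Your near-minimizer argument proves only the non-strict inequality. You take $\epsilon$-minimizers $(\gamma_i^\epsilon,\alpha_i^\epsilon)$ for $f(\lambda_i,\rho_0)$, apply strict convexity of $\cF^{\rm can}$ at their midpoint to get
$\cF^{\rm can}(\text{mid})<\tfrac12\cF^{\rm can}(\gamma_0^\epsilon,\alpha_0^\epsilon,\rho_0)+\tfrac12\cF^{\rm can}(\gamma_1^\epsilon,\alpha_1^\epsilon,\rho_0)\leq\tfrac12 f(\lambda_0,\rho_0)+\tfrac12 f(\lambda_1,\rho_0)+\epsilon$,
and send $\epsilon\to0$; the left-hand side bounds $f(\tfrac12(\lambda_0+\lambda_1),\rho_0)$, and you are left with $\leq$. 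The strictness of the inequality for each $\epsilon$ does not survive the limit, because you have no uniform modulus of convexity along the relevant segment. Your remark that ``the two near-minimizing pairs cannot be equal'' is true but does not supply such a modulus. The paper sidesteps this entirely: it observes that the explicit $\tfrac12\widehat V(0)\rho^2=\tfrac12\widehat V(0)(\lambda+\rho_0)^2$ contributes an additive $\tfrac12\widehat V(0)\lambda^2$ term to $f(\lambda,\rho_0)$; since $\widehat V(0)>0$, this is a uniformly strictly convex function of $\lambda$ alone, and added to the (merely convex) remainder it gives strict convexity. This is both simpler and actually closes the gap. (Your worry about strict convexity of the entropy is also misplaced: the Hessian of $-s(\gamma,\alpha)$ is positive \emph{definite}, not merely semi-definite; a short computation shows $\det\mathrm{Hess}\,s=-s'(\beta)s''(\beta)/\beta>0$. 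Being constant along the hyperbolas $\{\beta=c\}$ does not contradict strict convexity since those are curves, not line segments, and $\beta$ itself is concave.)

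\textbf{Continuity of $f$.} Your plan to obtain joint convexity of $f$ from joint convexity of $\cF^{\rm can}$ in $(\gamma,\alpha,\rho_0)$ fails at the very first step: $\cF^{\rm can}$ is \emph{not} jointly convex in all three variables, because the term $\rho_0\int\widehat V(\gamma+\alpha)\,dp$ is bilinear in $(\rho_0;(\gamma,\alpha))$, and a bilinear form is indefinite. Partial minimization does not then preserve convexity, so your statement ``an infimum of jointly convex functions over the other variable is jointly convex'' is not applicable to $\cF^{\rm can}$. Your fallback also does not work: the one-dimensional restriction $\rho_0\mapsto f(\rho-\rho_0,\rho_0)$ is not convex for the same reason, and even if it were, a finite convex function on a compact interval is \emph{upper} semicontinuous at the endpoints (not lower), which is the wrong direction for attaining an infimum. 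The paper fixes both problems at once by replacing $\cF^{\rm can}$ with the auxiliary functional
\[
\cF^{\rm conv}(\gamma,\alpha,\rho_0)=\int p^2\gamma - TS(\gamma,\alpha)
+\tfrac12\iint\widehat V(p-q)(\gamma+\rho_0\delta_0)(p)(\gamma+\rho_0\delta_0)(q)
+\tfrac12\iint\widehat V(p-q)(\alpha+\rho_0\delta_0)(p)(\alpha+\rho_0\delta_0)(q),
\]
in which the bilinear term has been absorbed into the positive quadratic forms by inserting $\rho_0\delta_0$. This functional \emph{is} jointly convex, so $f^{\rm conv}$ is jointly convex hence continuous on the open quadrant, and a direct trial-state argument (with a uniform bound on $\int\widehat V(\gamma+\alpha)$ from Section~\ref{sec:preliminaries}) gives lower semicontinuity at the boundary $\rho_0=0$ or $\lambda=0$. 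Since $f=f^{\rm conv}+\tfrac12\widehat V(0)(\lambda+\rho_0)^2-\widehat V(0)\rho_0^2$ differs from $f^{\rm conv}$ by an explicit continuous function, $f$ is continuous. This algebraic ``coherent-state absorption'' trick is the key step that your proposal is missing.
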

\begin{proof}
\textit{Convexity.}
  First notice that the Hessian of the entropy function $-s(\gamma,\alpha)$ \eqref{entro} regarded as a
  function of $\gamma$ and $\alpha$ is positive definite. Since
  $V\geq0$, the expressions $\iint
  \widehat{V}(p-q)\gamma(p)\gamma(q)dpdq$ and $\iint
  \widehat{V}(p-q)\alpha(p)\alpha(q)dpdq$ are convex in $\gamma$ and
  $\alpha$ respectively. It follows that $\cF^{\rm{can}}$ is jointly strictly convex in
  $(\gamma,\alpha)$. Convexity of $\cD'$ follows from a simple
  calculation.
 These conclusions imply convexity of $f$ in $\lambda$, but it is even strictly convex because of the presence of the $\lambda^2=\rho_\gamma^2$-term in $\widehat{V}(0)\rho^2/2$.

\textit{Continuity.} 
Define 
\begin{align*}
  \cF^{\rm{conv}}({\gamma},{\alpha},\rho_0)=& \int p^2
  {\gamma}(p)dp-TS({\gamma},{\alpha})\\&
  +\frac12\iint\widehat{V}(p-q)({\gamma}(p)+\rho_0 \delta_0)({\gamma}(q)+\rho_0 \delta_0)dpdq\\&
    +\frac12\iint\widehat{V}(p-q)({\alpha}(p)+\rho_0 \delta_0)({\alpha}(q)+\rho_0 \delta_0)dpdq,
\end{align*}
where $\delta_0$ is a delta function. This functional is jointly convex in
$(\gamma,\alpha,\rho_0)$, so that 
\[
\begin{aligned}
  f^{\text{conv}}(\lambda,\rho_0)=\inf_{\substack{({\gamma},{\alpha})\in\cD'\\
      \int\gamma=\lambda
    }}\cF^{\rm{conv}}({\gamma},{\alpha},\rho_0) 
\end{aligned}
\]
is jointly convex in $\lambda$ and $\rho_0$. This implies continuity on $(0,\infty)\times(0,\infty)$, but not necessarily at the boundaries. 

We now consider the points of the form $(\lambda^*,0)$ with $\lambda^*>0$. By convexity, we have
\[
\lim_{(\lambda,\rho_0)\to(\lambda^*,0)}f^{\text{conv}}(\lambda,\rho_0)\leq f^{\text{conv}}(\lambda^*,0),
\]
where the limit on the left is independent of the way we approach the boundary point.
To show the opposite inequality, note that we can use approximate minimizers for $(\lambda^*,\rho_0)$ as trial states for $(\lambda^*,0)$ by plugging them in with $\rho_0=0$. In the limit $\rho_0\to0$, these trial states approximate the limit above, proving continuity at this boundary. 

The boundary $(0,\rho^*_0)$ with $\rho^*_0\geq0$ can be treated in the same way, where we use the estimates from Section \ref{sec:preliminaries} to treat the terms involving $\gamma$ and $\alpha$ as $\lambda\to0$.

It follows that
$f(\lambda,\rho_0)=f^{\rm{conv}}(\lambda,\rho_0)+\frac12\widehat{V}(0)(\lambda+\rho_0)^2-\rho_0^2\widehat{V}(0)$ is continuous as well.
\qed
\end{proof}

\begin{corollary}[Radiality of minimizers]\label{radialminimizer}
  A minimizer $(\gamma,\alpha)$ of \eqref{def:auxminimization}, if it exists, is radial:
\begin{align*}
(\gamma(Rp),\alpha(Rp))=(\gamma(p),\alpha(p))\,\,\, \text{for any}\,\,\, R\in SO(3).  
\end{align*}
\end{corollary}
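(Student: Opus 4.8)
The plan is to exploit the rotational invariance of the functional $\cF^{\rm{can}}$ together with its strict convexity. First I would observe that the entire functional is invariant under the action of $SO(3)$ on momentum space: for $R\in SO(3)$, replacing $(\gamma,\alpha)$ by $(\gamma_R,\alpha_R)$, where $\gamma_R(p):=\gamma(Rp)$ and $\alpha_R(p):=\alpha(Rp)$, leaves every term unchanged. This is because $p^2=|Rp|^2$, the entropy term is a pointwise integral invariant under the change of variables $p\mapsto Rp$ (whose Jacobian is $1$), the convolution kernels depend only on $|p-q|$ and $\widehat V$ is radial so $\widehat V(R(p-q))=\widehat V(p-q)$, and finally $\widehat V(p)=\widehat V(Rp)$. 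The constraint set $\cD'$ and the constraint $\int\gamma=\lambda$ are likewise preserved, since $\int\gamma_R=\int\gamma$ and the pointwise inequalities defining $\cD'$ are preserved under the substitution. Hence if $(\gamma,\alpha)$ is a minimizer of \eqref{def:auxminimization}, so is $(\gamma_R,\alpha_R)$ for every $R\in SO(3)$, with the same value of $\cF^{\rm{can}}$.

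Next I would use uniqueness coming from strict convexity. Proposition~\ref{lem:auxfunctionalconvexity} states that $\cF^{\rm{can}}$ is jointly strictly convex in $(\gamma,\alpha)$ on the convex set $\cD'$, and the constraint $\int\gamma=\lambda$ cuts out a convex subset; therefore the constrained minimizer, if it exists, is unique. Applying this with the two minimizers $(\gamma,\alpha)$ and $(\gamma_R,\alpha_R)$ forces $(\gamma_R,\alpha_R)=(\gamma,\alpha)$ for every $R\in SO(3)$, which is exactly the claimed radiality (as an identity in $L^1((1+p^2)dp)$, hence for a.e.\ $p$, and for every $R$).

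Actually one subtle point deserves a remark: strict convexity gives uniqueness of the minimizing pair as elements of the function space, so the conclusion $(\gamma(Rp),\alpha(Rp))=(\gamma(p),\alpha(p))$ holds for almost every $p$ for each fixed $R$; since $SO(3)$ acts continuously, one then gets a genuinely radial representative by, e.g., averaging over the group or simply noting the a.e.\ statement can be upgraded using a countable dense set of rotations together with the common refinement of null sets. I do not anticipate a serious obstacle here — the only thing to be careful about is the measure-theoretic phrasing of "radial" and the verification that the $\int\gamma=\lambda$ constraint is genuinely $SO(3)$-invariant, which follows immediately from the change of variables. The argument is short and self-contained once Proposition~\ref{lem:auxfunctionalconvexity} is in hand.
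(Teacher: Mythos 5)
Your argument is correct and is essentially the same as the paper's: both use rotational invariance of $\cF^{\rm{can}}$ (and of the constraint set) to show that any rotation of a minimizer is again a minimizer, then invoke the uniqueness coming from strict convexity (Proposition~\ref{lem:auxfunctionalconvexity}) to conclude that the minimizer must be radial. You simply spell out the invariance check and the measure-theoretic phrasing in more detail than the paper, which states the argument in two lines.
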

\begin{proof} The strict convexity of $\cF^{\rm{can}}$ in $(\gamma,\alpha)$ implies that a minimizer is unique assuming it exists. The result follows since $p\mapsto(\gamma(Rp),\alpha(Rp))$ is a minimizer if $p\mapsto(\gamma(p),\alpha(p))$ is. 
\qed
\end{proof}

\subsection{Legendre transform in $\lambda$ and a restricted problem}
\label{Leg}
It now suffices to show the existence of a minimizing $(\gamma,\alpha)$ for $f(\rho-\rho^{\rm min}_0,\rho^{\rm min}_0)$. This will require considerable effort. 

We first study $f$ at a point $(\lambda,\rho_0)$. Only $\gamma$ with $\rho_\gamma=\lambda$ are included in this minimization problem, but by strict convexity (Proposition \ref{lem:auxfunctionalconvexity}) this constrained minimization problem is equivalent to the unconstrained problem
\begin{align}
\label{def:dualauxproblem}
  \widehat{f}(\delta,\rho_0)
  :=\inf_{(\gamma,\alpha)\in\cD'}\left(\left. \cF^{\rm{can}}(\gamma,\alpha,\rho_0)-\delta\int\gamma\right. \right)
  =:\inf_{(\gamma,\alpha)\in\cD'}\cF^{\rm{can}}_{\delta}(\gamma,\alpha,\rho_0),
\end{align}
where $\delta$ is chosen as the slope of a supporting hyperplane at
$\lambda$ for the strictly convex function $\lambda\mapsto f(\lambda,\rho_0)$.
Note that each $\lambda$ corresponds to a unique $\delta$ by strict convexity.

We now prove a simple property of this new problem.

\begin{lemma}
\label{lem:coercivity}
Let $\delta\in \mathbb{R}$. There exists a constant $C_T$ (bounded as $T\to0$) such that
\[
\mathcal{F}_\delta^{\rm{can}}(\gamma,\alpha,\rho_0)\geq \frac{1}{2}\int p^2\gamma(p)dp-\delta\rho_\gamma+\frac12\widehat{V}(0)\rho_\gamma^2-C_T.
\]
In particular, any minimizing sequence of \eqref{def:dualauxproblem} is bounded in
$L^1(\mathbb{R}^3,dp)$ and $L^1(\mathbb{R}^3,p^2 dp)$.
\end{lemma}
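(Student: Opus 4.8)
The plan is to bound $\cF^{\rm{can}}_\delta$ from below term by term, isolating the kinetic term with a positive coefficient. Recall that by definition $\cF^{\rm{can}}_\delta(\gamma,\alpha,\rho_0) = \cF^{\rm{can}}(\gamma,\alpha,\rho_0) - \delta\int\gamma$, and write out $\cF^{\rm{can}}$ as in \eqref{def:auxenergy}. First I would split the kinetic-entropy combination: using the non-interacting lower bound \eqref{noninteractingbound}, write
\[
\int p^2\gamma - T\int s(\beta) = \tfrac12\int p^2\gamma + \left(\tfrac12\int p^2\gamma - T\int s(\beta)\right) \geq \tfrac12\int p^2\gamma - C,
\]
where the last inequality follows by applying \eqref{noninteractingbound} with $T$ replaced by $2T$ (so that the constant is $-2T\cdot(\text{const})$, which stays bounded as $T\to0$); this is the source of the claim that $C_T$ is bounded as $T\to0$.

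Next I would handle the remaining terms of $\cF^{\rm{can}}$. The term $\frac12\widehat{V}(0)\rho^2 = \frac12\widehat{V}(0)(\rho_\gamma+\rho_0)^2 \geq \frac12\widehat{V}(0)\rho_\gamma^2$ since $\widehat V(0)\geq0$ and $\rho_0\geq0$, which supplies the stated $\frac12\widehat{V}(0)\rho_\gamma^2$ on the right-hand side. The $\gamma\gamma$ and $\alpha\alpha$ double-integral terms are both nonnegative: for $\gamma\gamma$ this is $\frac12\int V|\check\gamma|^2\geq0$, and for $\alpha\alpha$ it is \eqref{alphaxspace}. The condensate cross-term $\rho_0\int\widehat V(\gamma+\alpha)$ is the only term needing care; using the pointwise bound \eqref{gammaplusalpha} in the form $\gamma+\alpha\geq-\tfrac12$ together with $\widehat V\geq0$ and $\widehat V\in L^1$ gives $\rho_0\int\widehat V(\gamma+\alpha) \geq -\tfrac12\rho_0\|\widehat V\|_1$. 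This is not yet bounded uniformly in $\rho_0$, but one can absorb it: for instance $-\tfrac12\rho_0\|\widehat V\|_1 \geq -\tfrac14\widehat V(0)\rho_0^2 - C$, and then $\frac12\widehat V(0)(\rho_\gamma+\rho_0)^2 - \tfrac14\widehat V(0)\rho_0^2 \geq \frac12\widehat V(0)\rho_\gamma^2$ (after discarding the nonnegative cross term $\widehat V(0)\rho_\gamma\rho_0$ and noting $\tfrac12\rho_0^2 - \tfrac14\rho_0^2\geq0$), so the net effect is harmless. Finally, $-\delta\int\gamma = -\delta\rho_\gamma$ is kept as is. Collecting everything yields the displayed inequality with $C_T := C + C'$ bounded as $T\to0$.

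For the last sentence: if $(\gamma_n,\alpha_n)$ is a minimizing sequence for \eqref{def:dualauxproblem}, then $\widehat f(\delta,\rho_0)$ is finite (it is $\leq \cF^{\rm{can}}_\delta$ evaluated at, say, $\gamma=\gamma_0,\alpha=0$), so $\cF^{\rm{can}}_\delta(\gamma_n,\alpha_n,\rho_0)$ is bounded above. Combined with the lower bound, $\tfrac12\int p^2\gamma_n - \delta\rho_{\gamma_n} + \tfrac12\widehat V(0)\rho_{\gamma_n}^2$ is bounded above. The quadratic $\tfrac12\widehat V(0)t^2 - \delta t$ is bounded below (and coercive if $\widehat V(0)>0$; if $\widehat V(0)=0$ one still gets a bound because, by \eqref{positiveinteraction}, $\widehat V\not\equiv0$ forces $\widehat V(0)=\int V>0$ when $V\geq0$, $V\not\equiv0$ — so $\widehat V(0)>0$ always holds here), hence $\rho_{\gamma_n}=\|\gamma_n\|_1$ is bounded, and then $\int p^2\gamma_n$ is bounded as well. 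This gives the claimed $L^1(dp)$ and $L^1(p^2dp)$ bounds.

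The main obstacle is the condensate cross-term $\rho_0\int\widehat V(\gamma+\alpha)$: it is the only term that can be negative and is not obviously controlled, so the key point is the pointwise bound $\gamma+\alpha\geq-\tfrac12$ from \eqref{gammaplusalpha} together with absorbing the resulting $O(\rho_0)$ into the positive $\widehat V(0)\rho^2$ term. Everything else is a bookkeeping exercise in assembling the nonnegative pieces and invoking \eqref{noninteractingbound} with a rescaled temperature to keep the constant uniform in $T$.
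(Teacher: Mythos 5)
Your proof is correct, but it handles the delicate condensate cross-term $\rho_0\int\widehat V(\gamma+\alpha)$ differently from the paper. You use the pointwise inequality $\gamma+\alpha\geq -\tfrac12$ from \eqref{gammaplusalpha} together with $\widehat V\geq 0$, $\widehat V\in L^1$, and then absorb the resulting $-\tfrac12\rho_0\|\widehat V\|_1$ into half of the $\tfrac12\widehat V(0)\rho_0^2$ piece by Young's inequality. The paper instead observes that
\[
\frac12\iint\widehat{V}(p-q)\bigl(\alpha(p)+\rho_0\delta_0\bigr)\bigl(\alpha(q)+\rho_0\delta_0\bigr)\,dp\,dq\geq 0,
\]
which, once expanded, simultaneously absorbs the $\alpha\alpha$ double integral, the full cross term $\rho_0\int\widehat V\alpha$, and the $\tfrac12\widehat V(0)\rho_0^2$ piece into a single manifestly nonnegative quadratic form (the same completed-square structure used for $\cF^{\rm conv}$ in Proposition \ref{lem:auxfunctionalconvexity}); the $\rho_0\int\widehat V\gamma$ term is then discarded separately as nonnegative. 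The paper's trick is more structural and produces no extra additive constant from the $\rho_0$ part, whereas yours introduces a harmless constant $\|\widehat V\|_1^2/(4\widehat V(0))$; your reliance on the pointwise bound $\gamma+\alpha\geq -\tfrac12$ is more elementary and avoids working with distributional arguments involving $\delta_0$. The treatment of the kinetic--entropy term (splitting off half the kinetic energy and invoking the free-gas bound \eqref{noninteractingbound} with rescaled temperature) and the concluding boundedness argument match the paper's reasoning, and your remark that $\widehat V(0)=\int V>0$ is indeed the reason the quadratic in $\rho_\gamma$ is coercive.
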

\begin{proof}
We note
\[
\frac12\iint\widehat{V}(p-q)({\alpha}(p)+\rho_0 \delta_0)({\alpha}(q)+\rho_0 \delta_0)dpdq\geq0.
\]
We also use \eqref{ineq:sga<sg0} and bound 
\[
\frac12\int p^2\gamma(p) dp-TS(\gamma,0)\geq -C_T
\]
in the same way as in \eqref{noninteractingbound}.
\qed
\end{proof}

Unfortunately, since $L^1$ is not reflexive, the boundedness of a minimizing sequence of \eqref{def:dualauxproblem} is not enough to extract a weakly-converging subsequence. The best we can do at this point is to introduce a cut-off in our problem:
\begin{align}
  \widehat{f}_{\kappa}(\delta,\rho_0)=\inf_{(\gamma,\alpha)\in\cM_\kappa}\cF^{\rm{can}}_{\delta}(\gamma,\alpha,\rho_0), \label{def:restricteddual}
\end{align}
where $\kappa\geq 1$ and
\begin{align}
  \mathcal{M}_{\kappa}=\{(\gamma,\alpha)\in \cD'|\gamma(p)\leq
  \frac{\kappa}{p^2}\}.  \label{def:setmkappa}
\end{align}
 It is now possible to prove the existence of minimizers with standard techniques.
\begin{proposition}
\label{prop:existencerestricteddualminimizer}
There exists a minimizer for the restricted problem \eqref{def:restricteddual}.
\end{proposition}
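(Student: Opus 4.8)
The plan is to run the direct method of the calculus of variations, the artificial constraint defining $\mathcal{M}_\kappa$ supplying the compactness that was missing for the unrestricted problem. Fix $\delta\in\mathbb{R}$ and $\rho_0\geq0$ and pick a minimizing sequence $(\gamma_n,\alpha_n)\in\mathcal{M}_\kappa$ for $\cF^{\rm{can}}_\delta(\cdot,\cdot,\rho_0)$. By Lemma~\ref{lem:coercivity} it is bounded in $L^1(\mathbb{R}^3,dp)$ and in $L^1(\mathbb{R}^3,p^2dp)$, and the defining constraint of $\mathcal{M}_\kappa$ gives in addition $0\leq\gamma_n(p)\leq\kappa/p^2$, hence also a pointwise bound on $|\alpha_n|\leq\sqrt{\gamma_n(1+\gamma_n)}$. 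Two features of these bounds are crucial: first, $(\gamma_n)$ is tight, since no mass escapes to infinity (Chebyshev's inequality together with the uniform bound on $\int p^2\gamma_n$) and---this is precisely the role of $\mathcal{M}_\kappa$---no mass can concentrate at $p=0$ either, because $p\mapsto\kappa/p^2$ is integrable near the origin in three dimensions; second, on every annulus $\{\varepsilon\leq|p|\leq R\}$ the sequences $(\gamma_n)$ and $(\alpha_n)$ are uniformly bounded in $L^\infty$, hence uniformly integrable there.

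I would then extract, via the Dunford--Pettis theorem applied on an exhausting family of annuli together with a diagonal argument, a subsequence with $\gamma_n\rightharpoonup\gamma$ and $\alpha_n\rightharpoonup\alpha$ weakly in $L^1_{\rm loc}(\mathbb{R}^3)$ (and, passing to a further subsequence, a.e.\ on each annulus); tightness upgrades this to $\int\gamma_n\to\int\gamma$, and $\gamma\in L^1((1+p^2)dp)$ by Fatou. The limit still lies in $\mathcal{M}_\kappa$: the conditions $\gamma\geq0$ and $\gamma\leq\kappa/p^2$ cut out convex, strongly closed---hence weakly closed---sets, and so does $\alpha^2\leq\gamma(1+\gamma)$, because this region sits below the graph of the concave function $\gamma\mapsto\sqrt{\gamma^2+\gamma}$ (the observation behind the convexity of $\cD'$ in Proposition~\ref{lem:auxfunctionalconvexity}). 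It remains to prove $\cF^{\rm{can}}_\delta(\gamma,\alpha,\rho_0)\leq\liminf_n\cF^{\rm{can}}_\delta(\gamma_n,\alpha_n,\rho_0)$. Here I would use that $\cF^{\rm{can}}_\delta$ is convex in $(\gamma,\alpha)$ (Proposition~\ref{lem:auxfunctionalconvexity}, the added term $-\delta\int\gamma$ being linear), so that it suffices to check lower semicontinuity with respect to strong $L^1_{\rm loc}$ convergence, term by term: $\int p^2\gamma$ is lower semicontinuous by Fatou; the convolution terms $\iint\widehat V(p-q)\gamma(p)\gamma(q)$ and $\iint\widehat V(p-q)\alpha(p)\alpha(q)$ and the $\tfrac12\widehat V(0)\rho^2$ term are, after completing the square as in the proof of Proposition~\ref{lem:auxfunctionalconvexity}, nonnegative convex expressions that are continuous along the sequence thanks to $\widehat V\in L^\infty$ and the estimates \eqref{convVgamma}--\eqref{convValpha}; and the entropy $-TS(\gamma_n,\alpha_n)$ is continuous along the sequence, using $0\leq s(\gamma_n,\alpha_n)\leq s(\gamma_n,0)$, the a.e.\ convergence on annuli, the pointwise domination $s(\gamma_n,0)\leq s(\kappa/p^2,0)\in L^1_{\rm loc}$ near $p=0$, and a uniform tail bound $\sup_n\int_{|p|>R}s(\gamma_n,0)\to0$ as $R\to\infty$, which follows from $s(\gamma,0)\lesssim\gamma\bigl(1+\ln_{+}(1/\gamma)\bigr)$ combined with $\gamma_n\leq\kappa/p^2$ and the uniform bound on $\int p^2\gamma_n$. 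Adding these up gives $\cF^{\rm{can}}_\delta(\gamma,\alpha,\rho_0)\leq\widehat f_\kappa(\delta,\rho_0)$, so $(\gamma,\alpha)$ is a minimizer.

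The step I expect to be the real obstacle is controlling the functional simultaneously at $p=0$ and as $|p|\to\infty$: the bound $\gamma\leq\kappa/p^2$ is tailor-made for the origin but gives nothing at infinity, where one must instead exploit the kinetic (second moment) bound of Lemma~\ref{lem:coercivity}. The delicate case is the entropy integral, whose density decays only logarithmically slowly (like $\gamma\ln(1/\gamma)$), so that $\alpha$ need not even belong to $L^1(\mathbb{R}^3)$; obtaining the uniform tail estimate above, and more generally establishing continuity of the entropy term along the sequence, is where the two a priori bounds have to be used in tandem. Everything else is routine.
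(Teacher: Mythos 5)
Your overall strategy — direct method, with the artificial constraint $\gamma\leq\kappa/p^2$ supplying the compactness missing in the unrestricted problem — is exactly the one the paper uses, and most of your term-by-term verification is sound. There is, however, one genuine gap, and one place where your route differs enough from the paper's to be worth noting.

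The gap is the parenthetical claim that, from $\gamma_n\rightharpoonup\gamma$ weakly in $L^1_{\rm loc}$, you may ``pass to a further subsequence, a.e.\ on each annulus.'' Weak $L^1$ convergence (and more generally weak convergence in any $L^p$) does \emph{not} permit extraction of an a.e.\ convergent subsequence: the standard counterexample $\gamma_n(p)=1+\sin(n p_1)$ on a bounded set converges weakly to $1$ but has no a.e.\ convergent subsequence. Since your treatment of the entropy term relies crucially on a.e.\ convergence together with domination by $s(\kappa/p^2,0)$, the argument as written does not close. The correct repair is Mazur's Lemma: replace the weakly convergent minimizing sequence by convex combinations that converge \emph{strongly}; convexity of $\cF^{\rm can}_\delta$ ensures these convex combinations still form a minimizing sequence; from strong convergence one may then pass to an a.e.\ subsequence and run Fatou. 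You do gesture at this with the phrase ``it suffices to check lower semicontinuity with respect to strong $L^1_{\rm loc}$ convergence'' — which is the right principle, since a convex function that is strongly l.s.c.\ has weakly closed sublevel sets — but your earlier a.e.\ claim is placed in the wrong spot and would be read as a mistake; make the Mazur step explicit.

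Once that is fixed, the remaining differences from the paper are cosmetic or a matter of taste. The paper avoids Dunford--Pettis and $L^1_{\rm loc}$ altogether by showing in Step 2 that the constrained sequence is bounded in the reflexive spaces $L^s$ for $s\in(\tfrac65,\tfrac32)$ (splitting on $\{\gamma_n>1\}$, whose measure is controlled by $\|\gamma_n\|_1$, and using $\gamma_n\leq\kappa/p^2$ there, Hölder and the kinetic bound elsewhere); weak compactness is then automatic, and Mazur plus Fatou finish. The paper also does not bother to prove continuity of the quadratic or entropy terms: it simply observes that each integrand of $\cF^{\rm can}_\delta$ is bounded below by a single fixed integrable function (using $\gamma+\alpha\geq-\tfrac12$, $\gamma\leq\kappa/p^2$, and the non-interacting lower bound), so a single application of Fatou to the whole integrand gives lower semicontinuity. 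Your more granular term-by-term analysis, including the tail estimate on $s(\gamma_n,0)$ via the second moment bound, is correct but more work than is needed, and the phrase ``continuous along the sequence'' for the convolution terms is an overclaim — you only need lower semicontinuity, which you already have from convexity and nonnegativity.
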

\begin{proof}
  \textit{Step 1.} Let $(\gamma_n,\alpha_n)$ be a minimizing sequence
  in $\cM_\kappa$. It follows from Lemma \ref{lem:coercivity} that
  $\|\gamma_n\|_1 <C$ and that $\int p^2\gamma_n(p)dp <C$ for some
  constant $C$ depending on $T$, $V$ and $\delta$, but independent of $n$.

  \textit{Step 2.} We claim that $(\gamma_n,\alpha_n)$ is a bounded
  sequence in $L^{s}(\R^3)\times L^{s}(\R^3)$ for
  $s\in(\frac65,\frac32)$. To this end consider the set
\begin{align*}
\mathcal{A}_n=\{p\,|\, \gamma_n(p)>1\}.
\end{align*}
Since 
\begin{align*}
\int_{\R^3}\gamma_n\geq \int_{\mathcal{A}_n}\gamma_n\geq |\mathcal{A}_n|,
\end{align*}
the condition $\|\gamma_n\|_1 < C$ implies that for any $n$ we have $|\mathcal{A}_n|<C$. Furthermore for $s\in (1,\frac32)$ we have 
\bq
\begin{aligned}
\|\gamma_n\|_s^s=\int_{\R^3\setminus\mathcal{A}_n}\gamma_n^s+\int_{\mathcal{A}_n}\gamma_n^s\leq \int_{\R^3\setminus\mathcal{A}_n}\gamma_n+ \int_{\mathcal{A}_n}\frac{\kappa^s}{p^{2s}}dp<C, \label{eq:uniformgamman}
\end{aligned}
\eq
where we used the restriction imposed by \eqref{def:setmkappa}. Indeed, by the previous step we have $\int_{\R^3\setminus\mathcal{A}_n}\gamma_n<C$.  To bound the last term we use the fact that $|\mathcal{A}_n|<C$. Then
\bq
\begin{aligned}
\int_{\mathcal{A}_n}\frac{\kappa^s}{p^{2s}}dp=\int_{\mathcal{A}_n \cap B(0,1)}\frac{\kappa^s}{p^{2s}}dp+\int_{\mathcal{A}_n \setminus B(0,1)}\frac{\kappa^s}{p^{2s}}dp\leq \int_{ B(0,1)}\frac{\kappa^s}{p^{2s}}dp+|\mathcal{A}_n| \kappa^s,
\end{aligned} \nn
\eq
which is bounded uniformly in $n$ for $s<\frac32$. Here, $B(0,1)$ denotes the unit ball centred at the origin. Let us now consider the bound on $\|\alpha_n\|_s$. Using \eqref{eq:gammaalphaconditions} we have
\bq
\begin{aligned}
\|\alpha_n\|_s^s= \int_{\R^3\setminus\mathcal{A}_n}|\alpha_n|^s+\int_{\mathcal{A}_n}|\alpha_n|^s\leq 2^{\frac{s}{2}}\int_{\R^3\setminus\mathcal{A}_n}\sqrt{\gamma_n}^s+ 2^{\frac{s}{2}}\int_{\mathcal{A}_n}\gamma_n^s. \nn
\end{aligned}
\eq
By \eqref{eq:uniformgamman}, the last term is bounded uniformly in $n$. To bound the other term, notice that by the uniform bound $\int p^2\gamma_n(p)dp<C$ it follows from H\"older's inequality that
\bq
\begin{aligned}
\int_{\R^3\setminus\mathcal{A}_n}\gamma_n^{\frac{s}{2}}&= \quad\int_{\left(\R^3\setminus\mathcal{A}_n\right)\cap B(0,1) }\gamma_n^{\frac{s}{2}}\quad+\quad
\int_{\left(\R^3\setminus\mathcal{A}_n\right)\setminus B(0,1) }\gamma_n^{\frac{s}{2}} \\ 
&\leq  C \quad+\quad\int_{\left(\R^3\setminus\mathcal{A}_n\right)\setminus B(0,1) }\frac{(p^2\gamma_n)^{\frac{s}{2}}}{p^s}dp \\ 
&\leq C + \left(\int_{\left(\R^3\setminus\mathcal{A}_n\right)\setminus B(0,1) }p^2\gamma_n dp\right)^{\frac{s}{2}}\left(\int_{\left(\R^3\setminus\mathcal{A}_n\right)\setminus B(0,1) }p^{\frac{2s}{s-2}}dp\right)^{\frac{2-s}{2}}.
\end{aligned} \nn
\eq
For $\frac65<q<\frac32$ a uniform bound follows.

\textit{Step 3.}  By the previous step, we can find a subsequence that
converges weakly, i.e.\ there exist $(\tilde{\gamma},\tilde{\alpha})\in
L^{s}(\R^3)\times L^{s}(\R^3)$ such that $(\gamma_{n},\alpha_{n})
\rightharpoonup (\tilde{\gamma},\tilde{\alpha})$ for
$s\in(\frac65,\frac32)$.  Using Mazur's Lemma we can replace the
sequence with convex combinations and get strong convergence and by
going to a further subsequence we can assume that the limit is
pointwise almost everywhere. As the functional is convex we still have
a minimizing sequence.  It follows from Fatou's Lemma that
$(\tilde{\gamma},\tilde{\alpha})\in \cM_\kappa$. To show that
$(\tilde{\gamma},\tilde{\alpha})$ is a minimizer we will prove that  
$$\liminf_{n\to\infty}\cF^{\rm{can}}_{\delta}(\gamma_{n},\alpha_{n},\rho_0)\geq
\cF^{\rm{can}}_{\delta}(\tilde\gamma,\tilde\alpha,\rho_0).
$$
Indeed, 
$$
\frac12 p^2\gamma_n(p)-Ts(\gamma_n(p),\alpha_n(p))\geq \frac12
p^2\gamma_n(p)-Ts(\gamma_n(p),0)\geq T\ln(1-e^{-p^2/2T})
$$
and the function on the right is integrable. Using the bound
\eqref{def:setmkappa} we also see that $(\frac12
p^2-\delta)\gamma_n(p)$ is bounded below by an integrable
function. The same is true for $\widehat{V}(p)(\gamma_n(p)+\alpha_n(p))$ using \eqref{gammaplusalpha}.
The remaining quadratic terms have positive integrands.  Hence the
result follows by Fatou's Lemma.
\qed
\end{proof}
\begin{remark}\label{rem:uniformboundenessgammakappa}
Lemma \ref{lem:coercivity} implies that there exists a uniform bound on $\|\gamma_{\kappa}\|_1$. This follows from the simple observation that $\widehat{f}(\delta,\rho_{0})\leq 0$. 
\end{remark}

\begin{remark}\label{rem:uniformboundenessgammakappa_dim=1-2}
  In one and two dimensions, the restriction defined in
  \eqref{def:setmkappa} has to be appropriately modified to obtain
  analogous results on the existence of minimizers for the restricted
  functional. In fact, one needs to assume $\gamma(p)\leq \kappa/p^m$
  with $m\in(\frac12,1)$ and $m\in(1,2)$ in one and two dimensions
  respectively.
\end{remark}

We can now try to approach the minimizer of \eqref{def:dualauxproblem} by letting $\kappa\rightarrow\infty$, but we first do some preparatory work in the next subsections.

\subsection{A priori bounds on $\gamma$ and $\alpha$}
\label{ap1}
First, we show that any potential minimizer $\gamma$ is strictly positive almost everywhere. 

\begin{lemma}[Positivity of $\gamma$]\label{lem:gammabigger0}
Let $T>0$, $\delta\in\R$, $\rho_0\geq0$, and $\kappa\geq T$. There exists a constant $C:= C\left(\delta,\rho_0,\|\gamma\|_1,\widehat{V}\right)$ such that for any minimizer $(\gamma,\alpha)$ of either \eqref{def:dualauxproblem} or \eqref{def:restricteddual}, the set
\begin{align*}
\mathcal{S}:=\left\{p\, | \, \gamma<e^{-\frac{p^2+C}{T}}\right\}.
\end{align*}
has zero measure.
\end{lemma}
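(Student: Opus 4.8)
The plan is to argue by contradiction using a variational perturbation: if $\cS$ had positive measure, we could add a small amount of mass to $\gamma$ on $\cS$ (keeping $\alpha$ fixed, or adjusting it only mildly) and strictly decrease the functional $\cF^{\rm can}_\delta$, contradicting minimality. The point is that on the set where $\gamma$ is extremely small, the entropy term $-Ts(\gamma,\alpha)$ has an enormous (logarithmically divergent) derivative with respect to $\gamma$, namely $\partial_\gamma(-Ts) \sim -T\ln(1/\gamma) \to -\infty$, which beats every other term in the functional provided the other terms have derivatives bounded uniformly in $p$ on $\cS$.

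First I would compute the one-sided (right) derivative of $\cF^{\rm can}_\delta$ at the minimizer in the direction of increasing $\gamma$ on a subset $E \subseteq \cS$ of small positive measure, with $\alpha$ held fixed (this is admissible since increasing $\gamma$ only relaxes the constraint $\alpha^2 \le \gamma(\gamma+1)$). The contributions are: the kinetic term gives $\int_E p^2$, the $\delta$-term gives $-\delta|E|$ roughly, the direct/exchange quadratic terms in $\gamma$ give something bounded by $C(\|\gamma\|_1,\widehat V)\,|E|$ via \eqref{convVgamma}, the $\rho_0\widehat V(p)(\gamma+\alpha)$ term contributes at most $\rho_0\|\widehat V\|_\infty|E|$, and crucially the entropy term contributes $-T\int_E \partial_\gamma s(\gamma,\alpha)\,dp$. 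Using $\partial_\gamma s(\gamma,\alpha) = \frac{1+2\gamma}{2\beta}\ln\frac{\beta+1/2}{\beta-1/2}$ and the pointwise bound from $\cD'$ that forces $\beta$ close to $1/2$ when $\gamma$ is tiny, one gets $\partial_\gamma s \ge c\ln(1/\gamma) \ge c\,\frac{p^2+C}{T}$ on $\cS$ for an appropriate absolute constant $c$. Therefore the derivative of the functional is bounded above by
\[
\int_E\!\Big(p^2 - \delta + C(\|\gamma\|_1,\widehat V) + \rho_0\|\widehat V\|_\infty\Big)dp \;-\; c\int_E (p^2 + C)\,dp,
\]
which, by choosing the constant $C$ in the definition of $\cS$ large enough (depending on $\delta, \rho_0, \|\gamma\|_1, \widehat V$), is strictly negative whenever $|E|>0$. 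Hence the perturbation lowers the energy, contradicting minimality, so $|\cS|=0$.

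There are two technical points to watch. First, one must check the perturbation stays in $\cM_\kappa$ in the restricted case \eqref{def:restricteddual}: since on $\cS$ we have $\gamma < e^{-(p^2+C)/T} \le e^{-C/T}$, which for $\kappa \ge T$ is far below $\kappa/p^2$ near the origin and, after shrinking $E$ away from a neighbourhood of $0$, also below $\kappa/p^2$ elsewhere; a small enough additive bump keeps us inside $\cM_\kappa$. Second, one must be careful that the entropy derivative is genuinely $+\infty$-like and not spoiled by $\alpha$: because $\alpha^2 \le \gamma(\gamma+1)$, on $\cS$ we have $|\alpha|\le \sqrt{2\gamma}$, so $\beta \ge 1/2 - O(\gamma)$, and the expansion $\ln\frac{\beta+1/2}{\beta-1/2} = \ln\frac{1}{\beta-1/2}+O(1)$ together with $\beta - 1/2 \le \gamma$ (roughly) yields $\partial_\gamma s \gtrsim \ln(1/\gamma)$ as claimed. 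The main obstacle, then, is this careful lower bound on $\partial_\gamma s(\gamma,\alpha)$ uniformly in the allowed range of $\alpha$, i.e.\ showing that the pairing cannot cancel the entropic gain; everything else is a routine collection of the bounds already assembled in Section~\ref{sec:preliminaries}. I would expect the actual write-up to phrase this not via a directional derivative but by exhibiting an explicit competitor $\tilde\gamma = \gamma + \epsilon\,\1_E$ and estimating $\cF^{\rm can}_\delta(\tilde\gamma,\alpha,\rho_0) - \cF^{\rm can}_\delta(\gamma,\alpha,\rho_0) < 0$ directly, which avoids differentiability subtleties at the boundary of $\cD'$.
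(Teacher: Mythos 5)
Your proposal follows the paper's argument: the paper computes the functional derivative $\partial\cF^{\rm can}_\delta/\partial\gamma$ and shows it is strictly negative on $\cS$ using the elementary chain $\frac{\gamma+1/2}{\beta}\ge1$ and $\ln\frac{\beta+1/2}{\beta-1/2}\ge\ln\frac{1}{\beta-1/2}\ge\ln\frac{1}{\gamma}$, so the entropy derivative is $\ge\ln(1/\gamma)$ with constant exactly one and no $O(1)$ correction to absorb --- slightly cleaner than your expansion but identical in spirit. The paper also handles the $\cM_\kappa$ constraint in one line, noting that $\kappa\ge T$ gives $\kappa/p^2\ge e^{-p^2/T}\ge e^{-(p^2+C)/T}$ for all $p$, so the perturbed $\gamma$ automatically stays admissible and no shrinking of $E$ away from the origin is needed.
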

\begin{proof} Since $\kappa\geq T$, we have $\kappa/p^2\geq e^{-p^2/T}$ and the upper bound defining 
$\mathcal{S}$ is within the restriction in \eqref{def:setmkappa}.
  The functional derivative \footnote{We use the notation where
    $\frac{\partial\mathcal{F}^{\rm{can}}_\delta}{\partial \gamma}$ is
    defined by
    $\int\frac{\partial\mathcal{F}^{\rm{can}}_\delta}{\partial
      \gamma}(p)\phi(p)dp=\left.\frac{d}{dt}\mathcal{F}^{\rm{can}}_\delta(\gamma+t\phi,\alpha)\right|_{t=0}$.}
  of $\cF^{\rm{can}}_{\delta}$ in $\gamma$ gives
\begin{equation}
\frac{\partial\mathcal{F}^{\rm{can}}_\delta}{\partial\gamma}=p^2-\delta+\rho\widehat{V}(0)+\rho_0 \widehat{V}(p) + \widehat{V}\ast \gamma(p)-T\frac{\gamma+\frac{1}{2}}{\beta}\ln{\frac{\beta+\frac{1}{2}}{\beta-\frac{1}{2}}}.\nn
\end{equation}
Since $\frac12\leq\beta=\sqrt{(\gamma+\frac12)^2-\alpha^2}\leq\gamma+\frac12$, we have
\begin{equation}
\frac{\partial\mathcal{F}^{\rm{can}}_\delta}{\partial\gamma}\leq
p^2+C-T\frac{\gamma+\frac{1}{2}}{\beta}\ln{\frac{\beta+\frac{1}{2}}{\beta-\frac{1}{2}}}\leq
p^2+C-T\ln{\frac{1}{\beta-\frac{1}{2}}},\nn
\end{equation}
where $C:= C\left(\delta,\rho_0,\|\gamma\|_1, \widehat{V}\right)$ follows from \eqref{convVgamma}. Thus 
\begin{align*}
\frac{\partial\mathcal{F}^{\rm{can}}_\delta}{\partial\gamma}<0
\end{align*}
if
\begin{equation*}
\beta<e^{-\frac{p^2+C}{T}}+\frac{1}{2}. 
\end{equation*}
Since $\beta\leq\gamma+\frac12$, this certainly holds whenever
\begin{equation*}
\gamma<e^{-\frac{p^2+C}{T}}.
\end{equation*}
Thus the functional derivative is negative for $p\in
\mathcal{S}$. Hence, if the set had positive measure, we would be able
to lower the free energy by increasing $\gamma$ on it. This would
contradict the assumption that $(\gamma,\alpha)$ is a minimizer.
\qed
\end{proof}
From now on we assume that $\kappa>T$. We now show that
$\alpha(p)^2<\gamma(p)\left(\gamma(p)+1\right)$ almost
everywhere for minimizers of \eqref{def:dualauxproblem} or
\eqref{def:restricteddual}. Note that the statement is vacuous if
$\gamma(p)=0$. This possibility is, however, excluded (almost
everywhere) by Lemma \ref{lem:gammabigger0}.

\begin{lemma}\label{lem:apriorialpha}
Let $T>0$, $\delta\in\mathbb{R}$ and $\rho_0\geq0$. There exists a constant $C:= C\left(\rho_0,\|\gamma\|_1, \widehat{V}\right)$ such that for any minimizer $(\gamma,\alpha)$ of either \eqref{def:dualauxproblem} or \eqref{def:restricteddual}, the set 
\begin{equation}
\mathcal{P}:=\left\{p\,|\, \alpha(p)^2>\gamma(p)\left(\gamma(p)+1\right)-e^{-C/(Tc(\gamma(p)))},\, \alpha(p)^2>\frac{1}{2}\gamma(p)\left(\gamma(p)+1\right)\right\} \nn
\end{equation}
has zero measure, where
\begin{equation*}
c(\gamma):=\sqrt{\frac{2\gamma(\gamma+1)}{2\gamma(\gamma+1)+1}}.
\end{equation*}
\end{lemma}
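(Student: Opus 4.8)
The plan is to run the variational argument from the proof of Lemma~\ref{lem:gammabigger0}, but perturbing $\alpha$ instead of $\gamma$. I will show that at any minimizer of \eqref{def:dualauxproblem} or \eqref{def:restricteddual} the functional derivative
\[
\frac{\partial\cF^{\rm{can}}_\delta}{\partial\alpha}(p)=\rho_0\widehat{V}(p)+\widehat{V}\ast\alpha(p)+\frac{T\alpha(p)}{\beta(p)}\ln\frac{\beta(p)+\tfrac12}{\beta(p)-\tfrac12}
\]
(obtained from $\partial\beta/\partial\alpha=-\alpha/\beta$ and $s'(\beta)=\ln\frac{\beta+1/2}{\beta-1/2}$) has, at a.e.\ point of $\mathcal{P}$, the same sign as $\alpha$. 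Granting this, pushing $\alpha$ pointwise toward $0$ on $\mathcal{P}$ strictly lowers $\cF^{\rm{can}}_\delta$ while remaining in $\cD'$ (and in $\cM_\kappa$, since $\gamma$ is untouched), contradicting minimality unless $|\mathcal{P}|=0$. The first two terms above are bounded in absolute value by $C:=\rho_0\|\widehat{V}\|_\infty+\|\widehat{V}\ast\alpha\|_\infty$, which by \eqref{convValpha} depends only on $\rho_0$, $\|\gamma\|_1$ and $\widehat{V}$; this is the constant $C$ in the statement. So it suffices to show that on $\mathcal{P}$ the entropy term has modulus $>C$.

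Write $u:=\gamma(\gamma+1)$, so $\beta^2=u+\tfrac14-\alpha^2\ge\tfrac14$. The first defining inequality of $\mathcal{P}$ gives $\beta^2<\tfrac14+e^{-C/(Tc(\gamma))}$, hence $\beta-\tfrac12<e^{-C/(Tc(\gamma))}$ (using $\sqrt{1+x}\le1+x/2$), so, since $\beta\ge\tfrac12$,
\[
\ln\frac{\beta+\tfrac12}{\beta-\tfrac12}\ \ge\ \ln\frac{1}{\beta-\tfrac12}\ >\ \frac{C}{Tc(\gamma)}.
\]
The second inequality gives $\beta^2<\tfrac12u+\tfrac14=\tfrac{2u+1}{4}$, and since $c(\gamma)^2=\tfrac{2u}{2u+1}$ the product collapses: $\beta^2c(\gamma)^2<\tfrac{2u+1}{4}\cdot\tfrac{2u}{2u+1}=\tfrac{u}{2}<\alpha^2$, i.e.\ $|\alpha|/\beta>c(\gamma)$. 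Multiplying the two bounds,
\[
\left|\frac{T\alpha}{\beta}\ln\frac{\beta+\tfrac12}{\beta-\tfrac12}\right|\ >\ \frac{T|\alpha|}{\beta}\cdot\frac{C}{Tc(\gamma)}\ =\ \frac{C}{c(\gamma)}\cdot\frac{|\alpha|}{\beta}\ >\ C,
\]
which is what was needed. The cancellation $\tfrac{2u+1}{4}\cdot\tfrac{2u}{2u+1}=\tfrac{u}{2}$ is precisely what makes the two conditions in the definition of $\mathcal{P}$ compatible, and $c(\gamma)$ is tuned so as to absorb the ratio $|\alpha|/\beta$, which would otherwise degenerate as $\gamma\to0$.

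To conclude, suppose $|\mathcal{P}|>0$. On $\{\alpha=0\}\cap\mathcal{P}$ the condition $\alpha^2>\tfrac12\gamma(\gamma+1)$ forces $\gamma=0$, a null set by Lemma~\ref{lem:gammabigger0}; hence $\alpha\ne0$ a.e.\ on $\mathcal{P}$ and its sign is well defined there. Passing to a bounded subset of $\mathcal{P}$ of positive measure on which $\gamma$ is also bounded (so that all the integrals below are manifestly finite), one takes the competitor $\tilde\alpha=\alpha\mathbbm{1}_{\mathcal{P}^c}+(1-\eta)\alpha\mathbbm{1}_{\mathcal{P}}\in\cD'$ with small $\eta>0$: on the part of $\mathcal{P}$ where $\beta>\tfrac12$ the sign of $\partial\cF^{\rm{can}}_\delta/\partial\alpha$ established above makes the derivative of the energy in $\eta$ at $\eta=0^+$ strictly negative. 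The only delicate point is the portion of $\mathcal{P}$ where $\alpha^2=\gamma(\gamma+1)$ exactly, so $\beta=\tfrac12$ and the functional derivative is not literally defined; there one checks instead that the entropy gain $-T(s(\tilde\beta)-s(\beta))$ is of order $-\eta\ln(1/\eta)$, still dominating the $O(\eta)$ change of the linear term $\rho_0\int\widehat{V}\alpha$ and of the nonnegative quadratic form $\iint\widehat{V}(p-q)\alpha(p)\alpha(q)$. This boundary bookkeeping is the main obstacle; everything else follows from the computation above and the bounds collected in Section~\ref{sec:preliminaries}.
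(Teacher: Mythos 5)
Your proof is correct and follows essentially the same approach as the paper: compute the $\alpha$-functional derivative, use the second defining inequality of $\mathcal{P}$ to get $|\alpha|/\beta > c(\gamma)$, use the first to get a lower bound on $\ln\frac{\beta+1/2}{\beta-1/2}$, and conclude the entropy term dominates the bounded terms $\rho_0\widehat V+\widehat V\ast\alpha$, so the derivative has the sign of $\alpha$ on $\mathcal{P}$, whence pushing $\alpha$ toward zero lowers the energy. Your added bookkeeping for the degenerate locus $\beta=\tfrac12$, where the one-sided derivative is infinite, is a genuine refinement that the paper leaves implicit (the paper simply asserts "we would be able to lower the energy by varying on it"); your $\eta\ln(1/\eta)$ entropy estimate is exactly the right way to make this rigorous. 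One very small slip: on $\{\alpha=0\}\cap\mathcal{P}$ the strict inequality $0>\tfrac12\gamma(\gamma+1)$ is already impossible even when $\gamma=0$, so that set is empty outright and the appeal to Lemma~\ref{lem:gammabigger0} is not needed there.
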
 
\begin{proof}
The functional derivative of $\cF^{\rm{can}}_\delta$ in $\alpha$ gives
\begin{equation*}
\frac{\partial\mathcal{F}^{\rm{can}}_\delta}{\partial\alpha}=\rho_0 \widehat{V}(p) + \widehat{V}\ast \alpha(p)+T\frac{\alpha}{\beta}\ln{\frac{\beta+\frac{1}{2}}{\beta-\frac{1}{2}}}.
\end{equation*}
Assume first that $\alpha(p)^2>\frac{1}{2}\gamma(p)\left(\gamma(p)+1\right)$. Then
\begin{equation*}
\left|\frac{\alpha}{\beta}\right|\geq\sqrt{{\frac{\frac{1}{2}\gamma\left(\gamma+1\right)}{\gamma(\gamma+1)+\frac{1}{4}-\alpha^2}}}
\geq\sqrt{{\frac{\frac{1}{2}\gamma\left(\gamma+1\right)}{\frac{1}{2}\gamma(\gamma+1)+\frac{1}{4}}}}=c(\gamma).
\end{equation*}
Another estimate that holds by the assumptions on $V$ and inequality \eqref{convValpha} is 
\begin{equation*}
\left|\rho_0 \widehat{V}(p) + \widehat{V}\ast \alpha(p)\right|\leq \rho_0\|\widehat{V}\|_\infty+\|\widehat{V}\ast\alpha\|_\infty
\leq C\left(\rho_0,\|\gamma\|_1, \widehat{V}\right).
\end{equation*}
If $\alpha\geq\sqrt{\frac{1}{2}\gamma(p)(\gamma(p)+1)}$, then
\begin{equation*}
\frac{\partial\mathcal{F}^{\rm{can}}_\delta}{\partial\alpha}
\geq -C +T c(\gamma)\ln{\frac{\beta+\frac{1}{2}}{\beta-\frac{1}{2}}}
\geq -C +T c(\gamma)\ln{\frac{1}{\beta-\frac{1}{2}}},
\end{equation*}
where we have used $\beta\geq\frac{1}{2}$ in addition to the previous estimates.
Similarly, if $\alpha\leq-\sqrt{\frac{1}{2}\gamma(p)(\gamma(p)+1)}$, we estimate
\begin{equation*}
\frac{\partial\mathcal{F}^{\rm{can}}_\delta}{\partial\alpha}
\leq C -T c(\gamma)\ln{\frac{\beta+\frac{1}{2}}{\beta-\frac{1}{2}}}
\leq C - T c(\gamma)\ln{\frac{1}{\beta-\frac{1}{2}}}.
\end{equation*}

\noindent In the first/second case the derivative is positive/negative whenever
\begin{equation*}
\beta\leq\frac{1}{2}+e^{-C/(Tc(\gamma))},
\end{equation*}
and using the definition of $\beta^2$ we find that this happens when
\begin{equation*}
\alpha^2>\gamma(\gamma+1)-e^{-2 C/(Tc(\gamma))}-e^{-C/(Tc(\gamma))}.
\end{equation*}
This means that the derivative is positive for $p$ in $\mathcal{P}$ and $\alpha(p)\geq0$, and negative for $p$ in $\mathcal{P}$  and $\alpha(p)\leq0$. Hence, if the set had positive measure, we would be able to lower the energy by varying on it, which contradicts the assumption that $(\gamma,\alpha)$ is a minimizer.
\qed
\end{proof}

Since we already know that $\gamma(p)>0$ almost everywhere, this implies that $-\sqrt{\gamma(p)\left(\gamma(p)+1\right)}<\alpha(p)<\sqrt{\gamma(p)\left(\gamma(p)+1\right)}$ almost everywhere. Thus the Euler--Lagrange equation for $\alpha$ holds with equality for minimizers of both \eqref{def:dualauxproblem} and \eqref{def:restricteddual}.

\subsection{A priori bound for $\gamma$ in the restricted case}
\label{apgr}

The existence of minimizers for \eqref{def:restricteddual}, as well as
the a priori bounds established in the previous subsection give us
access to the Euler--Lagrange equations for the restricted
problem. Indeed, we have
\begin{align}
\frac{\partial\mathcal{F}^{\rm{can}}_\delta}{\partial\gamma}&=p^2-\delta+\rho\widehat{V}(0)+\rho_0 \widehat{V}(p) + \widehat{V}\ast \gamma(p)-T\frac{\gamma+\frac{1}{2}}{\beta}\ln{\frac{\beta+\frac{1}{2}}{\beta-\frac{1}{2}}} \nn \\
&=\left\{\begin{array}{ll}
\leq0 & \mbox{if }\gamma(p)=\kappa/p^2\\
=0 & \mbox{if }0\leq\gamma(p)< \kappa/p^2
\end{array}\right. \label{eq:ELrestrictegamma} \\
\frac{\partial\mathcal{F}^{\rm{can}}_\delta}{\partial\alpha}&=\rho_0 \widehat{V}(p) + \widehat{V}\ast \alpha(p)+T\frac{\alpha}{\beta}\ln{\frac{\beta+\frac{1}{2}}{\beta-\frac{1}{2}}}=0.
\label{eq:ELconstrainedalpha}
\end{align}

We now analyse these equations in order to derive a priori bounds for $\gamma_\kappa$, the minimizer of the restricted problem \eqref{def:restricteddual}. These bounds are then used to  show convergence of $\gamma_\kappa$ to a minimizer of the unrestricted problem \eqref{def:dualauxproblem} as $\kappa\rightarrow \infty$.

\begin{lemma}[Large $p$ a priori bound for $\gamma$]\label{lem:boundgammalargep}
  Let $T>0$, $\delta\in {\mathbb R}$, and $0\leq \rho_0\leq
  \rho$. There exist positive constants $P_0$ and $C$ such that for any minimizer $(\gamma,\alpha)$ of \eqref{def:restricteddual} with $\kappa>\max\{1,T\}$ and $|p|>P_0$, we have
\begin{equation}\label{eq:largepboundgamma}
\gamma(p)\leq C |p|^{-4}.
\end{equation}
Moreover, as $T$ goes to zero $P_0$ is uniformly bounded above and $C$ uniformly bounded above. 
\end{lemma}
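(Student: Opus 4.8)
The plan is to read off the decay of $\gamma=\gamma_\kappa$ from the Euler--Lagrange relations \eqref{eq:ELrestrictegamma}--\eqref{eq:ELconstrainedalpha}, which are available because Proposition~\ref{prop:existencerestricteddualminimizer} provides a minimizer and the a priori bounds of Subsection~\ref{ap1} (Lemmas~\ref{lem:gammabigger0} and~\ref{lem:apriorialpha}) give $\gamma>0$ and $\alpha^2<\gamma(\gamma+1)$ almost everywhere, so that $\tfrac12<\beta<\infty$ and $L(p):=\ln\frac{\beta(p)+1/2}{\beta(p)-1/2}\in(0,\infty)$ a.e. Write $B(p):=\rho_0\widehat V(p)+\widehat V\ast\alpha(p)$ and $A(p):=p^2-\delta+\rho\widehat V(0)+\rho_0\widehat V(p)+\widehat V\ast\gamma(p)$. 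Using \eqref{convVgamma}--\eqref{convValpha}, the hypotheses on $\widehat V$, and the bound on $\|\gamma_\kappa\|_1$ that is uniform in $\kappa$ (Remark~\ref{rem:uniformboundenessgammakappa}; it is also uniform as $T\to0$, since the constant $C_T$ of Lemma~\ref{lem:coercivity} is), one gets $|B(p)|\le C_0$ with $C_0$ uniform in $\kappa$ and in $T\to0$; since $\widehat V\ge0$ and $0\le\rho_0\le\rho$, one also has $p^2-|\delta|\le A(p)\le p^2+C_1$ with $C_1$ uniform.

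The key step is purely algebraic. Equation \eqref{eq:ELconstrainedalpha} reads $\alpha/\beta=-B/(TL)$, whence $\frac{(\gamma+1/2)^2}{\beta^2}=1+\frac{\alpha^2}{\beta^2}=1+\frac{B^2}{T^2L^2}\le1+\frac{C_0^2}{T^2L^2}$, i.e.\ $\frac{\gamma+1/2}{\beta}\le\frac{1}{TL}\sqrt{T^2L^2+C_0^2}$. On the other hand \eqref{eq:ELrestrictegamma} gives, \emph{whether or not} the cut-off $\gamma=\kappa/p^2$ is active, the one-sided inequality $A\le TL\,\frac{\gamma+1/2}{\beta}$; combining the two yields $A\le\sqrt{T^2L^2+C_0^2}$, hence $T^2L^2\ge A^2-C_0^2$. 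Fixing $P_0$ large enough in terms of $|\delta|$ and $C_0$ only, one has $A>0$ and $A^2-C_0^2\ge\tfrac12A^2$ for $|p|>P_0$, so $TL\ge A/\sqrt2\ge p^2/(2\sqrt2)$ and $L\ge p^2/(2\sqrt2\,T)$. This pins $\beta$ down to $\tfrac12$: from $\beta-\tfrac12=(e^L-1)^{-1}\le2e^{-L}$ (enlarging $P_0$ so that $L\ge\ln2$) we get $\beta-\tfrac12\le2e^{-p^2/(2\sqrt2\,T)}$, while from $\alpha^2=\frac{B^2}{T^2L^2}\beta^2\le\frac{C_0^2}{T^2L^2}(\gamma+\tfrac12)^2$ together with $T^2L^2\ge A^2/2\ge p^4/8$ we get $\alpha^2\le\frac{8C_0^2}{p^4}(\gamma+\tfrac12)^2$. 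Inserting both into $(\gamma+\tfrac12)^2=\beta^2+\alpha^2$ gives $(\gamma+\tfrac12)^2(1-\tfrac{8C_0^2}{p^4})\le(\tfrac12+2e^{-p^2/(2\sqrt2\,T)})^2$, and solving for $\gamma$ produces $\gamma(p)\le C_2\,e^{-p^2/(2\sqrt2\,T)}+C_3\,C_0^2\,p^{-4}$ with universal $C_2,C_3$; a final enlargement of $P_0$, so that the exponential term is dominated by $p^{-4}$ for $|p|>P_0$, yields \eqref{eq:largepboundgamma} with $C$ of order $C_0^2$.

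The only genuine subtlety is the one flagged above: the estimate has to be run \emph{before} one knows whether the artificial cut-off is saturated, which is precisely why one works with $A\le TL\,\frac{\gamma+1/2}{\beta}$ — valid in both branches of \eqref{eq:ELrestrictegamma} with exactly the sign needed — so that $P_0$ and $C$ come out independent of $\kappa$. For the $T\to0$ uniformity it suffices to note that every condition imposed on $P_0$ (roughly $P_0^2\gtrsim|\delta|+C_0^2$, $p^2/(2\sqrt2\,T)\ge\ln2$, and $e^{-p^2/(2\sqrt2\,T)}\le p^{-4}$ for $|p|>P_0$) is met by one fixed $P_0$ once $T$ is small, the last two becoming vacuous as $T\to0$, while $C$ stays bounded as $T\to0$ because the exponential term then disappears and only $\gamma(p)\lesssim C_0^2 p^{-4}$ survives. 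I expect this bookkeeping — rather than any single hard inequality — to be the main thing to get right.
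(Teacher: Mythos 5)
Your proof is correct and follows essentially the same route as the paper's: use the one-sided Euler--Lagrange inequality in $\gamma$ together with the $\alpha$-equation (valid with equality by Lemmas~\ref{lem:gammabigger0} and~\ref{lem:apriorialpha}) and the uniform bound on $\|\widehat V\ast\alpha\|_\infty$ to force $\alpha^2/(\gamma+\tfrac12)^2\lesssim p^{-4}$, pin $\beta$ down exponentially close to $\tfrac12$, and deduce $\gamma\lesssim p^{-4}$. Your reorganization via $(\gamma+\tfrac12)^2/\beta^2 = 1+B^2/(TL)^2$ has the minor virtue of sidestepping the paper's explicit $\alpha\neq0$ case split, but the substance and the uniformity bookkeeping as $T\to0$ match the paper's argument.
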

\begin{proof}
Assume $\alpha\neq0$. Using \eqref{eq:ELconstrainedalpha}, Lemma~\ref{lem:coercivity} and \eqref{convValpha} we see that there is a $P_0$ such that for $|p|>P_0$ 
\begin{eqnarray}\label{estimatederiv}
\begin{aligned}
0&\geq\frac{\partial\mathcal{F}^{\rm{can}}_\delta}{\partial\gamma}
\geq \frac{1}{2}p^2-T\frac{\gamma+\frac{1}{2}}{\beta}\ln{\frac{\beta+\frac{1}{2}}{\beta-\frac{1}{2}}}
 \\ &=\frac{1}{2}p^2+\frac{\gamma+\frac{1}{2}}{\alpha}\left(\rho_0 \widehat{V}(p) + \widehat{V}\ast \alpha(p)\right)
\geq \frac{1}{2}p^2-C\frac{\gamma+\frac{1}{2}}{|\alpha|}.
\end{aligned}
\end{eqnarray}
Hence, we have
\begin{equation*}
\frac{\alpha^2}{\left(\gamma+\frac{1}{2}\right)^2}\leq Cp^{-4}.
\end{equation*}
Note that we can now drop the assumption $\alpha\neq0$ since the above also holds if $\alpha=0$. 
This implies
\begin{equation*}
\beta^2=\left(\gamma+\frac{1}{2}\right)^2\left(1-\frac{\alpha^2}{\left(\gamma+\frac{1}{2}\right)^2}\right)\ge\left(\gamma+\frac{1}{2}\right)^2\left(1-Cp^{-4}\right),
\end{equation*}
which can be rewritten as 
\begin{equation}
\frac{\gamma+\frac{1}{2}}{\beta}\leq \left(1-Cp^{-4}\right)^{-\frac12}.
\label{intermstep1}
\end{equation}
Returning to the second estimate in \eqref{estimatederiv}, we obtain
\begin{equation*}
0\geq\frac{\partial\mathcal{F}^{\rm{can}}_\delta}{\partial\gamma}\geq\frac{1}{2}p^2-T\left(1-Cp^{-4}\right)^{-\frac12}\ln{\frac{\beta+\frac{1}{2}}{\beta-\frac{1}{2}}}.
\end{equation*}
Rewriting this inequality leads to 
\begin{equation*}
\beta\leq\frac{1}{2}\frac{\exp\left[\frac{p^2}{2T}\left(1-Cp^{-4}\right)^{1/2}\right]+1}{\exp\left[\frac{p^2}{2T}\left(1-Cp^{-4}\right)^{1/2}\right]-1}
=\frac{1}{2}+\left[\exp\left[\frac{p^2}{2T}\left(1-Cp^{-4}\right)^{1/2}\right]-1\right]^{-1}.
\end{equation*}
Combining this with \eqref{intermstep1} we finally obtain
\begin{equation*}
\gamma+\frac{1}{2}\leq \left(1-Cp^{-4}\right)^{-1/2}\left(\frac{1}{2}+\left[\exp\left[\frac{p^2}{2T}\left(1-Cp^{-4}\right)^{1/2}\right]-1\right]^{-1}\right),
\end{equation*}
which is $\frac{1}{2}+ C p^{-4}+\mathcal{O}(p^{-8})$ for $p$ large enough.
\qed 
\end{proof}

Let $\gamma_\kappa$ be a minimizer for the restricted problem \eqref{def:restricteddual} for a given $\kappa$. We define
\begin{align}
p_\kappa:=\sup\left\{|p|\ \Big|\ \gamma_{\kappa}(p)=\frac{\kappa}{p^2}\right\}. \label{def:pkappa}
\end{align}
Note that the bound on $\gamma_\kappa$ for large $|p|$ proved in Lemma
\ref{lem:boundgammalargep} implies that $p_\kappa$ cannot be infinite
for any $\kappa$. We will therefore assume henceforth that $p_\kappa$ is
finite. It could be that $p_\kappa=-\infty$ (in case the set
\eqref{def:pkappa} is empty), but then our proof works as well.

We shall now work towards a priori bounds on $\gamma$ for small
$p$. We start by proving a lemma that we will use twice later on.
\begin{lemma}
  Let $a>0$ and $g:[a,\infty)\to[0,\infty)$ be a non-negative, continuously differentiable
  function. If $|g'(t)|<C_a$
  for $a\leq t\leq2a$, then
\[
\int_{|p|\geq a} g(|p|)^{-1}d^3p\geq (2\pi^2)^{-1} a^2 {C_a}^{-1}\ln\left(1+\frac{C_a}{g(a)}a\right).
\]
\label{usefullemma1}
\end{lemma}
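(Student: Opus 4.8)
The plan is to throw away everything except the contribution of the spherical shell $\{a\le|p|\le 2a\}$, on which the hypothesis $|g'|<C_a$ is available, and then reduce to an elementary one-dimensional estimate. First I would pass to radial coordinates, recalling that in this section $d^3p$ stands for $(2\pi)^{-3}dp$, so that the angular integration contributes a factor $4\pi(2\pi)^{-3}=(2\pi^2)^{-1}$. Since the integrand is non-negative I may discard the region $|p|>2a$, and on $[a,2a]$ I bound $r^2\ge a^2$:
\[
\int_{|p|\ge a}g(|p|)^{-1}\,d^3p\;\ge\;(2\pi^2)^{-1}\int_a^{2a}\frac{r^2}{g(r)}\,dr\;\ge\;(2\pi^2)^{-1}a^2\int_a^{2a}\frac{dr}{g(r)}.
\]

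Next I would use the derivative bound to control $g$ from above on the shell: integrating $g'$ from $a$ to $r$ and using $|g'|<C_a$ gives $g(r)\le g(a)+C_a(r-a)$ for $a\le r\le 2a$, hence $g(r)^{-1}\ge\big(g(a)+C_a(r-a)\big)^{-1}$ pointwise (with the convention $1/0=+\infty$ this remains valid even at points where $g$ vanishes, so no integrability issue arises). A direct substitution $u=g(a)+C_a(r-a)$ then yields
\[
\int_a^{2a}\frac{dr}{g(r)}\;\ge\;\int_a^{2a}\frac{dr}{g(a)+C_a(r-a)}\;=\;\frac{1}{C_a}\ln\frac{g(a)+C_a a}{g(a)}\;=\;\frac{1}{C_a}\ln\!\left(1+\frac{C_a}{g(a)}a\right).
\]
Combining the two displays gives the claimed inequality. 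In the degenerate case $g(a)=0$ the right-hand side is $+\infty$, and so is the left-hand side since $g(r)^{-1}\ge\big(C_a(r-a)\big)^{-1}$ fails to be integrable near $r=a$; thus the statement is trivially true there as well.

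There is essentially no serious obstacle in this lemma; the only points that need a little care are the bookkeeping of the $(2\pi)^{-3}$ hidden in the measure (which is exactly what produces the constant $(2\pi^2)^{-1}$) and the harmless possibility that $g$ has zeros on $[a,2a]$, handled by the convention above.
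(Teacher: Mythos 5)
Your proof is correct and follows essentially the same route as the paper: restrict to the shell $a\le|p|\le2a$, use the derivative bound to get $g(r)\le g(a)+C_a(r-a)$, bound $r^2\ge a^2$, and evaluate the resulting elementary integral. The only (inessential) differences are the order in which you apply the two pointwise bounds and your extra remark about the degenerate case $g(a)=0$, which the paper leaves implicit.
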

\begin{proof}
By assumption we have $g(|p|)\leq g(a)+C_a(|p|-a)$ for $a\leq|p|\leq2a$. 
Thus (recalling our convention for the measures $dp$ and $dt$ explained above \eqref{def:auxminimization})
\begin{align*}
\int_{|p|\geq a} g(|p|)^{-1}d^3p&\geq(2\pi^2)^{-1}\int^{2a}_a\left[g(a)+C_a(t-a)\right]^{-1}t^2dt\\
&\geq (2\pi^2)^{-1} a^2\int^{a}_0\left[g(a)+C_at\right]^{-1}dt\\
&=(2\pi^2)^{-1} a^2 {C_a}^{-1}\ln\left(1+\frac{C_a}{g(a)}a\right).
\end{align*}
\qed
\end{proof}
To obtain the desired bound for small $p$ we will apply this lemma to the radial function $\tilde{g}$ given by
\begin{equation}
\tilde{g}(|p|):=\gamma_\kappa(p)^{-1},
\label{helpfunction}
\end{equation}
where $\gamma_\kappa$ is a minimizer for the restricted problem (note this is indeed a radial function by arguments similar to those in Corollary \ref{radialminimizer}). This means we need to get a bound on the derivative of $\gamma_\kappa^{-1}$. In the calculations below we assume that $(\gamma,\alpha)$ is a minimizer for \eqref{def:restricteddual} for a fixed $\delta\in\mathbb{R}$ and $\rho_0\geq0$ (we drop the subscript $\kappa$ for convenience).

We start our analysis from the Euler--Lagrange equations, which hold with equality for $|p|>p_\kappa$ :
\begin{equation}
\label{ELgamma1}
\begin{aligned}
\frac{\partial\mathcal{F}^{\rm{can}}_\delta}{\partial\gamma}&=p^2-\delta+\rho\widehat{V}(0)+\rho_0 \widehat{V}(p) + \widehat{V}\ast \gamma(p)-T\frac{\gamma+\frac{1}{2}}{\beta}\ln{\frac{\beta+\frac{1}{2}}{\beta-\frac{1}{2}}}=0\\
\frac{\partial\mathcal{F}^{\rm{can}}_\delta}{\partial\alpha}&=\rho_0 \widehat{V}(p) + \widehat{V}\ast \alpha(p)+T\frac{\alpha}{\beta}\ln{\frac{\beta+\frac{1}{2}}{\beta-\frac{1}{2}}}=0.
\end{aligned}
\end{equation}
We define
\begin{eqnarray} 
\begin{aligned}
\label{def:A(p)-B(p)}
A(p)&:=\frac{1}{T}\left(p^2-\delta+\rho\widehat{V}(0)+\rho_0 \widehat{V}(p) + \widehat{V}\ast \gamma(p)\right)\\
B(p)&:=\frac{1}{T}\left(\rho_0 \widehat{V}(p) + \widehat{V}\ast \alpha(p)\right). 
\end{aligned}
\end{eqnarray}
By squaring, subtracting and taking a square root we obtain
\begin{equation*}
\ln{\frac{\beta+\frac{1}{2}}{\beta-\frac{1}{2}}}=\sqrt{A(p)^2-B(p)^2}=:G(p),
\end{equation*}
and
\begin{equation*}
\beta=\frac{1}{2}\frac{e^G+1}{e^G-1}.
\end{equation*}
Combined with \eqref{ELgamma1} this leads to
\begin{equation}
\gamma=\beta\left[\ln{\frac{\beta+\frac{1}{2}}{\beta-\frac{1}{2}}}\right]^{-1}A-\frac{1}{2}
=\frac{1}{2}\left[\frac{e^G+1}{G(e^G-1)}A-1  \right].
\label{gammaexp}
\end{equation}
We know that $\gamma\leq \kappa/p^2$ and also that the expression above is correct for $|p|>p_\kappa$. Therefore the denominator cannot go to zero in this region (implying $G$ cannot go to zero). 
Combining this with the relation between $G$ and $A$ we obtain
\begin{equation}
A\geq G>0.  \label{AbiggerG}
\end{equation}
Together with \eqref{gammaexp}, this implies
\begin{equation}
\gamma^{-1}=\frac{2G(e^G-1)}{e^G(A-G)+A+G}\leq (e^G-1)\leq (e^A-1).
\label{gammaderiv}
\end{equation}
Recall the definition \eqref{helpfunction}. From \eqref{gammaderiv} it
follows that on $(p_k,\infty)$ we have
\begin{align*}
  \frac{\tilde{g}'(|p|)}{2}=&\frac{G'\left[e^G-1+Ge^G\right]}{D}\\&-\frac{G(e^G-1)\left[G'e^G(A-G)+A'(e^G+1)+G'(1-e^G)\right]}{D^2}\\
  =&\frac{1}{D}\left[GG'\left(\frac{e^G-1}{G}+e^G\right)\right]-GG'\frac{G}{D}\frac{e^G-1}{G}e^G\left[\frac{A}{D}-\frac{G}{D}\right]\\
  &-A'\left(\frac{G}{D}\right)^2\frac{e^G-1}{G}\left(e^G+1\right)+GG'\left(\frac{e^G-1}{G}\right)^2\left(\frac{G}{D}\right)^2,
\end{align*}
where $D=e^G(A-G)+A+G$.  Note that if $P>p_\kappa$ is bounded then all terms except the first are
bounded on $p_k<|p|\leq P$: $A$ and $A'$ are bounded by their form
\eqref{def:A(p)-B(p)} and the assumptions on $V$; $G$ is bounded by
$A$; $D$ is bigger than or equal to both $G$ and $A$. The boundedness of
$GG'$ follows from
\begin{equation*}
G'=\frac{1}{G}\left(AA'+BB'\right)
\end{equation*}
and the boundedness from all terms between the brackets ($B$ and $B'$ are bounded for similar reasons as $A$ and $A'$). It follows that
\begin{equation*}
\tilde{g}'(p)\leq \frac{C_2}{D(p)}+C_3\leq \frac{C_2}{A(p)}+C_3,
\end{equation*}
where the $C_i:=C_i(P,\|\widehat{V}\|_\infty,\|\nabla\widehat{V}\|_\infty, \|\gamma\|_1)$ are constants. To obtain a final bound on $\tilde{g}'(p)$ we need the following lemma.
\begin{lemma}
For $p_\kappa<|p|\leq\frac{1}{2}P$, where $P>2p_\kappa$ is a given constant, there exists $C_1:=C_1(P,\|\widehat{V}\|_\infty,\|\nabla\widehat{V}\|_\infty, \|\gamma\|_1)$ such that
\[
A(p)\geq \ln \left[1+C_1 |p| e^{-\frac{2\pi^2C_1 \|\gamma\|_1}{p^2}}\right].
\]
\label{firstapplemma}
\end{lemma}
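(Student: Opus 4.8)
The plan is to turn the desired lower bound on $A(p)$ into an upper bound on the minimizer $\gamma_\kappa$ for small momenta, and to obtain that bound from Lemma~\ref{usefullemma1} together with the uniform $L^1$-control on $\gamma_\kappa$ (Remark~\ref{rem:uniformboundenessgammakappa}). The starting observation is \eqref{gammaderiv}: for $|p|>p_\kappa$, where the Euler--Lagrange equations \eqref{ELgamma1} hold with equality, $\gamma_\kappa(p)^{-1}\le e^{G(p)}-1\le e^{A(p)}-1$, that is
\[
A(p)\ge\ln\bigl(1+\gamma_\kappa(p)^{-1}\bigr),\qquad |p|>p_\kappa .
\]
Consequently it is enough to prove $\gamma_\kappa(p)\le\bigl(C_1|p|\bigr)^{-1}\exp\bigl(2\pi^2C_1\|\gamma\|_1/p^2\bigr)$ for $p_\kappa<|p|\le P/2$: this gives $\gamma_\kappa(p)^{-1}\ge C_1|p|/(e^{2\pi^2C_1\|\gamma\|_1/p^2}-1)>C_1|p|\,e^{-2\pi^2C_1\|\gamma\|_1/p^2}$, whence the claimed inequality.

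To bound $\gamma_\kappa$ from above near the origin I would apply Lemma~\ref{usefullemma1} to the radial function $g=\tilde g:=\gamma_\kappa^{-1}$, which is non-negative, positive and continuously differentiable on $(p_\kappa,\infty)$ (the Euler--Lagrange equations hold with equality there, $\widehat V\ast\gamma_\kappa$ is $C^1$, and $\gamma_\kappa>0$ by Lemma~\ref{lem:gammabigger0}), with $a=|p|$; here $[a,2a]\subseteq(p_\kappa,P]$ since $2a\le P$. For any upper bound $C_a$ of $|\tilde g'|$ on $[a,2a]$ the lemma yields
\[
\|\gamma\|_1\ge\int_{|q|\ge|p|}\gamma_\kappa(q)\,d^3q\ge\frac{|p|^2}{2\pi^2C_a}\ln\bigl(1+C_a|p|\,\gamma_\kappa(|p|)\bigr),
\]
and rearranging gives $\gamma_\kappa(|p|)\le(C_a|p|)^{-1}(e^{2\pi^2\|\gamma\|_1 C_a/|p|^2}-1)$, which is exactly the bound above with $C_1=C_a$. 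Since $c\mapsto c^{-1}\ln(1+cX)$ is non-increasing in $c$, any \emph{uniform} upper bound $C_1$ for $|\tilde g'|$ on intervals $[|p|,2|p|]\subset(p_\kappa,P]$ may be used.

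The crux is therefore the Lipschitz estimate for $\tilde g$. I would start from the derivative bound established just above, $\tilde g'(p)\le C_2/A(p)+C_3$ with $C_2,C_3$ depending only on $P$, $\|\widehat V\|_\infty$, $\|\nabla\widehat V\|_\infty$, $\|\gamma\|_1$ (the quantities $A,A',B,B'$ being bounded on $(p_\kappa,P]$ by the hypotheses on $\widehat V$, by \eqref{convVgamma}--\eqref{convValpha}, and by $A\ge G>0$ from \eqref{AbiggerG}), and combine it with $A(p)\ge\ln(1+\tilde g(p))$ to get $\tilde g'(p)\le C_2/\ln(1+\tilde g(p))+C_3$. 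On $[|p|,2|p|]$ one separates the region where $\tilde g\ge 1$, on which this is $\le C_2/\ln 2+C_3$, from the region where $\tilde g<1$, i.e.\ $\gamma_\kappa>1$: there one uses the membership $\gamma_\kappa\in\mathcal M_\kappa$ (so $\gamma_\kappa\le\kappa/p^2$) and the measure bound $|\{\gamma_\kappa>1\}|\le\|\gamma\|_1$ from Step~2 of the proof of Proposition~\ref{prop:existencerestricteddualminimizer} to keep $\ln(1+\tilde g)$ bounded away from $0$ uniformly; a matching lower bound $\tilde g'\ge-C_1$ is obtained the same way from the explicit expression \eqref{gammaexp} for $\tilde g$. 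This produces the constant $C_1=C_1(P,\|\widehat V\|_\infty,\|\nabla\widehat V\|_\infty,\|\gamma\|_1)$.

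Plugging $C_a=C_1$ into the inequality from Lemma~\ref{usefullemma1} gives the upper bound on $\gamma_\kappa(|p|)$, and inserting it into $A(p)\ge\ln(1+\gamma_\kappa(p)^{-1})$ finishes the proof. I expect the genuine difficulty to be precisely this uniform Lipschitz bound: one must rule out that $\tilde g$ degenerates — equivalently, that $\gamma_\kappa$ develops a spike — on the shell $[|p|,2|p|]$, and to do so exploit the $\mathcal M_\kappa$-restriction and the $L^1$-bound carefully enough that the resulting $C_1$ depends only on the quantities listed and not on $\kappa$ or on the point $p$.
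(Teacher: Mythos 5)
Your reduction $A(p)\geq\ln\bigl(1+\gamma_\kappa(p)^{-1}\bigr)$ is correct, and the target bound on $\gamma_\kappa$ you state is indeed equivalent to the lemma. But your route to that bound has a genuine gap, one you in fact point to at the end: you want to apply Lemma~\ref{usefullemma1} to $\tilde g=\gamma_\kappa^{-1}$, and this requires a Lipschitz estimate $|\tilde g'|\leq C_1$ on shells $[|p|,2|p|]$ with $C_1$ independent of $\kappa$ and $p$. The self-referential bound $\tilde g'\leq C_2/\ln(1+\tilde g)+C_3$ does not close: on the set where $\gamma_\kappa$ is large, $\ln(1+\tilde g)$ can be arbitrarily small, and the only handle you offer there is $\gamma_\kappa\leq\kappa/p^2$, which reintroduces $\kappa$. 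The measure bound $|\{\gamma_\kappa>1\}|\leq\|\gamma\|_1$ is of no use for a pointwise derivative estimate. In fact no uniform Lipschitz bound on $\tilde g$ is available or needed at this stage: in the paper the eventual bound on $\tilde g'$ is the function $\eta(|p|)$ in \eqref{boundftilde}, which blows up as $|p|\to0$, and obtaining even this non-uniform bound is precisely what the present lemma is \emph{for}. Trying to derive it from an assumed control on $\tilde g'$ is circular.

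The paper's proof sidesteps this entirely by applying Lemma~\ref{usefullemma1} not to $\tilde g=\gamma_\kappa^{-1}$ but to the auxiliary function $g(|p|):=e^{A(p)}-1$. The two points that make this work are: (i) $\int_{|\xi|\geq|p|}g^{-1}\leq\|\gamma\|_1$, because \eqref{gammaderiv} gives $\gamma_\kappa\geq(e^A-1)^{-1}=g^{-1}$ wherever the Euler--Lagrange equations hold; and (ii) the Lipschitz constant $|g'|=|A'|e^{A}\leq C_1$ follows \emph{directly} from the explicit formula \eqref{def:A(p)-B(p)} for $A$ together with \eqref{convVgamma} and the hypotheses on $\widehat V$, with no reference to $\gamma_\kappa$ beyond its $L^1$-norm. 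Substituting this into Lemma~\ref{usefullemma1} and rearranging gives the stated lower bound on $A(p)$. Note also that the order of the argument in the paper is the reverse of yours: Lemma~\ref{firstapplemma} is proved first (via $g=e^A-1$), and only then is the resulting lower bound on $A$ fed into $\tilde g'\leq C_2/A+C_3$ to obtain the (non-uniform) derivative control \eqref{boundftilde} used in Lemma~\ref{lem:boundgammasmallp}. Your plan runs these two steps in the wrong order and so cannot close without an additional, uniform-in-$\kappa$ input that you do not have.
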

\begin{proof}
In order to apply Lemma \ref{usefullemma1} we define the function
\[
g(|p|):=e^{A(p)}-1. 
\]
By \eqref{def:A(p)-B(p)}, \eqref{AbiggerG} and our assumptions on $V$
it follows that $g$ is positive and continuously
differentiable. To apply Lemma \ref{usefullemma1}, we need a bound on
its derivative for $|p|\in(p_\kappa,P)$. Since $A$ and $A'$ are
bounded for $p_\kappa <|p|\leq P$ we have
\begin{equation*}
|g'(|p|)|=|A'(|p|)|e^{A(|p|)}\leq C_1(P,\|\widehat{V}\|_\infty,\|\nabla\widehat{V}\|_\infty, \|\gamma\|_1).
\end{equation*}
Using Lemma \ref{usefullemma1} and \eqref{gammaderiv}, we now get for $p_\kappa <|p|\leq\frac{1}{2}P$
\begin{equation*}
\|\gamma\|_1\geq \int_{|\xi|\geq |p|} g(|\xi|)^{-1}d^3\xi\geq (2\pi^2)^{-1} p^2C_1^{-1}\ln \left[1+C_1 |p|( e^{A(p)}-1)^{-1}\right].
\end{equation*}
Rewriting this proves the lemma.
\qed
\end{proof}
It follows that
\begin{equation}
\tilde{g}'(p)\leq\frac{C_2}{\ln \left[1+C_1 |p| e^{-\frac{2\pi^2C_1 \|\gamma\|_1}{p^2}}\right]}+C_3=:\eta(|p|).
\label{boundftilde}
\end{equation}
Since the function $\eta$ is decreasing, we can bound it on the interval $[|p|,2|p|]$ by its value at $|p|$.

\begin{lemma}[Small $p$ a priori bound for $\gamma$]
For $p_k<|p|\leq P_0$ (where $P_0$ was defined in Lemma~\ref{lem:boundgammalargep}), we have 
\begin{equation}
\gamma(p)\leq |p|^{-1}\eta(|p|)^{-1}e^{\frac{2\pi^2\|\gamma\|_1\eta(|p|)}{p^2}},
\label{eq:smallpboundgamma}
\end{equation}
where $\eta$ is defined in \eqref{boundftilde} with $P$ replaced by $2P_0$ (in the dependence of the constants).
\label{lem:boundgammasmallp}
\end{lemma}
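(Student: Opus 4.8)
The plan is to feed the function $\tilde g$ of \eqref{helpfunction}, namely $\tilde g(|p|)=\gamma(p)^{-1}$, into Lemma~\ref{usefullemma1} at radius $a=|p|$, with the constant there taken to be $C_a=\eta(|p|)$.

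First I would verify the hypotheses of Lemma~\ref{usefullemma1} on $(p_\kappa,\infty)$. Since $\gamma>0$ almost everywhere by Lemma~\ref{lem:gammabigger0}, and since for $|p|>p_\kappa$ the Euler--Lagrange identity \eqref{gammaexp} writes $\gamma$ as a smooth function of $A(p)$ and $B(p)$ --- which are $C^1$ because $\widehat V\in C^1$ and the convolutions $\widehat V\ast\gamma$, $\widehat V\ast\alpha$ are $C^1$ by the estimates of Section~\ref{sec:preliminaries} (split $\alpha=\alpha_<+\alpha_>$ with $\alpha_>\in L^1$, $\alpha_<\in L^2$, and use $\nabla\widehat V\in L^2\cap L^\infty$) --- and since $G(p)>0$ there by \eqref{AbiggerG}, it follows that $\gamma$ is positive and $C^1$ on $\{|p|>p_\kappa\}$, hence so is $\tilde g=\gamma^{-1}$. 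Now fix $|p|$ with $p_\kappa<|p|\le P_0$. On $[|p|,2|p|]$ one has $\tilde g'(t)\le\eta(|p|)$: for $p_\kappa<t\le P_0$ this is \eqref{boundftilde} combined with the fact that $\eta$ is decreasing, while for $t\in(P_0,2P_0]$ one invokes instead the estimate $A(t)\ge t^2/(4T)$ (a consequence of the bound $\tfrac12 t^2\le T\tfrac{\gamma+1/2}{\beta}\ln\tfrac{\beta+1/2}{\beta-1/2}\le 2TA(t)$ used in the proof of Lemma~\ref{lem:boundgammalargep}), so that $\tilde g'(t)\le C_2/A(t)+C_3$ remains below $\eta(|p|)$ once all constants are absorbed into $\eta$; this is precisely why $\eta$ is taken with $P$ replaced by $2P_0$.

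Granting these hypotheses, Lemma~\ref{usefullemma1} with $a=|p|$, $C_a=\eta(|p|)$ and $\tilde g(|p|)=\gamma(p)^{-1}$ gives
\[
\|\gamma\|_1\;\ge\;\int_{|\xi|\ge|p|}\gamma(\xi)\,d^3\xi\;=\;\int_{|\xi|\ge|p|}\tilde g(|\xi|)^{-1}\,d^3\xi\;\ge\;(2\pi^2)^{-1}|p|^2\,\eta(|p|)^{-1}\,\ln\bigl(1+\eta(|p|)\,|p|\,\gamma(p)\bigr).
\]
Rearranging yields $1+\eta(|p|)\,|p|\,\gamma(p)\le\exp\bigl(2\pi^2\|\gamma\|_1\eta(|p|)/p^2\bigr)$, and discarding the $1$ on the left gives exactly \eqref{eq:smallpboundgamma}.

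The mathematical core is thus just a single application of Lemma~\ref{usefullemma1}; the only delicate point --- and the main obstacle --- is the bookkeeping needed to legitimately invoke it, i.e.\ ensuring the derivative bound genuinely holds throughout $[|p|,2|p|]$ (in particular on the part lying beyond $P_0$) and that $\tilde g$ is truly $C^1$ on all of $(p_\kappa,\infty)$. Both are dispatched by the a priori information already assembled --- Lemmas~\ref{lem:gammabigger0}, \ref{lem:apriorialpha} and \ref{lem:boundgammalargep}, together with the convolution estimates \eqref{convVgamma} and \eqref{convValpha} --- so no new idea is required beyond a careful tracking of constants, which is the reason $\eta$ is defined with $P=2P_0$.
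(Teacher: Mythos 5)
Your proof follows the paper's own argument: apply Lemma~\ref{usefullemma1} to $\tilde g=\gamma^{-1}$ at $a=|p|$ with $C_a=\eta(|p|)$, then rearrange. The paper's proof is exactly this, stated in one line. You are right to flag that the hypothesis $|\tilde g'(t)|<\eta(|p|)$ must hold on all of $[|p|,2|p|]$, which for $|p|$ close to $P_0$ extends into $(P_0,2P_0]$ --- a region not covered by the derivation of \eqref{boundftilde}, since Lemma~\ref{firstapplemma} with $P=2P_0$ only controls $A(t)$ on $(p_\kappa,P_0]$. This is a genuine imprecision the paper glosses over, and noticing it is to your credit.

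Your patch, however, does not quite close the gap. For $t\in(P_0,2P_0]$ one indeed gets $A(t)\ge t^2/(2T)$ (by the Euler--Lagrange equality $TA(t)=T\frac{\gamma+1/2}{\beta}\ln\frac{\beta+1/2}{\beta-1/2}$ together with the inequality $\frac12 t^2\le T\frac{\gamma+1/2}{\beta}\ln\frac{\beta+1/2}{\beta-1/2}$ from the proof of Lemma~\ref{lem:boundgammalargep}), hence $\tilde g'(t)\le 2C_2T/P_0^2+C_3$; but there is no a priori reason for this constant to be dominated by $\eta(|p|)\ge\eta(P_0)$, and ``absorbing constants into $\eta$'' changes the very function the lemma's statement names. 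The cleaner repair is simply to invoke Lemma~\ref{firstapplemma} with $P=4P_0$ rather than $2P_0$: then the lower bound on $A(t)$, and hence \eqref{boundftilde}, holds on all of $(p_\kappa,2P_0]$, and for $|p|\le P_0$ the required bound $\tilde g'(t)\le\eta(t)\le\eta(|p|)$ on $[|p|,2|p|]$ follows at once from the decreasingness of $\eta$. Since the lemma is used only to supply a $\kappa$-uniform pointwise bound on $\gamma_\kappa$ near the origin, this innocuous change of constants affects nothing downstream.
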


\begin{proof}
Equation \eqref{boundftilde} gives us the bound required to apply Lemma \ref{usefullemma1}.
We therefore get 
\[
\|\gamma\|_1\geq \int_{|\xi|\geq |p|} \tilde{g}(|\xi|)^{-1}d^3\xi
\geq (2\pi^2)^{-1}p^2\eta(|p|)^{-1}\ln \left[1+|p|\eta(|p|)\gamma(p)\right],
\]
which gives the stated result upon rewriting.
\qed
\end{proof}

\begin{remark}
  All a priori bounds derived in this subsection remain (up to minor
  modifications) true in one and two dimensions.
\end{remark}

Equipped with these bounds we shall move towards the proof of
existence of a minimizer of the dual problem \eqref{def:dualauxproblem} (at the relevant $\delta^{\rm min}$ and $\rho_0^{\rm min}$).

\subsection{Existence of unrestricted minimizers}
\label{eoum}
We are now ready to prove Theorem \ref{thm:existencecanonicalepositiveT} by showing the existence of minimizing $(\gamma,\alpha)$ for $f(\rho-\rho_0^{\text{min}},\rho_0^{\text{min}})$ \eqref{def:auxminimization}, where $\rho_0^{\text{min}}$ was introduced in Proposition \ref{lem:auxfunctionalconvexity}. By the strict convexity of $f$ in $\lambda$, this is equivalent to finding a minimizer of its Legendre transform $\widehat{f}(\delta^{\text{min}},\rho_0^{\text{min}})$ \eqref{def:dualauxproblem}, where $\delta^{\rm min}\in\mathbb{R}$ is in one-to-one correspondence with $\rho-\rho_0^{\text{min}}$. We will denote $\lambda^{\rm min}:=\rho-\rho^{\rm min}_0$. The goal of this subsection is thus to prove the following result.
\begin{proposition}
\label{thm:existenceunconstrainedminimizers}
There exists a minimizer $(\tilde{\gamma},\tilde{\alpha})$ for \eqref{def:dualauxproblem} with $\delta=\delta^{\rm{min}}$, $\rho_0=\rho_0^{\rm{min}}$ and $T>0$, and therefore of \eqref{def:auxminimization} with $\lambda=\rho-\rho^{\rm min}_0$ and $\rho_0=\rho^{\rm min}_0$.
\end{proposition}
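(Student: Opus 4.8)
\emph{Set-up.} The plan is to realise a minimizer of \eqref{def:dualauxproblem} at $(\delta,\rho_0)=(\delta^{\rm min},\rho_0^{\rm min})$ as a limit, $\kappa\to\infty$, of the restricted minimizers $(\gamma_\kappa,\alpha_\kappa)$ from Proposition \ref{prop:existencerestricteddualminimizer}; by strict convexity (Proposition \ref{lem:auxfunctionalconvexity}) and the Legendre correspondence this will simultaneously be the desired minimizer of \eqref{def:auxminimization} at $\lambda^{\rm min}:=\rho-\rho_0^{\rm min}$. A soft first observation is that $\widehat{f}_{\kappa}(\delta^{\rm min},\rho_0^{\rm min})\downarrow\widehat{f}(\delta^{\rm min},\rho_0^{\rm min})$: the sequence decreases since $\cM_\kappa\uparrow\cD'$, and for any competitor $(\gamma,\alpha)\in\cD'$ one truncates to $(\min(\gamma,\kappa/p^2),\alpha')$ with $\alpha'$ shrunk so as to remain in $\cD'$ and applies dominated convergence (using the bounds of Section \ref{sec:preliminaries}), so no gap survives. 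It therefore suffices to produce $(\tilde\gamma,\tilde\alpha)\in\cD'$ with $\cF^{\rm can}_{\delta^{\rm min}}(\tilde\gamma,\tilde\alpha,\rho_0^{\rm min})\le\lim_\kappa\widehat{f}_{\kappa}$.

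\emph{The cut-off disappears, and compactness off the origin.} I would first show $p_\kappa\to0$ for $p_\kappa$ as in \eqref{def:pkappa}: \eqref{eq:largepboundgamma} gives $\gamma_\kappa(p)\le C|p|^{-4}<\kappa/p^2$ for $|p|>P_0$ and $\kappa$ large, so $p_\kappa\le P_0$, and then inserting $\gamma_\kappa(p_\kappa)=\kappa/p_\kappa^2$ into the small-$|p|$ bound \eqref{eq:smallpboundgamma} just above $p_\kappa$, together with $\|\gamma_\kappa\|_1\le C$ (Remark \ref{rem:uniformboundenessgammakappa}), yields $\kappa/p_\kappa^2\le C(p_\kappa)$ with $C(p_\kappa)$ bounded whenever $p_\kappa$ stays away from $0$, forcing $p_\kappa\to0$. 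Next, once $p_\kappa<\varepsilon$ the Euler--Lagrange equations \eqref{ELgamma1} hold on $\{|p|\ge\varepsilon\}$; on $\{\varepsilon\le|p|\le1/\varepsilon\}$ the functions $A_\kappa,B_\kappa$ of \eqref{def:A(p)-B(p)} and their gradients are bounded (by \eqref{convVgamma}, \eqref{convValpha} and the hypotheses on $\widehat{V}$), the a priori bounds of Lemmas \ref{lem:boundgammalargep} and \ref{lem:boundgammasmallp} keep $\gamma_\kappa$ bounded there so that $G_\kappa=\sqrt{A_\kappa^2-B_\kappa^2}$ is bounded above and below away from $0$, and the explicit formulas \eqref{gammaexp} and $\alpha_\kappa=-(\gamma_\kappa+\tfrac12)B_\kappa/A_\kappa$ (the latter by the remark after Lemma \ref{lem:apriorialpha}) then show $(\gamma_\kappa,\alpha_\kappa)$ is bounded in $C^1$ uniformly in $\kappa$. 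Arzel\`a--Ascoli and a diagonal argument over $\varepsilon=1/j$ give a subsequence $\kappa_n\to\infty$ with $(\gamma_{\kappa_n},\alpha_{\kappa_n})\to(\tilde\gamma,\tilde\alpha)$ locally uniformly on $\R^3\setminus\{0\}$; the uniform Gaussian large-$|p|$ decay from the proof of Lemma \ref{lem:boundgammalargep} controls the tails, so $(\tilde\gamma,\tilde\alpha)\in\cD'$, $\int p^2\gamma_{\kappa_n}\to\int p^2\tilde\gamma$, and (after a further subsequence) $\int\gamma_{\kappa_n}\to\lambda_\infty$, the only possible defect being a mass $m\ge0$ escaping to the origin so that $\int\tilde\gamma=\lambda_\infty-m$. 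Moreover $\widehat{f}_{\kappa_n}\ge f(\int\gamma_{\kappa_n},\rho_0^{\rm min})-\delta^{\rm min}\int\gamma_{\kappa_n}\ge\widehat{f}(\delta^{\rm min},\rho_0^{\rm min})$ together with strict convexity of $\lambda\mapsto f(\lambda,\rho_0^{\rm min})-\delta^{\rm min}\lambda$ (unique minimum $\lambda^{\rm min}$) forces $\lambda_\infty=\lambda^{\rm min}$, so the limiting total density is exactly $\rho$.

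\emph{No concentration at the origin --- the main obstacle.} It remains to show $m=0$, and I expect this to be where the real work lies. Suppose $m>0$ and write $\gamma_{\kappa_n}=\gamma'_n+\sigma_n$, $\alpha_{\kappa_n}=\alpha'_n+\tau_n$ with $\sigma_n,\tau_n$ supported in a shrinking ball $B_{\varepsilon_n}$, $\int\sigma_n\to m$ and $\gamma'_n\to\tilde\gamma$, $\alpha'_n\to\tilde\alpha$ in $L^1$. The kinetic and $-\delta^{\rm min}\!\int\gamma$ parts of $\sigma_n$ are controlled and the entropy of $(\sigma_n,\tau_n)$ vanishes, since Lemma \ref{lem:apriorialpha} pins $\beta$ near $\tfrac12$ (hence $s$ near $0$) where $\gamma_{\kappa_n}$ is large. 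Comparing $(\gamma_{\kappa_n},\alpha_{\kappa_n},\rho_0^{\rm min})$ with the equal-density trial state $(\gamma'_n,\alpha'_n,\rho_0^{\rm min}+\!\int\sigma_n)$: the positive-definite interaction terms cost the concentrated state a strictly positive amount (as $\sigma_n\rightharpoonup m\delta_0$, $\iint\widehat{V}(p-q)\sigma_n(p)\sigma_n(q)\to\widehat{V}(0)m^2>0$), whereas absorbing the mass into $\rho_0$ lets the condensate pair through $\alpha$. Estimating the net change one should get $\cF^{\rm can}_{\delta^{\rm min}}(\tilde\gamma,\tilde\alpha,\rho_0^{\rm min}+m)\le\lim_n\widehat{f}_{\kappa_n}=\widehat{f}(\delta^{\rm min},\rho_0^{\rm min})$; on the other hand, since $\int\tilde\gamma=\rho-(\rho_0^{\rm min}+m)$, \eqref{def:auxminimization}--\eqref{reduction} give $\cF^{\rm can}_{\delta^{\rm min}}(\tilde\gamma,\tilde\alpha,\rho_0^{\rm min}+m)\ge f(\rho-(\rho_0^{\rm min}+m),\rho_0^{\rm min}+m)-\delta^{\rm min}\!\int\tilde\gamma\ge F^{\rm can}(T,\rho)-\delta^{\rm min}(\lambda^{\rm min}-m)=\widehat{f}(\delta^{\rm min},\rho_0^{\rm min})+\delta^{\rm min}m$, so that $\delta^{\rm min}m\le0$ and hence $m=0$ once one knows $\delta^{\rm min}>0$ in the relevant regime (in the non-condensed regime $\rho_0^{\rm min}=0$ one instead argues directly from the Euler--Lagrange equation that $\gamma_{\kappa_n}$ stays bounded near $p=0$, so that $m=0$ for free). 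The delicate point is exactly this energy bookkeeping for the escaping mass: because the minimizer ``wants'' $\alpha<0$ near the origin, the $\alpha$-contributions do not simply mimic an increase of $\rho_0$, and one must ensure they do not over-compensate.

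\emph{Conclusion.} With $m=0$ one has $\gamma_{\kappa_n}\to\tilde\gamma$ and $\alpha_{\kappa_n}\to\tilde\alpha$ in $L^1$ (Scheff\'e), so the interaction terms converge, Fatou gives $\int p^2\tilde\gamma\le\liminf_n\int p^2\gamma_{\kappa_n}$, and (using $s\ge0$, cf.\ \eqref{noninteractingbound}) $-TS$ is lower semicontinuous; hence $\cF^{\rm can}_{\delta^{\rm min}}(\tilde\gamma,\tilde\alpha,\rho_0^{\rm min})\le\liminf_n\widehat{f}_{\kappa_n}=\widehat{f}(\delta^{\rm min},\rho_0^{\rm min})$. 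The reverse inequality being trivial, $(\tilde\gamma,\tilde\alpha)$ minimizes \eqref{def:dualauxproblem}, and by strict convexity of $f$ in $\lambda$ it minimizes \eqref{def:auxminimization} at $\lambda^{\rm min}$, $\rho_0^{\rm min}$, which is the assertion of Proposition \ref{thm:existenceunconstrainedminimizers} (and hence of Theorem \ref{thm:existencecanonicalepositiveT}).
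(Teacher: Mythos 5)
Your overall architecture matches the paper's: minimize the restricted problem $\widehat{f}_\kappa$, show $p_\kappa\to0$, pass to a limit off the origin, and rule out concentration of mass at $p=0$ by comparing with a state where the escaping mass has been moved to $\rho_0$. Your compactness step does differ in flavour (Arzel\`a--Ascoli via uniform $C^1$ control of $A_\kappa,B_\kappa,G_\kappa$ from the Euler--Lagrange expression \eqref{gammaexp}, rather than the paper's weighted-$L^2$ weak convergence followed by Mazur's lemma), and that is a perfectly reasonable alternative. However, there is a genuine gap in your ``no concentration'' step.

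Your bookkeeping keeps $\delta=\delta^{\rm min}$ fixed throughout. You lower-bound $\cF^{\rm can}_{\delta^{\rm min}}(\tilde\gamma,\tilde\alpha,\rho_0^{\rm min}+m)$ by $f(\lambda^{\rm min}-m,\rho_0^{\rm min}+m)-\delta^{\rm min}(\lambda^{\rm min}-m)$, then discard information by bounding $f(\lambda^{\rm min}-m,\rho_0^{\rm min}+m)\ge F^{\rm can}(T,\rho)$, arriving only at $\delta^{\rm min}m\le0$. This closes the argument only if $\delta^{\rm min}>0$, and you neither prove this nor give a workable substitute: the sign of $\delta^{\rm min}=\partial_\lambda f(\lambda^{\rm min},\rho_0^{\rm min})$ is not obvious, and your ``$\rho_0^{\rm min}=0$ implies $\gamma_\kappa$ bounded near $0$'' fallback is merely asserted. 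In fact two ingredients are needed to obtain an unconditional contradiction, and you identify both in passing but fail to carry them through. First, the strictly positive energy cost of concentration, $\iint_{|p|,|q|\le\epsilon}\widehat{V}(p-q)\gamma_\kappa\gamma_\kappa\to\tfrac12 d^2\widehat V(0)$, must be retained as an explicit positive surplus in the inequality (the paper's \eqref{conc123}), not absorbed into a general ``$\le$''. Second, after lower-bounding the shifted state by $\widehat f(\delta,\rho_0^{\rm min}+d)$ one must take the supremum over $\delta\in\R$ to invoke Legendre biconjugation; the supporting slope for $f(\cdot,\rho_0^{\rm min}+d)$ at $\lambda^{\rm min}-d$ is in general \emph{not} $\delta^{\rm min}$. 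Doing both yields the paper's \eqref{conc193},
\[
f(\lambda^{\rm min},\rho_0^{\rm min})\ge f(\lambda^{\rm min}-d,\rho_0^{\rm min}+d)+\tfrac12 d^2\widehat V(0),
\]
which contradicts the definition of $\rho_0^{\rm min}$ (from Proposition \ref{lem:auxfunctionalconvexity}) whenever $d>0$, with no sign hypothesis on $\delta^{\rm min}$. As written, your proof does not reach this conclusion.
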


As we have proved in Proposition \ref{prop:existencerestricteddualminimizer}, for given $\delta$, $\rho_0$ and $\kappa$ we can find $(\gamma_\kappa,\alpha_\kappa)$ that minimize the restricted problem \eqref{def:restricteddual}.
We would like to combine the bounds in Lemmas \ref{lem:boundgammalargep} and \ref{lem:boundgammasmallp} to extract a minimizer for \eqref{def:dualauxproblem} from the sequence $(\gamma_\kappa,\alpha_\kappa)$. To do this, we first need to prove that we can actually reach the whole of $|p|>0$ using the regions $|p|>p_\kappa$.
\begin{lemma}
There exists a subsequence of $(\gamma_\kappa,\alpha_\kappa)$ such that $p_\kappa\rightarrow0$ as $\kappa\rightarrow\infty$.
\end{lemma}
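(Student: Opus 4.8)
The plan is to argue by contradiction: if $p_\kappa$ does not converge to $0$ then, along a subsequence, $\gamma_\kappa$ is forced to take the very large value $\kappa/p_\kappa^2$ at a momentum $|p|=p_\kappa$ that stays bounded away from the origin, and this is incompatible with the $\kappa$-independent a priori bound of Lemma~\ref{lem:boundgammasmallp}.

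First I would localise the active set $\{p:\gamma_\kappa(p)=\kappa/|p|^2\}$. By Remark~\ref{rem:uniformboundenessgammakappa} there is an $M$ with $\|\gamma_\kappa\|_1\le M$ for all $\kappa$, and Lemma~\ref{lem:coercivity} together with $\widehat{f}_\kappa(\delta,\rho_0)\le\widehat{f}_1(\delta,\rho_0)$ makes $\int p^2\gamma_\kappa$ bounded uniformly in $\kappa$ as well; together with the $L^s$-bounds on $(\gamma_\kappa,\alpha_\kappa)$ from Proposition~\ref{prop:existencerestricteddualminimizer} and $\|\nabla\widehat{V}\|_2,\|\nabla\widehat{V}\|_\infty<\infty$, this lets us choose the constant $P_0$ and the function $h$ appearing below independently of $\kappa$. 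If $|p|>P_0$ and $\gamma_\kappa(p)=\kappa/|p|^2$, Lemma~\ref{lem:boundgammalargep} forces $\kappa/|p|^2\le C|p|^{-4}$, i.e.\ $|p|\le\sqrt{C/\kappa}$, so for $\kappa>C/P_0^2$ no active momentum satisfies $|p|>P_0$ and hence $0\le p_\kappa\le P_0$ (with the convention $p_\kappa=-\infty$ if the active set is empty). If infinitely many $\kappa$ have $p_\kappa\in\{-\infty,0\}$ the claim holds along that subsequence, so from now on assume $p_\kappa\in(0,P_0]$ for all large $\kappa$.

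Next I would use that $\gamma_\kappa$ admits a continuous representative. From the Euler--Lagrange conditions \eqref{eq:ELrestrictegamma}, together with strict convexity of $\cF^{\rm{can}}_\delta$ in $\gamma$ and $\gamma_\kappa>0$ a.e.\ (Lemma~\ref{lem:gammabigger0}), one gets that for a.e.\ $p$ the value $\gamma_\kappa(p)$ equals $\min\{\kappa/|p|^2,\,\Gamma(p)\}$, where $\Gamma(p)$ is the solution of the Euler--Lagrange equations, i.e.\ the expression \eqref{gammaexp} built from the quantities $A,B$ of \eqref{def:A(p)-B(p)}. Since $\widehat{V}\in C^1$ and $\widehat{V}\ast\gamma_\kappa$, $\widehat{V}\ast\alpha_\kappa$ are Lipschitz (using $\nabla\widehat{V}\in L^2\cap L^\infty$ and the $L^1\cap L^s$-bounds on $\gamma_\kappa,\alpha_\kappa$), the functions $A,B$ and hence $\Gamma$ are continuous, so $\gamma_\kappa$ coincides a.e.\ with a continuous function. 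Consequently the active set is closed, the supremum $p_\kappa$ is attained, and $\gamma_\kappa(p)=\kappa/p_\kappa^2$ whenever $|p|=p_\kappa$ (recall $\gamma_\kappa$ is radial).

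Finally I would conclude. Lemma~\ref{lem:boundgammasmallp} gives $\gamma_\kappa(p)\le h(|p|)$ for $p_\kappa<|p|\le P_0$, where $h(r):=r^{-1}\eta(r)^{-1}e^{2\pi^2 M\eta(r)/r^2}$, with $\eta$ as in \eqref{boundftilde}, is continuous on $(0,\infty)$ and independent of $\kappa$. Combining this with $\gamma_\kappa(p)=\kappa/p_\kappa^2$ at $|p|=p_\kappa$ and letting $|p|$ decrease to $p_\kappa$ from above (using continuity of $\gamma_\kappa$ and of $h$) gives $\kappa/p_\kappa^2\le h(p_\kappa)$, that is $\kappa\le p_\kappa^2\,h(p_\kappa)$. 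For any fixed $\eps>0$ the function $h$ is bounded on the compact interval $[\eps,P_0]$ by some $H_\eps$, so the assumption $p_\kappa\ge\eps$ would force $\kappa\le P_0^2H_\eps$; hence $p_\kappa<\eps$ as soon as $\kappa>P_0^2H_\eps$, and since $\eps>0$ is arbitrary we get $p_\kappa\to0$ as $\kappa\to\infty$, which in particular yields the desired subsequence. The one genuinely delicate point is the continuity of $\gamma_\kappa$, equivalently that $p_\kappa$ is attained with $\gamma_\kappa=\kappa/p_\kappa^2$ there; without it one must instead evaluate the bound of Lemma~\ref{lem:boundgammasmallp} along a sequence of radii $r_n\downarrow p_\kappa$ and separately control $\gamma_\kappa$ from below up to $p_\kappa$ via the derivative estimate \eqref{boundftilde} for $\widetilde{g}=\gamma_\kappa^{-1}$. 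The remaining steps are routine bookkeeping with the uniform $L^1$ bound and the a priori bounds of Subsections~\ref{ap1}--\ref{apgr}.
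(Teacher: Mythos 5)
Your argument is correct and follows essentially the paper's strategy: both proofs show, via the Euler--Lagrange structure \eqref{gammaexp} and the continuity of $A,B$ from \eqref{def:A(p)-B(p)}, that $\gamma_\kappa$ reaches the ceiling $\kappa/p_\kappa^2$ as $|p|\searrow p_\kappa$, and then contradict the $\kappa$-uniform small-$p$ bound of Lemma~\ref{lem:boundgammasmallp} under the assumption that $p_\kappa$ stays bounded away from zero. The only difference is presentational: you package the EL characterisation as a global continuous representative $\gamma_\kappa=\min\{\kappa/p^2,\Gamma\}$ (with $\Gamma$ extended by $+\infty$ where undefined), whereas the paper argues locally near $p_\kappa$ to prove only the one-sided limit \eqref{gammaatpk}; you rightly flag the continuity/attainment step as the delicate point, which is exactly what the paper's argument around $\tilde\gamma$ takes care of.
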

\begin{proof}
First note that the assumption that $\liminf_{\kappa\rightarrow\infty}p_\kappa=c>0$ together with Lemma \ref{lem:boundgammasmallp} and the uniform bound on $\|\gamma_\kappa\|_1$ will lead to a contradiction if we can show that
\begin{equation}
\lim_{|p|\searrow p_\kappa}\gamma_k(p)=\frac{\kappa}{p_\kappa^2}.
\label{gammaatpk}
\end{equation}
Indeed, in this situation the left-hand side of
\eqref{eq:smallpboundgamma} tends to infinity, whereas the right-hand
side is bounded yielding a contradiction.  We conclude that
$\liminf_{\kappa\rightarrow\infty}p_\kappa\leq0$ and hence we can
extract a subsequence that has $p_\kappa\rightarrow0$.

To prove \eqref{gammaatpk}, we first claim there exist $\tilde{\gamma},\tilde{\alpha}$ and associated $\tilde{\beta}$ such that  
\begin{eqnarray}
\begin{aligned}
\label{wanted11}
p^2-\delta+\rho\widehat{V}(0)+\rho_0 \widehat{V}(p) + \widehat{V}\ast \gamma_\kappa(p)-T\frac{\tilde{\gamma}+\frac{1}{2}}{\tilde{\beta}}\ln{\frac{\tilde{\beta}+\frac{1}{2}}{\tilde{\beta}-\frac{1}{2}}}=0\\
\rho_0 \widehat{V}(p) + \widehat{V}\ast \alpha_\kappa(p)+T\frac{\tilde{\alpha}}{\tilde{\beta}}\ln{\frac{\tilde{\beta}+\frac{1}{2}}{\tilde{\beta}-\frac{1}{2}}}=0 
\end{aligned}
\end{eqnarray}
is satisfied for $|p|>p_\kappa-\epsilon$ for some $\epsilon>0$. Of course we know that $(\gamma_\kappa,\alpha_\kappa)$ fulfils this equation for $|p|>p_\kappa$, but we can do a little better. To see that such $\tilde{\gamma}$ and $\tilde{\alpha}$ exist, consider the explicit expression \eqref{gammaexp} for $\tilde{\gamma}$ in terms of $G$ and $A$ which follows from \eqref{wanted11} as before (and only depends on $\gamma_\kappa$ and $\alpha_\kappa$). In particular, we know that as long as $|p|\geq p_\kappa$ the denominator in \eqref{gammaexp} does not go to zero since the $\gamma_\kappa\leq \kappa/p^2$ for $|p|> p_\kappa$. Since $G$ and $A$ are continuous everywhere, we can infer that there has to be a small region $|p|>p_\kappa-\epsilon$ where there exist continuous $\tilde{\gamma}$ and $\tilde{\alpha}$ that satisfy \eqref{wanted11} (which have to coincide with $\gamma_\kappa$ for $|p|> p_\kappa$, but may not do so otherwise). 

Now suppose that $\tilde\gamma(p)<\kappa/p_\kappa^2$ for $|p|=p_\kappa$. 
By continuity and the argument above, we then also
have that $\tilde{\gamma}(p)<\kappa/p^2$ on (a possibly smaller
region) $p_\kappa-\epsilon<|p|\leq p_\kappa$. By the definition of $p_\kappa$ we must then have that 
$\gamma_\kappa(p)>\tilde\gamma(p)$ on a set of positive measure. Since the
entropy derivative is strictly increasing in $\gamma$, we have for such $p$ that
\[
  \frac{\partial\mathcal{F}^{\rm{can}}_{\delta}}{\partial\gamma_\kappa}=p^2-\delta+\rho\widehat{V}(0)+\rho_0 \widehat{V}(p) + \widehat{V}\ast \gamma_\kappa(p)-T\frac{\gamma_\kappa+\frac{1}{2}}{\beta_k}\ln{\frac{\beta_k+\frac{1}{2}}{\beta_k-\frac{1}{2}}}>0.
\]
This contradicts the fact that 
$\gamma_\kappa$ is part of a minimizer (which should always satisfy
\eqref{eq:ELrestrictegamma}). We conclude that \eqref{gammaatpk} is
true.
\qed
\end{proof}

This lemma implies that we can pick a subsequence $p_\kappa$ that is decreasing and tends to zero. This is what we will assume from now on.

We now show that the corresponding $(\gamma_\kappa,\alpha_\kappa)$ form a minimizing sequence of \eqref{def:dualauxproblem}.

\begin{lemma}
Let $(\gamma_k,\alpha_k,\rho_0)$ be minimizers for the restricted problem \eqref{def:restricteddual}. We then have
\begin{equation*}
\lim_{\kappa\rightarrow\infty}\mathcal{F}^{\rm{can}}_{\delta}(\gamma_\kappa,\alpha_\kappa,\rho_0)=\inf_{(\gamma,\alpha)\in\mathcal{D}'}\mathcal{F}^{\rm{can}}_{\delta}(\gamma,\alpha,\rho_0).
\end{equation*}
\label{lem2}
\end{lemma}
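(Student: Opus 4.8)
The goal is to show that the restricted minimizers $(\gamma_\kappa,\alpha_\kappa)$ form a minimizing sequence for the unrestricted problem \eqref{def:dualauxproblem}, i.e.\ that their energies converge to $\widehat{f}(\delta,\rho_0)=\inf_{(\gamma,\alpha)\in\cD'}\cF^{\rm{can}}_\delta(\gamma,\alpha,\rho_0)$. Since $\cM_\kappa\subset\cD'$, we automatically have $\cF^{\rm{can}}_\delta(\gamma_\kappa,\alpha_\kappa,\rho_0)\geq\widehat{f}(\delta,\rho_0)$ for every $\kappa$, and the sequence $\kappa\mapsto\widehat{f}_\kappa(\delta,\rho_0)=\cF^{\rm{can}}_\delta(\gamma_\kappa,\alpha_\kappa,\rho_0)$ is non-increasing in $\kappa$ (larger $\kappa$ means a larger admissible set). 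Hence the limit $L:=\lim_{\kappa\to\infty}\cF^{\rm{can}}_\delta(\gamma_\kappa,\alpha_\kappa,\rho_0)$ exists and satisfies $L\geq\widehat{f}(\delta,\rho_0)$. The whole content of the lemma is therefore the reverse inequality $L\leq\widehat{f}(\delta,\rho_0)$.

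\textbf{Key steps.} First I would fix an arbitrary $\eta>0$ and pick a near-optimal trial state $(\gamma,\alpha)\in\cD'$ with $\cF^{\rm{can}}_\delta(\gamma,\alpha,\rho_0)\leq\widehat{f}(\delta,\rho_0)+\eta$; by Lemma \ref{lem:coercivity} we may assume $\gamma\in L^1((1+p^2)dp)$. The natural candidate approximants in $\cM_\kappa$ are the truncations
\[
\gamma^{(\kappa)}(p):=\min\left\{\gamma(p),\frac{\kappa}{p^2}\right\},\qquad \alpha^{(\kappa)}(p):=\alpha(p)\,\frac{\gamma^{(\kappa)}(p)}{\gamma(p)}
\]
(with $\alpha^{(\kappa)}=0$ where $\gamma=0$), which lie in $\cD'$ because the scaling by $\gamma^{(\kappa)}/\gamma\in[0,1]$ preserves the pointwise constraint $\alpha^2\leq\gamma(\gamma+1)$; indeed $(\alpha^{(\kappa)})^2=\alpha^2(\gamma^{(\kappa)}/\gamma)^2\leq\gamma(\gamma+1)(\gamma^{(\kappa)}/\gamma)^2\leq\gamma^{(\kappa)}(\gamma^{(\kappa)}+1)$ since $t\mapsto t(t+1)/t^2=1+1/t$ is decreasing. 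Next I would show $\cF^{\rm{can}}_\delta(\gamma^{(\kappa)},\alpha^{(\kappa)},\rho_0)\to\cF^{\rm{can}}_\delta(\gamma,\alpha,\rho_0)$ as $\kappa\to\infty$, term by term: pointwise $\gamma^{(\kappa)}\to\gamma$ and $\alpha^{(\kappa)}\to\alpha$ monotonically a.e., while $|\gamma^{(\kappa)}|\leq\gamma$, $|\alpha^{(\kappa)}|\leq|\alpha|\leq\sqrt{\gamma(\gamma+1)}$ give dominating functions in the relevant $L^1$/$L^2$ spaces (using the Preliminaries bounds \eqref{finaleqref?}, \eqref{convVgamma}, \eqref{convValpha} and \eqref{noninteractingbound} for the kinetic, entropy, linear and quadratic terms respectively), so dominated convergence applies to every term of $\cF^{\rm{can}}_\delta$; the entropy term needs the continuity of $(\gamma,\alpha)\mapsto s(\gamma,\alpha)$ together with the uniform lower bound coming from $\frac12 p^2\gamma-Ts(\gamma,0)\geq T\ln(1-e^{-p^2/2T})$. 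Then for $\kappa$ large enough $\cF^{\rm{can}}_\delta(\gamma^{(\kappa)},\alpha^{(\kappa)},\rho_0)\leq\widehat{f}(\delta,\rho_0)+2\eta$, and since $(\gamma^{(\kappa)},\alpha^{(\kappa)})\in\cM_\kappa$ is admissible for the restricted problem we get $\cF^{\rm{can}}_\delta(\gamma_\kappa,\alpha_\kappa,\rho_0)=\widehat{f}_\kappa(\delta,\rho_0)\leq\widehat{f}(\delta,\rho_0)+2\eta$ for all such $\kappa$; letting $\kappa\to\infty$ and then $\eta\to0$ gives $L\leq\widehat{f}(\delta,\rho_0)$, completing the proof.

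\textbf{Main obstacle.} The only delicate point is the convergence of the entropy term $-TS(\gamma^{(\kappa)},\alpha^{(\kappa)})$: the integrand $s(\gamma,\alpha)$ is only locally bounded, it is not of one sign, and near $p=0$ (where $\gamma$ may be large) one must be careful. The way around it is to use the split $-Ts(\gamma,\alpha)=\big(\tfrac12 p^2\gamma-Ts(\gamma,0)\big)-\tfrac12 p^2\gamma+T\big(s(\gamma,0)-s(\gamma,\alpha)\big)$, noting the first bracket is bounded below by an integrable function uniformly (as in \eqref{noninteractingbound}), the middle term is handled by $\gamma^{(\kappa)}\leq\gamma\in L^1(p^2dp)$, and the last term is non-negative and bounded by $s(\gamma,0)$, which is integrable near infinity where $\gamma$ is small and is controlled near $p=0$ by $\gamma\in L^1$ together with the logarithmic growth of $s$; dominated convergence then applies to each piece. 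One should also note that all the constants $C$ appearing in the Preliminaries depend on $\|\gamma_\kappa\|_1$, which is uniformly bounded by Remark \ref{rem:uniformboundenessgammakappa}, so nothing blows up along the sequence. Everything else is routine dominated convergence.
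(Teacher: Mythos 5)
Your proof is correct and follows essentially the same strategy as the paper: the only content is the inequality $\limsup_\kappa \cF^{\rm can}_\delta(\gamma_\kappa,\alpha_\kappa,\rho_0)\leq\widehat f(\delta,\rho_0)$, obtained by approximating an (near-)arbitrary admissible $(\gamma,\alpha)$ by truncations in $\cM_\kappa$, comparing energies via the minimality of $(\gamma_\kappa,\alpha_\kappa)$, and passing to the limit by dominated convergence. The single cosmetic difference is the choice of truncation: the paper uses the hard cutoff $\tilde\gamma_\kappa=\gamma\,\mathbbm{1}(\gamma\leq\kappa/p^2)$, $\tilde\alpha_\kappa=\alpha\,\mathbbm{1}(\gamma\leq\kappa/p^2)$, whereas you cap $\gamma$ via $\min\{\gamma,\kappa/p^2\}$ and scale $\alpha$ proportionally; your verification that the scaled pair stays in $\cD'$ is correct, and both choices give the needed domination. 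Your extra care with the entropy term is sound (and more explicit than the paper, which simply invokes dominated convergence); in fact the simplest split is to apply dominated convergence directly to $\tfrac12 p^2\gamma^{(\kappa)}-Ts(\gamma^{(\kappa)},\alpha^{(\kappa)})$, which is squeezed between $T\ln(1-e^{-p^2/2T})$ and $\tfrac12 p^2\gamma$, with the leftover $\tfrac12 p^2\gamma^{(\kappa)}$ dominated by $\tfrac12 p^2\gamma$, avoiding the need to argue integrability of $s(\gamma,0)$ separately — but your reasoning for that integrability (via $\gamma\in L^1((1+p^2)dp)$) is also valid.
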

\begin{proof}
  Let $(\gamma,\alpha)$ be a general element in $\mathcal{D}'$. We
  will show that its energy can always be approximated by the energy
  of a sequence of elements in $\mathcal{M}_\kappa$. We simply
  define the functions:
\begin{align*}
\tilde{\gamma}_\kappa&=\gamma\ \mathbbm{1}(\gamma\leq \kappa/p^2)\\
\tilde{\alpha}_\kappa&=\alpha\ \mathbbm{1}(\gamma\leq \kappa/p^2),
\end{align*}
which implies $(\tilde{\gamma}_\kappa,\tilde{\alpha}_\kappa)\in\mathcal{M}_\kappa$.
It follows from Lebesgue's Dominated Convergence Theorem that 
\begin{equation*}
\mathcal{F}^{\rm{can}}_{\delta}(\tilde{\gamma}_\kappa,\tilde{\alpha}_\kappa,\rho_0)\to\mathcal{F}^{\rm{can}}_{\delta}(\gamma,\alpha,\rho_0).
\end{equation*}
Since the $(\gamma_\kappa,\alpha_\kappa)$ are minimizers for the restricted problems,
\[
\mathcal{F}^{\rm{can}}_{\delta}(\gamma_\kappa,\alpha_\kappa,\rho_0)\leq\mathcal{F}^{\rm{can}}_{\delta}(\tilde{\gamma}_\kappa,\tilde{\alpha}_\kappa,\rho_0).
\]
By taking a limit in $\kappa$ followed by an infimum over $\mathcal{D}'$, we obtain 
\[
\limsup_{\kappa\to\infty}\mathcal{F}^{\rm{can}}_{\delta}(\gamma_\kappa,\alpha_\kappa,\rho_0)\leq\inf_{\mathcal{D}'}\mathcal{F}^{\rm{can}}_{\delta}(\gamma,\alpha,\rho_0),
\]
which in combination with the easy observation (use $\mathcal{M}_\kappa\subset\mathcal{D}'$ to get the inequality and then take the $\liminf$)
\[
\inf_{\mathcal{D}'}\mathcal{F}^{\rm{can}}_{\delta}(\gamma,\alpha,\rho_0)\leq\liminf_{\kappa\to\infty}\mathcal{F}^{\rm{can}}_{\delta}(\gamma_\kappa,\alpha_\kappa,\rho_0)
\]
leads to the desired conclusion.
\qed
\end{proof}

We will now construct a candidate minimizer $(\tilde{\gamma},\tilde{\alpha})$ from the restricted minimizers  $(\gamma_\kappa,\alpha_\kappa)$. Now it will be important that we are dealing with $(\delta^{\text{min}},\rho_0^{\text{min}})$ and not just any $(\delta,\rho_0)$.

\begin{proposition}
  Let $(\gamma_\kappa,\alpha_\kappa)$ be a sequence of minimizers for
  $\widehat{f}_\kappa(\delta^{\rm{min}},\rho_0^{\rm{min}})$
  such that $p_\kappa$ decreases to zero. We can extract a new
  minimizing sequence, denoted also $(\gamma_\kappa,\alpha_\kappa)$,
  such that $\gamma_\kappa\to\tilde{\gamma}$ pointwise and in $L^1$,
  and $\alpha_\kappa\to\tilde{\alpha}$ pointwise and
  $(\tilde{\gamma},\tilde{\alpha})\in\mathcal{D}'$.
\label{prop:minimizingsequenceunrestricted}
\end{proposition}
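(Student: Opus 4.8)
The plan is to pass from the restricted minimizers $(\gamma_\kappa,\alpha_\kappa)$, which by hypothesis satisfy $p_\kappa\downarrow 0$, to pointwise limits $(\tilde\gamma,\tilde\alpha)$ by feeding the a priori bounds of Subsections~\ref{ap1}--\ref{apgr} into the Euler--Lagrange equations, to check $(\tilde\gamma,\tilde\alpha)\in\mathcal D'$ by Fatou, and finally to upgrade the pointwise convergence of $\gamma_\kappa$ to $L^1$-convergence. The last step is the only delicate one, and it is there that the particular choice of $\rho_0^{\rm min}$ (and the associated $\delta^{\rm min}$) is used.

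\emph{Compactness of the self-consistent potentials and pointwise limits.} By Remark~\ref{rem:uniformboundenessgammakappa} the $\|\gamma_\kappa\|_1$ are uniformly bounded, so by \eqref{convVgamma}--\eqref{convValpha}, and by the corresponding estimates for $(\nabla\widehat V)\ast\gamma_\kappa$ and $(\nabla\widehat V)\ast\alpha_\kappa$ using $\|\nabla\widehat V\|_\infty,\|\nabla\widehat V\|_2<\infty$, the families $\widehat V\ast\gamma_\kappa$ and $\widehat V\ast\alpha_\kappa$ are uniformly bounded with uniformly bounded gradients. By Arzel\`a--Ascoli and a diagonal argument I pass to a subsequence (not relabelled) along which $\widehat V\ast\gamma_\kappa\to u$, $\widehat V\ast\alpha_\kappa\to v$ locally uniformly and $\int\gamma_\kappa\to m$; by Lemma~\ref{lem2} this is still a minimizing sequence for $\widehat f(\delta^{\rm min},\rho_0^{\rm min})$. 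Fix $p\neq 0$; for $\kappa$ large $|p|>p_\kappa$, so the equations \eqref{ELgamma1} hold with equality and $\gamma_\kappa(p),\alpha_\kappa(p)$ are the explicit expressions \eqref{gammaexp}--\eqref{gammaderiv} in terms of $A_\kappa(p),B_\kappa(p)$ from \eqref{def:A(p)-B(p)}. Since $A_\kappa\to\tilde A$, $B_\kappa\to\tilde B$ pointwise (with $\tilde A,\tilde B$ obtained by replacing $\int\gamma_\kappa,\widehat V\ast\gamma_\kappa,\widehat V\ast\alpha_\kappa$ by $m,u,v$) and since Lemmas~\ref{lem:boundgammalargep} and \ref{lem:boundgammasmallp} give a $\kappa$-independent finite majorant $\gamma_\kappa(p)\le\Phi(p)$ on $\mathbb R^3\setminus\{0\}$, the denominator in \eqref{gammaexp} cannot degenerate, i.e.\ $\tilde G:=(\tilde A^2-\tilde B^2)^{1/2}>0$ there; thus \eqref{gammaexp}--\eqref{gammaderiv} pass to the limit and define finite continuous $\tilde\gamma,\tilde\alpha$ on $\mathbb R^3\setminus\{0\}$ with $\gamma_\kappa\to\tilde\gamma$, $\alpha_\kappa\to\tilde\alpha$ pointwise a.e. The constraints $\tilde\gamma\ge 0$ and $\tilde\alpha^2\le\tilde\gamma(\tilde\gamma+1)$ survive the limit, and Fatou gives $\int(1+p^2)\tilde\gamma\le\liminf_\kappa\int(1+p^2)\gamma_\kappa<\infty$; hence $(\tilde\gamma,\tilde\alpha)\in\mathcal D'$.

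\emph{Upgrading to $L^1$.} Since $\gamma_\kappa\ge 0$ converges pointwise a.e., $L^1$-convergence is equivalent to $\int\gamma_\kappa\to\int\tilde\gamma$. No mass escapes to infinity, because $\gamma_\kappa(p)\le C|p|^{-4}$ for $|p|>P_0$ (Lemma~\ref{lem:boundgammalargep}) is an integrable majorant; on each annulus $\{\epsilon<|p|<R\}$ the majorant $\Phi$ is bounded, so dominated convergence gives $\int_{\epsilon<|p|<R}\gamma_\kappa\to\int_{\epsilon<|p|<R}\tilde\gamma$. It remains to exclude concentration at the origin, i.e.\ to show $m_0:=\lim_{\epsilon\to 0}\limsup_\kappa\int_{|p|<\epsilon}\gamma_\kappa=0$. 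Here I use the freedom to take $\rho_0^{\rm min}$ to be the \emph{largest} minimizer of $\rho_0\mapsto f(\rho-\rho_0,\rho_0)$ on $[0,\rho]$, the set of such minimizers being compact by continuity of $f$ (Proposition~\ref{lem:auxfunctionalconvexity}). Suppose $m_0>0$. Passing to a further subsequence, remove from $(\gamma_\kappa,\alpha_\kappa)$ the mass $m_{0,\kappa}\to m_0$ piling up in a shrinking ball $\{|p|<\epsilon_\kappa\}$ and transfer it to the condensate, forming $(\gamma_\kappa',\alpha_\kappa',\rho_0^{\rm min}+m_{0,\kappa})\in\mathcal D$ of the same total density $\int\gamma_\kappa+\rho_0^{\rm min}\to\rho$; note $\gamma_\kappa'=\gamma_\kappa\mathbf 1_{\{|p|\ge\epsilon_\kappa\}}\to\tilde\gamma$ in $L^1$ (no concentration there), so $\int\gamma_\kappa'\to\lambda^{\rm min}-m_0$, where $\lambda^{\rm min}:=\rho-\rho_0^{\rm min}$ and $\int\gamma_\kappa\to m=\lambda^{\rm min}$ (because $f(\int\gamma_\kappa,\rho_0^{\rm min})-\delta^{\rm min}\int\gamma_\kappa\le\cF^{\rm{can}}_{\delta^{\rm min}}(\gamma_\kappa,\alpha_\kappa,\rho_0^{\rm min})\to\widehat f(\delta^{\rm min},\rho_0^{\rm min})$ and $\delta^{\rm min}$ is the supporting slope at $\lambda^{\rm min}$ of the strictly convex $\lambda\mapsto f(\lambda,\rho_0^{\rm min})$). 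A term-by-term comparison of $\cF^{\rm{can}}_{\delta^{\rm min}}$ on the two states shows that the kinetic energy and entropy carried by the removed bump vanish; that the Hartree term $\tfrac12\widehat V(0)\rho^2$ is unchanged since the total density is preserved; that replacing the bump — which, convolved with $\widehat V\in C^1$, looks like $m_{0,\kappa}\widehat V$ — by the point mass $m_{0,\kappa}\delta_0$ changes the remaining interaction terms by $o(1)$ plus a non-negative contribution (using $\widehat V\ge 0$, $\widehat V(0)>0$, $\tilde\gamma\ge 0$); and that $-\delta^{\rm min}\int\gamma$ is shifted by $-\delta^{\rm min}m_{0,\kappa}$. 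Since $\lim_\kappa\cF^{\rm{can}}_{\delta^{\rm min}}(\gamma_\kappa,\alpha_\kappa,\rho_0^{\rm min})=\widehat f(\delta^{\rm min},\rho_0^{\rm min})=f(\lambda^{\rm min},\rho_0^{\rm min})-\delta^{\rm min}\lambda^{\rm min}=F^{\rm{can}}(T,\rho)-\delta^{\rm min}\lambda^{\rm min}$, this comparison gives $\limsup_\kappa\cF^{\rm{can}}(\gamma_\kappa',\alpha_\kappa',\rho_0^{\rm min}+m_{0,\kappa})\le F^{\rm{can}}(T,\rho)$, while $\liminf_\kappa\cF^{\rm{can}}(\gamma_\kappa',\alpha_\kappa',\rho_0^{\rm min}+m_{0,\kappa})\ge f(\lambda^{\rm min}-m_0,\rho_0^{\rm min}+m_0)\ge F^{\rm{can}}(T,\rho)$ by continuity of $f$ and the very definition of $F^{\rm{can}}$ at density $\rho$. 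Hence $f(\lambda^{\rm min}-m_0,\rho_0^{\rm min}+m_0)=F^{\rm{can}}(T,\rho)$, so $\rho_0^{\rm min}+m_0>\rho_0^{\rm min}$ is also a minimizer of $\rho_0\mapsto f(\rho-\rho_0,\rho_0)$, contradicting maximality. Therefore $m_0=0$ and $\gamma_\kappa\to\tilde\gamma$ in $L^1$.

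\emph{Expected main obstacle.} The hard part is the last paragraph, and within it the bookkeeping of the terms containing $\alpha$: one must use the Euler--Lagrange equation \eqref{eq:ELconstrainedalpha} to show that $|\alpha_\kappa|$ and $\gamma_\kappa$ are comparable near the concentration point and to pin down the sign of the concentrating $\alpha$, so as to verify that the concentrating pair is absorbed into the condensate $\rho_0\delta_0$ consistently and that the $\alpha$-interaction genuinely contributes a non-negative (or at least controllable) amount in the chain of estimates above, rather than an uncontrolled sign.
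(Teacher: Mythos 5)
Your pointwise‐convergence step takes a genuinely different (and arguably simpler) route than the paper's. The paper obtains pointwise limits by introducing a weighted $L^2$-space with measure $g(p)dp/K(p)^2$ (where $K$ combines the cut-off bound $\kappa_0(p)/p^2$ with the large- and small-$p$ a priori bounds), extracting a weak limit, and then using Mazur's lemma to upgrade to a strongly and hence a.e.\ converging minimizing sequence. You instead apply Arzel\`a--Ascoli to the convolutions $\widehat V\ast\gamma_\kappa$, $\widehat V\ast\alpha_\kappa$ and pass to the limit in the closed-form Euler--Lagrange expressions \eqref{gammaexp}--\eqref{gammaderiv}, using Lemmas~\ref{lem:boundgammalargep} and \ref{lem:boundgammasmallp} to prevent degeneration of the denominator. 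This is essentially the technique the paper later uses in Section~\ref{sec:exis_min_proof_T=0} for the $T=0$ problem, and it is a legitimate alternative here. Your Step~2 (Fatou for $(\tilde\gamma,\tilde\alpha)\in\mathcal D'$) matches the paper.

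The genuine gap is in your $L^1$ step, in exactly the place you flag. Your term-by-term comparison asserts that moving the bump of mass $m_{0,\kappa}$ from $\gamma_\kappa$ into the condensate changes the interaction terms by ``$o(1)$ plus a non-negative contribution,'' but for the $\alpha$-terms this is not obviously true. Tracking the pairing term and the linear term together, one is left with a contribution of the form $\bigl(\int_{|p|<\epsilon}\alpha_\kappa-m_{0,\kappa}\bigr)\int_{|q|>\epsilon}\widehat V\alpha_\kappa$ plus small errors, and neither factor has a controlled sign a priori. Your proposed fix (``use the Euler--Lagrange equation to pin down the sign of the concentrating $\alpha$'') is not what resolves this, and is not carried out. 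The paper's resolution is different and simpler: it defines the trial state with $\tilde\alpha_\epsilon=\pm\alpha_{\kappa(\epsilon)}\mathbbm 1(|p|>\epsilon)$, choosing the sign so that $\int\widehat V\tilde\alpha_\epsilon\le 0$ — this is legal because the functional depends on $\alpha$ quadratically except through the single linear term $\rho_0\int\widehat V\alpha$ — and it controls $\bigl|\int_{|p|\le\epsilon}\alpha_\kappa\bigr|\le\int_{|p|\le\epsilon}\gamma_\kappa+C\epsilon^3$ directly from $|\alpha|\le\gamma+\tfrac12$ (eq.~\eqref{eq:alphabound}), not from the Euler--Lagrange equation. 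Without this sign choice, the inequality \eqref{conc123} you would need does not follow.

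A secondary remark: once the $\alpha$-bookkeeping is done correctly, the ``largest minimizer'' device is unnecessary. The inside-inside $\gamma\gamma$ block survives as a strictly positive gain $\tfrac12 d^2\widehat V(0)$, and the paper's Legendre-transform manipulation \eqref{helpf1}--\eqref{conc193} converts the comparison into the \emph{strict} inequality $f(\lambda^{\min},\rho_0^{\min})\ge f(\lambda^{\min}-d,\rho_0^{\min}+d)+\tfrac12 d^2\widehat V(0)$, which already contradicts minimality for any choice of $\rho_0^{\min}$. Your route, which only establishes equality and then appeals to maximality of $\rho_0^{\min}$, throws away this strict gain and thereby has to strengthen the hypothesis on which minimizer was chosen in Proposition~\ref{lem:auxfunctionalconvexity} — a modification that must then be propagated consistently through the rest of Section~\ref{sec:exis_min_proof_T>0}.
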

\begin{proof}
  \textit{Step 1 - pointwise convergence.} Recall definition
  \eqref{def:pkappa}. For a given $p\in \R^3$ define $\kappa_0(p)$ to
  be the smallest $\kappa$ such that $p_\kappa<|p|$. Let
  $h(p):=\kappa_0(p)/p^2$. Let us call $l(p)$ the function that
  defines the a priori large $p$ upper bound on $\gamma(p)$, i.e.\
  $l(p)$ is the RHS of \eqref{eq:largepboundgamma}. Similarly, let
  $s(p)$ be the function that defines the a priori small $p$ upper
  bound on $\gamma(p)$, i.e.\ $s(p)$ is the RHS of
  \eqref{eq:smallpboundgamma}. Note that using the fact that
  $\|\gamma_\kappa\|_1$ is bounded uniformly in $\kappa$, the bound
  $s(p)$ can be modified to be $\kappa$-independent. We call this new
  bound $s(p)$ as well. We then define the function
$$K(p)=\max \{h(p),l(p),s(p)\}.$$
Note that for any $\kappa$ we have
$$\gamma_\kappa(p)\leq K(p).$$
Indeed, given a $\gamma_\kappa$ and $p$ we either have $\kappa\leq \kappa_0(p)$ or $\kappa>\kappa_0(p)$. In the first case we clearly have $\gamma_\kappa (p)\leq \kappa/p^2\leq h(p)$. In the second case, by definition, we have $p_\kappa \leq |p|$ and thus $\gamma_\kappa (p)\leq  \max \{l(p),s(p)\}$.

We use the function $K(p)$ to introduce the weighted $L^2$-space with the measure $d\mu(p)=\frac{g(p)dp}{(K(p))^2}$ where $g(p)$ is a strictly positive $L^1$-function such that the measure is finite ($g$ has to decay sufficiently fast). We then have the uniform bounds
\bq
\begin{aligned}
\|\gamma_\kappa\|^2_{L^{2}(d\mu(p))}=&\int\frac{\gamma^2_\kappa g}{K^2}\leq \int g < C, \\
\|\alpha_\kappa\|^2_{L^{2}(d\mu(p))}=&\int \frac{\alpha^2_\kappa g}{K^2}\leq \int \frac{(\gamma^2_\kappa+\gamma_\kappa)g}{K^2} \leq C+\frac12\int\frac{\gamma^2_\kappa g}{K^2}+\frac12 \int\frac{g}{K^2} <C.
\end{aligned} \nn \eq These bounds allow us to extract a subsequence
$(\gamma_\kappa,\alpha_\kappa)$ that converges weakly in the weighted
$L^2$-space. Next, applying Mazur's Lemma, we can obtain a strongly
converging sequence of convex combinations, which -- by convexity of
the functional (recall Lemma \ref{lem:auxfunctionalconvexity}) -- is
also a minimizing sequence. Picking a further subsequence we can
obtain a pointwise converging subsequence. We denote the limiting
functions by $\tilde{\gamma}$ and $\tilde{\alpha}$. By the pointwise
convergence we have
$\tilde{\alpha}^2\leq\tilde{\gamma}(\tilde{\gamma}+1)$.

\textit{}\\
\textit{Step 2.} Fatou's lemma in combination with pointwise convergence implies that
\begin{equation}
\int (1+p^2)\tilde{\gamma}=\int \liminf_{\kappa\to\infty} (1+p^2)\gamma_\kappa\leq\liminf_{\kappa\to\infty} \int  (1+p^2)\gamma_\kappa.
\label{Fatou9}
\end{equation}
Recall that we have a uniform bound on
$\|\gamma_\kappa\|_{L^1((1+p^2)dp)}$.  This means that the integral on
the left-hand side is bounded and
therefore $\tilde{\gamma}\in L^1((1+p^2)dp)$.\\
\textit{}\\
\textit{Step 3 - $L^1$-convergence.} 
By Lemma \ref{lem2} we know that $(\gamma_\kappa,\alpha_\kappa)$ form a minimizing sequence for $\widehat{f}(\delta^{\rm{min}},\rho_0^{\rm{min}})$ and hence, by strict convexity of $f(\lambda,\rho_0)$ in $\lambda$, $\int \gamma_\kappa\to \lambda^{\text{min}}:=\rho-\rho^{\rm min}_0$.

Recall that the $\gamma_\kappa$ are uniformly bounded by an $L^1$ function on intervals $[\epsilon,\infty)$. This implies
\begin{equation}
\int_{|p|>\epsilon}\gamma_\kappa\xrightarrow{\kappa\to\infty}\int_{|p|>\epsilon}\tilde{\gamma}\xrightarrow{\epsilon\to0}\int\tilde{\gamma}
\label{someobs1}
\end{equation}
where the first convergence follows by an application of the Dominated Convergence Theorem (we have pointwise convergence and a uniform $L^1$-bound by \eqref{eq:largepboundgamma} and  \eqref{eq:smallpboundgamma}), and the second by the Monotone Convergence Theorem. Furthermore, it follows from Fatou's lemma that $\int\tilde{\gamma}\leq \lambda^{\text{min}}$. First assume that $\int\tilde{\gamma}= \lambda^{\text{min}}$.  We use this to see that
\begin{align*}
\int|\tilde{\gamma}-\gamma_\kappa|&\leq\int_{|p|>\epsilon}|\tilde{\gamma}-\gamma_\kappa|+\int_{|p|\leq\epsilon}\tilde{\gamma}+\int_{|p|\leq\epsilon}\gamma_\kappa\\
&\leq\int_{|p|>\epsilon}|\tilde{\gamma}-\gamma_\kappa|+\int_{|p|\leq\epsilon}\tilde{\gamma}+\int\gamma_\kappa-\int_{|p|>\epsilon}\gamma_\kappa.
\end{align*}
We use our observation \eqref{someobs1} for the fourth term and apply
the Dominated Convergence Theorem to the first term to obtain
\[
  \limsup_{\kappa\to\infty}\int|\tilde{\gamma}-\gamma_\kappa|\leq \int_{|p|\leq\epsilon}\tilde{\gamma}
  +\lambda^{\text{min}}-\int_{|p|>\varepsilon}\tilde{\gamma}\xrightarrow{\epsilon\to0}0.
\]
Here, the last terms cancel because of our assumption on $\int\tilde{\gamma}$ and the convergence in $\epsilon$ holds simply because $\tilde{\gamma}\in L^1$. This means that in this case we have proved the proposition. It remains to show that $\int\tilde{\gamma}< \lambda^{\text{min}}$ is impossible. \\
\textit{}\\
\textit{Step 4.} Assume $\int\tilde{\gamma}< \lambda^{\text{min}}$. We
have
\begin{equation*}
\int_{|p|\leq\epsilon}\gamma_\kappa=\int\gamma_\kappa-\int_{|p|>\epsilon}\gamma_\kappa\xrightarrow{\kappa\to\infty}\lambda^{\text{min}}-\int_{|p|>\epsilon}\tilde{\gamma}\xrightarrow{\epsilon\to0}\lambda^{\text{min}}-\int\tilde{\gamma}>0.
\end{equation*}
This quantity is important for our proof, so we give it a name:
\begin{equation*}
d:=\lim_{\epsilon\to 0}\lim_{\kappa\to\infty}\int_{|p|\leq\epsilon}\gamma_\kappa>0.
\end{equation*}

Our goal \eqref{conc123}, and eventually \eqref{conc193}, is to show that we can lower the energy by adding this mass to the $\rho_0$, which contradicts the fact that $\rho^{\rm min}_0$ is the minimizing point obtained in Proposition \ref{lem:auxfunctionalconvexity}. Hence, $\int\tilde{\gamma}< \lambda^{\text{min}}$ cannot occur. 

We start with the following estimate (throwing out some positive terms, using \eqref{gammaplusalpha} and estimating the entropy for small $p$):
\begin{eqnarray}
\begin{aligned}
\mathcal{F}^{\text{can}}(\gamma_\kappa,&\alpha_\kappa,\rho_0^{\text{min}})\geq\int_{|p|>\epsilon} p^2\gamma_\kappa+\rho_0^{\text{min}}\int_{|p|>\epsilon}\widehat{V}(\gamma_\kappa+\alpha_\kappa)-\frac{1}{2}\rho_0^{\text{min}}\int_{|p|\leq\epsilon}\widehat{V}\\
&-T\int_{|p|>\epsilon}s(\gamma_\kappa,\alpha_\kappa)-CT\epsilon^3-CT\|\gamma_\kappa\|_1^{\frac12}\epsilon^{3/2} \\
&+\frac12\widehat{V}(0)\left(\int\gamma_\kappa+\rho^{\rm min}_0\right)^2\\
&+\frac{1}{2}\int_{|p|>\epsilon}\int_{|q|>\epsilon} \widehat{V}(p-q)(\gamma_\kappa(p)\gamma_\kappa(q)+\alpha_\kappa(p)\alpha_\kappa(q))dqdp\\
&+\int_{|p|\leq\epsilon}\int_{|q|>\epsilon} \widehat{V}(q-p)(\gamma_\kappa(p)\gamma_\kappa(q)+\alpha_\kappa(p)\alpha_\kappa(q))dqdp\\
&+\frac{1}{2}\int_{|p|\leq\epsilon}\int_{|q|\leq\epsilon} \widehat{V}(p-q)\gamma_\kappa(p)\gamma_\kappa(q)dqdp.
\label{intermstep11}
\end{aligned}
\end{eqnarray}
Note that we have obtained the term in the fourth line twice since $\widehat{V}$ is radial, which implies $\widehat{V}(p-q)=\widehat{V}(q-p)$.
For the entropy, we have used
\begin{equation}
\int_{|p|<\epsilon}s(\gamma,\alpha)\leq \int_{|p|<\epsilon}s(\gamma,0)=\int_{|p|<\epsilon}(1+\gamma)\ln(1+\gamma)-\gamma\ln\gamma.
\label{someref22}
\end{equation}
In the region where $\gamma\leq1$, the integrand is bounded by $2\ln(2)+1$. In the region where $\gamma>1$, we have 
\begin{equation*}
(1+\gamma)\ln(1+\gamma)-\gamma\ln\gamma=\ln\gamma+(1+\gamma)\ln(1+\gamma^{-1})\leq\ln\gamma+1+\gamma^{-1}\leq\sqrt{\gamma}+2.
\end{equation*}
Together with \eqref{someref22}, using Cauchy--Schwarz, this implies that
\begin{equation*}
\int_{|p|<\epsilon}s(\gamma,\alpha)\leq C\epsilon^3+C\|\gamma_\kappa\|_1^{\frac12}\epsilon^{3/2}.
\end{equation*}
Continuing from \eqref{intermstep11}, for $|p|\leq\epsilon$ we estimate 
\begin{equation}
\left|\int_{|q|>\epsilon}\widehat{V}(q)\gamma_\kappa(q)dq-\int_{|q|>\epsilon}\widehat{V}(q-p)\gamma_\kappa(q)dq\right|\leq
\epsilon\|\nabla\widehat{V}\|_\infty\|\gamma_\kappa\|_1,
\label{someest00}
\end{equation}
where we have used our assumptions on the differentiability of $\widehat{V}$. 
To see that a similar estimate holds for $\alpha_\kappa$, we note that  by an argument identical to \eqref{convValpha} we have
\begin{equation*}
\left\|\nabla\int_{|q|>\epsilon}\widehat{V}(q-p)\alpha_\kappa(q)dq\right\|_\infty=\left\|\int_{|q|>\epsilon}\nabla\widehat{V}(q-p)\alpha_\kappa(q)dq\right\|_\infty\leq C
\end{equation*}
where $C$ is a constant that can be chosen independent of $\kappa$. For $|p|\leq\epsilon$ this leads to
\begin{equation*}
\left|\int_{|q|>\epsilon}\widehat{V}(q)\alpha_\kappa(q)-\int_{|q|>\epsilon}\widehat{V}(q-p)\alpha_\kappa(q)\right|
\leq\epsilon C.
\end{equation*}
Finally, for $|q|\leq\epsilon$, 
\begin{equation*}
\left|\int_{|p|\leq\epsilon}\widehat{V}(0)\gamma_\kappa(p)-\int_{|p|\leq\epsilon}\widehat{V}(p-q)\gamma_\kappa(p)\right|
\leq 2\epsilon\|\nabla\widehat{V}\|_\infty\|\gamma_\kappa\|_1.
\end{equation*}
Using the last two estimates together with \eqref{someest00} in \eqref{intermstep11} and estimating the third term of \eqref{intermstep11}, we obtain
\begin{eqnarray*}
\begin{aligned}
\mathcal{F}^{\text{can}}(\gamma_\kappa,&\alpha_\kappa,\rho_0^{\text{min}})\geq\int_{|p|>\epsilon} p^2\gamma_\kappa+\rho_0^{\text{min}}\int_{|p|>\epsilon}\widehat{V}(\gamma_\kappa+\alpha_\kappa)-T\int_{|p|>\epsilon}s(\gamma_\kappa,\alpha_\kappa)\\
&+\frac{1}{2}\int_{|p|>\epsilon}\int_{|q|>\epsilon} \widehat{V}(p-q)(\gamma_\kappa(p)\gamma_\kappa(q)+\alpha_\kappa(p)\alpha_\kappa(q))dqdp\\
&+\frac12\widehat{V}(0)\left(\int\gamma_\kappa+\rho^{\rm min}_0\right)^2\\
&+\left(\int_{|p|\leq\epsilon}\gamma_\kappa(p)dp\right)\left[\int_{|q|>\epsilon} \widehat{V}(q)\gamma_\kappa(q)dq-\epsilon\|\nabla\widehat{V}\|_\infty\|\gamma_\kappa\|_1\right]\\
&-\left|\int_{|p|\leq\epsilon}\alpha_\kappa(p)dp\right|\left[\left|\int_{|q|>\epsilon} \widehat{V}(q)\alpha_\kappa(q)dq\right|+\epsilon C\left(\|\gamma_\kappa\|_1,\nabla\widehat{V}\right)\right]\\
&+\frac{1}{2}\left(\int_{|q|\leq\epsilon}\gamma_\kappa(p)dp\right)\left[\widehat{V}(0)\int_{|p|\leq\epsilon}\gamma_\kappa(q)dq-2\epsilon\|\nabla\widehat{V}\|_\infty\|\gamma_\kappa\|_1\right]\\
&-C\rho_0^{\text{min}}\|\widehat{V}\|_\infty-CT\epsilon^3-CT\|\gamma_\kappa\|_1^{\frac12}\epsilon^{3/2}.
\end{aligned}
\end{eqnarray*}
Since $|\alpha_\kappa|\leq\gamma_\kappa+1/2$, we see that
\begin{equation}\label{eq:alphabound}
\left|\int_{|p|\leq\epsilon}\alpha_\kappa(p)dp\right|\leq\int_{|p|\leq\epsilon}\gamma_\kappa(p)dp+C\epsilon^3.
\end{equation}
and hence all the error terms in this expression tend to zero as
$\epsilon\to0$. 

We now choose $\kappa(\epsilon)$ such that it tends to infinity as $\epsilon\to0$ and such that 
$
\left|\lim_{\kappa\to\infty}\int_{|p|<\epsilon}\gamma_\kappa-\int_{|p|<\epsilon}\gamma_{\kappa(\epsilon)}\right|<\epsilon.
$
Then, in particular, 
\begin{equation}\label{eq:kappaeps}
\lim_{\epsilon\to0}\int_{|p|<\epsilon}\gamma_{\kappa(\epsilon)}-d\to0.
\end{equation}
Combining this with \eqref{eq:alphabound} we find that 
\begin{eqnarray*}
  \lefteqn{\mathcal{F}^{\text{can}}(\gamma_{\kappa(\epsilon)},\alpha_{\kappa(\epsilon)},\rho_0^{\text{min}})}&&\\
  &\geq&\int_{|p|>\epsilon} p^2\gamma_{\kappa(\epsilon)}
  -T\int_{|p|>\epsilon}s(\gamma_{\kappa(\epsilon)},\alpha_{\kappa(\epsilon)})\\
  &&+\rho_0^{\text{min}}\left(\int_{|p|>\epsilon}\widehat{V}\gamma_{\kappa(\epsilon)}
    -\left|\int_{|p|>\epsilon}\widehat{V}\alpha_{\kappa(\epsilon)}\right|\right)\\
  &&+\frac{1}{2}\int_{|p|>\epsilon}\int_{|q|>\epsilon} \widehat{V}(p-q)(\gamma_{\kappa(\epsilon)}(p)\gamma_{\kappa(\epsilon)}(q)+\alpha_{\kappa(\epsilon)}(p)\alpha_{\kappa(\epsilon)}(q))dqdp\\
&&+\frac12\widehat{V}(0)\left(\int_{|p|>\epsilon}\gamma_\kappa(\epsilon)(p)dp+\rho^{\rm min}_0+d\right)^2\\
  &&+d\int_{|p|>\epsilon}\widehat{V}\gamma_{\kappa(\epsilon)}
  -d\left|\int_{|p|>\epsilon}\widehat{V}\alpha_{\kappa(\epsilon)}\right|+\frac{1}{2}d^2\widehat{V}(0)-c(\epsilon),
\end{eqnarray*}
where the error $c(\epsilon)$ tends to zero as $\epsilon\to0$. We now define
\begin{align*}
\tilde{\gamma}_{\epsilon}&=\gamma_{\kappa(\epsilon)}\ \mathbbm{1}(|p|>\epsilon)\\
\tilde{\alpha}_{\epsilon}&=\pm\alpha_{\kappa(\epsilon)}\ \mathbbm{1}(|p|>\epsilon),
\end{align*}
where the sign $\pm$ in the second equation is chosen such that 
\[
\int_{|p|\geq\epsilon}\widehat{V}\tilde\alpha_\epsilon\leq0.
\]
Then 
\begin{equation}
\label{conc123}
  \mathcal{F}^{\text{can}}(\gamma_{\kappa(\epsilon)},\alpha_{\kappa(\epsilon)},\rho_0^{\text{min}})
  \geq\mathcal{F}^{\text{can}}(\tilde{\gamma}_{\epsilon},\tilde{\alpha}_{\epsilon},\rho_0^{\text{min}}+d)
  +\frac{1}{2}d^2\widehat{V}(0)-c(\epsilon).
\end{equation}
We will now show how \eqref{conc123} leads to a contradiction if
$d>0$. We first use that applying the Legendre transform twice on a convex
function yields the original function (recall that
$f(\lambda,\rho_0)$ is strictly convex in $\lambda$). Thus 
\begin{align*}
f(\lambda,\rho_0)
=\sup_{\delta}\left[\inf_{\lambda'}\left[f(\lambda',\rho_0)-\delta\lambda'\right]
+\delta\lambda\right],
\end{align*}
and hence
\begin{equation}
f(\lambda,\rho_0)=\sup_{\delta}\left[\inf_{(\gamma,\alpha)\in\mathcal{D}'}\left[\cF^{\rm can}(\gamma,\alpha,\rho_0)-\delta\int\gamma\right]+\delta\lambda\right].
\label{helpf1}
\end{equation}
Using Lemma \ref{lem2} (recall that $\cF^{\rm can}_\delta(\gamma,\alpha,\rho_0)
=\cF^{\rm can}(\gamma,\alpha,\rho_0)-\delta\int\gamma$) and the fact that $\lim_{\kappa\to\infty}\int\gamma_\kappa=\lambda^{\rm min}$, we note that for any $\delta\in\mathbb{R}$:
\[
\begin{aligned}
f(\lambda^{\text{min}},\rho_0^{\text{min}})&\geq\inf_{(\gamma,\alpha)\in\mathcal{D}'}\left[\cF^{\rm can}(\gamma,\alpha,\rho^{\rm min}_0)-\delta^{\rm min}\int\gamma\right]+\delta^{\rm min}\lambda^{\rm min}\\
&=\lim_{\kappa\to\infty}\left[\mathcal{F}^{\text{can}}(\gamma_\kappa,\alpha_\kappa,\rho_0^{\text{min}})
-\delta\int\gamma_\kappa\right]+\delta\lambda^{\text{min}}.
\end{aligned}
\]
Recalling our conclusion \eqref{conc123} and 
that \eqref{eq:kappaeps} gives 
$\lim_{\epsilon\to0}\int(\gamma_{\kappa(\epsilon)}-\tilde{\gamma}_{\epsilon})=d$, we obtain
\bq
\begin{aligned}
f(\lambda^{\text{min}},\rho_0^{\text{min}})&\geq
  \liminf_{\epsilon\to0}
  \left[\mathcal{F}^{\text{can}}(\tilde{\gamma}_{\epsilon},\tilde{\alpha}_{\epsilon},\rho_0^{\text{min}}+d)
    -\delta\left(\int\tilde{\gamma}_{\epsilon}+d\right)\right]\\
  &\quad+\frac{1}{2}d^2\widehat{V}(0)+\delta\lambda^{\text{min}}\\
  &\geq\inf_{(\gamma,\alpha)\in\mathcal{D}'}\left[\mathcal{F}^{\text{can}}(\gamma,\alpha,\rho_0^{\rm{min}}+d)
    -\delta\int\gamma\right]+\delta(\lambda^{\text{min}}-d)\\
  &\quad+\frac{1}{2}d^2\widehat{V}(0),
\end{aligned} 
\nn 
\eq 
where we have also used that
$(\tilde{\gamma}_{\epsilon},\tilde{\alpha}_{\epsilon})\in\mathcal{D}'$
for all $\epsilon$.  By taking a supremum over $\delta$ on both sides
and using \eqref{helpf1}, we obtain
\begin{equation}
\label{conc193}
f(\lambda^{\text{min}},\rho_0^{\text{min}})\geq f(\lambda^{\text{min}}-d,\rho_0^{\text{min}}+d)+\frac{1}{2}d^2\widehat{V}(0).
\end{equation}
Thus, if $d>0$ we arrive at a contradiction with the fact that
$(\lambda^{\text{min}},\rho_0^{\text{min}})=(\rho-\rho_0^{\text{min}},\rho_0^{\text{min}})$ is the minimum of
$f(\rho-\rho_0,\rho_0)$ since $\widehat{V}(0)>0$ . This means the case
$\int\gamma<\lambda^{\text{min}}$ cannot occur. Since we had already
proved the claims for the other case, this concludes the proof of the
proposition.
\qed
\end{proof}

We need a final lemma to show the existence of a minimizer for the dual problem \eqref{def:dualauxproblem} at the relevant $\delta^{\rm min}$ and $\rho_0^{\rm min}$.
\begin{lemma}
Let $(\gamma_\kappa,\alpha_\kappa)$ and $(\tilde{\gamma},\tilde{\alpha})$ be as above. In particular, we have $\gamma_\kappa\to\tilde{\gamma}$ pointwise and in $L^1$, and $\alpha_\kappa\to\tilde{\alpha}$ pointwise. We then have 
\begin{equation*}
\liminf_{\kappa\to\infty}\mathcal{F}^{\rm{can}}_{\delta^{\rm{min}}}(\gamma_\kappa,\alpha_\kappa,\rho_0^{\rm{min}})\geq\mathcal{F}^{\rm{can}}_{\delta^{\rm{min}}}(\tilde{\gamma},\tilde{\alpha},\rho_0^{\rm{min}}).
\end{equation*}
\label{lem1}
\end{lemma}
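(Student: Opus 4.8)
The plan is to prove the inequality term by term in the decomposition of $\mathcal{F}^{\rm{can}}_{\delta^{\rm{min}}}$, separating the contributions that converge outright from those for which only a $\liminf$-bound is available. From $L^1$-convergence of $\gamma_\kappa$ one gets $\int\gamma_\kappa\to\int\tilde{\gamma}$, and this together with $\|\widehat{V}\|_\infty<\infty$ makes the terms $-\delta^{\rm{min}}\int\gamma_\kappa$, $\tfrac12\widehat{V}(0)(\rho_0^{\rm{min}}+\int\gamma_\kappa)^2$, $\rho_0^{\rm{min}}\int\widehat{V}\gamma_\kappa$ and $\tfrac12\iint\widehat{V}(p-q)\gamma_\kappa(p)\gamma_\kappa(q)\,dp\,dq$ converge to the corresponding quantities at $\tilde{\gamma}$ (for the last one write it as $\int(\widehat{V}\ast\gamma_\kappa)\gamma_\kappa$ and use $\|\widehat{V}\ast(\gamma_\kappa-\tilde{\gamma})\|_\infty\le\|\widehat{V}\|_\infty\|\gamma_\kappa-\tilde{\gamma}\|_1\to0$ together with $\|\widehat{V}\ast\tilde{\gamma}\|_\infty<\infty$). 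For $\rho_0^{\rm{min}}\int\widehat{V}\alpha_\kappa$ I would fix once and for all a threshold $M\ge1$ with $|\{\tilde{\gamma}=M\}|=0$ (possible since an $L^1$-function has only countably many level sets of positive measure) and split $\alpha_\kappa=\alpha_\kappa\mathbbm{1}(\gamma_\kappa>M)+\alpha_\kappa\mathbbm{1}(\gamma_\kappa\le M)=:\alpha_{\kappa,>}+\alpha_{\kappa,<}$. Then $|\alpha_{\kappa,>}|\le\sqrt{2}\gamma_\kappa$ is dominated by the $L^1$-convergent sequence $\sqrt{2}\gamma_\kappa$, while $|\alpha_{\kappa,<}|^2\le(M+1)\gamma_\kappa$ is dominated by the $L^1$-convergent $(M+1)\gamma_\kappa$; a generalized dominated convergence argument (Pratt's lemma, where the dominating functions are allowed to vary provided they converge in $L^1$) then gives $\alpha_{\kappa,>}\to\tilde{\alpha}\mathbbm{1}(\tilde{\gamma}>M)$ in $L^1$ and $\alpha_{\kappa,<}\to\tilde{\alpha}\mathbbm{1}(\tilde{\gamma}\le M)$ in $L^2$, and pairing the two pieces against $\widehat{V}\in L^\infty$ and $\widehat{V}\in L^2$ respectively yields $\int\widehat{V}\alpha_\kappa\to\int\widehat{V}\tilde{\alpha}$.

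For the kinetic and entropy terms I would keep them together and use the pointwise lower bound $p^2\gamma-Ts(\gamma,\alpha)\ge p^2\gamma-Ts(\gamma,0)\ge T\ln(1-e^{-p^2/T})$, which is \eqref{noninteractingbound} at the level of integrands (the last inequality being the elementary minimization of the free-gas integrand), and whose right-hand side is integrable on $\mathbb{R}^3$. Since $s(\cdot,\cdot)$ is continuous on the closed region $\{\alpha^2\le\gamma(\gamma+1)\}$, the pointwise convergence of $(\gamma_\kappa,\alpha_\kappa)$ gives pointwise convergence of the integrand $p^2\gamma_\kappa-Ts(\gamma_\kappa,\alpha_\kappa)$ to $p^2\tilde{\gamma}-Ts(\tilde{\gamma},\tilde{\alpha})$, so Fatou's lemma --- applied to the nonnegative integrand obtained after subtracting the integrable lower bound --- yields $\liminf_\kappa\big[\int p^2\gamma_\kappa-TS(\gamma_\kappa,\alpha_\kappa)\big]\ge\int p^2\tilde{\gamma}-TS(\tilde{\gamma},\tilde{\alpha})$.

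The genuinely delicate term, and the main obstacle, is the pairing energy $\tfrac12\iint\widehat{V}(p-q)\alpha_\kappa(p)\alpha_\kappa(q)\,dp\,dq=\tfrac12\int V(x)|\check{\alpha}_\kappa(x)|^2\,dx$ (using \eqref{alphaxspace}): it is nonnegative, but we have neither an $L^1$ nor an $L^2$ bound on $\alpha_\kappa$ itself, only pointwise convergence, so simply discarding it would lose the nonnegative quantity $\tfrac12\int V|\check{\tilde{\alpha}}|^2$ present in $\mathcal{F}^{\rm{can}}_{\delta^{\rm{min}}}(\tilde{\gamma},\tilde{\alpha},\rho_0^{\rm{min}})$. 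I would pass to the $x$-variable: with the decomposition above, $\alpha_{\kappa,>}\to\tilde{\alpha}\mathbbm{1}(\tilde{\gamma}>M)$ in $L^1(dp)$ gives uniform convergence of $\check{\alpha}_{\kappa,>}$, and $\alpha_{\kappa,<}\to\tilde{\alpha}\mathbbm{1}(\tilde{\gamma}\le M)$ in $L^2(dp)$ gives convergence of $\check{\alpha}_{\kappa,<}$ in $L^2(dx)$; since $\alpha_\kappa=\alpha_{\kappa,>}+\alpha_{\kappa,<}$ and the two limits add up to $\tilde{\alpha}$, passing to a subsequence that realizes $\liminf_\kappa\int V|\check{\alpha}_\kappa|^2$ and along which the $L^2(dx)$-convergence holds pointwise a.e.\ gives $|\check{\alpha}_\kappa(x)|^2\to|\check{\tilde{\alpha}}(x)|^2$ for a.e.\ $x$, so Fatou's lemma with $V\ge0$ yields $\liminf_\kappa\tfrac12\int V|\check{\alpha}_\kappa|^2\ge\tfrac12\int V|\check{\tilde{\alpha}}|^2$. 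Adding the contributions of all the terms --- each finite, since $\mathcal{F}$ is bounded on $\mathcal{D}'$, $\tilde{\gamma}\in L^1((1+p^2)dp)$ and $(\tilde{\gamma},\tilde{\alpha})\in\mathcal{D}'$ --- gives the claimed inequality. The real work is thus confined to the $\alpha$-dependent terms, where the threshold $M$ and the generalized dominated convergence theorem serve precisely to upgrade the mere pointwise convergence of $\alpha_\kappa$ to $L^1$-plus-$L^2$ convergence of its two pieces, which is what both the real-space and Fourier-space arguments require.
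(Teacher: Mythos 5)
Your argument is correct, and it reaches the conclusion by a genuinely different route than the paper. The terms involving only $\gamma_\kappa$ (linear, quadratic, and the density term) and the kinetic-plus-entropy block are treated exactly as in the paper: $L^1$-convergence for the former, Fatou after subtracting the integrable noninteracting lower bound for the latter. The divergence is in the two $\alpha$-dependent terms. The paper splits momentum space into the ball $\{|p|\le\epsilon\}$ and its complement, controlling the small ball by crude $O(\epsilon^3)+\int_{|p|\le\epsilon}\gamma_\kappa$ estimates and the complement via the $\kappa$-uniform a priori bounds of Lemmas~\ref{lem:boundgammalargep} and~\ref{lem:boundgammasmallp}, which supply an integrable dominating function on $\{|p|>\epsilon\}$ and hence $L^2$-convergence of $\alpha_\kappa$ there by dominated convergence. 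You instead split $\alpha_\kappa$ by the level set $\{\gamma_\kappa\lessgtr M\}$ (with $M$ chosen so that $\{\tilde\gamma=M\}$ is null, to preserve pointwise convergence of the indicators) and invoke Pratt's generalized dominated convergence with the varying dominators $\sqrt{2}\,\gamma_\kappa$ and $(M+1)\gamma_\kappa$, which the hypotheses provide in $L^1$; this upgrades the mere pointwise convergence of $\alpha_\kappa$ to $L^1$-plus-$L^2$ convergence of its two pieces, after which the linear term converges outright and the nonnegative quadratic term is handled in real space by Fatou along a suitable subsequence. Your route has the advantage of using only the stated hypotheses (pointwise and $L^1$-convergence of $\gamma_\kappa$, pointwise convergence of $\alpha_\kappa$, and the constraint $\alpha^2\le\gamma(\gamma+1)$), so it would apply to any sequence with these properties; the paper's route is less self-contained in that it leans on the Euler--Lagrange a priori bounds special to minimizers, but it reuses machinery already established for the surrounding argument. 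One small remark: since you apply Fatou to two separate blocks after passing to a subsequence, you should pass to a single common sub-subsequence that realizes the overall $\liminf$ and along which $\check\alpha_{\kappa,<}\to\check{\tilde\alpha}\mathbbm{1}(\tilde\gamma\le M)$ a.e.; the superadditivity of $\liminf$ and the convergence of the remaining terms then assemble the pieces correctly, and all quantities are finite by the boundedness of $\mathcal{F}$ on $\mathcal{D}'$, so no $\infty-\infty$ issues arise.
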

\begin{proof}
We recall that
\begin{equation}
\label{somefunc}
\begin{aligned}
\mathcal{F}^{\rm{can}}_{\delta^{\text{min}}}(\gamma,\alpha,\rho_0^{\text{min}})&=\int p^2\gamma(p)dp-T\int s(\gamma(p),\alpha(p))dp -\delta^{\text{min}}\int\gamma(p)dp \\
&+\frac12\widehat{V}(0)(\int\gamma(p)dp+\rho^{\text{min}}_0)^2+\rho_0^{\text{min}}\int\widehat{V}(p)(\gamma(p)+\alpha(p))dp \\
&+\frac{1}{2}\int\int \widehat{V}(p-q)(\gamma(p)\gamma(q)+\alpha(p)\alpha(q))dpdq.
\end{aligned}
\end{equation}
The third and fourth terms on the right-hand side simply converges because of the $L^1$-convergence of the $\gamma_\kappa$.
The combination of the first two terms is bounded below by an integrable function (as in \eqref{noninteractingbound}) and thus we can use pointwise convergence in combination with Fatou's lemma to conclude
\begin{equation*}
\int p^2\tilde{\gamma}-T\int s(\tilde{\gamma},\tilde{\alpha})
\leq\liminf_{\kappa\to\infty}\left(\int p^2\gamma_\kappa-T\int s(\gamma_\kappa,\alpha_\kappa)\right).
\end{equation*}
To show that the fourth term in \eqref{somefunc} also converges, we use two estimates. The easier one is 
\begin{equation}
\left|\rho_0^{\text{min}}\int\widehat{V}\left(\gamma_\kappa-\tilde{\gamma}\right)\right|\leq \rho_0^{\text{min}}\|\widehat{V}\|_\infty\int\left|\gamma_\kappa-\tilde{\gamma}\right|,
\label{somref121}
\end{equation}
which goes to zero by the  $L^1$-convergence of the $\gamma_\kappa$.
 For the term involving $\tilde{\alpha}$, we write for $\epsilon>0$
\begin{align*}
\int_{|p|\leq\epsilon}|\alpha_\kappa|&=\int_{|p|\leq\epsilon,|\gamma_\kappa|\leq1}|\alpha_\kappa|+\int_{|p|\leq\epsilon,|\gamma_\kappa|>1}|\alpha_\kappa|
&\leq C\epsilon^3+\sqrt{2}\int_{|p|\leq\epsilon}\gamma_\kappa,
\end{align*}
where we have used the usual estimate on $\alpha_\kappa$ in terms of $\gamma_\kappa$. Note that this also holds for $\tilde{\alpha}$ (in terms of $\tilde{\gamma}$). 
For $|p|>\epsilon$ and $\kappa$ large enough we see from 
$|\alpha_\kappa|^2\leq\gamma_\kappa(\gamma_\kappa+1)$ and Lemmas \ref{lem:boundgammalargep} and 
\ref{lem:boundgammasmallp} that the Dominated Convergence Theorem gives
$$
\lim_{\kappa\to\infty}\int_{|p|>\epsilon}|\tilde{\alpha}-\alpha_\kappa|^2=0.
$$
Hence
\begin{eqnarray}
\begin{aligned}
&\int|\widehat{V}||\tilde{\alpha}-\alpha_\kappa|\leq\|\widehat{V}\|_\infty\int_{|p|\leq\epsilon}|\tilde{\alpha}-\alpha_\kappa|+\int_{|p|>\epsilon}|\widehat{V}||\tilde{\alpha}-\alpha_\kappa|\\
&\leq \|\widehat{V}\|_\infty\int_{|p|\leq\epsilon}\left(|\tilde{\alpha}|+|\alpha_\kappa|\right)+\left(\int_{|p|>\epsilon}|\widehat{V}|^2\right)^{\frac12}\left(\int_{|p|>\epsilon}|\tilde{\alpha}-\alpha_\kappa|^2\right)^{\frac12}\\
&\leq C \|\widehat{V}\|_\infty\epsilon^3+\sqrt{2}\|\widehat{V}\|_\infty\int_{|p|\leq\epsilon}\left(\gamma_\kappa+\tilde{\gamma}\right)+C\left(\int_{|p|>\epsilon}|\tilde{\alpha}-\alpha_\kappa|^2\right)^{\frac12}\\
&\xrightarrow{\kappa\to\infty}\ \ C\|\widehat{V}\|_\infty\epsilon^3+2\sqrt{2}\|\widehat{V}\|_\infty\int_{|p|\leq\epsilon}\tilde{\gamma}. \label{alphaL1boundconvergence}
\end{aligned}
\end{eqnarray}
Since this holds for any $\epsilon>0$ and tends to 0 as $\epsilon\to0$, we combine our conclusion with the first estimate to see that the entire third term converges, i.e.\
\begin{equation*}
\rho_0^{\text{min}}\int\widehat{V}(p)\left(\gamma_\kappa(p)+\alpha_\kappa(p)\right)dp\to\rho_0^{\text{min}}\int\widehat{V}(p)\left(\tilde{\gamma}(p)+\tilde{\alpha}(p)\right)dp.
\end{equation*}
Finally, we need to take care of the fifth term in \eqref{somefunc}.
It is enough to bound
\begin{eqnarray}
  \lefteqn{\left|\sqrt{\iint \tilde{\gamma}(p)\widehat{V}(p-q)\tilde{\gamma}(q)dpdq}
      -\sqrt{\iint\gamma_\kappa(p)\widehat{V}(p-q)\gamma_\kappa(q)dpdq}\right|}&&
  \label{somref124}\\&\leq&\sqrt{\iint(\gamma_\kappa(p)-\tilde\gamma(p))\widehat{V}(p-q)(\gamma_\kappa(q)-\tilde\gamma(q))dpdq}
  \leq\|\widehat V\|_\infty^{1/2}\|\gamma_\kappa-\tilde\gamma\|_1\nn,
\end{eqnarray}
where we have used that $V\geq0$.  This implies convergence of the
$\gamma$-part of the fifth term. Since we do not have
$L^1$-convergence for $\alpha_\kappa$, we need to use a different
method. We again need to control
\begin{eqnarray*}
\lefteqn{\iint(\alpha_\kappa-\tilde{\alpha})(p)\widehat{V}(p-q)(\alpha_\kappa-\tilde{\alpha})(q)dpdq}&&\\
&\leq& 2\iint_{|p|,|q|>\epsilon}(\alpha_\kappa-\tilde{\alpha})(p)\widehat{V}(p-q)(\alpha_\kappa-\tilde{\alpha})(q)dpdq
\\&&+2\iint_{|p|,|q|<\epsilon}(\alpha_\kappa-\tilde{\alpha})(p)\widehat{V}(p-q)(\alpha_\kappa-\tilde{\alpha})(q)dpdq.
\end{eqnarray*}
For the first integral we use
\begin{eqnarray}
\begin{aligned}
\int_{|p|,|q|>\epsilon}&(\alpha_\kappa-\tilde{\alpha})(p)\widehat{V}(p-q)(\alpha_\kappa-\tilde{\alpha})(q)dpdq=\\ 
&=\int V(x)|\mathcal{F}^{-1}\left((\alpha_\kappa-\tilde{\alpha})\mathbbm{1}(|p|>\epsilon)\right)|^2 dx\\
&\leq  \|V\|_{\infty}\|\mathcal{F}^{-1}\left(\left(\alpha_\kappa-\tilde{\alpha}\right)\mathbbm{1}(|p|>\epsilon)\right)\|_2^2=\|V\|_{\infty}\int_{|p|>\epsilon}|\alpha_\kappa-\tilde{\alpha}|^2, \label{case1}
\end{aligned}
\end{eqnarray}
where $\mathcal{F}^{-1}$ denotes the inverse Fourier transform. The second integral is bounded by
\begin{eqnarray}
\begin{aligned}
\int\limits_{|p|,|q|\leq\epsilon}(\alpha_\kappa-\tilde{\alpha})&(p)\widehat{V}(p-q)(\alpha_\kappa-\tilde{\alpha})(q)dpdq
\leq \|V\|_1\left(\ \int\limits_{|p|\leq\epsilon}|\alpha_\kappa-\tilde{\alpha}|dp \right)^2 \\ 
& \quad\quad\leq  \|V\|_1\left(C\epsilon^3+C\int_{|p|\leq\epsilon}\left(\gamma_\kappa+\tilde{\gamma}\right)\right)^2,  \label{case2}
\end{aligned}
\end{eqnarray}
where we used the same bound as in the first term of
\eqref{alphaL1boundconvergence}. Taking the limit $\kappa\to\infty$
followed by $\epsilon\to0$ in \eqref{case1}, \eqref{case2} and the
bound above, we see that we have convergence of the $\alpha$-part of
the fifth term. This concludes the proof of the lemma.
\qed
\end{proof}

We are ready to prove the main statement of this subsection, and hence Theorem \ref{thm:existencecanonicalepositiveT}. 

\begin{proof}[Proof of Proposition \ref{thm:existenceunconstrainedminimizers}]
We combine the previous two lemmas to obtain
\begin{eqnarray*}
\begin{aligned}
\inf_{(\gamma,\alpha)\in\mathcal{D}'}\mathcal{F}^{\rm{can}}_{\delta^{\text{min}}}(\gamma,\alpha,\rho_0^{\text{min}})&=\liminf_{\kappa\to\infty}\mathcal{F}^{\rm{can}}_{\delta^{\text{min}}}(\gamma_\kappa,\alpha_\kappa,\rho_0^{\text{min}})\geq\mathcal{F}^{\rm{can}}_{\delta^{\text{min}}}(\tilde{\gamma},\tilde{\alpha},\rho_0^{\text{min}}) \\ 
& \geq\inf_{(\gamma,\alpha)\in\mathcal{D}'}\mathcal{F}^{\rm{can}}_{\delta^{\text{min}}}(\gamma,\alpha,\rho_0^{\text{min}}),
\end{aligned}
\end{eqnarray*}
where the first equality holds by Lemma \ref{lem2} and the first inequality holds by Lemma \ref{lem1}.
We conclude that the $(\tilde{\gamma},\tilde{\alpha})$ constructed in Proposition \ref{prop:minimizingsequenceunrestricted} is a minimizer.
\qed
\end{proof}

\begin{remark}
How should the arguments above be adapted for the grand canonical functional \eqref{def:grandcanfreeenergyfunctional} and Theorem \ref{thm:existencepositiveT}?
By the definitions of $F(T,\mu)$ \eqref{def:grandcanonicalminimization} and $f(\lambda,\rho_0)$ \eqref{def:auxminimization}, we have 
\bq
\begin{aligned}
  F(T,\mu)=
  \inf_{\lambda,\rho_0}\left[f(\lambda,\rho_0)
    -\mu(\lambda+\rho_0)\right]. \label{eq:infauxformulation}
\end{aligned}
\eq
Using \eqref{gammaplusalpha} and \eqref{noninteractingbound}, we see that
\begin{align*}
f(\lambda,\rho_0)\geq -C_1 \rho_0 - C_2+\frac12\widehat{V}(0)(\lambda+\rho_0)^2,
\end{align*}
so that by continuity of $f$, in analogy with Proposition \ref{lem:auxfunctionalconvexity}, the infimum in \eqref{eq:infauxformulation} is attained at some point $(\lambda^{\rm min},\rho^{\rm min}_0)$, where $\lambda^{\rm min}$ and $\rho^{\rm min}_0$ are now independent. All arguments now go through as before to obtain a minimizer $(\gamma,\alpha)$ for $f(\lambda^{\rm min},\rho^{\rm min}_0)-\mu(\lambda^{\rm min}+\rho_0^{\rm min})$. Crucially, \eqref{conc193} still leads to a contradiction since $-\mu(\lambda+\rho_0)$ only depends on the sum of $\lambda$ and $\rho_0$. 
\end{remark}

\begin{remark}
The statement remains true in one and two dimensions.
\end{remark}

\section{Existence of minimizers for $T=0$}\label{sec:exis_min_proof_T=0}

In this section, we prove Theorems \ref{thm:existencezeroT} and \ref{thm:existencecanonicalzeroT}. The proof of the existence of minimizers for $T>0$ relied upon the bounds derived in Section \ref{apgr}. These showed that the minimizers of the restricted problem are uniformly bounded for fixed $T$, which allowed us to extract a limit. However, the bound deteriorates as $T\rightarrow0$ and hence the proof cannot be used for $T=0$. In this section we prove the existence of a minimizer for $T=0$ in a different way. 

\subsection{The grand canonical case}
\label{subsect:gcT=0} 
We first consider the grand canonical functional.
Note that the statement is trivial for $T=0$ and $\mu\leq 0$, since in this case the functional is obtained by taking expectation values of a positive operator. The minimizer is given by the vacuum, i.e.\ $(\gamma,\alpha,\rho_0)=(0,0,0)$. 

The rest of this subsection is dedicated to proving the theorem for $\mu>0$.
By the main result of the previous section, we know that for any $\mu$ and $T>0$ there exists a minimizer of the grand canonical functional \eqref{def:grandcanfreeenergyfunctional}. In this section, we will denote this functional by $\mathcal{F}^T$ to make the $T$-dependence explicit. As the proposition below shows, its minimizers at temperature $T$ actually form a minimizing sequence as $T\to0$ for the $T=0$ case.
\begin{proposition}[$T=0$ minimizing sequence]\label{prop:T=0minseq}
Let $(\gamma^T,\alpha^T,\rho^T_0)$ be a minimizer for $F(T,\mu)$ with $\mu,T>0$. Then
\bq
TS(\gamma^T,\alpha^T)\xrightarrow{T\to 0} 0 \quad \text{and}\quad \mathcal{F}^0(\gamma^T,\alpha^T,\rho_0^T)\xrightarrow{T\to 0} F(0,\mu).\nn
\eq
\end{proposition}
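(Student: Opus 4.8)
The plan is to reduce both assertions to the single statement $TS(\gamma^T,\alpha^T)\to 0$ as $T\to 0$. Since the entropy $S$ is $T$-independent we have $\mathcal{F}^0=\mathcal{F}^T+TS$, and since $\mathcal{F}^T(\gamma^T,\alpha^T,\rho_0^T)=F(T,\mu)$ this gives $\mathcal{F}^0(\gamma^T,\alpha^T,\rho_0^T)=F(T,\mu)+TS(\gamma^T,\alpha^T)$; so once we know $TS(\gamma^T,\alpha^T)\to 0$ and $F(T,\mu)\to F(0,\mu)$, the second convergence follows and the first is exactly what was shown. As preliminaries I would record that on $\mathcal{D}$ one has $0\le S(\gamma,\alpha)<\infty$ — nonnegativity because $\beta\ge 1/2$, $s(1/2)=0$ and $s$ is increasing; finiteness from \eqref{noninteractingbound} together with $\int p^2\gamma<\infty$ on $\mathcal{D}$ — and that $F(\cdot,\mu)$ is finite: $F(T,\mu)\le\mathcal{F}^T(0,0,0)=0$ using the vacuum, and $F(T,\mu)\ge -C$ uniformly for $T\in[0,1]$ by combining \eqref{noninteractingbound}, the pointwise bound \eqref{gammaplusalpha}, $V\ge 0$, $\widehat{V}\ge 0$ and completing the square in $\rho$ in the terms $-\mu\rho+\tfrac12\widehat{V}(0)\rho^2$ (recall $\widehat{V}(0)>0$).

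First I would show that $F(\cdot,\mu)$ is continuous at $0$. Since $S\ge 0$ and $T>0$, $\mathcal{F}^T\le\mathcal{F}^0$ pointwise on $\mathcal{D}$, hence $F(T,\mu)\le F(0,\mu)$ and in particular $\limsup_{T\to 0}F(T,\mu)\le F(0,\mu)$. For the matching lower bound, note that for each fixed state $T\mapsto\mathcal{F}^T(\gamma,\alpha,\rho_0)$ is affine (slope $-S(\gamma,\alpha)$, finite on $\mathcal{D}$), so $F(\cdot,\mu)$ is concave on $[0,\infty)$; being finite, concavity applied at $0$ and a fixed $T_0>0$ gives, for $0<T<T_0$, $F(T,\mu)\ge\tfrac{T_0-T}{T_0}F(0,\mu)+\tfrac{T}{T_0}F(T_0,\mu)$, which tends to $F(0,\mu)$ as $T\to 0$; thus $\liminf_{T\to 0}F(T,\mu)\ge F(0,\mu)$, and $F(T,\mu)\to F(0,\mu)$.

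The crux is then a trial-state comparison: feed the temperature-$T$ minimizer $(\gamma^T,\alpha^T,\rho_0^T)$ (admissible, as $\mathcal{D}$ does not depend on $T$) into the functional at temperature $2T$. This yields $F(2T,\mu)\le\mathcal{F}^{2T}(\gamma^T,\alpha^T,\rho_0^T)=\mathcal{F}^0(\gamma^T,\alpha^T,\rho_0^T)-2TS(\gamma^T,\alpha^T)=F(T,\mu)-TS(\gamma^T,\alpha^T)$, using again $\mathcal{F}^0=\mathcal{F}^T+TS$ and $\mathcal{F}^T(\gamma^T,\alpha^T,\rho_0^T)=F(T,\mu)$. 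Since $S\ge 0$ this gives $0\le TS(\gamma^T,\alpha^T)\le F(T,\mu)-F(2T,\mu)$, and the right-hand side tends to $F(0,\mu)-F(0,\mu)=0$ by the continuity just established. Hence $TS(\gamma^T,\alpha^T)\to 0$ and $\mathcal{F}^0(\gamma^T,\alpha^T,\rho_0^T)=F(T,\mu)+TS(\gamma^T,\alpha^T)\to F(0,\mu)$.

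I expect the only genuinely delicate point to be the soft bookkeeping behind the first paragraph: one must verify that $S$ is finite on all of $\mathcal{D}$ (so that $F(\cdot,\mu)$ is truly an infimum of real-valued affine functions, not one taking value $-\infty$) and that $F(0,\mu)$ is finite (so that concavity upgrades to continuity at the endpoint $T=0$, not merely at interior points). A more computational alternative would bound $S(\gamma^T,\alpha^T)\le S(\gamma^T,0)$ via \eqref{ineq:sga<sg0} and then show $S(\gamma^T,0)$ is bounded uniformly in $T$ using the uniform bound on $\|\gamma^T\|_1$ together with the large-$|p|$ decay $\gamma^T(p)\le C|p|^{-4}$ provided by the argument of Lemma~\ref{lem:boundgammalargep}; but the concavity route above is shorter and self-contained.
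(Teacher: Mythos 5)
Your proof is correct, and it takes a genuinely different route from the paper's. The paper's argument first establishes that the entropy of minimizers is monotone in $T$ — via a four-step chain of inequalities obtained by plugging the $T_2$-minimizer into $\mathcal{F}^{T_1}$ and vice versa — and then concludes $TS(\gamma^T,\alpha^T)\to 0$ because $S(\gamma^T,\alpha^T)$ is decreasing and bounded below by $0$; it obtains the second convergence by taking $\limsup$ of the minimality inequality $\mathcal{F}^0(\gamma^T,\alpha^T,\rho_0^T)-TS(\gamma^T,\alpha^T)\le\mathcal{F}^0(\gamma,\alpha,\rho_0)-TS(\gamma,\alpha)$ and then infimizing in $(\gamma,\alpha,\rho_0)$. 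You instead establish concavity and hence continuity of $T\mapsto F(T,\mu)$ at $T=0$ first, and then use the ``$2T$ trick'' — plugging the $T$-minimizer into $\mathcal{F}^{2T}$ — to squeeze $TS(\gamma^T,\alpha^T)$ between $0$ and $F(T,\mu)-F(2T,\mu)$, with the second convergence then falling out as an identity. The two approaches are dual faces of the same structure: your continuity argument exploits concavity of $F(\cdot,\mu)$ directly, while the paper's monotonicity of entropy is the monotonicity of the (negative) subgradient of this same concave function. Your version has the advantage of isolating the global property of $F$ (concavity and continuity) cleanly, and it dispenses with the need to compare two different minimizers; the paper's version is more elementary in that it never mentions concavity and works only with variational inequalities. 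Your preliminary bookkeeping — finiteness of $S$ on $\mathcal{D}$ via \eqref{noninteractingbound}, $0\le F(T,\mu)-F(0,\mu)$ and the uniform lower bound for $T\in[0,1]$ — is correct and is exactly what is needed to make concavity yield continuity at the endpoint.
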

\begin{proof}
Let $T_1<T_2$. Making use of the minimizers at these temperatures, we obtain
\bq
\begin{aligned}
\mathcal{F}^{T_1}(\gamma^{T_1},\alpha^{T_1},\rho_0^{T_1})&=\mathcal{F}^{0}(\gamma^{T_1},\alpha^{T_1},\rho_0^{T_1})-T_1S(\gamma^{T_1},\alpha^{T_1})\\
&\leq\mathcal{F}^{0}(\gamma^{T_2},\alpha^{T_2},\rho_0^{T_2})-T_1S(\gamma^{T_2},\alpha^{T_2})\\
&=\mathcal{F}^{0}(\gamma^{T_2},\alpha^{T_2},\rho_0^{T_2})-T_2S(\gamma^{T_2},\alpha^{T_2})+(T_2-T_1)S(\gamma^{T_2},\alpha^{T_2})\\
&\leq\mathcal{F}^{0}(\gamma^{T_1},\alpha^{T_1},\rho_0^{T_1})-T_2S(\gamma^{T_1},\alpha^{T_1})+(T_2-T_1)S(\gamma^{T_2},\alpha^{T_2}). \nn
\end{aligned}
\eq
Comparing the first and last line we see that $S(\gamma^{T_1},\alpha^{T_1})\leq S(\gamma^{T_2},\alpha^{T_2})$, and thus the entropy of the minimizers decreases when $T$ does. Since it has to be non-negative, this implies that $TS(\gamma^T,\alpha^T)\to0$ as $T\to0$. 

Now, note that for all $(\gamma,\alpha,\rho_0)\in\mathcal{D}$ one has
\bq
\mathcal{F}^0(\gamma^T,\alpha^T,\rho_0^T)-TS(\gamma^T,\alpha^T)\leq\mathcal{F}^0(\gamma,\alpha,\rho_0)-TS(\gamma,\alpha).\nn
\eq
Taking a $\limsup_{T\to0}$ followed by an infimum over $(\gamma,\alpha,\rho_0)$ and combining this with
\bq
\liminf_{T\to0}\mathcal{F}^0(\gamma^T,\alpha^T,\rho_0^T)\geq F(0,\mu), \nn
\eq
proves the second claim.
\qed
\end{proof}

Now that we know that $\left\{(\gamma^T,\alpha^T,\rho^T_0)\right\}_{\left\{T>0\right\}}$ is a minimizing sequence, we would like to extract a limit out of it. This can in fact be done.

\begin{proposition}
There exists a subsequence of $\left\{(\gamma^T,\alpha^T,\rho^T_0)\right\}_{\left\{T>0\right\}}$ such that $\gamma^T\to\tilde{\gamma}$ pointwise and in $L^1$, $\alpha^T\to\tilde{\alpha}$ pointwise and in $L^2$, and $\rho^T_0\to\tilde{\rho}_0$. Moreover, the limit is an admissible state, i.e.\ $(\tilde{\gamma},\tilde{\alpha},\tilde{\rho}_0)\in\mathcal{D}$.
\label{limit0}
\end{proposition}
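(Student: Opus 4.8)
The plan is to get compactness from a priori bounds that are uniform as $T\to0$, combined with the Euler--Lagrange equations for the positive-temperature minimizers and the Arzel\`a--Ascoli theorem, and then to identify the limit by Fatou-type arguments; the one genuinely delicate point will be ruling out mass escaping to $p=0$. First, a coercivity estimate of the type of Lemma~\ref{lem:coercivity} together with the lower bound $f(\lambda,\rho_0)\geq-C_1\rho_0-C_2+\tfrac12\widehat{V}(0)(\lambda+\rho_0)^2$ recorded in the grand canonical remark at the end of Section~\ref{sec:exis_min_proof_T>0}, applied to $(\gamma^T,\alpha^T,\rho_0^T)$ (which minimizes $\mathcal F^T$ and so has energy $\leq0$, comparing with the vacuum), yields $T$-independent bounds $\|\gamma^T\|_{L^1((1+p^2)dp)}\leq C$, $\rho_0^T\leq C$ and $\rho_\gamma^T=:\lambda^T\leq C$. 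Passing to a subsequence we may assume $\rho_0^T\to\tilde\rho_0\geq0$ and $\lambda^T\to\tilde\lambda$.

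Next I would establish that $\{\gamma^T\}$ and $\{\alpha^T\}$ are uniformly bounded and equicontinuous. Since the $T>0$ minimizer satisfies the Euler--Lagrange equations, re-running the analysis of Subsections~\ref{ap1}--\ref{apgr} for it (with equalities everywhere, as $\gamma^T>0$ a.e.\ by Lemma~\ref{lem:gammabigger0} and there is no artificial cut-off) gives the large-$|p|$ bound $\gamma^T(p)\leq C|p|^{-4}$ for $|p|\geq P_0$ with $C,P_0$ \emph{uniform as $T\to0$}, hence also uniform control of $|\alpha^T|\leq\gamma^T+\tfrac12$ there. On the remaining compact region one uses the explicit representation~\eqref{gammaexp} of $\gamma^T$ and the analogue for $\alpha^T$ in terms of $A,B$: the convolution bounds~\eqref{convVgamma}--\eqref{convValpha} and the uniform $L^1$ bounds make $A^T,B^T$ and their derivatives uniformly bounded, which yields uniform Lipschitz (or H\"older) control of $\gamma^T,\alpha^T$. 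This is where the extra hypothesis $\widehat{V}\in C^3$ with bounded derivatives enters --- to keep the derivative of~\eqref{gammaexp} under control near the isolated points where $\beta^T\to\tfrac12$. The outcome is a family $\{\gamma^T\},\{\alpha^T\}$ that is uniformly bounded and equicontinuous on $\mathbb R^3$, with uniform decay at infinity.

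By Arzel\`a--Ascoli and a diagonal argument over a compact exhaustion of $\mathbb R^3$, a further subsequence satisfies $\gamma^T\to\tilde\gamma$ and $\alpha^T\to\tilde\alpha$ locally uniformly, hence pointwise. Pointwise convergence preserves $\tilde\gamma\geq0$ and $\tilde\alpha^2\leq\tilde\gamma(\tilde\gamma+1)$, and Fatou gives $\tilde\gamma\in L^1((1+p^2)dp)$, so $(\tilde\gamma,\tilde\alpha,\tilde\rho_0)\in\mathcal D$. On $\{|p|\geq\epsilon\}$ the uniform pointwise bounds provide an $L^1$ (resp.\ $L^2$) dominating function, so dominated convergence gives $\gamma^T\to\tilde\gamma$ in $L^1(\{|p|\geq\epsilon\})$ and $\alpha^T\to\tilde\alpha$ in $L^2(\{|p|\geq\epsilon\})$ for every $\epsilon>0$. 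Near the origin, for $T$ small enough so that $\rho_0^T>0$ by Theorem~\ref{thm:phasetrangrandcan}, one has the uniform bound $\gamma^T(p)\leq C|p|^{-1}$ coming from the Bogoliubov dispersion $E(p)\sim|p|$; since $|p|^{-1}$ and $|p|^{-2}$ are locally integrable in three dimensions, this furnishes dominating functions near $p=0$ both for $\gamma^T$ and for $|\alpha^T|^2\leq\gamma^T(\gamma^T+1)$, upgrading the convergences to all of $\mathbb R^3$.

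The step I expect to be the main obstacle --- besides the equicontinuity above --- is excluding mass escaping to $p=0$, which is what one must check if one prefers to avoid the $C|p|^{-1}$ bound and argue by tightness: Fatou only gives $\int\tilde\gamma\leq\tilde\lambda=\lim\int\gamma^T$, and one must rule out a positive defect $d:=\tilde\lambda-\int\tilde\gamma$. This is done exactly as in Step~4 of the proof of Proposition~\ref{prop:minimizingsequenceunrestricted}, now at $T=0$: truncating $(\gamma^T,\alpha^T)$ to $\{|p|>\epsilon\}$ (with a favourable sign for $\alpha^T$) and transferring the mass $d$ into $\rho_0$ produces an admissible trial state for $F(0,\mu)$ whose energy is smaller by $\tfrac12 d^2\widehat{V}(0)-c(\epsilon)$ with $c(\epsilon)\to0$; since $(\gamma^T,\alpha^T,\rho_0^T)$ is a minimizing sequence for $F(0,\mu)$ by Proposition~\ref{prop:T=0minseq}, letting $T\to0$ and then $\epsilon\to0$ forces $0\geq\tfrac12 d^2\widehat{V}(0)$, hence $d=0$; once $\int\gamma^T\to\int\tilde\gamma$, Scheff\'e's lemma then upgrades the pointwise convergence of $\gamma^T$ to $L^1$. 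As at $T>0$, it is the $\mu$-formulation that makes this work, since $-\mu(\rho_\gamma+\rho_0)$ depends only on the sum.
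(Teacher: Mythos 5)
The global shape of your argument (uniform bounds, Arzel\`a--Ascoli, Fatou, control of possible mass defect at $p=0$) is the same as the paper's, and Step~1 (uniform bounds on $\rho_0^T,\rho_\gamma^T,\|\gamma^T\|_{L^1((1+p^2)dp)}$) as well as the citation of Lemma~\ref{largepboundgamma2} for the uniform large-$|p|$ decay $\gamma^T\leq C|p|^{-4}$ are correct. However, the middle of your argument has a genuine gap: you claim that "the explicit representation~\eqref{gammaexp} \dots yields uniform Lipschitz (or H\"older) control of $\gamma^T,\alpha^T$" on the compact region, but this is circular. Writing $\gamma^T=\tfrac12\bigl[\tfrac{TA^T}{TG^T}\coth\bigl(\tfrac{TG^T}{2T}\bigr)-1\bigr]$ with $TG^T=\sqrt{(TA^T)^2-(TB^T)^2}$, the derivative of $\gamma^T$ contains factors of $1/(TG^T)$ (and $\tfrac{d}{dp}(TG^T)$ itself contains $1/(TG^T)$); uniform control therefore requires knowing that $TG^T$ is bounded away from zero uniformly in $T$. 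That lower bound is precisely the hard point, and it is not a corollary of $\widehat{V}\in C^3$; the $C^3$ assumption gives uniform $C^2$ control of the \emph{convolution terms}, hence of $TA^T,TB^T$, but not directly of $\gamma^T$. The same issue undermines the asserted uniform bound $\gamma^T(p)\leq C|p|^{-1}$ near $p=0$: you invoke the "Bogoliubov dispersion $E(p)\sim|p|$" for the limit, but for the $T>0$ family this would again require a $T$-uniform lower bound $TG^T\gtrsim|p|$, which you have not established.

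The paper avoids the circularity by running Arzel\`a--Ascoli on $TA^T,TB^T$ and on their first and second derivatives (this is where $\widehat V\in C^3$ with bounded derivatives is used), producing $C^2$ limits $a,b$, and only then deducing positivity of $g=\sqrt{a^2-b^2}$: for $p\neq0$ a zero of $g$ would force $\tilde\gamma\sim|p-p_0|^{-1}$, non-integrable on a codimension-one set, contradicting $\tilde\gamma\in L^1$; at $p=0$ the positivity comes from the Euler--Lagrange equation in $\rho_0$ together with the trial-state argument showing $\tilde\rho_0>0$ and $\lim\int\widehat V\alpha^T<0$. Only after $g>0$ is secured does one obtain uniform convergence of $\gamma^T$ on the compact region and then $L^1$ and $L^2$ convergence by dominated convergence; equicontinuity of $\gamma^T$ itself is never needed. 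Your tightness alternative in the last paragraph (transfer the defect $d$ into $\rho_0$, gain $\tfrac12 d^2\widehat V(0)$, contradiction with $F(0,\mu)$, then Scheff\'e) is a legitimate way to replace the $g(0)>0$ argument for ruling out a mass defect \emph{at} $p=0$, but it does not replace the argument that $g>0$ away from $p=0$, which you still need for pointwise convergence and local domination of $\gamma^T,\alpha^T$. As written, the proposal assumes the hardest part of the proof.
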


We will first state the proof of Theorem \ref{thm:existencezeroT}. The rest of this section will then be dedicated to proving Proposition \ref{limit0}.
\begin{proof}[Proof of Theorem \ref{thm:existencezeroT}]
As mentioned at the beginning of this section, the functional with $\mu\leq0$ has a minimizer $\gamma=\alpha=\rho_0=0$, so there is nothing left to prove. We consider the case $\mu>0$. By Proposition \ref{limit0}, we can assume that a suitable subsequence of $(\gamma^T,\alpha^T,\rho^T_0)$ has the convergence properties stated. Let us recall what the relevant functional looks like:
\begin{align*}
\mathcal{F}^0(\gamma^T,\alpha^T,\rho^T_{0})= &\int p^{2}\gamma^T(p)dp-\mu\rho^T+\frac{1}{2}\widehat{V}(0)\left(\rho^T\right)^{2}\\ 
&+\rho^T_{0}\int\widehat{V}(p)(\gamma^T(p)+\alpha^T(p))dp\\
&+\frac{1}{2}\int\widehat{V}(p-q)\left[\alpha^T(p)\alpha^T(q)+\gamma^T(p)\gamma^T(q)\right]dpdq.
\end{align*} 
We will show that this converges to something that is bigger than or equal to $\mathcal{F}^0(\tilde{\gamma},\tilde{\alpha},\tilde{\rho}_0)$, much like in Lemma \ref{lem1}.
The first term can be treated by Fatou's lemma and pointwise convergence (see \eqref{Fatou9} for a similar application). The second and third terms simply converge since $\rho_0^T\to\tilde{\rho}_0$ and $\rho^T_{\gamma}\to\rho_{\tilde{\gamma}}$ by $L^1$-convergence. The remaining terms involving $\gamma^T$ converge because of $L^1$-convergence (see estimates \eqref{somref121} and \eqref{somref124}). The quadratic $\alpha^T$-term is taken care of using $L^2$-convergence and the estimate \eqref{case1}, where now the integrals are over all $p$ and $q$. $L^2$-convergence also suffices to show convergence of the term linear in $\alpha^T$:
\begin{equation*}
\left|\int\widehat{V}\tilde{\alpha}-\int\widehat{V}\alpha^T\right|\leq\int\widehat{V}|\tilde{\alpha}-\alpha^T|\leq \left(\int |\widehat{V}|^2\right)^{1/2}\left(\int \left|\tilde{\alpha}-\alpha^T\right|^2\right)^{1/2}.
\end{equation*}
We have thus shown that
\begin{equation*}
\liminf_{T\rightarrow0}\mathcal{F}^0(\gamma^T,\alpha^T,\rho_0^T)\geq\mathcal{F}^0(\tilde{\gamma},\tilde{\alpha},\tilde{\rho}_0).
\end{equation*}
Together with Proposition \ref{prop:T=0minseq}, this leads to
\begin{equation*}
F(0,\mu)=\liminf_{T\rightarrow0}\mathcal{F}^0(\gamma^T,\alpha^T,\rho_0^T)\geq\mathcal{F}^0(\tilde{\gamma},\tilde{\alpha},\tilde{\rho}_0)\geq F(0,\mu),
\end{equation*}
which proves that $(\tilde{\gamma},\tilde{\alpha},\tilde{\rho}_0)$ is indeed a minimizer.
\qed
\end{proof}

It remains to prove Proposition \ref{limit0}. As mentioned before, some bounds in Section \ref{apgr} cannot be obtained uniformly in $T$, so they are useless for this case. However, the equivalent of Lemma \ref{lem:boundgammalargep} (with $\mu$ rather than $\delta$) does hold uniformly.
\begin{lemma}
Let $\mu\in\mathbb{R}$. There exist $C, P_0,T_0>0$ such that for all $|p|>P_0$ and $0<T\leq T_0$, we have
\[
\gamma^T(p)\leq C |p|^{-4}. 
\]
\label{largepboundgamma2}
\end{lemma}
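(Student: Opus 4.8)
The plan is to imitate the proof of Lemma~\ref{lem:boundgammalargep}, replacing $\delta$ by $\mu$ and the restricted Euler--Lagrange inequality by the genuine Euler--Lagrange equations for the grand canonical minimizer $(\gamma^T,\alpha^T,\rho_0^T)$, while checking that every constant that enters can be taken independent of $T$ for $0<T\le T_0$.

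First I would record the uniform a priori bounds. Since $\mathcal{F}^T(0,0,0)=0$ we have $F(T,\mu)\le0$, and together with the coercivity bound $f(\lambda,\rho_0)\ge -C_1\rho_0-C_2+\tfrac12\widehat V(0)(\lambda+\rho_0)^2$ (from \eqref{gammaplusalpha} and \eqref{noninteractingbound}, whose constants stay bounded as $T\to0$) this forces $\rho^T\le C$, hence $\|\gamma^T\|_1\le C$ and $\rho_0^T\le C$, uniformly in $0<T\le T_0$; feeding these into \eqref{convVgamma} and \eqref{convValpha} gives $\|\widehat V\ast\gamma^T\|_\infty\le C$ and $\|\widehat V\ast\alpha^T\|_\infty\le C$, again uniformly. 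Next, for fixed $\rho_0^T$ the pair $(\gamma^T,\alpha^T)$ minimizes \eqref{def:dualauxproblem} with $\delta=\mu$, so Lemmas~\ref{lem:gammabigger0} and~\ref{lem:apriorialpha} give $\gamma^T>0$ a.e.\ and $\alpha^T(p)^2<\gamma^T(p)(\gamma^T(p)+1)$ a.e.; consequently both Euler--Lagrange equations \eqref{ELgamma1} (with $\delta$ replaced by $\mu$) hold with equality a.e.

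With this in hand I would run the computation of Lemma~\ref{lem:boundgammalargep} essentially verbatim. One chooses $P_0$---depending only on $\mu$, $\widehat V$ and the uniform $L^1$ bound---so that the non-entropic part of $\partial\mathcal{F}^T/\partial\gamma$ is at least $\tfrac12 p^2$ for $|p|>P_0$; inserting the $\alpha$-equation yields $(\alpha^T)^2/(\gamma^T+\tfrac12)^2\le Cp^{-4}$, hence $(\gamma^T+\tfrac12)/\beta^T\le(1-Cp^{-4})^{-1/2}$; plugging this back into the $\gamma$-equation gives $\beta^T\le\tfrac12+\big[\exp\big(\tfrac{p^2}{2T}(1-Cp^{-4})^{1/2}\big)-1\big]^{-1}$ and finally
\[
\gamma^T(p)+\tfrac12\le (1-Cp^{-4})^{-1/2}\Big(\tfrac12+\big[\exp\big(\tfrac{p^2}{2T}(1-Cp^{-4})^{1/2}\big)-1\big]^{-1}\Big).
\]
Since $T\le T_0$ the exponent is at least $\tfrac{p^2}{2T_0}(1-Cp^{-4})^{1/2}$, so enlarging $P_0$ (now also depending on $T_0$, but still not on $T$) makes the bracket $\le p^{-4}$; combined with $(1-Cp^{-4})^{-1/2}-1=O(p^{-4})$ this gives $\gamma^T(p)\le C|p|^{-4}$ for all $|p|>P_0$ and all $0<T\le T_0$.

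The only genuinely new point---and the main obstacle---is this last uniformity in $T$: in Lemma~\ref{lem:boundgammalargep} one sends $|p|\to\infty$ for a fixed $T$, whereas here the thermal factor $\big[\exp(p^2/(2T))-1\big]^{-1}$ must be dominated by $p^{-4}$ uniformly over the whole range $0<T\le T_0$. This succeeds precisely because $\exp(p^2/(2T))$ is decreasing in $T$, so the worst case is $T=T_0$, and there exponential growth still beats $p^4$ once $|p|$ is large enough. The remaining ingredients---the uniform $L^1$ and convolution bounds and the a.e.\ strict feasibility of the minimizer---are routine given Section~\ref{sec:preliminaries} and the cited lemmas.
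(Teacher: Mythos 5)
Your proof is correct and takes essentially the same route the paper implies. The paper states Lemma~\ref{largepboundgamma2} without a separate proof, remarking only that Lemma~\ref{lem:boundgammalargep} (whose statement already asserts the $T$-uniformity of $P_0$ and $C$) carries over with $\mu$ in place of $\delta$; you supply exactly the missing details: the uniform $L^1$ and convolution bounds coming from $F(T,\mu)\le0$ plus coercivity, the observation that for fixed $\rho_0^T$ the pair $(\gamma^T,\alpha^T)$ minimizes $\mathcal{F}^{\rm can}_{\mu}$ so Lemmas~\ref{lem:gammabigger0} and \ref{lem:apriorialpha} yield strict interior feasibility and hence Euler--Lagrange equalities, and the monotonicity-in-$T$ argument that lets a single $P_0$ dominate the thermal tail for all $0<T\le T_0$.
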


We also need the following lemma.

\begin{lemma}\label{condensationsmallT}
For every $\mu>0$, there exists a temperature $T_1>0$, such that any minimizer of the grand canonical functional \eqref{def:grandcanfreeenergyfunctional} at temperatures $0\leq T\leq T_1$ and chemical potential $\mu$ has $\rho_0>0$.
\end{lemma}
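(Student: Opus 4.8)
The plan is to show that, for $T$ small, the infimum of the grand canonical functional $\mathcal{F}^T$ over states with $\rho_0=0$ lies strictly above $F(T,\mu)$, so that no minimizer of \eqref{def:grandcanonicalminimization} can have $\rho_0=0$. First I would record the trivial upper bound: inserting the trial state $(0,0,\mu/\widehat V(0))\in\mathcal D$ (legitimate since $\mu>0$ and $\widehat V(0)=\int V>0$) and using $S(0,0)=0$ gives $F(T,\mu)\le -\mu^2/(2\widehat V(0))$ for every $T\ge 0$. Next I would reduce the $\rho_0=0$ states to $\alpha=0$: the cross term $\rho_0\int\widehat V(\gamma+\alpha)$ drops out, and by \eqref{ineq:sga<sg0} together with \eqref{alphaxspace} (so $\iint\widehat V(p-q)\alpha(p)\alpha(q)\,dp\,dq\ge 0$) one gets $\mathcal{F}^T(\gamma,\alpha,0)\ge\mathcal{F}^T(\gamma,0,0)$, hence $\inf_{(\gamma,\alpha,0)\in\mathcal D}\mathcal{F}^T(\gamma,\alpha,0)=\inf_\gamma\mathcal{F}^T(\gamma,0,0)$. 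It then suffices to prove a bound $\mathcal{F}^T(\gamma,0,0)\ge -\mu^2/(2\widehat V(0))+\epsilon_0$, uniform over admissible $\gamma$ and over $0\le T\le T_1$, for some $\epsilon_0>0$, $T_1>0$ depending only on $\mu$ and $V$.

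For that bound I would split $\int p^2\gamma-TS(\gamma,0)=\bigl(\tfrac12\int p^2\gamma-TS(\gamma,0)\bigr)+\tfrac12\int p^2\gamma$, estimate the first bracket from below by $-C_0T^{5/2}$ exactly as in \eqref{noninteractingbound} (with $p^2$ replaced by $\tfrac12p^2$), and keep the second term, obtaining
\[
\mathcal{F}^T(\gamma,0,0)\ge \tfrac12\!\int p^2\gamma-C_0T^{5/2}-\mu\rho_\gamma+\tfrac12\widehat V(0)\rho_\gamma^2+\tfrac12\!\iint\widehat V(p-q)\gamma(p)\gamma(q)\,dp\,dq .
\]
The quadratic $-\mu\rho_\gamma+\tfrac12\widehat V(0)\rho_\gamma^2$ has minimum exactly $-\mu^2/(2\widehat V(0))$, so I must improve its coefficients using the two remaining nonnegative terms. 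Fix $R_0$ with $\int_{|x|\le R_0}V\ge\tfrac12\widehat V(0)$ (possible since $V\in L^1$, $V\ge 0$, $\widehat V(0)>0$). If $\int p^2\gamma\ge\rho_\gamma/(4R_0^2)$, the retained kinetic term contributes a positive multiple of $\rho_\gamma$ and one gets $\mathcal{F}^T(\gamma,0,0)\ge -C_0T^{5/2}-\tfrac{1}{2\widehat V(0)}\bigl[(\mu-(8R_0^2)^{-1})_+\bigr]^2$. If instead $\int p^2\gamma<\rho_\gamma/(4R_0^2)$, I would use that $\iint\widehat V(p-q)\gamma(p)\gamma(q)\,dp\,dq=\int V|\check\gamma|^2$ with $\check\gamma(0)=\rho_\gamma$ and, by Cauchy--Schwarz, $\|\nabla\check\gamma\|_\infty\le\int|p|\gamma\le(\int p^2\gamma)^{1/2}\rho_\gamma^{1/2}$; hence $|\check\gamma|\ge\rho_\gamma/2$ on a ball of radius $\tfrac12(\rho_\gamma/\!\int p^2\gamma)^{1/2}\ge R_0$, and with $V\ge 0$ this forces $\tfrac12\iint\widehat V(p-q)\gamma(p)\gamma(q)\,dp\,dq\ge\tfrac1{16}\widehat V(0)\rho_\gamma^2$, so $\mathcal{F}^T(\gamma,0,0)\ge -C_0T^{5/2}-\tfrac{4\mu^2}{9\widehat V(0)}$. (All constants here are up to normalization factors fixed by the Fourier convention.)

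In both regimes the constant obtained is strictly bigger than $-\mu^2/(2\widehat V(0))$, since $(\mu-(8R_0^2)^{-1})_+<\mu$ and $\tfrac49<\tfrac12$. Writing $m$ for the smaller of the two, $m>-\mu^2/(2\widehat V(0))$; choosing $\epsilon_0=\tfrac12\bigl(m+\mu^2/(2\widehat V(0))\bigr)>0$ and $T_1$ with $C_0T_1^{5/2}\le\epsilon_0$ yields $\mathcal{F}^T(\gamma,0,0)\ge -\mu^2/(2\widehat V(0))+\epsilon_0$ for all $0\le T\le T_1$. Since for $T>0$ a minimizer exists (Theorem \ref{thm:existencepositiveT}), its energy is $F(T,\mu)\le -\mu^2/(2\widehat V(0))<\inf_\gamma\mathcal{F}^T(\gamma,0,0)$, so it cannot have $\rho_0=0$; the comparison is uniform in $T\in[0,T_1]$, so it applies to the $T=0$ minimizer as well once Theorem \ref{thm:existencezeroT} is in place. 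The step I expect to be the main obstacle is the exchange-term lower bound in the second regime — making precise that $\check\gamma$ cannot be simultaneously large at the origin and concentrated away from the support of $V$ without a kinetic-energy price, which is the quantitative reason that condensing mass into $\gamma$ near $p=0$ costs strictly more than placing it into $\rho_0$.
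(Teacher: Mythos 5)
Your proof is correct and takes a genuinely different route to the paper's. Both arguments start the same way: observe that a $\rho_0=0$ minimizer must have $\alpha\equiv 0$ (monotonicity of the entropy plus positivity of the $\alpha$-exchange term) and that the kinetic-entropy combination $\tfrac12\int p^2\gamma-TS(\gamma,0)\ge -CT^{5/2}$, leaving the quadratic $-\mu\rho_\gamma+\tfrac12\widehat V(0)\rho_\gamma^2$ with minimum $-\mu^2/(2\widehat V(0))$ plus the nonnegative terms $\tfrac12\int p^2\gamma+\tfrac12\iint\widehat V\gamma\gamma$. Where you diverge: the paper bounds the last two terms below by $\min\{c_1\rho_\gamma,c_2\rho_\gamma^2\}$ (working in momentum space, splitting $|p|\lessgtr p_0$ with $\widehat V$ bounded away from zero on $|p|\le 2p_0$), deduces from $\mathcal F\le -\mu^2/(2\widehat V(0))$ that $\rho_\gamma=O(T^{5/2})$ or $O(T^{5/4})$, and then finishes by showing that the first-variation $\partial\mathcal F/\partial\rho_0|_{\rho_0=0}=-\mu+\widehat V(0)\int\gamma+\int\widehat V\gamma$ is strictly negative — the minimizer would want to increase $\rho_0$, contradiction. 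You instead close the argument with a direct energy comparison: you produce, via your two-regime dichotomy, a gap $\epsilon_0>0$ such that $\inf_\gamma\mathcal F^T(\gamma,0,0)\ge-\mu^2/(2\widehat V(0))+\epsilon_0$ for $T\le T_1$, which the trial state $(0,0,\mu/\widehat V(0))$ beats. Your lower bound on the exchange term is also obtained differently — a real-space argument bounding $\check\gamma$ from below near the origin via $\|\nabla\check\gamma\|_\infty\le(\int p^2\gamma)^{1/2}\rho_\gamma^{1/2}$, versus the paper's momentum-space splitting — but the two are morally equivalent (both quantify that a $\gamma$ with large mass and small kinetic energy must pay through $\widehat V(0)>0$). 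The derivative endgame is the one the paper actually uses to feed the phase-diagram discussion around \eqref{conditioncritr}; your direct comparison is arguably cleaner as a self-contained proof of the lemma, while the paper's version yields an explicit quantitative handle on how small $\rho_\gamma$ must be in the non-condensed branch, which is reused later. You also correctly flag the non-circularity issue at $T=0$: the lemma at $T>0$ feeds into the $T=0$ existence proof, and only after that is established does the $T=0$ case of the lemma become meaningful.
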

\begin{proof}
Assume that a minimizer has $\rho_0=0$. This implies that its $\gamma$ satisfies
\begin{eqnarray}
\begin{aligned}
\cF^T(\gamma,0,0)\leq \inf_{\rho_0}\cF^T(0,0,\rho_0)=\frac{-\mu^2}{2\widehat{V}(0)}, \label{assumenocondensation}
\end{aligned}
\end{eqnarray}
since adding an $\alpha$ could only raise the energy (due to the monotonicity of the entropy \eqref{ineq:sga<sg0}).
We have 
\begin{align}
\cF^T(\gamma,0,0)&= \frac12 \int p^2\gamma(p)dp+\frac12\int\widehat{V}(p-q)\gamma(p)\gamma(q)dpdq \label{term1} \\ &+ \frac12 \widehat{V}(0)\rho_\gamma^2-\mu\rho_\gamma \label{term2} \\&+ \frac12 \int p^2\gamma(p)dp-TS(\gamma,0). \label{term3}
\end{align}
Clearly,
\begin{eqnarray}
\eqref{term2}\geq \frac{-\mu^2}{2\widehat{V}(0)}, \label{term2bound}
\end{eqnarray}
and \eqref{term3} can be bounded as in \eqref{noninteractingbound}, i.e.
\begin{eqnarray}
\eqref{term3}\geq -CT^{5/2} \label{term3bound},
\end{eqnarray}
where $C$ is a positive constant. Since $\widehat{V}(0)>0$ and $\widehat{V}\in C^1$, we can pick $p_0>0$ small enough such that $\min_{|p|\leq 2p_0}\widehat{V}(p)>0$, and 
\begin{eqnarray*}
\eqref{term1}\geq  \frac12 p_0^2\int_{|p|>p_0}\gamma(p)dp+\frac12\left(\min_{|p|\leq 2p_0}\widehat{V}(p)\right)\left(\int_{|p|\leq p_0}\gamma(p)dp\right)^2.
\end{eqnarray*}
The last expression can be minimized in $\int_{|p|\leq p_0}\gamma$, where we also take into account that it is less than $\rho_\gamma$. The lower bounds  we deduce are
\begin{eqnarray*}
&\frac12\min_{|p|\leq 2p_0}\widehat{V}(p)\rho_\gamma^2 \qquad &\text{if} \qquad \rho_\gamma \leq \frac{p_0^2}{2\min_{|p|\leq 2p_0}\widehat{V}(p)} \\
&\frac14 p_0^2\rho_\gamma \qquad &\text{if} \qquad  \rho_\gamma>\frac{p_0^2}{2\min_{|p|\leq 2p_0}\widehat{V}(p)}. 
\end{eqnarray*}
It follows that there exist $c_1,c_2>0$ depending only on $V$ such that
\begin{eqnarray}
\eqref{term1}\geq \min \{c_1 \rho_\gamma, c_2 \rho_\gamma^2\}. \label{term1bound}
\end{eqnarray}
Putting together \eqref{assumenocondensation}, \eqref{term2bound}, \eqref{term3bound} and \eqref{term1bound}, we see that any minimizer with $\rho_0=0$ has to satisfy
 \begin{eqnarray*}
  \min \{c_1 \rho_\gamma, c_2 \rho_\gamma^2\}\leq C T^{5/2}.
\end{eqnarray*}  
However, this means that there exist $c_3,c_4>0$ depending only on $V$ such that 
\begin{equation}
\label{conditioncritr}
\begin{aligned}
\frac{\partial \cE(\gamma,0,\rho_0)}{\partial \rho_0}\Big|_{\rho_0=0}=-\mu+\widehat{V}(0)\int\gamma+\int\widehat{V}\gamma&\leq -\mu+2\widehat{V}(0)\int\gamma\\
&\leq -\mu+\max\{c_3T^{5/2},c_4T^{5/4}\}.
\end{aligned}
\end{equation}
This implies the existence of a temperature $T_1$ depending on $\mu$ and $V$ such that this derivative is negative for all $0\leq T\leq T_1$, which means that there cannot be minimizers with $\rho_0=0$.
\qed
\end{proof}

\begin{proof}[Proof of Proposition \ref{limit0}]
We split the proof into several steps in which we obtain the different limits. For simplicity we use the notation $\int \gamma^T=:\rho^T_\gamma$ and $\rho^T:=\rho^T_\gamma+\rho_0^T$.\\

\textit{Step 1: Limit for $\rho^T_0$ and $\rho^T_\gamma$}. We will show that both these sequences are uniformly bounded.
Since we are dealing with minimizers, we have
\[
\frac{-\mu^2}{2\widehat{V}(0)}=\mathcal{F}^T\left(0,0,\frac{\mu}{\widehat{V}(0)}\right)\geq\mathcal{F}^0\left(\gamma^T,\alpha^T,\rho_0^T\right)-TS(\gamma^T,\alpha^T).
\]
Since by Proposition \ref{prop:T=0minseq} the entropy term converges to $0$ as $T\to0$, for $T$ small enough we have 
\begin{align*}
\frac{-\mu^2}{4\widehat{V}(0)}&\geq-\mu\rho^T+\frac{1}{2}\widehat{V}(0)\left(\rho^T\right)^{2} +\rho^T_{0}\int\widehat{V}(\gamma^T+\alpha^T)\\
&\geq \rho_0^T(-\mu-\frac{1}{2}\int\widehat{V})+\frac{1}{2}\widehat{V}(0)\left(\rho_0^T\right)^{2} -\mu\rho_\gamma^T+\frac{1}{2}\widehat{V}(0)\left(\rho_\gamma^T\right)^{2},
\end{align*}
where we have thrown out some positive terms and used the fact that $\gamma+\alpha\geq-\frac12$. This estimate implies that $\rho^T_0$ and $\rho^T_\gamma$ are uniformly bounded.
We can extract a limit by taking subsequences, so that from now on we have  $\rho^T_0\to\tilde{\rho}_0$ and $\rho^T_\gamma\to\tilde{\rho}_\gamma$. \\

\textit{Step 2: Limit for $\int\widehat{V}(\gamma^T+\alpha^T)$ and $\tilde{\rho}_0>0$}. It follows from Lemma \ref{condensationsmallT} that $\rho_0^T>0$ for $T$ small enough. This implies that the Euler--Lagrange equation in $\rho_0$ has to hold with equality for $T$ small enough:
\begin{equation}
-\mu+\int\widehat{V}(\gamma^T+\alpha^T)+\widehat{V}(0)\rho^T=0.
\label{rel1123}
\end{equation}
Since we know that $\rho^T$ has a limit as $T\to0$, the integral in the equation above will also have a limit. 

We now consider the following trial state: 
\begin{eqnarray*}
\begin{aligned}
\gamma &=\gamma_0\mathbbm{1}_{B_\epsilon} \\
\alpha &=-\sqrt{(\gamma_0(\gamma_0+1)}\mathbbm{1}_{B_\epsilon}, 
\end{aligned} 
\end{eqnarray*}
where $B_\epsilon$ denotes the ball with radius $\epsilon$ (which will be fixed later) centred at the origin.  We have
\bq
\begin{aligned}
\cF^0&\left(\gamma,\alpha,\mu(\widehat{V}(0))^{-1}-\gamma_0 |B_\epsilon|\right)-\cF^0\left(0,0,\mu(\widehat{V}(0))^{-1}\right)= \\ 
\\ &= \gamma_0\int\limits_{B_\epsilon} p^2 dp   +\left(\frac{\mu}{\widehat{V}(0)}-|B_\epsilon|\gamma_0\right)(\gamma_0-\sqrt{(\gamma_0(\gamma_0+1)})\int\limits_{B_\epsilon}\widehat{V}(p)dp \\
 &\quad+\frac{2\gamma_0^2+\gamma_0}{2}\iint_{B_\epsilon\times B_\epsilon}\widehat{V}(p-q)dpdq. \label{differenceT=0}
\end{aligned}
\eq
Assume that $\gamma_0$ is large enough, in particular $\gamma_0>1$. Then 
$$\gamma_0-\sqrt{(\gamma_0(\gamma_0+1)}=-\frac12+O(\gamma_0^{-1}).$$
We also choose the radius $\epsilon$ in such a way that 
$$|B_\epsilon|=\frac{\nu}{\gamma_0^2}$$
for a positive constant $\nu$. The fact that $\widehat{V}\in C^1$ and $\widehat{V}(0)>0$ imply that $\widehat{V}(p)\geq \frac12 \widehat{V}(0)$ on $B_\epsilon$ for $\gamma_0$ large enough. It follows that
\bq
\begin{aligned}
\eqref{differenceT=0}&\leq C\gamma_0 |B_\epsilon|^{5/3}-(\frac12-O(\gamma_0^{-1}))\left(\frac{\mu}{\widehat{V}(0)}-|B_\epsilon|\gamma_0\right)|B_\epsilon|\frac{\widehat{V}(0)}{2} \\ &\quad+\widehat{V}(0)|B_\epsilon|^2\frac{2\gamma_0^2+\gamma_0}{2}, \nn
\end{aligned}
\eq
where $C$ is a positive constant. Hence, for $\nu<\frac{\mu}{4\widehat{V}(0)}$ and $\gamma_0$ sufficiently large
\begin{align*}
\eqref{differenceT=0}&\leq  C\gamma_0^{-7/3}-\frac12\left(\frac{\widehat{V}(0)}{2}-O(\gamma_0^{-1})\right)\left(\frac{\mu}{\widehat{V}(0)}-\frac{\nu}{\gamma_0}\right)\frac{\nu}{\gamma_0^2}+\widehat{V}(0)\nu^2\frac{2\gamma_0^2+\gamma_0}{2\gamma_0^4} \notag\\
&= \left(\nu\widehat{V}(0)-\frac{\mu}{4}\right)\frac{\nu}{\gamma_0^2}+o(\gamma_0^{-2})<0. \label{differenceT=0<0}
\end{align*}
Also note that $\gamma_0>1$ implies $\mu(\widehat{V}(0))^{-1}-\gamma_0 |B_\epsilon|\geq 0$, which means that our choice of $\rho_0$ in \eqref{differenceT=0} was allowed.

Together with Proposition \ref{prop:T=0minseq}, this calculation implies that
\begin{equation}
\begin{aligned}
\label{rel1124}
\frac{-\mu^2}{2\widehat{V}(0)}&>F(0,\mu)=\lim_{T\rightarrow0}\mathcal{F}^0(\gamma^T,\alpha^T,\rho_0^T)\\
&\hspace{0.5cm}=\lim_{T\rightarrow0}\left[\int p^{2}\gamma^T+\frac{1}{2}\int\widehat{V}(p-q)\left[\alpha^T(p)\alpha^T(q)+\gamma^T(p)\gamma^T(q)\right]dpdq \right]\\
&\hspace{0.5cm}\quad+\left[-\mu\tilde{\rho}+\frac{1}{2}\widehat{V}(0)\tilde{\rho}^{2}\right] +\tilde{\rho}_{0}\lim_{T\rightarrow0}\int\widehat{V}(\gamma^T+\alpha^T).\\
\end{aligned}
\end{equation}
The first limit has to be non-negative and the term involving $\tilde{\rho}$ has to be bigger than or equal to $-\mu^2/2\widehat{V}(0)$. We therefore conclude that 
\begin{eqnarray}
\begin{aligned}
&\tilde{\rho}_0>0,\\
\lim_{T\rightarrow0}\int\widehat{V}(\gamma^T&+\alpha^T)=-C<0.
\label{obs00}
\end{aligned}
\end{eqnarray}\\
\textit{Step 3: Limits for $TA^T$ and $TB^T$}.
Recall from Section \ref{apgr} that the Euler--Lagrange equations of the functional lead to an expression for $\gamma^T$ in terms of the functions
\begin{eqnarray} 
\begin{aligned}
\label{defofATzero}
TA^T(p)&=p^2-\mu+\widehat{V}(0)\rho^T+\rho^T_0 \widehat{V}(p) + \widehat{V}\ast \gamma^T(p),\\
TB^T(p)&=\rho^T_0 \widehat{V}(p) + \widehat{V}\ast \alpha^T(p),\\
TG^T(p)&=\sqrt{\left(TA^T(p)\right)^2-\left(TB^T(p)\right)^2}.
\label{AandB}
\end{aligned}
\end{eqnarray}
We will establish a limit for these functions, and then prove that it leads to a limit for $\gamma^T$. Note that we only need to deal with the convolution terms since all other terms already have a limit or are constant in $T$. 

Our goal is a pointwise limit on the whole space, and a $C^2$-limit on the compact  $\{|p|\leq P_0\}$, where $P_0$ is given by Lemma \ref{largepboundgamma2}. 
Recall our assumption that $\widehat{V}$ is in $C^3(\mathbb{R}^3)$ and that all its derivatives up to third order are bounded. This implies that $\widehat{V}\ast \gamma^T$ and $\widehat{V}\ast \alpha^T$ are also in $C^3(\mathbb{R}^3)$ and, using the bounds \eqref{convValpha} on these quantities and the uniform bound on $\rho_\gamma^T$, that all derivatives up to third order are uniformly bounded in $T$. In particular, $\widehat{V}\ast \gamma^T$ and $\widehat{V}\ast \alpha^T$ are uniformly bounded with uniformly bounded derivatives, and the latter implies uniform equicontinuity. All this means that by a diagonal argument one can construct a pointwise limit on $\mathbb{R}^3$ (that is continuous) by selecting subsequences that converge on the rationals (see, e.g.\ Theorem I.26 in \cite{ReeSim1}). By the Arzel\`{a}--Ascoli theorem, this implies that taking further subsequences leads to a uniform limit on the compact $\{|p|\leq P_0\}$. We now repeat this last argument for the derivatives and second-order derivatives on $\{|p|\leq P_0\}$. We obtain uniform (continuous) limits for all derivatives up to second order. By uniform convergence these are indeed derivatives of the limit functions.

Summarizing, we have obtained limits $a$ and $b$ that are bounded and in $C^2(|p|\leq P_0)$ such that $TA^T\to a$ and  $TB^T\to b$ pointwise and also uniformly on $\{|p|\leq P_0\}$. We also note that by \eqref{rel1123}, \eqref{obs00} and \eqref{AandB}: $a\geq C>0$. By the Euler--Lagrange equations for $T>0$ we have $|TB^T|\leq TA^T$, so the limits also satisfy $|b|\leq a$. Hence $TG^T$ also has a pointwise limit $g=\sqrt{a^2-b^2}$ that is a bounded function.\\

\textit{Step 4: Limit for $\gamma^T$}.
As in Section \ref{apgr} we derive an expression for $\gamma^T$ in terms of $A^T$ and $G^T$ given by \eqref{defofATzero}. To make use of the limits we have obtained, we write it as follows:
\begin{equation}
\gamma^T=\frac{TA^T-TG^T+e^{-\frac{1}{T}(TG^T)}\left(TA^T+TG^T\right)}{2TG^T\left(1-e^{-\frac{1}{T}(TG^T)}\right)}.
\label{gammaT}
\end{equation}
We conclude that pointwise
\begin{equation}
\gamma^T\xrightarrow{T\to 0} \frac{a-g}{2g}=:\tilde{\gamma},
\label{convgam}
\end{equation}
which is easy to see when $g>0$, but since $a>0$ it is also true for $g=0$ (with the understanding that $\gamma^T=+\infty$ at such points). We would nonetheless like to prove that $g=0$ actually cannot happen. 

First note that $g(p)$ is bounded away from $0$ for $|p|\geq P_0$ by the bound in Lemma \ref{largepboundgamma2} and the fact that $a\geq C>0$. Now suppose that $g^2(p_0)=(a^2-b^2)(p_0)=0$ for some $|p_0|< P_0$. We know that $a^2$ and $b^2$ are $C^2$ around $p_0$ and that $b^2\leq a^2$. Therefore, $a^2-b^2$ has to behave like $(p-p_0)^{2}+o(p-p_0)^{2}$ around $p_0$. Since $a\geq C>0$, we see that $\tilde{\gamma}$ has to go to infinity like $|p-p_0|^{-1}$ or faster. If we assume that $p_0\neq0$, this implies that $\tilde{\gamma}$ is non-integrable, which, by Fatou's lemma, contradicts the pointwise convergence:
\[
\int\tilde{\gamma}\leq\liminf_{T\to0}\int\gamma^T=\tilde{\rho}_\gamma<\infty.
\]
We therefore conclude that $g(p)$ cannot be zero for $p\neq0$. However, using \eqref{rel1123} and \eqref{AandB} we can calculate that 
\[
g(0)=\sqrt{(a(0)-b(0))(a(0)+b(0))}=\sqrt{-4\tilde{\rho}_0\widehat{V}(0)\left[\lim_{T\to0}\int\widehat{V}\alpha^T\right]}>0,
\]
where the inequality holds by \eqref{obs00}.
We can now conclude that $g\neq0$. Since it is continuous, it has to be bounded away from zero on the compact $|p|\leq P_0$, and combined with our previous observation, everywhere. 

We now analyse the expression \eqref{gammaT} and conclude that the convergence \eqref{convgam} is actually uniform on $\{|p|\leq P_0\}$. For this we use the following facts: a sum preserves uniform convergence; a product preserves uniform convergence given that the limit functions are bounded; a composition $g\circ f_n$ preserves uniform convergence (of the $f_n$) if $g$ is uniformly continuous in the region where $f_n$ takes values. Since it is necessary to apply this last fact to the function $x\mapsto 1/x$, it is crucial that $g$ is bounded away from 0. 

We can finally prove that $\gamma^T\to\tilde{\gamma}$ in $L^1$. The uniform convergence implies $L^1$-convergence on $\{|p|\leq P_0\}$. By Lemma \ref{largepboundgamma2}, we have also uniform boundedness by an $L^1$-function on $\{|p|> P_0\}$. Applying the Dominated Convergence Theorem to that region, we conclude that $\gamma^T\to\tilde{\gamma}$ in $L^1$. The pointwise convergence obtained before also implies $\tilde\gamma\geq0$, and by Fatou's lemma, $\int p^2\tilde{\gamma}dp<\infty$.\\

\textit{Step 5: Limit for $\alpha^T$}.
As before, we use relations that are known to hold for $T>0$ to conclude convergence:
\begin{align*}
\beta^T&=\sqrt{\left(\gamma^T+\frac{1}{2}\right)^2-\left(\alpha^T\right)^2}=\frac{1+e^{-\frac{1}{T}(TG^T)}}{2(1-e^{-\frac{1}{T}(TG^T)})}\xrightarrow{T\rightarrow 0} \frac{1}{2}\label{beta12} \\
\alpha^T&=-\beta^T\frac{TB^T}{TG^T}\xrightarrow{T\rightarrow 0}-\frac{b}{2g}=:\tilde{\alpha}.
\end{align*}
Again, the convergence holds pointwise everywhere and uniformly on $\{|p|\leq P_0\}$. The uniform convergence implies $L^2$-convergence on $\{|p|\leq P_0\}$. Since $\left(\alpha^T\right)^2\leq\gamma^T(\gamma^T+1)$, Lemma \ref{largepboundgamma2} leads to an uniform $L^2$-bound on the $\alpha^T$ for $\{|p|\geq P_0\}$. Hence, $L^2$-convergence also holds in this region by the Dominated Convergence Theorem. Also note that $\beta^T\to1/2$ implies that $\tilde{\alpha}^2=\tilde{\gamma}(\tilde{\gamma}+1)$. We have now proved all the claims in the proposition.
\qed
\end{proof}

It remains to prove Corollary \ref{cor:structureminimizerT=0}.
\begin{proof}[Corollary \ref{cor:structureminimizerT=0} for the grand canonical functional.]
Our goal will be to show that any minimizer at $T=0$ has to satisfy $\alpha^2=\gamma(\gamma+1)$ using elements from the proof above.
The corollary then follows from Theorem 10.4 in \cite{Solovej-notes}, which states that the 1-pdm $\Gamma$ corresponds to pure quasi-free states if and only if
$$\Gamma \mathcal{S} \Gamma=-\Gamma$$
(cf. \eqref{1pdm} and \eqref{sj} for definitions). This is indeed satisfied if $\alpha^2=\gamma(\gamma+1)$.

Note that $\mu\leq0$ is easy, since the minimizer is $(\gamma,\alpha,\rho_0)=(0,0,0)$ as explained at the start of this section. 
For $\mu>0$, we can consider \eqref{rel1124} directly at $T=0$ (i.e.\ without the limits) to conclude that any minimizer has
\[
\rho_0>0, \hspace{1.5cm}\int\widehat{V}(\gamma+\alpha)<0.
\]
This implies that \eqref{rel1123} holds, and so minimizers have
\[
\frac{\partial\mathcal{F}}{\partial\gamma}=p^2-\int\widehat{V}(\gamma+\alpha) +\rho_0\widehat{V}(p)+ \widehat{V}\ast \gamma(p)>0.
\]
If $\alpha^2<\gamma(\gamma+1)$, this derivative equals 0 and we obtain a contradiction.
\qed
\end{proof}

\subsection{The canonical case} 
\label{thecancaselabel}
We would now like to prove the existence of $T=0$  minimizers for the canonical problem. Recall that for fixed $\rho\geq0$ and $T\geq0$ the functional reads  
\bq
\begin{aligned}
\cF^{\rm{can}}(\gamma,\alpha,\rho-\rho_\gamma)&= \int p^{2}\gamma(p)dp-TS(\gamma,\alpha)+\frac12\widehat{V}(0)\rho^{2}\\ &+\left(\rho-\int\gamma\right)\int\widehat{V}(p)\left(\gamma(p)+\alpha(p)\right)dp \\
&+\frac{1}{2}\iint\widehat{V}(p-q)\left(\alpha(p)\alpha(q)+\gamma(p)\gamma(q)\right)dpdq. 
\end{aligned} \nn
\eq
\begin{proof}[Proof of Theorem \ref{thm:existencecanonicalzeroT}.] We follow the same strategy as in the grand canonical case. The same argument as in Proposition \ref{prop:T=0minseq} implies that canonical, positive temperature minimizers at fixed $\rho$ form a minimizing sequence for the $T=0$ problem with that $\rho$. 

We have
\begin{equation}
\label{ELeqnscan}
 \begin{aligned}
\frac{\partial\mathcal{F}^{\rm{can}}}{\partial\gamma}&=p^2+(\rho-\rho_\gamma) \widehat{V}(p)-\int\widehat{V}(\gamma+\alpha) + \widehat{V}\ast \gamma(p)-T\frac{\gamma+\frac{1}{2}}{\beta}\ln{\frac{\beta+\frac{1}{2}}{\beta-\frac{1}{2}}}\\
\frac{\partial\mathcal{F}^{\rm{can}}}{\partial\alpha}&=(\rho-\rho_\gamma)\widehat{V}(p) + \widehat{V}\ast \alpha(p)+T\frac{\alpha}{\beta}\ln{\frac{\beta+\frac{1}{2}}{\beta-\frac{1}{2}}}.
\end{aligned}
\end{equation}
To see that these expressions are equal to zero for minimizers, we repeat the argument in Lemmas \ref{lem:gammabigger0} and \ref{lem:apriorialpha}, but one extra ingredient is needed since $\rho_\gamma\leq\rho$ provides an extra constraint compared to the grand canonical case. We therefore apply Theorem \ref{thm:phasetrancan} (proved in the next section), which states that minimizers will have $\rho_0>0$ for sufficiently low temperatures. As a consequence, we arrive at the same bound as in Lemma \ref{largepboundgamma2}.

We now repeat the proof of Proposition \ref{limit0}. Step 1 simplifies since $\rho^T_\gamma\leq\rho$ provides the required bound. For step 2, we first note that there is no equivalent to \eqref{rel1123} in this case, but we can take a further subsequence to ensure that $\int\widehat{V}(\gamma^T+\alpha^T)$ has a limit. We then repeat the trial state argument (with $\mu/\widehat{V}(0)$ replaced with $\rho$), and it leads to the same conclusion as in the grand canonical case, that is
\begin{eqnarray*}
\begin{aligned}
&\tilde{\rho}_0:=\lim_{T\to 0}(\rho-\rho_\gamma^T)>0,\\
\lim_{T\rightarrow0}&\int\widehat{V}(\gamma^T+\alpha^T)=-C<0.
\end{aligned}
\end{eqnarray*}
The canonical $TA^T$ reads   
\bq
TA^T=p^2+\rho_0^T\widehat{V}(p)-\int\widehat{V}(\gamma^T+\alpha^T) + \widehat{V}\ast \gamma^T(p), \nn
\eq
which is really the same as \eqref{rel1123} combined with \eqref{defofATzero}.
We then repeat the remaining steps in the proof of Proposition \ref{limit0} to reach similar conclusions. To finish, we proceed as in the proof of Theorem \ref{thm:existencezeroT}. The conclusion of Corollary \ref{cor:structureminimizerT=0} for the canonical functional follows in an identical way.
\qed
\end{proof}

\section{Phase transition and the grand canonical phase diagram}\label{sec:exi_phase_tran}
We start by proving Theorem \ref{thm:BECvsSF}, which states that there is only one kind of phase transition in the system.
This holds for both the canonical and the grand canonical functional.

\begin{proof}[Proof of Theorem \ref{thm:BECvsSF}.] \textit{Step 1}. Let $T>0$. Since 
$$\int\widehat{V}(p-q)\alpha(p)\alpha(q)dpdq=\int V(x)|\check\alpha(x)|^2dx\geq 0,$$
and $S(\gamma,\alpha)< S(\gamma,0)$ for $\alpha\not\equiv0$, we directly see from the definition of the functionals \eqref{def:grandcanfreeenergyfunctional} and \eqref{def:canonicalfreeenergyfunctional}  that $\rho_0=0$ implies $\alpha\equiv0$.

Let $T>0$. Recall from the proof of the existence of minimizers that the Euler--Lagrange equation for $\alpha$ is satisfied:
$$\int\widehat{V}(p-q)\alpha(q)dq+\rho_{0}\widehat{V}(p)+T\frac{\alpha(p)}{\beta(p)}\ln \frac{\beta(p)+\frac12}{\beta(p)-\frac12}=0$$
for both functionals.
Thus $\alpha\equiv0$ implies $\rho_0=0$ as long as $\widehat{V}(p)>0$ on some set of positive measure, which is the case since $\widehat{V}(0)>0$ and $\widehat{V}\in C^1$.\\
 
\textit{Step 2.} Let $T=0$. For $\mu<0$ (grand canonically) or $\rho=0$ (canonically), we know that the minimizers have $\rho_0=\alpha=0$, so there is nothing to prove.

For $\mu>0$ or $\rho>0$, we know that $\rho_0>0$ by Theorems \ref{thm:phasetrangrandcan} and \ref{thm:phasetrancan} respectively. Grand canonically, we have shown in Corollary \ref{cor:structureminimizerT=0} that $\alpha\not\equiv0$, which followed from the trial state argument in step 2 of the proof of Proposition \ref{limit0}. As pointed out in the proof of Theorem \ref{thm:existencecanonicalzeroT}, a similar argument holds for the canonical case, and we again find $\alpha\not\equiv0$.
\qed
\end{proof}

We now prove that there indeed exists a phase transition in the model.

\begin{proof}[Proof of Theorem \ref{thm:phasetrangrandcan}.]
Note that the second part of the statement is proved in Lemma \ref{condensationsmallT}. It remains to show that there is no condensation for high temperatures. 

The proof is based on two inequalities: an upper and a lower bound. The upper bound shows that for sufficiently large $T$ there exists a positive constant $C$ depending on $\mu$ and $V$ such that
\begin{align}
\inf_{\gamma}\cF(\gamma,0,0)\leq -C T^2 \ln T + O(T^2). \label{upperboundinfgamma00}
\end{align}
The lower bound shows that any minimizer $(\gamma,\alpha,\rho_0)$ with $\rho_0>0$ has to satisfy
\[
\cF(\gamma,\alpha,\rho_0)\geq -\tilde{C} T\ln T +O(T) 
\]
for sufficiently large $T$ and $\tilde{C}$ depending on $\mu$ and $V$. Hence, the minimizer has $\rho_0=0$ and $\alpha\equiv 0$ for $T$ large enough.\\

\textit{Upper bound.} We start by proving \eqref{upperboundinfgamma00}. Note that
\bq
\begin{aligned}
\cF(\gamma,0,0)& \leq \int(p^2-\mu)\gamma(p)dp+ \widehat{V}(0)\rho_\gamma^2\\&+ T\int\left[\gamma(p)\ln \gamma(p)-(\gamma(p)+1)\ln(\gamma(p)+1)\right]dp. \nn
\end{aligned}
\eq
To obtain an upper bound, we evaluate the right-hand side of the inequality above using the trial state
\[
\gamma_{\delta}(p)=\left(e^{\frac{p^2+\delta}{T}}-1\right)^{-1}, 
\]
where $\delta$ is a positive constant, so that 
\begin{align}
\cF(\gamma_{\delta},0,0)&\leq T\int\ln\left(1-e^{\frac{-(p^2+\delta)}{T}}\right)dp-(\mu+\delta)\int \left(e^{\frac{p^2+\delta}{T}}-1\right)^{-1}dp \notag \\ & +\widehat{V}(0)\left(\int \left(e^{\frac{p^2+\delta}{T}}-1\right)^{-1}dp\right)^2. \label{upperboundgamma00}
\end{align}
Note that
\begin{align}
T\int\ln\left(1-e^{\frac{-(p^2+\delta)}{T}}\right)dp\leq -Te^{-\delta/T}\int e^{-p^2/T}dp=-C_0 T^{5/2}e^{-\delta/T}, \label{upperboundfreefreeenergy}
\end{align}
where $C_0=(2\pi)^{-2}\int_{0}^{\infty}\sqrt{s}e^{-s}ds<\infty$ (recall our convention for the measures $dp$ and $ds$ explained above \eqref{def:auxminimization}). Also
\begin{align*}
\int\gamma_{\delta}(p)dp=T^{3/2}e^{-\delta/T}\int_{0}^{\infty}\frac{(2\pi)^{-2}\sqrt{s}}{e^s-e^{-\delta/T}}ds. 
\end{align*}
Clearly,
\begin{eqnarray}
C_0 \leq\int_{0}^{\infty}\frac{(2\pi)^{-2}\sqrt{s}}{e^s-e^{-\delta/T}}ds\leq  \int_{0}^{\infty}\frac{(2\pi)^{-2}\sqrt{s}}{e^s-1}ds=: C_1, \label{boundsintgammadelta}
\end{eqnarray}
and so
\begin{eqnarray}
\label{someref00234}
  C_0 T^{3/2}e^{-\delta/T}  \leq\int\gamma_{\delta}(p)\leq C_1 T^{3/2}e^{-\delta/T}.
\end{eqnarray}
Using \eqref{upperboundfreefreeenergy} and \eqref{boundsintgammadelta} in \eqref{upperboundgamma00} we obtain 
\begin{align*}
\cF(\gamma_{\delta},0,0)\leq -C_0 T^{5/2}e^{-\delta/T}-C_0 (\delta+\mu)T^{3/2}e^{-\delta/T}+C_1^2\widehat{V}(0) T^3e^{-2\delta/T}. 
\end{align*}
We now choose $\delta=\frac12 T\ln(T)$. Then $e^{-\delta/T}=T^{-1/2}$, which implies
\[
\inf_{\gamma}\cF(\gamma,0,0)\leq  -\frac{C_0}{2}T^2 \ln T+(C_1^2\widehat{V}(0)-C_0)T^{2}-C_0 \mu T, 
\]
and we arrive at the desired upper bound \eqref{upperboundinfgamma00}.\\

\textit{Lower bound.} 
Any minimizer $(\gamma,\alpha,\rho_0)$ has to satisfy
\begin{align*}
\cF(\gamma,0,0)\geq\cF(\gamma, \alpha, \rho_0),
\end{align*}
which, using monotonicity of the entropy in $\alpha^2$, the fact that $\gamma+\alpha\geq-1/2$, and our assumption $\rho_0>0$, implies that 
\begin{align}
\int\gamma(p)dp\leq \frac{\mu+\frac12\int\widehat{V}(p)dp}{\widehat{V}(0)}:=A>0,\label{gammaintbound}
\end{align}
where the constant $A$ is positive and only depends on $\mu$ and $V$. Combining this knowledge with the aforementioned facts in the same way, we obtain
\begin{align*}
\cF(\gamma,\alpha,\rho_0)& \geq\int p^{2}\gamma(p)dp-TS(\gamma,0)-\mu A-\rho_0A\widehat{V}(0)+\frac12 \rho_0^2 \widehat{V}(0).
\end{align*}
A lower bound for the terms involving $\rho_0$ can be calculated explicitly. Using \eqref{gammaintbound} again, we obtain for any $\delta\geq0$:
\begin{align*}
\cF(\gamma,\alpha,\rho_0)& \geq\int (p^{2}+\delta)\gamma(p)dp -TS(\gamma,0)-\delta A-\mu A-\frac12A^2\widehat{V}(0).
\end{align*}
To obtain a lower bound, we now minimize the expression involving $\gamma$, which leads to the bound
\[
\cF(\gamma,\alpha,\rho_0) \geq T\int\ln\left(1-e^{\frac{-(p^2+\delta)}{T}}\right)dp  -\delta A-\mu A-\frac12A^2\widehat{V}(0).
\]
Since 
\begin{align*}
\ln\left(1-e^{\frac{-(p^2+\delta)}{T}}\right) \geq -\frac{1}{e^{\frac{p^2+\delta}{T}}-1},
\end{align*}
one has 
\begin{align*}
T\int\ln\left(1-e^{\frac{-(p^2+\delta)}{T}}\right)dp\geq -T\int\frac{dp}{e^{\frac{p^2+\delta}{T}}-1}\geq-C_1 T^{5/2}e^{-\delta/T},
\end{align*}
where we use \eqref{someref00234}. Thus, choosing $\delta=\frac32 T\ln T$ we arrive at 
\begin{align*}
\cF(\gamma,\alpha,\rho_0) \geq -C_1 T-\frac32 A T\ln T-\mu A-\frac12A^2\widehat{V}(0),
\end{align*}
which completes the proof of the lower bound.
\qed
\end{proof}

We now prove the existence of a phase transition for the canonical problem.

\begin{proof}[Proof of Theorem \ref{thm:phasetrancan}.] \textit{Step 1.}
Let $\rho>0$ be fixed and let $C_0$ be a constant depending on $V$ and $\rho$ that will be fixed later on.
Consider
\bq
U=\left\{|p|>\sqrt{2\rho\widehat{V}(0)+C_0(\rho,\widehat{V})}\right\}\subset\mathbb{R}^3.\nn
\eq
There exists a temperature $T_4$ depending only on $\rho$ and $V$ such that for $T>T_4$, we have 
\bq
\rho<\int_U \frac1{e^{2p^2/T}-1}dp.  \label{eq:rhoboundlargeT}
\eq
We will prove that \eqref{eq:rhoboundlargeT} implies that $\rho_0=0$ for the minimizer.

To prove this claim, consider any $(\gamma,\alpha,\rho-\int\gamma)$ with $\int\gamma\leq\rho$. Note that by \eqref{eq:rhoboundlargeT} there exists a subset $V\subset U$ with positive measure such that
\bq
\gamma(p)\big|_V<\frac1{e^{2p^2/T}-1}. \label{eq:gammauperboundV}
\eq
Recall the functional derivative of the canonical functional in \eqref{ELeqnscan}. Using the fact that the gamma-derivative of the entropy is monotone increasing in $\alpha^2$ in the first step and \eqref{finaleqref?} in the second (which defines $C_0$), we obtain 
\bq
\begin{aligned}
\frac{\partial \cF^{\rm{can}}}{\partial \gamma} &\leq p^2-T\ln\left(\frac{\gamma(p)+1}{\gamma(p)}\right)+\left(\rho-\rho_\gamma\right)\widehat{V}(p)+\widehat{V}*\gamma(p)-\int\widehat{V}(\gamma+\alpha) \\
&\leq p^2-T\ln\left(\frac{\gamma(p)+1}{\gamma(p)}\right)+2\rho\widehat{V}(0)+C_0. \nn
\end{aligned}
\eq
The bound \eqref{eq:gammauperboundV} implies that on $V\subset U$ we have
\bq
\frac{\partial \cF^{\rm{can}}}{\partial \gamma}\big|_{(\gamma,\alpha)}<0. \nn
\eq
In particular, we can lower the energy corresponding to any $\gamma$ with $\int\gamma\leq\rho$ by increasing it on some set of non-zero measure. However, this can only be done up to the point where $\int\gamma=\rho$. We therefore conclude that the minimizer will have to satisfy this, and hence $\rho_0=0$, which proves the claim.\\

\textit{Step 2.} We will now show that all $(\gamma,0,0)$ with $\int\gamma=\rho>0$ have a higher energy than $(0,0,\rho)$ for $0\leq T<T_3$, where $T_3>0$ is a constant temperature depending on $\rho$ and $V$. Since adding an $\alpha$ can never decrease the energy when $\rho_0=0$, this suffices.
We have
\begin{align*}
\mathcal{F}^{\rm{can}}(\gamma,0,0)=& \frac12 \int p^2\gamma(p)dp-TS(\gamma,0)+\frac12\widehat{V}(0)\rho^2\\
&+ \frac12 \int p^2\gamma(p)dp+\frac12\int\widehat{V}(p-q)\gamma(p)\gamma(q)dpdq\\
&\geq -CT^{5/2}+\frac12\widehat{V}(0)\rho^2+\min\left\{c_1\rho,c_2\rho^2\right\},
\end{align*}
where in the last step we used an argument similar to the one given in \eqref{term1bound}. Note that the last term is strictly positive and that it only depends on $\rho$ and $V$. 

This can be combined with 
\bq
\begin{aligned}
\mathcal{F}^{\rm{can}}(0,0,\rho)= \frac12\widehat{V}(0)\rho^2
\end{aligned}\nn
\eq
to give the estimate
\bq
\begin{aligned}
\mathcal{F}^{{\rm{can}}}(\gamma,0,0)-\mathcal{F}^{\rm{can}}(0,0,\rho)\geq 
\min\left\{c_1\rho,c_2\rho^2\right\}-CT^{5/2}.
\end{aligned}\nn
\eq
Since the first term is positive and only depends on $\rho$ and $V$, we see that this implies the existence of a $T_3>0$ as described above.
\qed
\end{proof}

What remains to be done is to determine the grand canonical phase diagram from Figure \ref{fig:phasediagram}. Most of the work has already been done. We will now collect some results and see how this diagram has been obtained.

For $\mu>0$, we have Theorem \ref{thm:phasetrangrandcan} and Lemma \ref{condensationsmallT}. Note that \eqref{conditioncritr} determines the lower bound of the region with the lighter shade of blue.  The bounds derived in the proof of Theorem \ref{thm:phasetrangrandcan} determine an upper bound on this region, but it does not go to 0 when $\mu$ does. To get the behaviour shown in Figure \ref{fig:phasediagram}, we need Theorem \ref{thm:grandcancrittemp}.

The case $T=0$ and $\mu\leq 0$ has been explained at the beginning of Subsection \ref{subsect:gcT=0}. By an argument similar to Lemma \ref{lem:gammabigger0}, we know that $\gamma>0$ for $T>0$. What remains to be shown is that there is no condensation for $T>0$ and $\mu\leq 0$. This follows from the fact that $\rho_0>0$ would imply
\bq
\begin{aligned}
\cF(\gamma,\alpha,\rho_0)&> \cF(\gamma,0,0) +\rho_0\int\widehat{V}(p) \alpha(p)dp+\frac12\int  \widehat{V}(p-q)\alpha(q) \alpha(p)dpdq \\ &\quad + \frac12\widehat{V}(0)\rho_0^2 + \widehat{V}(0)\rho_\gamma\rho_0 \\ & > \cF(\gamma,0,0) +\frac12\int \widehat{V}(p-q) (\alpha+\rho_0\delta)(p)(\alpha+\rho_0\delta)(q)dpdq \\ & \geq \cF(\gamma,0,0), \nn
\end{aligned}
\eq
where $\delta$ denotes the Dirac delta distribution. Hence $\rho_0=0$ for $\mu\leq0$. The conclusions for $\alpha$ follow from Theorem \ref{thm:BECvsSF}.

\appendix

\section{Derivation of the functional}\label{app:derivation}
\subsection{Bogoliubov trial states}
Let $\cH$ be a complex, separable Hilbert space with inner product $\langle\cdot,\cdot\rangle$, which is linear in the second variable and anti-linear in the first, and let $\Gamma_{\rm{s}}(\cH)$ be the bosonic Fock space related to $\cH$. 

Let $\mathcal{O}$ be the algebra of physical observables represented by bounded operators on $\Gamma_{\rm{s}}(\cH)$. A \textit{state} $\omega: \mathcal{O}\rightarrow \mathbb{C}$ of a quantum system is then identified with a positive semi-definite trace class operator $G$ on $\Gamma_{\rm{s}}(\cH)$  with $\text{Tr}(G)=1$ in the following way:
\begin{equation}
\omega(O)=\text{Tr}(OG) \quad \text{for all bounded} \quad O\in \mathcal{O}. \label{state}
\end{equation}
The operator $G$ is sometimes called the \textit{density matrix}.
The dual space $\mathcal{H}^{*}$ can be identified with $\mathcal{H}$ by the anti-unitary operator $J:\mathcal{H}\rightarrow \mathcal{H}^{*}$ defined by 
$$J(f)(g)=\langle f,g \rangle_{\mathcal{H}}, \quad \text{for all} \quad f,g\in\mathcal{H}.$$
If $a^{*}(f)$ and $a(g)$ are the usual bosonic creation and
annihilation operators on $\Gamma_{\rm{s}}(\cH)$ satisfying the
canonical commutation relations (CCR)
$$[a(g),a^{*}(f)]=(g,f), \quad [a^*(g),a^*(f)]=0, \quad [a(g),a(f)]=0 \quad \forall f,g\in\mathcal{H},$$
then one can introduce the \textit{field} or \textit{generalized
  creation} and \textit{annihilation} operators on
$\mathcal{H}\oplus\mathcal{H}^{*}$ by
\begin{align*} 
A(f\oplus Jg) &= a(f)+a^{*}(g),\\
A^{*}(f\oplus Jg) &= a^{*}(f)+a(g), \quad  \forall f,g\in\mathcal{H}. 
\end{align*}
By defining
\begin{equation}
\mathcal{S} =
 \begin{pmatrix}
  1 & 0 \\
  0 & -1
 \end{pmatrix} \quad \text{and} \quad
 \mathcal{J} =
 \begin{pmatrix}
  0 & J^{*} \\
  J & 0
 \end{pmatrix} \label{sj}
\end{equation}
one can express the CCR and conjugate relations in the following way:
$$A^{*}(F_{1})=A(\mathcal{J}F_{1}), \quad [A(F_{1}), A^{*}(F_{2})]=\langle F_{1},\mathcal{S}F_{2}\rangle \quad \text{for all} \quad F_{1},F_{2}\in\mathcal{H}\oplus\mathcal{H}^{*}.$$
We can now define the \textit{(generalized) one-particle density matrix} (\textit{1-pdm}) $\Gamma:\mathcal{H}\oplus\mathcal{H}^{*}\rightarrow \mathcal{H}\oplus\mathcal{H}^{*}$ of a state $\omega$ by 
$$\langle F_{1},\Gamma F_{2} \rangle=\omega(A^*(F_{2})A(F_{1})) \quad \text{for all} \quad F_{1},F_{2}\in\mathcal{H}\oplus\mathcal{H}^{*}.$$
Thus a 1-pdm can be written as
\begin{eqnarray}
\Gamma =
 \begin{pmatrix}
  \gamma & \alpha \\
  J\alpha J & 1+J\gamma J^*
 \end{pmatrix}, \label{1pdm}
\end{eqnarray}
where $\gamma:\mathcal{H}\rightarrow \mathcal{H}$ and $\alpha:\mathcal{H}^*\rightarrow \mathcal{H}$ are linear operators defined by
$$\langle f,\gamma g\rangle=\omega(a^*(g)a(f)), \quad \langle f,\alpha Jg\rangle=\omega(a(g)a(f)) \quad \forall f,g\in\mathcal{H}.$$
The definitions above imply in particular that states with finite particle number expectation are those for which $\gamma$ is trace class.\\
We shall now recall the notion of \textit{quasi-free states}. For our purpose a quasi-free state $\omega$ will be a state satisfying Wick's Theorem. In particular
\bq
\begin{aligned}
\omega(a_1^{\#} a_2^{\#} a_3^{\#} a_4^{\#})&=\omega(a_1^{\#} a_2^{\#})\omega(a_3^{\#} a_4^{\#})+\omega(a_1^{\#} a_4^{\#})\omega(a_2^{\#} a_3^{\#})+\omega(a_1^{\#} a_3^{\#})\omega(a_2^{\#} a_4^{\#}), \nn
\end{aligned}
\eq
where $a^{\#}$ is either $a$ or $a^*$. Furthermore, for any $m$ we have
\begin{align*}
\omega(a_1^{\#}\ldots a_{2m+1}^{\#})=0.
\end{align*}
If one considers a Bose system, one should extend the class of variational states by including so-called \textit{coherent states}. These states are used to describe the condensate fraction (for an explanation see e.g. \cite{Solovej-06}).

The mathematical implementation of that idea relies on the fact that for every $\phi\in\mathcal{H}$ there exists a unitary operator $\mathbb{U}_{\phi}:\Gamma_{\rm{s}}(\cH)\rightarrow\Gamma_{\rm{s}}(\cH)$ such that
$$\mathbb{U}^{*}_{\phi}a(f)\mathbb{U}_{\phi}=a(f)+\langle f,\phi \rangle \quad \forall f\in\mathcal{H}.$$ 
We may now describe the Bogoliubov variational states. Let $\omega_{\gamma,\alpha}$ be the quasi-free state with the 1-pdm $\Gamma_{\gamma,\alpha}$ and let $\phi\in\mathcal{H}$. The \textit{Bogoliubov variational state} $\omega_{\gamma,\alpha,\phi}$ is defined by 
\begin{equation}
\omega_{\gamma,\alpha,\phi}(O):=\omega_{\gamma,\alpha}(\mathbb{U}^{*}_{\phi}O\mathbb{U}_{\phi}) \quad \text{for all} \quad O\in\mathcal{O}. \label{bogvarstate}  
\end{equation}

\subsection{The Hamiltonian part}
Having introduced Bogoliubov variational states we will now turn to the derivation of the functional. Our model is based on the grand canonical Hamiltonian of the form
\begin{equation}
H=T+U=\sum_p (p^2-\mu) a_p^{*} a_p+\frac{1}{2L^3}
\sum_{p,q,k}\widehat{V}(k)a_{p+k}^{*}a_{q-k}^{*}a_q a_p, \label{secondquantizedhamiltonian}
\end{equation}
where the summation is taken over momenta $p,k,q \in \frac{2\pi}{L}\mathbb{Z}^{3}$. Here $a_p=a(L^{-3/2}e^{ipx})$.

Note that \eqref{secondquantizedhamiltonian} is the second quantization (in the plane wave basis) of the translation invariant grand canonical $N$-body Hamiltonian 
\[
H_N=\sum_{i=1}^{N}-\Delta^{L}_i+\sum_{i<j}V^{L}(x_i-x_j)  
\]
defined on $L^{2}_{\text{sym}}(\Lambda^{N})$, where $\Lambda=[-\frac{L}{2},\frac{L}{2}]^{3}$ is the physical space on which we impose periodic boundary conditions. The Laplacian is supposed to have periodic boundary conditions on $\Lambda$. The function $V^{L}$ is the periodized potential given by 
$$V^{L}(x)=\sum_{n\in \mathbb{Z}^{3}}V(x+nL).$$
We also have
$$\mathbb{U}^{*}_{\phi}a_{p}\mathbb{U}_{\phi}=a_{p}+\langle L^{-3/2}e^{ipx},\phi(x)\rangle.$$
Bogoliubov's $c$-number substitution (\cite{LieSeiYng-05}) is then implemented by choosing $\phi(x)$ to be a constant function equal to $\sqrt{\rho_0}$, where, as mentioned in the introduction, $\rho_0\geq 0$ has the interpretation of being the condensate density. Thus 
\[
\mathbb{U}^{*}_{\phi}a_{p}\mathbb{U}_{\phi}=a_{p}+\delta_{p,0}\sqrt{\rho_0}\sqrt{|\Lambda|}.
\]
According to \eqref{1pdm} we define 
\begin{eqnarray*}
\gamma(k):=\omega_{\gamma,\alpha}(a^*_k a_k), \qquad \text{and} \qquad \alpha(k):=\omega_{\gamma,\alpha}(a_k a_{-k}).
\end{eqnarray*}
Without loss of generality, we assume that our trial states satisfy 
$$\omega_{\gamma,\alpha}(a_k a_{-k})=\omega_{\gamma,\alpha}(a^*_k a^*_{-k}),$$
so that $\alpha$ takes real values.
A convexity argument shows that minimizing quasi-free states have this property and so it can indeed be assumed without loss of generality. We also restrict to translation-invariant quasi-free states, which does limit the model and implies that general results about minimization problems like \cite{quasifree} no longer hold.
A straightforward calculation, using the properties of quasi-free states and translation invariance of the system, then implies that 

\begin{eqnarray*}
\begin{aligned}
\omega_{\gamma,\alpha,\sqrt{\rho_0}}(H)&=\sum_{p}(p^2-\mu)\gamma(p)-\mu |\Lambda|\rho_0+\frac{\widehat{V}(0)}{2|\Lambda|}\sum_{p,q}\gamma(p)\gamma(q)\\ &+ \frac{1}{2|\Lambda|}\sum_{p,q}\left[\widehat{V}(p-q)\left(\alpha(p)\alpha(q)+\gamma(p)\gamma(q)\right)\right]\\
&+\frac{\rho_0^2|\Lambda|\widehat{V}(0)}{2}+\widehat{V}(0)\rho_0 \sum_{k}\gamma(k)+\rho_0\sum_{k}\widehat{V}(k)\left(\gamma(k)+\alpha(k)\right).
\end{aligned}
\end{eqnarray*}
The \textit{thermodynamic free energy} (per volume), $\mathcal{F}$, 
of a state $\omega$ at temperature $T\geq0$ and chemical potential $\mu\in \mathbb{R}$ is defined as 
\begin{equation*}
\mathcal{F}(\omega)=\frac{1}{|\Lambda|}\Big(\omega(H)-TS(\omega)\Big).
\end{equation*}
Taking the informal macroscopic limit $|\Lambda|\rightarrow \infty$ and assuming that $\frac{1}{|\Lambda|}\sum_p \rightarrow (2\pi)^{-3}\int dp$ we obtain the desired variational expression for the Hamiltonian part of the free energy density.

\subsection{The entropy part} We now derive the 
formula for the entropy density in a Bogoliubov trial state in terms
of $\gamma, \alpha$, and $\phi$. To do this we will use some basic facts concerning \textit{Bogoliubov transformations} see,  e.g.\ \cite{NapNamSol-15}.

Given a state $\omega$ with a corresponding density matrix 
${\mathbb G}$, its entropy is defined as
\[
S(\omega)=-\text{Tr}({\mathbb G}\ln {\mathbb G}). 
\]
We only consider Bogoliubov variational states $\omega_{\gamma,\alpha, \phi}$, thus by definitions
(\ref{state}) and (\ref{bogvarstate}) ${\mathbb G}=\mathbb{U}_{\phi}G\mathbb{U}_{\phi}^*$ where
$G$ is the density matrix
corresponding to the quasi-free state $\omega_{\gamma,\alpha}$. Since
$\mathbb{U}_{\phi}$ is unitary we see that
$$\text{Tr}({\mathbb G}\ln {\mathbb G})=\text{Tr}({G}\ln {G})$$
and so
\[
S(\omega_{\gamma,\alpha, \phi})=S(\omega_{\gamma,\alpha}).
\]
This means that the coherent transformation, i.e.\ the condensate, does not change the entropy. 
Thus, if we want to calculate the entropy of  
Bogoliubov trial states it is enough to consider quasi-free states.
For such a state the density matrix $G$ is unitarily equivalent through a Bogoliubov
transformation to an operator of the form
$$
\tilde{G}=Z^{-1}\Pi \exp \left[-\sum_{i\in I}e_{i}a^*_{i}a_{i}\right] 
\Pi,\quad Z=\prod_{i\in I}\frac{1}{1-e^{-e_i}}
$$
where $a_{i}:=a(u_{i})$
for an orthonormal basis $\{u_{i}\}$ of the Hilbert space, $I\subseteq{\mathbb N}$, 
$e_{i} \geq0$, and $\Pi$ is the projection onto the subspace
$\ker\left[\sum_{i\notin I}a^*_{i}a_{i}\right]$. The constant $Z$ (which will be finite) 
ensures that $\text{Tr}(\tilde{G})=1$. 
The 1-pdm 
$\tilde\Gamma$ of $\tilde{G}$ is easily seen to have
$\tilde{\alpha}=0$ and $\tilde\gamma$ diagonal in the basis 
$\{u_{i}\}$ with eigenvalues $\lambda_i$ given by 
$$(1-\exp(-e_{i}))^{-1}=1+\lambda_{i}, \quad i\in I,$$
and zero otherwise. 

For the state above one can easily calculate the entropy. The Fock
space $\Gamma_{\rm{s}}(\cH)$ has the orthonormal basis
$$|\vec{n}\rangle:=|n_{1},n_{2},\ldots\rangle=(n_{1}!n_{2}!\ldots)^{-\frac{1}{2}}(a^*_{1})^{n_{1}}(a^*_{2})^{n_{2}}\ldots|0\rangle,$$
where $|0\rangle$ is the Fock vacuum and $n_{1},n_{2},\ldots\in\mathbb{N}\cup\{0\}$ with only a finite number of $n_{j}$'s that are positive. We find 
\begin{eqnarray*}
\begin{aligned}
S(\tilde{G})&=\sum_{i\in I}\ln(1+\lambda_{i})+\sum_{j\in I}\sum_{\{\vec{n}\}}\langle\vec{n}|\frac{e_{j}a^*_{j}a_{j}\exp[-e_{j}a^*_{j}a_{j}]}{\prod_{i\in I}(1+\lambda_{i})}|\vec{n}\rangle\prod_{i\in I, i\neq j}(1+\lambda_{i}),
\end{aligned}
\end{eqnarray*}
which together with
\begin{eqnarray*}
\begin{aligned}    
&\sum_{\{\vec{n}\}}\langle\vec{n}|e_{j}a^*_{j}a_{j}\exp[-e_{j}a^*_{j}a_{j}]|\vec{n}\rangle =\sum_{n_{j}=0}^{\infty}e_{j}n_{j}e^{-e_{j}n_{j}}=\frac{e_{j}e^{-e_{j}}}{(1-e^{-e_{j}})^{2}}
\end{aligned}
\end{eqnarray*}    
and the definition of $e_{j}$ implies that
\[
\begin{aligned}
S(\tilde{G})&=\sum_{i\in I}\ln(1+\lambda_{i})-\sum_{j\in I}\lambda_{j}\ln\left(\frac{\lambda_{j}}{1+\lambda_{j}}\right)\\ &=\sum_{j\in I}\left[(1+\lambda_{j})\ln(1+\lambda_{j})-\lambda_{j}\ln\lambda_{j}\right].      
\end{aligned}
\]
It is, however, not immediately possible to find the
entropy of $G$ in terms of its 1-pdm $\Gamma$ from this formula.  In fact, although $G$
and $\tilde{G}$ are unitarily equivalent, this is not so for $\Gamma$
and $\tilde\Gamma$.  The relation however is (see \cite{NapNamSol-15})
that
$\Gamma'=(\Gamma+\frac{1}{2}\mathcal{S})^{1/2}\mathcal{S}(\Gamma+\frac{1}{2}\mathcal{S})^{1/2}$
and
$\tilde{\Gamma}'=(\tilde\Gamma+\frac{1}{2}\mathcal{S})^{1/2}\mathcal{S}
(\tilde\Gamma+\frac{1}{2}\mathcal{S})^{1/2}$
are unitarily equivalent.
Since we can express the entropy of $\tilde{G}$ as 
$$
S(\tilde{G})=-\Tr\left((\tilde\Gamma'-\frac{1}{2})\ln|\tilde\Gamma'-\frac{1}{2}|\right).
$$
We have proved the following result. 
\begin{theorem}\label{entropythm}
Let $\omega_{\gamma,\alpha}$ be a quasi-free state with 1-pdm $\Gamma$. The entropy of this state is given by
\[
  S(\omega_{\gamma,\alpha})=-\Tr\left((\Gamma'-\frac{1}{2})\ln|\Gamma'-\frac{1}{2}|\right) 
\]
where $\Gamma'=(\Gamma+\frac{1}{2}\mathcal{S})^{1/2}\mathcal{S}(\Gamma+\frac{1}{2}\mathcal{S})^{1/2}$.
\end{theorem}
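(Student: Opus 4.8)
The plan is to reduce a general quasi-free state to the diagonal normal form $\tilde G$ treated above, whose entropy has already been computed, and to transport the claimed trace formula through the same reduction.

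First I would recall the structural input on normal forms: every quasi-free state $\omega_{\gamma,\alpha}$ whose 1-pdm $\Gamma$ has $\gamma$ trace class and $\alpha$ Hilbert--Schmidt (which is guaranteed on the relevant Hilbert space by the constraint $\alpha^2\leq\gamma(\gamma+1)$ together with integrability of $\gamma$) can be diagonalized by a unitarily implementable Bogoliubov transformation $\mathcal U$: one has $\mathcal U^{*}\Gamma\mathcal U=\tilde\Gamma$ with $\tilde\alpha=0$ and $\tilde\gamma$ diagonal with eigenvalues $\lambda_i$, and correspondingly the density matrix $G$ of $\omega_{\gamma,\alpha}$ is unitarily equivalent to the operator $\tilde G$ written in the text (implementability here is exactly the Shale--Stinespring condition, which is why $\alpha$ Hilbert--Schmidt is needed); I would cite \cite{NapNamSol-15} for this. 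Since unitary conjugation preserves $\Tr(\cdot\ln\cdot)$, this gives $S(\omega_{\gamma,\alpha})=S(\tilde G)$, and the explicit computation already carried out yields $S(\tilde G)=\sum_{j\in I}\bigl[(1+\lambda_j)\ln(1+\lambda_j)-\lambda_j\ln\lambda_j\bigr]$.

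Next I would identify this sum with the spectral trace on the right-hand side, first at the level of $\tilde\Gamma$. Since $\tilde\Gamma+\tfrac12\mathcal S=\mathrm{diag}\bigl(\tilde\gamma+\tfrac12,\ \tfrac12+J\tilde\gamma J^{*}\bigr)$ commutes with $\mathcal S$, we get $\tilde\Gamma'=\mathcal S(\tilde\Gamma+\tfrac12\mathcal S)=\mathrm{diag}\bigl(\tilde\gamma+\tfrac12,\ -\tfrac12-J\tilde\gamma J^{*}\bigr)$, hence $\tilde\Gamma'-\tfrac12=\mathrm{diag}(\tilde\gamma,\ -1-J\tilde\gamma J^{*})$ and $|\tilde\Gamma'-\tfrac12|=\mathrm{diag}(\tilde\gamma,\ 1+J\tilde\gamma J^{*})$. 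Using that $J$ is anti-unitary and that $x\mapsto x\ln x$ extends continuously by $0$ at $x=0$ (so eigenvalues outside $I$ contribute nothing), one obtains
\[
-\Tr\bigl((\tilde\Gamma'-\tfrac12)\ln|\tilde\Gamma'-\tfrac12|\bigr)=-\Tr(\tilde\gamma\ln\tilde\gamma)+\Tr\bigl((1+\tilde\gamma)\ln(1+\tilde\gamma)\bigr)=\sum_{j\in I}\bigl[(1+\lambda_j)\ln(1+\lambda_j)-\lambda_j\ln\lambda_j\bigr],
\]
which equals $S(\tilde G)$. Finally I would transport this back to $\Gamma$: the point is that, although $\Gamma$ and $\tilde\Gamma$ are \emph{not} unitarily equivalent (the Bogoliubov transformation acts symplectically, not unitarily, on the one-particle space), the modified operators $\Gamma'=(\Gamma+\tfrac12\mathcal S)^{1/2}\mathcal S(\Gamma+\tfrac12\mathcal S)^{1/2}$ and $\tilde\Gamma'$ \emph{are} unitarily equivalent (again \cite{NapNamSol-15}); this is precisely why $\Gamma'$ is the right object. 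Hence $-\Tr((\Gamma'-\tfrac12)\ln|\Gamma'-\tfrac12|)=-\Tr((\tilde\Gamma'-\tfrac12)\ln|\tilde\Gamma'-\tfrac12|)=S(\tilde G)=S(\omega_{\gamma,\alpha})$, as claimed.

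The main obstacle is not any of the algebra above but the two structural facts imported from \cite{NapNamSol-15}: that a general quasi-free density matrix is unitarily equivalent to the diagonal $\tilde G$ (needing the implementability/Shale--Stinespring condition), and that $\Gamma'$ and $\tilde\Gamma'$ are unitarily equivalent while $\Gamma$ and $\tilde\Gamma$ are not. Once those are granted, the remainder is the bookkeeping in the previous paragraph, plus a routine check that all the traces involved are finite under the standing assumptions.
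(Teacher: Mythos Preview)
Your proposal is correct and follows essentially the same route as the paper: reduce to the diagonal normal form $\tilde G$ by a unitarily implemented Bogoliubov transformation (so that $S(\omega_{\gamma,\alpha})=S(\tilde G)$), verify the trace formula on $\tilde\Gamma'$, and then transport it back using the key fact from \cite{NapNamSol-15} that $\Gamma'$ and $\tilde\Gamma'$ are unitarily equivalent even though $\Gamma$ and $\tilde\Gamma$ are not. Your write-up is slightly more explicit than the paper's (the block-diagonal computation of $\tilde\Gamma'-\tfrac12$ and the mention of the Shale--Stinespring condition), but the strategy and the imported structural facts are identical.
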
       
In our case 
$$\Gamma+\frac{1}{2}\mathcal{S}=\begin{pmatrix}
  \gamma+\frac{1}{2} & \alpha\\
  \alpha & \gamma +\frac{1}{2}
 \end{pmatrix}.$$
To calculate the eigenvalues of 
$\Gamma'$ we again use 
the translation invariance of our system and pass to the Fourier space. 
In the momentum representation the eigenvalues are given by
$$\beta(p)=\pm\sqrt{(\frac{1}{2}+\gamma(p))^{2}-\alpha(p)^{2}}$$ 
(note that the eigenvalues of $\Gamma'$ are the same by a similarity transformation as the eigenvalues 
of $\Gamma\mathcal{S}+\frac{1}{2}$)
and we arrive at the desired formula. 
Note that all terms are well-defined since the condition $\Gamma\geq0$ implies that 
\begin{equation*}
\gamma(p)\geq 0 \qquad \text{and} \qquad \gamma(p)(1+\gamma(p))-\alpha(p)^{2}\geq 0.
\end{equation*}

\begin{acknowledgements}
We thank Robert Seiringer and Daniel Ueltschi for bringing the issue of the change in critical temperature to our attention. We also thank the Erwin Schr\"odinger Institute (all authors) and the Department of Mathematics, University of Copenhagen (MN) for the hospitality during the period this work was carried out. We gratefully acknowledge the financial support by the European Union’s Seventh Framework Programme under the ERC Grant Agreement Nos. 321029 (JPS and RR) and 337603 (RR) as well as support by the VILLUM FONDEN via the QMATH Centre of Excellence (Grant No. 10059) (JPS and RR), by the National Science Center (NCN) under grant No. 2012/07/N/ST1/03185 and the Austrian Science Fund (FWF) through project Nr. P 27533-N27 (MN). The authors declare that they have no conflict of interest.
\end{acknowledgements}


\begin{thebibliography}{}

\bibitem{AngVer-95}
{\sc N. Angelescu and A.~Verbeure}, {\em {V}ariational solution of a superfluidity model}, Physica A, 216, 386--396 (1995)

\bibitem{AngVerZag-97}
{\sc N. Angelescu, A.~Verbeure and V. Zagrebnov}, {\em {S}uperfluidity {III}}, J. Phys. A: Math. Gen., 30, 4895--4913 (1997)

\bibitem{Arnold}
{\sc P. Arnold and G. Moore}, {\em BEC transition temperature of a dilute homogeneous imperfect Bose gas}, Phys. Rev. Lett., 87, 120401 (2001)

\bibitem{quasifree}
{\sc V. Bach, S. Breteaux, H.K. Kn{\"o}rr and E. Menge}, {\em Generalization of Lieb's variational principle to Bogoliubov--Hartree--Fock theory}, J. Math. Phys., 55, 012101 (2014)

\bibitem{BacLieSol-94}
{\sc V.~Bach, E.H. Lieb and J.P. Solovej}, {\em Generalized {H}artree-{F}ock
  theory and the {H}ubbard model}, J. Statist. Phys., 76, 3--89  (1994)

\bibitem{Bal-04}
{\sc S.~Balibar}, {\em Looking back at superfluid helium}, in Proceedings of
  the conference "Bose--{E}instein condensation", J.~Dalibard, B.~Duplantier,
  and V.~Rivasseau, eds., Birk\"auser (2004)

\bibitem{Bogoliubov-47b}
{\sc N.N. Bogoliubov}, {\em On the theory of superfluidity}, J. Phys. (USSR),
  11, 23 (1947)

\bibitem{BraHaiSei-14}
{\sc G.~Br\"aunlich, C.~Hainzl and R.~Seiringer}, {\em Translation-invariant
  quasi-free states for fermionic systems and the {BCS} approximation}, Rev.
  Math. Phys., 26, 1450012 (2014)

\bibitem{CriSol-76}
{\sc R.H. Critchley and A.~Solomon}, {\em A {Variational} {A}pproach to
  {S}uperfluidity}, J. Stat. Phys., 14, 381--393 (1976)

\bibitem{Ketterle-95}
{\sc K.B. Davis, M.O. Mewes, M.R. Andrews, N.J. van Druten, D.S. Durfee,
  D.M. Kurn and W.~Ketterle}, {\em {B}ose-{E}instein {C}ondensation in a
  {G}as of {S}odium {A}toms}, Phys. Rev. Lett., 75, 3969--3973 (1995)

\bibitem{DerNap-13}
{\sc J.~{Derezi{\'n}ski} and M.~{Napi{\'o}rkowski}}, {\em {Excitation spectrum
  of interacting bosons in the mean-field infinite-volume limit}}, Annales
  Henri Poincar\'e, 15, 2409--2439 (2014)
\newblock Erratum: Annales Henri Poincar\'e, 16, 1709-1711 (2015)

\bibitem{ErdSchYau-08}
{\sc L.~{Erd\H{o}s}, B.~{Schlein} and H.-T. {Yau}}, {\em {Ground-state energy
  of a low-density Bose gas: A second-order upper bound}}, Phys. Rev. A, 78, 053627 (2008)

\bibitem{FraHaiNabSei-07}
{\sc R.~L. Frank, C.~Hainzl, S.~Naboko and R.~Seiringer}, {\em The critical
  temperature for the {BCS} equation at weak coupling}, J. Geom. Anal., 17, 559--567 (2007)

\bibitem{GiuSei-09}
{\sc A.~Giuliani and R.~Seiringer}, {\em The ground state energy of the weakly
  interacting {B}ose gas at high density}, J. Stat. Phys., 135, 915--934 (2009)

\bibitem{GreSei-13}
{\sc P.~Grech and R.~Seiringer}, {\em The excitation spectrum for weakly
  interacting bosons in a trap}, Commun. Math. Phys., 322, 559--591 (2013)

\bibitem{HaiHamSeiSol-08}
{\sc C.~Hainzl, E.~Hamza, R.~Seiringer and J.P. Solovej}, {\em The {BCS}
  functional for general pair interactions}, Commun. Math. Phys., 281, 349--367 (2008)

\bibitem{HaiSei-08}
{\sc C.~Hainzl and R.~Seiringer}, {\em The {BCS} critical temperature for
  potentials with negative scattering length}, Lett. Math. Phys., 84, 99--107 (2008)

\bibitem{Kash}
{\sc V.A. Kashurnikov, N.V. Prokof'ev and B.V. Svistunov}, {\em Critical temperature shift in weakly interacting Bose gas}, Phys. Rev. Lett., 87, 120402 (2001)

\bibitem{Landau-41}
{\sc L.~Landau}, {\em {Theory of the Superfluidity of Helium II}}, Phys. Rev.,
  60, 356--358 (1941)

\bibitem{LenLew-10}
{\sc E.~Lenzmann and M.~Lewin}, {\em Minimizers for the
  {H}artree-{F}ock-{B}ogoliubov theory of neutron stars and white dwarfs}, Duke
  Math. J., 152, 257--315 (2010)

\bibitem{LewNamSerSol-15}
{\sc M.~Lewin, P.~T. Nam, S.~Serfaty and J.P. Solovej}, {\em Bogoliubov
  spectrum of interacting {B}ose gases}, Comm. Pure Appl. Math., 68(3), 413--471 (2015)

\bibitem{LieSeiSolYng-05}
{\sc E.H. Lieb, R.~Seiringer, J.P. Solovej and J.~Yngvason}, {\em The
  mathematics of the {B}ose gas and its condensation}, Oberwolfach {S}eminars,
  Birkh{\"a}user (2005)

\bibitem{LieSeiYng-05}
{\sc E.~H. Lieb, R.~Seiringer and J.~Yngvason}, {\em {Justification of
  $c$-Number Substitutions in Bosonic Hamiltonians}}, Phys. Rev. Lett., 94, 080401 (2005)

\bibitem{NamSei-14}
{\sc P.T. Nam and R.~Seiringer}, {\em Collective excitations of bose gases in
  the mean-field regime}, Archive for Rational Mechanics and Analysis, 215, 381--417 (2015)

\bibitem{NapNamSol-15} 
{\sc M.~Napi\'orkowski, P.T.~Nam and J.P.~Solovej}, 
{\em Diagonalization of bosonic quadratic Hamiltonians by Bogoliubov transformations}, J. Funct. Anal., 270, 4340--4368 (2016)

\bibitem{NapReuSol2-15}
{\sc M.~Napi\'orkowski, R.~Reuvers and J.P.~Solovej}, {\em Bogoliubov free
  energy functional {II}. {T}he dilute limit}, Commun. Math. Phys., https://doi.org/10.1007/s00220-017-3064-x (2017)

\bibitem{NapReuSol-17}
{\sc M.~Napi\'orkowski, R.~Reuvers and J.P. Solovej}, {\em Calculation of the Critical Temperature of a Dilute Bose Gas in the Bogoliubov Approximation}, ArXiv: 1706.01822 (2017)
  
  \bibitem{NhoLan-04}
{\sc K.~Nho and D.P.~Landau}, {\em Bose--{E}instein {C}ondensation
  {T}emperature of a {H}omogeneous {W}eakly {I}nteracting {B}ose {G}as: {PIMC}
  study}, Phys. Rev. A, 70, 053614 (2004)

\bibitem{ReeSim1}
{\sc M.~Reed and B.~Simon}, {\em Methods of {M}odern {M}athematical {P}hysics.
  {I}. Functional analysis}, Academic Press (1972)

\bibitem{Sei-08}
{\sc R.~Seiringer}, {\em Free {E}nergy of a {D}ilute {B}ose {G}as: {L}ower
  {B}ound}, Commun. Math. Phys., 279, 595--636 (2008)

\bibitem{Seiringer-11}
{\sc R.~Seiringer}, {\em The excitation spectrum for weakly interacting
  bosons}, Commun. Math. Phys., 306, 565--578 (2011)

\bibitem{Sei-14}
{\sc R.~Seiringer}, {\em Bose gases, {B}ose-{E}instein condensation, and the
  {B}ogoliubov approximation}, J. Math. Phys., 55, 075209 (2014)

\bibitem{Solovej-06}
{\sc J.P. Solovej}, {\em {Upper bounds to the ground state energies of the
  one- and two-component charged Bose gases}}, Commun. Math. Phys., 266, 797--818 (2006)

\bibitem{Solovej-notes}
{\sc J.P. Solovej}, {\em Many-Body Quantum
  Mechanics}.
\newblock ESI Vienna, Lecture notes (2014)

\bibitem{Yin-10}
{\sc J.~Yin}, {\em Free {E}nergies of a {D}ilute {B}ose {G}ases: {U}pper
  {B}ound}, J. Stat. Phys., 141, 683--726 (2010)

\bibitem{ZagBru-01}
{\sc V.A. Zagrebnov and J.-B. Bru}, {\em The {B}ogoliubov {M}odel of {W}eakly
  {I}mperfect {B}ose {G}as}, Phys. Rep., 350, 291--434 (2001)

\end{thebibliography}
\end{document}